\definecolor{darkgreen}{rgb}{0,0.4,0}
\definecolor{darkred}{rgb}{0.4,0,0}
\definecolor{darkblue}{rgb}{0,0,0.4}
\definecolor{lightblue}{rgb}{.6,.6,0.9}
\definecolor{uglybrown}{rgb}{0.8,  0.7,  0.5}
\definecolor{palatinatepurple}{rgb}{0.41, 0.16, 0.38}
\definecolor{celebrationcolor}{rgb}{0.75,  0.0,  0.9}
\definecolor{shadecolor}{rgb}{0.90,0.90,0.90}
\definecolor{DVcolor}{rgb}{0.95,  0.5,  0.2}
\definecolor{lightbluemuons}{rgb}{0.0,.65,1.0}
\definecolor{chartreuse}{rgb}{0.70, 1.00, 0.00}
\tikzset{
	% >=stealth', %% more traditional arrows, I don't like them
    vector/.style={decorate, decoration={snake}, draw},
    fermion/.style={postaction={decorate},
        decoration={markings,mark=at position .55 with {\arrow{>}}}},
    fermionbar/.style={draw, postaction={decorate},
        decoration={markings,mark=at position .55 with {\arrow{<}}}},
    fermionnoarrow/.style={},
    gluon/.style={decorate,
        decoration={coil,amplitude=4pt, segment length=5pt}},
    scalar/.style={dashed, postaction={decorate},
        decoration={markings,mark=at position .55 with {\arrow{>}}}},
    scalarbar/.style={dashed, postaction={decorate},
        decoration={markings,mark=at position .55 with {\arrow{<}}}},
    scalarnoarrow/.style={dashed,draw},
%
%%% 	Special vectors (when you need to fine-tune wiggles)
%	provector/.style={decorate, decoration={snake,amplitude=2.5pt}, draw},
%	antivector/.style={decorate, decoration={snake,amplitude=-2.5pt}, draw},
%	    electron/.style={draw=black, postaction={decorate},
%        decoration={markings,mark=at position .55 with {\arrow[draw=black]{>}}}},
%	bigvector/.style={decorate, decoration={snake,amplitude=4pt}, draw},
	vectorscalar/.style={loosely dotted,draw=black, postaction={decorate}},
}
\def\centerarc[#1](#2)(#3:#4:#5)% Syntax: [draw options] (center) (initial angle:final angle:radius)
\newmdenv[%
%    backgroundcolor=red!8,
        backgroundcolor=lightgray,
    linecolor=black,
%    outerlinewidth=1pt,
    outerlinewidth=2pt,
%    roundcorner=5mm,
%    skipabove=\baselineskip,
%    skipbelow=\baselineskip,
%  bottomline=false,
%    leftline=false,rightline=false,
%    frametitle=\mbox{},
]{boxedandshaded}
\def\parfig#1#2{
\parbox{#1\textwidth}
{\includegraphics[width=#1\textwidth]{#2}}
}
\numberwithin{equation}{section}
\renewcommand{\theequation}{\arabic{section}.\arabic{equation}}
\newlength{\extraspace}
\newlength{\extraspaces}
\def\be{\begin{equation}}
\def\ee{\end{equation}}
\newcommand{\bea}{\begin{eqnarray}}
\newcommand{\eea}{\end{eqnarray}}
\def\Tr{{{\rm Tr}}}
\def\bra#1{\left\langle#1\right|}
\def\ket#1{\left|#1\right\rangle}
\def\CC{{\cal C}}
\def\CD{{\cal D}}
\def\CH{{\cal H}}
\def\CM{{\cal M}}
\def\CT{{\cal T}}
\def\II{\relax{I\kern-.10em I}}
\def\IZ{\mathbb{Z}}
\def\IB{\relax{\rm I\kern-.18em B}}
\def\ID{\relax{\rm I\kern-.18em D}}
\def\IE{\relax{\rm I\kern-.18em E}}
\def\IF{\relax{\rm I\kern-.18em F}}
\def\IG{\relax\hbox{$\inbar\kern-.3em{\rm G}$}}
\def\IGa{\relax\hbox{${\rm I}\kern-.18em\Gamma$}}
\def\IH{\relax{\rm I\kern-.18em H}}
\def\II{\relax{\rm I\kern-.18em I}}
\def\IK{\relax{\rm I\kern-.18em K}}
\def\inbar{\,\vrule height1.5ex width.4pt depth0pt}
\def\simgt{\hskip0.05in\relax{ 
\raise3.0pt\hbox{ $>$
{\lower5.0pt\hbox{\kern-1.05em $\sim$}} }} \hskip0.05in}
\def\lp10{\ell_p^{10}}
\def\lp11{\ell_p^{11}}
\def\R11{R_{11}}
\def\frac#1#2{{#1 \over #2}}
\newdimen\tableauside\tableauside=1.0ex
\newdimen\tableaurule\tableaurule=0.4pt
\newdimen\tableaustep
\def\phantomhrule#1{\hbox{\vbox to0pt{\hrule height\tableaurule width#1\vss}}}
\def\phantomvrule#1{\vbox{\hbox to0pt{\vrule width\tableaurule height#1\hss}}}
\def\sqr{\vbox{%
  \phantomhrule\tableaustep
  \hbox{\phantomvrule\tableaustep\kern\tableaustep\phantomvrule\tableaustep}%
  \hbox{\vbox{\phantomhrule\tableauside}\kern-\tableaurule}}}
\def\squares#1{\hbox{\count0=#1\noindent\loop\sqr
  \advance\count0 by-1 \ifnum\count0>0\repeat}}
\def\tableau#1{\vcenter{\offinterlineskip
  \tableaustep=\tableauside\advance\tableaustep by-\tableaurule
  \kern\normallineskip\hbox
    {\kern\normallineskip\vbox
      {\gettableau#1 0 }%
     \kern\normallineskip\kern\tableaurule}%
  \kern\normallineskip\kern\tableaurule}}
\def\gettableau#1 {\ifnum#1=0\let\next=\null\else
  \squares{#1}\let\next=\gettableau\fi\next}
\def\({\left(}
\def\){\right)}
\def\ii{{\bf i}}
\def\lsim{\mathrel{\mathstrut\smash{\ooalign{\raise2.5pt\hbox{$<$}\cr\lower2.5pt\hbox{$\sim$}}}}}
\def\gsim{\mathrel{\mathstrut\smash{\ooalign{\raise2.5pt\hbox{$>$}\cr\lower2.5pt\hbox{$\sim$}}}}}
\def\overleftrightarrow#1{\vbox{\ialign{##\crcr
     $\leftrightarrow$\crcr\noalign{\kern-0pt\nointerlineskip}
     $\hfil\displaystyle{#1}\hfil$\crcr}}}
     \def\overleftarrow#1{\vbox{\ialign{##\crcr
     $\leftarrow$\crcr\noalign{\kern-0pt\nointerlineskip}
     $\hfil\displaystyle{#1}\hfil$\crcr}}}
\def\eg{{\it e.g.}}
\newif{\ifeq}           % defines a new condition @eq tested by the conditional \ifeq
\newcounter{lecturecounter}
\def\baselinestretch{1.1}
\renewcommand{\title}[1]{\vbox{\center\LARGE{#1}}\vspace{5mm}}
\renewcommand{\author}[1]{\vbox{\center#1}\vspace{5mm}}
\newcommand{\address}[1]{\vbox{\center\em#1}}
\renewcommand{\date}[1]{\vbox{\center#1}}
\tikzset{
	pics/torus/.style n args={3}{
		code = {
			\providecolor{pgffillcolor}{rgb}{1,1,1}
			\begin{scope}[
				yscale=cos(#3),
				outer torus/.style = {draw,line width/.expanded={\the\dimexpr2\pgflinewidth+#2*2},line join=round},
				inner torus/.style = {draw=pgffillcolor,line width={#2*2}}
				]
				\draw[outer torus] circle(#1);\draw[inner torus] circle(#1);
				\draw[outer torus] (180:#1) arc (180:360:#1);\draw[inner torus,line cap=round] (180:#1) arc (180:360:#1);
			\end{scope}
		}
	}
}
\def\acolor{rgb, 255:red, 0; green, 31; blue, 228}
\def\bcolor{rgb, 255:red, 248; green, 231; blue, 28}
\def\ccolor{rgb, 255:red, 248; green, 231; blue, 28}
\def\dcolor{rgb, 255:red, 245; green, 166; blue, 35}
\newcommand{\axiomone}[2]
{ \def\lnwdth{1}
	\begin{scope}[xshift=#1cm, yshift=#2cm]
		\draw[line width=\lnwdth pt, fill=red!20!white, opacity=0.5] (0,0) circle (0.3cm);
		\draw[line width=\lnwdth pt] (0,0) circle (0.3cm);
		\draw[line width=\lnwdth pt] (0,0) circle (0.15cm);
	\end{scope}
}
\newcommand{\axiomtwoangled}[4]
{\def\lnwdth{1}
	\begin{scope}[xshift=#1cm, yshift=#2cm]
		\draw[line width=\lnwdth pt, fill=green!20!white, opacity=0.5] (0,0) circle (0.3cm);
		\draw[line width=\lnwdth pt] (0,0) circle (0.3cm);
		\draw[line width=\lnwdth pt] (0,0) circle (0.15cm);
		\draw[line width=\lnwdth pt] (#3:0.15) -- (#3:0.3);
		\draw[line width=\lnwdth pt] (#4:0.15) -- (#4:0.3);
	\end{scope}
}
\theoremstyle{plain}
\newtheorem{theorem}{Theorem}[section] % reset theorem numbering for each chapter
\theoremstyle{definition}
\newtheorem{definition}[theorem]{Definition} % definition numbers are dependent on theorem numbers
\newtheorem{exmp}[theorem]{Example} % same for example numbers
\newtheorem*{remark}{Remark}
\theoremstyle{plain}
\newtheorem{corollary}{Corollary}[theorem]
\theoremstyle{plain} 
\newtheorem{lemma}[theorem]{Lemma}
\theoremstyle{plain} 
\newtheorem{conjecture}[theorem]{Conjecture}
\theoremstyle{plain} 
\newtheorem{Proposition}[theorem]{Proposition}
\newcommand{\calC}{{\mathcal C}}
\newcommand{\calD}{{\mathcal D}}
\newcommand{\calH}{{\mathcal H}}
\newcommand{\bbI}{{\mathbb I}}
\definecolor{JLHgreen}{RGB}{50,100,50}
\definecolor{JMblue}{RGB}{25,25,125}
\definecolor{BSorange}{RGB}{140,50,0}
\definecolor{l-blue}{RGB}{32, 168, 232}
\definecolor{d-blue}{RGB}{10, 68, 168}
\definecolor{l-purple}{RGB}{201, 121, 237}
\definecolor{d-purple}{RGB}{111, 6, 158}
\definecolor{my-orange}{RGB}{237, 140, 12}
\definecolor{my-red}{RGB}{194, 29, 29}
\begin{document}

\title{Knots and entanglement}

\author{Jin-Long Huang, John McGreevy, Bowen Shi}
\address{Department of Physics, University of California at San Diego, La Jolla, CA 92093, USA}

\begin{abstract}
We extend the entanglement bootstrap program to (3+1)-dimensions. We study knotted excitations of (3+1)-dimensional liquid topological orders and exotic fusion processes of loops.
As in previous work in (2+1)-dimensions \cite{shi2020fusion, Shi:2020rne}, we define a variety of superselection sectors and fusion spaces from two axioms on the ground state entanglement entropy.
In particular, we identify fusion spaces associated with knots.
We generalize the information convex set to a new class of regions called immersed regions, promoting various theorems to this new context. 
Examples from solvable models are provided; for instance, a concrete calculation of \emph{knot multiplicity} shows that the knot complement of a trefoil knot can store quantum information. We define {\it spiral maps} that allow us to understand consistency relations for torus knots as well as spiral fusions of fluxes.  
\end{abstract}

\vfill

\today

\vfill\eject

%\maketitle

\setcounter{tocdepth}{2}    % =2 means don't show subsubsections in toc

\renewcommand{\baselinestretch}{0.75}\normalsize
\tableofcontents
\renewcommand{\baselinestretch}{1.1}\normalsize

\section{Introduction}

\label{sec:intro}

The entanglement bootstrap \cite{shi2020fusion, Shi:2020rne} is a program to understand liquid topological orders from a couple of well-motivated axioms concerning the structure of entanglement in a single ground state wave function.
By a topological order \cite{WEN1990a,Kitaev1997}, we mean a gapped phase of matter with a 
ground state subspace of locally indistinguishable ground states 
whose dimension depends on the topology of space.
By `liquid', we mean that this ground state degeneracy does not depend on the geometry; 
this excludes fractons.

The program thus far has been focused on bosonic topological order in two spatial dimensions.
It has been successful in explaining much of the structure of the mathematical theory of anyons, from a very simple starting point; this includes the explanation of the fusion rules of anyons \cite{shi2020fusion} and the nondegenerate mutual braiding statistics after that~\cite{Shi:2019ngw}. 
The axioms follow from the area law of the entanglement entropy \cite{kitaev2006topological,levin2006detecting} but are weaker.
These axioms, which can be stated in the approximate case, 
can be useful for the study of 2d chiral systems with a bulk energy gap and gapless edge modes, whose thermal Hall conductance is quantized \cite{Kane1997,Read2000} according to its chiral central charge \cite{Kitaev2005}. 
In fact, with related techniques, a formula for the chiral central charge, in terms of a single bulk ground state wave function, has been argued and tested~\cite{2021arXiv211006932K,2021arXiv211010400K}. No rigorous derivation of this formula is known, at the moment, which calls for new innovation of concepts and techniques of the framework. 

In addition to the goal of deriving the emergent physical laws of 2d gapped phases,
one hopes that this program will lead to new insights in broader physical contexts.  
An example where this has already happened is in the theory of gapped domain walls \cite{Shi:2020rne,Shi:2020jxd}.

A central notion of the entanglement bootstrap is the {\it information convex set}.  
Given a reference state $\vert \psi\rangle$ on some manifold, 
this machine associates with any subsystem $X$ (with boundary) a convex set of density matrices, $\Sigma(X , \vert \psi\rangle)$.
Roughly, these are all the density matrices on the subsystem that are indistinguishable from the reference state on balls.
The basic observation is that  $\Sigma(X , \vert \psi\rangle)$ depends only on the deformation class of both $\vert \psi\rangle$ and $X$.
We will often omit the $\ket{\psi}$ label on $\Sigma(X)$.
For various choices of the topology of $X$, $\Sigma(X)$ encodes the information about the topological excitations (anyons, for the case of 2d), as well as their fusion spaces.

In this work, we ask about the extension of this program to three spatial dimensions (3d).
For simplicity, we continue to restrict attention to systems made of bosons.
The main new ingredient in three dimensions is that there are many more ways to choose a subsystem.
Each choice of (deformation class of) subsystem, then, has a role to play in the theory of 3d topological order.
As in two dimensions, these include identifying excitation types and their possible fusion spaces.
As we will explain, one new ingredient is the fusion space associated with knotted loop excitations. 
As an example, we show that a single trefoil excitation on a  $3$-sphere can be associated with a nontrivial \emph{knot multiplicity}, and thus a topological qubit may be stored in the knot complement of a trefoil knot.

\begin{equation}
\begin{tikzpicture}
	\begin{knot}[
	consider self intersections=true,
	%  draft mode=crossings,
	flip crossing=2,
	%only when rendering/.style={
	%    show curve controls
	%}
	clip width=3, yshift=-4 cm, scale=0.8]
	
	\strand[red!50!orange!60!white, double=white, ultra thick, double distance=4 pt] (0,2) .. controls +(2.5,0) and +(120:-2.5) .. (210:2) .. controls +(120:2.5) and +(60:2.5) .. (-30:2) .. controls +(60:-2.5) and +(-2.5,0) .. (0,2);
\end{knot}
\end{tikzpicture} \nonumber
\end{equation}

We begin in \S\ref{subsection:overview_3d_EB} with a brief overview of entanglement bootstrap, starting from the axioms. After that, we present an overview of the whole paper, focusing on physical intuition (\S\ref{subsection:overview_paper}). We postpone the comparison of our findings and other approaches of 3d topological orders to the discussion section.

\subsection{Brief overview of 3d entanglement bootstrap}\label{subsection:overview_3d_EB}

Without further ado, here are the axioms for the entanglement bootstrap in three dimensions.
We assume given a {\it reference state} $\sigma$ supported on a large ball (much larger than any other length scales such as correlation lengths or lattice spacing).  
About this state, we assume that the following two axioms hold on all balls contained in the large ball:
\begin{equation}
\begin{tikzpicture}
 \def\lnwdth{1}
 
%%% AXIOM A0
	\node[] (A) at (0.00,1.5) {\includegraphics[scale=0.95]{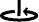}};
	\begin{scope}[xshift=0cm, yshift=0cm, scale=4]
		\draw[line width=\lnwdth pt, fill=\bcolor, fill opacity=0.5] (0,0) circle (0.3cm);
		\draw[fill=white] (0,0) circle (0.15cm);
		\draw[line width=\lnwdth pt,fill=\ccolor, fill opacity=.25] (0,0) circle (0.15cm);
	\end{scope}
		\node[] (A) at (0,0) {$C$};
		\node[] (A) at (0,.85) {$B$};
		\node[] (A) at (0, -2) {${\bf A0}: ~~ (S_{BC}+S_C - S_B)_{\sigma} = 0$};

%%% AXIOM A1
	\begin{scope}[xshift=7cm]
		\begin{scope}[scale=4]
		\begin{scope}
			\clip (-.35,-.35) rectangle (.35,0);
			\draw[line width=\lnwdth pt, fill=\bcolor, fill opacity=0.5] (0,0) circle (0.3cm);
		\end{scope}		

		\begin{scope}
			\clip (-.35,0) rectangle (.35,.35);
			\draw[line width=\lnwdth pt, fill=\dcolor, fill opacity=.5] (0,0) circle (0.3cm);
		\end{scope}
		\draw[fill=white] (0,0) circle (0.15cm);
		\draw[fill=\ccolor, fill opacity=0.25, line width=\lnwdth pt] (0,0) circle (0.15cm);
		\draw[line width=\lnwdth pt] (0:0.15) -- (0:0.3);
		\draw[line width=\lnwdth pt] (180:0.15) -- (180:0.3);
	\end{scope}
	    	\node[] (A) at (0,1.5) {\includegraphics[scale=0.95]{rotation}};
	
		\node[] (A) at (0,0) {$C$};
		\node[] (A) at (0,.85) {$B$};
		\node[] (A) at (0,-.85) {$D$};
		\node[] (A) at (0, -2) {${\bf A1}:~~  (S_{BC}+S_{CD} - S_B - S_D)_{\sigma} = 0$};

	\end{scope}
\end{tikzpicture}
\label{eq:3d-axioms}
\end{equation} 
where $(S_X)_{\sigma}= -\Tr(\sigma_X \ln \sigma_X)$ is the von Neumann entropy of the state $\sigma$ reduced to the region $X$, and each region is the volume of revolution of the indicated 2d region.\footnote{In the figure, we used a rotation label to indicate the revolution. Note, however, we do not assume rotation symmetry; the axioms apply to any partition of balls topologically equivalent to these.}
We will find it convenient to denote the entropy combinations appearing in the axioms as 
\begin{equation}
	\Delta(B,C) \equiv S_{BC}+S_C - S_B, ~~~\Delta(B,C,D) \equiv S_{BC} + S_{CD} - S_B - S_D. 
	\label{eq:defs-of-deltas}
\end{equation}
We note that if we erase the rotation label, these axioms have the same form as the axioms in 
the 2d case \cite{shi2020fusion}.

As in the 2d case \cite{shi2020fusion}, assuming the axioms only on bounded-radius balls 
can be used to prove the same conditions on larger balls 
by 
application of strong subadditivity (SSA) \cite{Lieb1973}.
Similarly, the axioms can be shown to hold on any collection of regions with the indicated topology.  
The axioms \eqref{eq:3d-axioms} follow from a strict area law for the von Neumann entropy but are strictly weaker (for example, in two dimensions, regions with corners may violate the strict area law in chiral states 
while still satisfying the axioms up to an error exponentially decaying with the system size). 

The central actor in this work is the {\it information convex set} \cite{coogee, Shi:2018bfb, Shi:2018krj}, which, given a reference state $\sigma$ satisfying the axioms {\bf A0} and {\bf A1}, associates to each region of space $X$ a convex set of density matrices, $\Sigma(X)$.
(The dependence on the state $\sigma$ is implied.)
Informally, $\Sigma(X)$ is the set of density matrices on $X$ that agree with the reference state on any ball in $X$.  A more precise definition enlarges the region $X$ slightly to take care of balls that overlap the boundary of $X$.  

The information convex set is a topological invariant in two senses.
First, it is a property of the phase of matter represented by the reference state $\ket{\psi}$ -- it is invariant under adiabatic deformations of $\ket{\psi}$.  
Secondly, two regions related by a regular homotopy have isomorphic information convex sets;
this is the content of the Isomorphism Theorem (Theorem~\ref{thm:generalized_iso}).  
The isomorphism in question maps extreme points to extreme points and preserves entropy differences and distances between states.

The structure of a (compact) convex set is determined by its extreme points.  
If the extreme point density matrices of $\Sigma(X)$ are all mutually orthogonal, then $\Sigma(X)$ is a simplex.  
In this case, the states of $\Sigma(X)$ encode only classical information, which may be copied.
In contrast, the information convex sets of some regions contain {\it fusion spaces}, which can store quantum information.  
The relation between the geometry of a region  and these properties of the information convex set is understood \cite{shi2020fusion}, 
and in \S\ref{subsec:structure-theorems}, we review the dichotomy between sectorizable regions~\cite{Shi:2020rne}, whose information convex set is a simplex,
and non-sectorizable regions, whose states can encode quantum information.  Sectorizable is a simple geometric condition: a region is sectorizable if it contains two disjoint pieces such that each can be deformed back to the whole via a sequence of extensions. 

A crucial tool for understanding the structure of the information convex sets is {\it merging}
\cite{kato2016information, shi2020fusion}: 
given elements $\rho, \tau$ of the information convex sets of two intersecting regions $ABC$ and $BCD$, which agree on the overlap $BC$, a unique element of the information convex set of their union 
$ \rho \bowtie \tau \in \Sigma(ABCD)$ can be obtained, if a quantum Markov chain condition is satisfied.  
This merging is possible because quantum Markov chain conditions on large regions can be deduced from SSA and the axioms on local regions~\cite{Kim2014,shi2020fusion}.
The merged state $\rho \bowtie \tau$ is the maximum-entropy state consistent with its marginals.  
This process has a lot in common with the topological notion of {\it surgery}.  Meanwhile, merging is flexible enough to glue either part of a boundary component or whole boundary components.

\subsection{Heuristic overview of the paper}\label{subsection:overview_paper}

The main text of this paper attempts to take care of both physical ideas and the mathematical rigor needed for building a theoretical framework. Despite the many illustrations, this unavoidably makes multiple places lengthy and more technical than needed in grasping the physical idea. For this reason, we include the following overview of its contents from a physics perspective.

Section \ref{sec:general-theorems} develops the entanglement bootstrap technology, focusing on novelties in 3d, as compared to 2d.  
The first important innovation (\S\ref{subsec:immersions}) is the use of what we call {\it immersed regions}.  
The idea, first used in \cite{Shi:2019ngw}, is to make regions with non-trivial topology by realizing them as regions {\it locally} (but not globally) embedded in a ball.
It is well-established that non-trivial spatial topology is a useful probe of topological order;
the method of immersed regions allows 
us to exploit non-trivial spatial topology even when given only a single wave function on a region with the topology of a ball.  

A key step in 
\S\ref{subsec:immersions}
is to show that the Isomorphism Theorem generalizes to include immersed regions, and moreover
to include regular homotopies between regions that allow immersed regions in the intermediate steps of deformation (even if the starting and ending regions are embedded).  This is the content of the Generalized Isomorphism Theorem, \ref{thm:generalized_iso}.
An immediate consequence of the Generalized Isomorphism Theorem is that any two thickened knots have the same information convex set.  
(The information convex set of a {\it knot complement}, however, will be much more interesting.)

It is not a priori clear that the information convex set of an immersed region has a vacuum sector.  
One important role the vacuum sector plays is in the entanglement bootstrap definition of the quantum dimension of an extreme point:
$d_a = e^{(S(\rho^a) - S(\rho^1))/2}$.  
This motivates an alternative, more general, definition of quantum dimension, described in 
Appendix~\ref{appendix:TEE-exercises}, in terms of a linear combination of entropies of subregions.
The existence of a vacuum sector is proved, however, for any (embedded) subregion of a ball by a partial trace of the reference state
in Lemma \ref{lemma:vacuum}.

In \S\ref{sec:associativity_theorem}, we describe what we call the Associativity Theorem (Theorem \ref{thm:associativity}), which describes how to build the information convex set of a region by decomposing the region into parts.  
Lemma \ref{lemma:lr} shows that when merging two extreme points along a whole component of their boundary, we always produce an extreme point of the merged region.  
The proof of this lemma uses a nice necessary and sufficient criterion
for a state to be an extreme point, given in Lemma 
\ref{lemma:ext_criterion}.
The Associativity Theorem~\ref{thm:associativity} then says that {\it all} extreme points of the merged region $\Omega =\Omega_L \Omega_R$ arise in this way, 
so that their fusion multiplicities satisfy 
\be\label{eq:associativity-heuristic} N(\Omega) = \sum_I N_I(\Omega_L) N^I(\Omega_R), 
\ee
where $I$ labels an extreme point of the thickened boundary component along which $\Omega_L$ and $\Omega_R$ were merged.

The purpose of \S\ref{sec:ICS} is to study the information convex set for various 3d regions and to make connections to the fusion data and fusion processes of 3d topological order.
We said above that the method of immersed regions allows us to build states on non-trivial topological spaces, starting with just a state on a ball.  
But so far, these spaces must be immersed in a ball, and this is not always possible.
For example, a torus $T^2$ cannot be immersed in a disk, but a torus with a hole ($T^2 \setminus $ disk) can be.  
In \ref{subsec:sphere-completion}, we prove the Sphere Completion Lemma \ref{lemma:sphere_completion}, which shows that a state on a manifold with a hole can be used to produce a state on the closed manifold obtained by filling in the hole.
One way in which we will repeatedly use the Sphere Completion Lemma is to turn regions inside-out by deforming them to the point at infinity on the sphere.  An immediate application of this technique is to study the complement of knots on a 3-sphere and to gain an additional intuition on various anti-sectors.  

\S\ref{subsection:sectors} explains the structure of the information convex set for basic sectorizable regions in 3d: 
\begin{itemize}
\item the sphere shell, 
whose extreme points label particle excitations
\item the solid torus, whose extreme points label a class of loop excitations called {\it pure fluxes},
and 
\item the torus shell, whose extreme points label more general Hopf link excitations.
\end{itemize}
Another class of 3d sectorizable regions is the handlebodies of genus $g>1$; their extreme points are labeled by graph excitations with $g$ edges.
We defer their discussion to \cite{braiding-paper}.
In 2d, there is a unique notion of total quantum dimension, defined by $ \CD = \sqrt{ \sum_a d_a^2} $.  
In 3d, one could a priori define such a notion for each type of excitation listed above.  
Proposition~\ref{prop:equality-of-total-dimension} shows that they are all equal.

In \S\ref{subsec:3d-known} we translate known results about fusion spaces of particles and loops into statements about information convex sets.
For example, the fusion of two particles to a third particle is encoded in the structure of the information convex set of a ball minus two balls.
In Appendix~\ref{appendix:Fusion-rules} we show that a procedure of dimensional reduction precisely relates all of the above information to an entanglement bootstrap problem in 2d with a special kind of gapped boundary \cite{Shi:2020rne}.  
The idea is that each of the above regions enjoys an action of revolution; the loci where the circle action is not free produces a gapped boundary.

It is not true, however, that {\it all} information about 3d topological orders can be obtained by dimensional reduction.
In \S\ref{subsec:fusion-from-knots}, we initiate the study of information convex sets of knot complements.
These regions are not sectorizable, and the information convex set with a specific sector of its torus-shell boundary
is the state space of some finite-dimensional Hilbert space.  Its dimension, which we call a {\it knot multiplicity}, is an isotopy invariant of the knot. In \S\ref{subsec:exotic-flux-fusion}, we identify several exotic fusion processes of flux loops. One example is that a flux loop deforms into a spiral and then makes a new flux loop. We analyze this with the spiral maps we introduce in \S\ref{sec:spiral-map}.
Another example is a fusion process of two loops related to the Borromean rings complement.

The main goal of the rest of the paper is to understand the structure of these knot multiplicities and the spiral fusion of fluxes.   

In \S\ref{sec:QD} we use a technique called {\it minimal diagram} (explained in \S\ref{sec:Minimal_digram_into}) to compute basic information convex sets for the 3d quantum double model.  This includes the complement of the trefoil knot.  
In particular, for a general 3d quantum double model, we demonstrate a close relationship between the structure of the information convex set of a region
and its fundamental group.  Specifically, the structure of the information convex set of a knot complement is determined by (the Wirtinger presentation of) its fundamental group, {\it the knot group} (a nice description can be found in \cite{knots-knotes}).  
When we wish to be completely explicit, we specialize to the case of 3d quantum double with gauge group $S_3$, the smallest non-abelian group.
Further examples, including the Borromean rings complement, are explained in Appendix~\ref{appendix:more-QD-examples}.  

In \S\ref{sec:spiral-map} we introduce a family of maps on information convex sets that we call {\it spiral maps}.  
The simplest spiral map takes $\Sigma(T)$, the information convex set of the solid torus,
to itself.  It is defined by tracing out the complement of a sub-solid-torus that winds around the hole multiple times, followed by a deformation that unwinds it, to map back to a state on the original region.  This map takes extreme points to extreme points, and thus it takes fluxes to fluxes. As a logical consequence, the spiral fusion of any flux provides a unique flux as the outcome; moreover, the quantum dimension of flux cannot increase under such a process. The spiral maps can be composed. In the special case of the 3d quantum double model, it is a realization of the group multiplication law, which is information that goes beyond the character table. The spiral maps will play an important role in relating knot multiplicities to each other.

In \S\ref{sec:consistency-conditions}, we illustrate consistency conditions between the fusion spaces and quantum dimensions associated with various regions, including torus knots.
There are two kinds of consistency conditions.  
The simplest kind involves chopping up a region along some hypersurface internal to the region.  
Then the Associativity Theorem in the form \eqref{eq:associativity-heuristic} determines the fusion multiplicities of the whole region in terms of a convolution of those of its parts.  
If, instead, we chop up a region along a cut that intersects a region's boundary so that we are not merging along a whole boundary component,
we obtain a relation that involves the quantum dimensions of the labels on the cut boundary.  
In particular, we derive a consistency relation for torus knots which involves knot multiplicities, quantum dimensions, and spiral maps.
This provides an upper bound on torus knot multiplicities in terms of the total quantum dimension.

\section{General theorems and calculation tools}
\label{sec:general-theorems}

In this section, we describe basic concepts and general theorems of the entanglement bootstrap approach in 3d. Most of them can be stated in a general context and originate from recent literature \cite{shi2020fusion,Shi:2020rne}. In addition to reviewing these known results, we highlight two innovations that will have a broad application in 3d. The notable new concept we introduce is \emph{immersed region}\footnote{This includes non-subsystem regions whose quantum states can nevertheless be constructed from the reference state, an observation that first appeared in~\cite{Shi:2019ngw}.} (see \eg~Fig.~\ref{fig:immersed_disk}), which extends the scope of previous results to this broader type of region and also makes the information-preserving deformations of a region more flexible, by allowing the region to ``pass through itself''; see Theorem~\ref{thm:generalized_iso}. The second innovation is the associativity theorem presented in \S\ref{sec:associativity_theorem}.

We further provide a discussion on a calculation tool for solvable models; see \S\ref{sec:Minimal_digram_into}. 
While this is not part of the entanglement bootstrap, the approach works for solvable models whose ground state(s) satisfy axioms {\bf A0} and {\bf A1}. Therefore, it is capable of providing concrete examples, demonstrating the various objects and consistency relations predicted by entanglement bootstrap.

\subsection{Immersed regions and generalized isomorphism theorem}
\label{subsec:immersions}

\subsubsection{Immersed regions}

When studying a quantum many-body system on a spatial manifold $M$, it is natural to consider subsystems, which are embedded manifolds (of the same dimension as $M$) with boundary. 
The concept of subsystem has played an important role in entanglement bootstrap. This is because information convex sets, the isomorphism theorem, and structure theorems are associated with subsystems.

Immersed regions are natural generalizations of subsystems. They are locally embedded in the physical system but may not be globally embedded. For our purpose, we shall be interested in immersions of regions that are of the same dimension as $M$; see Fig.~\ref{fig:immersion-with-p} for an illustration.  
As we shall discuss, most of the nice properties associated with subsystems generalize to immersed regions.

We think of the physical system on $M$ as a coarse-grained lattice that possesses a finite-dimensional Hilbert space on each site; we always work with a large enough length scale so that the manifold can be treated as smooth. A quantum state $\sigma$, which we shall call the reference state, is defined on the total Hilbert space of the physical system. In this work, we shall focus on bosonic systems by assuming that the total Hilbert space is the tensor product of the onsite Hilbert spaces.  Let $\Gamma(M)$ be the set of bounded-radius balls on $M$, for which the axioms are imposed.\footnote{The set covers $M$ and adjacent balls overlap with each other.}

\begin{figure}[h]
    \centering
    \begin{tikzpicture}
    \node[] at (0,0) {\includegraphics[scale=2.5]{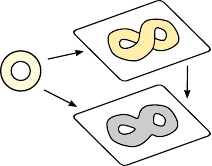}};
    \begin{scope}[scale=1.3]
    % \node[] (B) at (0,-1) {\footnotesize{$i({\Omega})$}};
     
      \node[] (B) at (1.29,-1.4) {{$i(\widehat{\Omega})$}};
      \node[] (B) at (3.2,-2.3) {{$i(\widehat{\Omega})= \mathfrak{p}(\Omega)$}};  
         \node[] (B) at (1,1) 
         {{${\Omega}$}};
         
          \node[] (B) at (-3.6,0.37) {{$\widehat{\Omega}$}};
          
           \node[] (B) at (-1.4,1.15) {$\tilde{i}$};
           
           \node[] (B) at (-1.6,-0.7) {$i$};
           
           \node[] (B) at (2.8,1.24) {$M$};

           \node[] (B) at (2.8,-1.48) {$M$};
           
           \node[] (B) at (2.9,0) {$\mathfrak{p}$};
           \end{scope}
    \end{tikzpicture}
    \caption{An illustration of the concept of immersed region (Definition~\ref{def:immersed_region}). $M$ is the spatial manifold of the physical system. $\widehat{\Omega}$ is an abstract topological manifold with the same dimension as $M$. $i$ is the immersion map. $\tilde{i}$ is an homeomorphism such that $i = \mathfrak{p}\circ \tilde{i}$. Note that the image $i(\widehat{\Omega})$ (or $\mathfrak{p}(\Omega)$)  usually has a different topology than $\widehat{\Omega}$. $\Omega$ is an immersed region whose coordinate determines~$\mathfrak{p}$.}  
    \label{fig:immersion-with-p}
\end{figure}

\begin{definition}[Immersed region]\label{def:immersed_region}

An immersed region is specified by either  $(\widehat{\Omega}, i)$ or  $\Omega$, as we will discuss. When useful, we write $\Omega$ as $(\Omega, \mathfrak{p})$. 
(See Fig.~\ref{fig:immersion-with-p} for an illustration.)  

% 1st definition
\emph{The first definition,  $(\widehat{\Omega},i)$}: Let $\widehat{\Omega}$ be a topological manifold (possibly with boundary),  which has the same dimension as $M$. 
	Consider a continuous map 
	\begin{equation}
		i: \widehat{\Omega} \to M.
	\end{equation}
	We call this map an \emph{immersion map} if the preimage of any ball $b\in\Gamma(M)$, for which $b\subset i(\widehat{\Omega})$, is the union of disjoint balls of $\widehat{\Omega}$, and $i$ is a homeomorphism (between a region and its image under $i$) when restricted to any of these disjoint balls of $\widehat{\Omega}$. We say $(\widehat{\Omega}, i)$ defines an immersed region when $i$ is an immersion map.  
 
 % 2nd definition 
 \emph{The second definition, $\Omega$ or $({\Omega},\mathfrak{p})$:}  
 Let $\Omega$  
 be a topological manifold of the same dimension as $M$, with coordinates: $(x, q) \in \Omega$ with $x \in M$ and $q$ is a discrete, finite variable that specifies the information of the layers and branch cuts\footnote{The assignment of layers and branch cuts determines the topology of $\Omega$. Note that the assignment is not without possible redundancy, e.g., branch cuts can  freely deform without changing the topology. We are only interested in the assignment up to this redundancy.}; 
 the topology of $\Omega$ is consistent with the map 
 \begin{equation}
     \begin{aligned}
     \mathfrak{p}:  ~~\Omega~~  & \to  M \\ 
                    (x,q)  & \mapsto  x   
     \end{aligned}
 \end{equation}
 such that $(x,q) \in \Omega$ is mapped to $x \in M$ by $\mathfrak{p}$.
 $\Omega$ defines an immersed region if the preimage of any ball $b \in \Gamma(M)$ for which $b \subset \mathfrak{p}(\Omega)$ is the union of disjoint balls of $\Omega$, and $\mathfrak{p}$ is a homeomorphism (between a region and its image under $\mathfrak{p}$) when restricted to any of these disjoint balls of $\Omega$. 
 We write $\Omega$ as $(\Omega,\mathfrak{p})$ when we wish to specify $\mathfrak{p}$. We emphasize that $\mathfrak{p}$ is part of the information of $\Omega$. 

To relate these two definitions, we
let $\tilde{i}$ be a homeomorphism from $\widehat{\Omega}$ to $\Omega$; see Fig.~\ref{fig:immersion-with-p}. Such a homeomorphism must exist with an appropriate choice of $\widehat{\Omega}$ because $\Omega$ has a manifold structure. 
 The two alternative definitions of immersed region, $\Omega$ and $(\widehat{\Omega}, i)$ are related by
\begin{equation}
		\Omega=\tilde{i}(\widehat{\Omega}), \quad 
  \mathfrak{p} = i \circ \tilde{i}^{-1}.
	\end{equation} 
 \end{definition}

The immersion in the definition above is precisely local embedding.\footnote{The terminology `immersion' is motivated by its use in differential topology. In that context, local embedding is achieved by a condition on tangent space. (Roughly speaking, this condition is for having no sharp corners.) We are only interested in immersion between manifolds with the same dimension. In this case, immersion is also called submersion. We do not consider the tangent space because we are interested in systems made of qudits, which are discrete on small scales.} In general topology, an embedding is a homeomorphism onto its image. A local embedding (or local homeomorphism) is a map $\mathfrak{p}:\Omega \to M$ with the property that for each point $x \in \Omega$ there is a neighborhood $U$ containing $x$ such that $\mathfrak{p}(U)$ is an open subset of $M$ and $\mathfrak{p}|_U$ is a homeomorphism.
Embedding is a special type of immersion. In that case,
$\mathfrak{p}$ can be taken to be the identity map. For our purpose, an immersed region $\Omega$ will always have the same dimension as $M$. An embedded region is a subsystem. 

     It is tempting to use the term `covering space' for what we call here `immersed region.'  However, a covering space must have the same number of preimages of each point in the image; in contrast, as can be seen in Fig.~\ref{fig:immersion-with-p}, in an immersed region, the number of sheets (range of the discrete coordinate $q$) is allowed to vary over the image.  The crucial difference in the definition is that we only demand that $i$ is a homeomorphism from each component of its preimage \emph{to its image} and not to a fixed set in $M$. The former is weaker.

The Hilbert space of an immersed region is defined as the tensor product of the local Hilbert spaces of its embedded local patches.

\begin{definition}[Hilbert space of immersed region]\label{def:immersed_region-Hilbert-space}
      If $\Omega$ is an immersed region, we define the Hilbert space of $\Omega$ (denoted as $\mathcal{H}_{\Omega}$) as the tensor product
      \begin{equation} \label{eq:immersed-H-tensor}
          \mathcal{H}_{\Omega}= \otimes_{i} \mathcal{H}_{\mathfrak{p}(A_i)}, 
      \end{equation}
      where $\Omega = \prod_i A_i$ is a partition of $\Omega$ into local patches $\{ A_i \}$, such that each patch $A_i$ is embedded into $M$ by $\mathfrak{p}$. $\mathcal{H}_{\mathfrak{p}(A_i)}$ is the Hilbert space of subsystem $\mathfrak{p}(A_i)$. 
  \end{definition}

\begin{remark} Below are a few remarks on Definition~\ref{def:immersed_region}:
	\begin{enumerate}
		\item Either $\Omega$ or the pair $(\widehat{\Omega},i )$ is sufficient for defining the immersed region $\Omega$. We kept both of them because each provides a useful perspective.  $\Omega$ is the most convenient for visualization of the topological region and the Hilbert space associated with it; the pair $( \widehat{\Omega}, i )$ is insightful in relating smooth deformations to the concept of regular homotopy. 

   \item How is the immersed region $\Omega$ consistent with the previous idea of subsystems? If $\Omega$ is a subsystem, the assignment of the layer is trivial. Every point on $\Omega$ is in the same layer. We can omit the data $q$ in the coordinate $(x,q)$, and then $\mathfrak{p}$ becomes the identity operator.

   \item Consider $i'= i\circ u$, where $u$ is a self-homeomorphism of $\widehat{\Omega}$.  
   We consider $i$ and $i'$ as equivalent, and we shall only be interested in immersion maps up to this equivalence relation. 
   
   \item The \emph{second} definition of the immersed region can be alternatively formulated without using ``layers and branch cuts" as follows.
   (We emphasize that this idea is not necessary, but it may help some readers.) Think of $\Omega$ as an embedded region in a space with tiny extra dimensions: $M\times E$, where $E$ is an $\epsilon$-ball with high enough dimensions.  $\Omega$ is embedded in $M\times E$ in such a way that points in $\Omega$ has coordinate $(x, e)$ where $x\in M$ and $e\in E$. $\mathfrak{p}$ maps $(x, e)$ to $x$. (Again, we do not care about the detailed assignment of coordinates $e$ as long as it works.)  In this way, $\Omega$ lies on top of $M$, and its coordinates contain the information of $\mathfrak{p}$.

  \item In practice, the drawing of branch cuts can often be omitted, supplemented with other ways to specify the layering. This is done in Fig.~\ref{fig:immersion-with-p}. See Fig.~\ref{fig:immersed_disk} (b1) and (b2) for further explanations. 
\end{enumerate}
\end{remark}

{\bf Information convex sets for immersed regions:} We are ready to define information convex sets for immersed regions. The definition below is a direct generalization of the original definition.\footnote{See Definition~3.1 of \cite{shi2020fusion} as well as the alternative Definition~C.1 in the same reference.}
It reduces to the original definition if the region is a subsystem. As in the original references, when we define an information convex set, we are interested in a region that can be thickened. Below, we always consider an immersed region $\Omega$ that can be identified as the interior of some $\Omega_+$, where $(\Omega_+, \mathfrak{p})$ is an immersed region.

Not every immersed region can be thickened. For instance, $\Omega$ in Eq.~\eqref{eq:further-thickening} can be thickened into $\Omega_+$. In contrast, $\Omega_+$ cannot be thickened into any immersed region.
\begin{equation}\label{eq:further-thickening}
			\begin{tikzpicture}[scale=0.5]
				% Left
				\begin{scope}[xshift=-2.1 cm, yshift =1 cm] 
					\draw[line width=0.8 pt, fill=yellow!20!white, even odd rule] {(-6.5,1) circle (1.2)} {(-6.5,1) circle (0.7)};
					\draw[color=gray,line width=0.6 pt] {(-6.5,1) circle (1.4)}{(-6.5,1) circle (0.5)};
				\end{scope}
				% arrow
				\draw[->, line width=1 pt, color=blue!50!cyan!90!black] 
				(-7.6, 3.20)
				.. controls +(45:1.0) and +(135:1.5) ..
				(-3.6, 2.55);
				\node[] (B) at (-6,4.1) {{${i}$}};

				\begin{scope}[xshift=-0.8 cm, yshift=0.8 cm]
					%outer line
					\draw[line width=1.2 pt]  (-4.0,-4.2) rectangle (4.0,4.2);
					%Omega
					\fill[yellow!40!white,opacity=0.5, even odd rule]
					{(0.0,-1.6)arc(-90:0:0.4)--(0.4,1.2)arc(0:90:0.4)--(-0,1.6)arc(90:180:0.4)--(-0.4,-1.2)arc(180:270:0.4)--cycle}
					{(-1.0,-2.6)--(1.0,-2.6)arc(-90:0:0.4)--(1.4,2.2)arc(0:90:0.4)--(-1.0,2.6)arc(90:180:0.4)--(-1.4,-2.2)arc(180:270:0.4)--cycle};
					\draw[line width=0.8 pt] 
					{(0.0,-1.6)arc(-90:0:0.4)--(0.4,1.2)arc(0:90:0.4)--(-0,1.6)arc(90:180:0.4)--(-0.4,-1.2)arc(180:270:0.4)--cycle}
					{(-1.0,-2.6)--(1.0,-2.6)arc(-90:0:0.4)--(1.4,2.2)arc(0:90:0.4)--(-1.0,2.6)arc(90:180:0.4)--(-1.4,-2.2)arc(180:270:0.4)--cycle};					
					
					%Outer
					\draw[gray,line width=0.6 pt] 
					{(-1.0-0.4,-2.6-0.4)--(1.0+0.4,-2.6-0.4)arc(-90:0:0.4)--(1.4+0.4,2.2+0.4)arc(0:90:0.4)--(-1.0-0.4,2.6+0.4)arc(90:180:0.4)--(-1.4-0.4,-2.2-0.4)arc(180:270:0.4)--cycle};
					%vertical line
					\draw[gray,line width=0.6 pt] (0,-1.25)--(0,1.25);
					%arc(180:270:0.2)--(0.8+0.25,-2-0.25)arc(-90:0:0.2)--(1.25,-1.25);
					
					\node[] (B) at (0,2.1) {{${\Omega}$}};	
					\node[] (B) at (0,3.45) {\color{gray}{${\Omega_+}$}};		
					
					\node[] (B) at (3+0.2,-2.3-1) {{$M$}};
				\end{scope}				
			\end{tikzpicture}
		\end{equation}

Recall that we have a reference state $\sigma$, defined on $M$. The total Hilbert space of the physical system is assumed to be a tensor product of finite-dimensional local Hilbert spaces associated with the lattice sites. The reference state $\sigma$ satisfies the two axioms {\bf A0} and {\bf A1} on bounded radius balls $b \in \Gamma(M)$. 
Because $(\Omega_+, \mathfrak{p})$ is an immersed region,
we can obtain a set of balls in $\Omega_+$ by finding the connected components of the preimage of $ b \subset \mathfrak{p}({\Omega}_+)$. We call this set of balls of $\Omega_+$ as $\widetilde{\Gamma}(\Omega_+)$. 

Balls in $\widetilde{\Gamma}(\Omega_+)$ are embedded into $M$ by $\mathfrak{p}$. Therefore, according to the definition of the Hilbert space $\mathcal{H}_{\Omega_+}$, there is a natural way to say a state $\rho_{\Omega_+}$ is locally indistinguishable from the reference state. Let $b\in \Gamma(M)$, where $b \subset \mathfrak{p}(\Omega_+)$.
We say a state $\rho_{\Omega_+}$ is \emph{consistent} with $\sigma_b$ if $\rho_{\Omega_+}$, reduced to any connected component of the preimage of $b$, is identical with $\sigma_b$ or its reduced density matrix (after mapping to the physical system according to $\mathfrak{p}$). We denote this consistent condition by 
\begin{equation}
    \rho_{\Omega_+} \overset{c, \, \mathfrak{p}}{=\joinrel=} \sigma_b.
\end{equation}

\begin{definition}[Information convex set for immersed region]\label{def:ICS}
	For an immersed region $\Omega$, which can be thickened into immersed region $(\Omega_+, \mathfrak{p})$, the information convex set $\Sigma(\Omega)$, for a given reference state $\sigma$, is the set of density matrices 
	\begin{equation}
		\Sigma(\Omega) \equiv \{ \rho_{\Omega} \vert\text{conditions } 1,\,2 \},
	\end{equation}
where the two conditions are
\begin{enumerate}
	\setlength\itemsep{-0.2em}
	\item   $\rho_{\Omega}=\Tr_{\Omega_+ \setminus \Omega} \, \rho_{\Omega_+} $, where $\rho_{\Omega_{+}}$ is a density matrix on $\Omega_+$.
 
    \item  $\rho_{\Omega_+} \overset{c, \, \mathfrak{p}}{=\joinrel=} \sigma_b$ for any $b\in \Gamma(M)$ such that $b \subset \mathfrak{p}(\Omega_+)$. 
\end{enumerate}

We shall always consider regions $\Omega$ containing a finite number of sites to avoid infinite-dimensional Hilbert spaces. An immediate consequence is that the information convex set $\Sigma(\Omega)$ must be a compact convex set. Therefore, the convex set is completely determined by the set of \emph{extreme points}. We shall denote the set of extreme points as $\text{ext}(\Sigma(\Omega))$. 
 
\end{definition}
\begin{remark}
	For immersed regions that are not subsystems, the existence of reference state $\sigma$ (defined on $M$) does not immediately guarantee that $\Sigma(\Omega)$ is nonempty. This is because on the immersed region $\Omega_+$, which thickens $\Omega$, we have a set of local density matrices  rather than a global reference state. We do not know if every immersed region has a \emph{nonempty} information convex set. 
    However, for all examples studied in this paper, we can verify the nonemptiness using the merging technique. A simple example is an immersed disk; see Example~\ref{ex:disk} below. 
\end{remark}

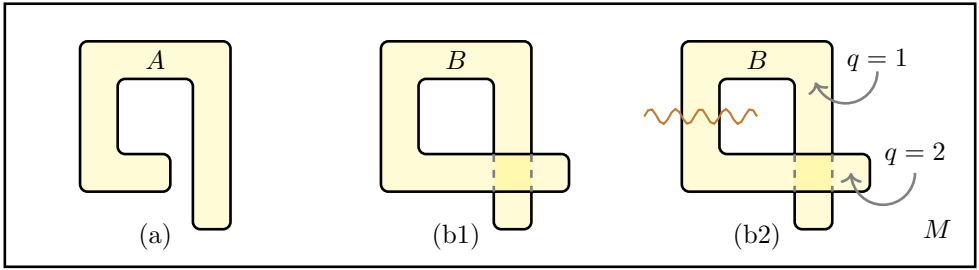
\begin{figure}[h]
	\centering
	\begin{tikzpicture}
		\draw[line width=1.0 pt]  (-2.5,-1.5) rectangle (10.4,2);
		\node[] (B) at (9.5+0.4,-1) {\footnotesize{$M$}};
		\begin{scope}[scale=0.5]
			\fill[yellow!20!white]
			(1,-1.8) -- (1,3-0.2)arc(0:90:0.2) -- (-3+0.2,3)arc(90:180:0.2) -- (-3,-1+0.2)arc(180:270:0.2) -- (-0.8,-1)arc(-90:0:0.2) -- (-0.6, -0.2)arc(0:90:0.2) -- (-2+0.2,0)arc(270:180:0.2) -- (-2,1.8)arc(180:90:0.2) -- (-0.2,2)arc(90:0:0.2) -- (0,-1.8)arc(180:270:0.2) -- (0.8,-2)arc(-90:0:0.2) -- cycle;
			\draw[line width=1 pt]  (1,-1.8) -- (1,3-0.2)arc(0:90:0.2) -- (-3+0.2,3)arc(90:180:0.2) -- (-3,-1+0.2)arc(180:270:0.2) -- (-0.8,-1)arc(-90:0:0.2) -- (-0.6, -0.2)arc(0:90:0.2) -- (-2+0.2,0)arc(270:180:0.2) -- (-2,1.8)arc(180:90:0.2) -- (-0.2,2)arc(90:0:0.2) -- (0,-1.8)arc(180:270:0.2) -- (0.8,-2)arc(-90:0:0.2) -- cycle;
			\node[] (A) at (-1,2.5) {\footnotesize{$A$}};
            \node[] (A) at (-1,-2.2) {\footnotesize{(a)}};
		\end{scope}	
		\begin{scope}[xshift=4 cm, scale=0.5]
			\fill[yellow!20!white, even odd rule] (0,0)--(0,1.8)arc(0:90:0.2)--(-1.8,2)arc(90:180:0.2)--(-2,0.2)arc(180:270:0.2)--(1,0)--(1,2.8)arc(0:90:0.2)--(-2.8,3)arc(90:180:0.2)--(-3,-0.8)arc(180:270:0.2)--(1.8,-1)arc(-90:0:0.2)--(2,-0.2)arc(0:90:0.2)--(1,0);
			\fill[yellow!20!white](0,-1)--(0,-1.8)arc(180:270:0.2)--(0.8,-2)arc(-90:0:0.2)--(1,-1);
			\fill[yellow!40!white] (0,0)--(1,0)--(1,-1)--(0,-1);
			\draw[line width=1 pt]  (0,0)--(0,1.8)arc(0:90:0.2)--(-1.8,2)arc(90:180:0.2)--(-2,0.2)arc(180:270:0.2)--(1,0)--(1,2.8)arc(0:90:0.2)--(-2.8,3)arc(90:180:0.2)--(-3,-0.8)arc(180:270:0.2)--(1.8,-1)arc(-90:0:0.2)--(2,-0.2)arc(0:90:0.2)--(1,0);
			\draw[line width=1 pt] (0,-1)--(0,-1.8)arc(180:270:0.2)--(0.8,-2)arc(-90:0:0.2)--(1,-1);
			\draw[dashed, black!50!white, line width=1 pt] (0,0)--(0,-1);
			\draw[dashed, black!50!white, line width=1 pt] (1,0)--(1,-1);
			\node[] (B) at (-1,2.5) {\footnotesize{$B$}};	
   \node[] (A) at (-1,-2.2) {\footnotesize{(b1)}};
		\end{scope}	
  \begin{scope}[xshift=8 cm, scale=0.5]
			\fill[yellow!20!white, even odd rule] (0,0)--(0,1.8)arc(0:90:0.2)--(-1.8,2)arc(90:180:0.2)--(-2,0.2)arc(180:270:0.2)--(1,0)--(1,2.8)arc(0:90:0.2)--(-2.8,3)arc(90:180:0.2)--(-3,-0.8)arc(180:270:0.2)--(1.8,-1)arc(-90:0:0.2)--(2,-0.2)arc(0:90:0.2)--(1,0);
			\fill[yellow!20!white](0,-1)--(0,-1.8)arc(180:270:0.2)--(0.8,-2)arc(-90:0:0.2)--(1,-1);
			\fill[yellow!40!white] (0,0)--(1,0)--(1,-1)--(0,-1);
			\draw[line width=1 pt]  (0,0)--(0,1.8)arc(0:90:0.2)--(-1.8,2)arc(90:180:0.2)--(-2,0.2)arc(180:270:0.2)--(1,0)--(1,2.8)arc(0:90:0.2)--(-2.8,3)arc(90:180:0.2)--(-3,-0.8)arc(180:270:0.2)--(1.8,-1)arc(-90:0:0.2)--(2,-0.2)arc(0:90:0.2)--(1,0);
			\draw[line width=1 pt] (0,-1)--(0,-1.8)arc(180:270:0.2)--(0.8,-2)arc(-90:0:0.2)--(1,-1);
			\draw[dashed, black!50!white, line width=1 pt] (0,0)--(0,-1);
			\draw[dashed, black!50!white, line width=1 pt] (1,0)--(1,-1);
           \begin{scope}[xshift=-4 cm, yshift=1 cm]
               \draw[thick, orange!70!black!80!white] plot[domain=0:3*pi, samples=30]  (\x/pi,{0.2*sin(\x r*3)}); 
           \end{scope}
           \draw[->, gray, line width=1 pt] (3.2,-0.5)arc(0:-180:0.85);%--(1.5,-0.5);
           \draw[->, gray, line width=1 pt] (2.2,2.2)arc(0:-160:0.85);
            \node[] (B) at (2.2,2.5) {\footnotesize{$q=1$}};
            \node[] (B) at (3.2,0.0) {\footnotesize{$q=2$}};
			\node[] (B) at (-1,2.5) {\footnotesize{$B$}};	
   \node[] (A) at (-1,-2.2) {\footnotesize{(b2)}};
		\end{scope}	
	\end{tikzpicture}
	\caption{Examples of immersed disks in 2d. (a) $A$ is embedded. (b) $B$ is an immersed disk that is not embedded. (b1) or (b2) are without or with explicit labeling of the layers and branch cuts.}
	\label{fig:immersed_disk}
\end{figure}

\begin{exmp}[Immersed disks]\label{ex:disk}
Here we consider the immersed disks illustrated in Fig.~\ref{fig:immersed_disk}.
	Region $A$ in Fig.~\ref{fig:immersed_disk}(a) is an embedded disk, which can be understood in ways described in the previous literature. The more nontrivial case, which we discuss in detail, is $B$ in Fig.~\ref{fig:immersed_disk}(b). It is an immersed disk that is not a subsystem. Let us relabel $B$ as $(B, \mathfrak{p}_B)$ to specify the map $\mathfrak{p}_B$. $\mathfrak{p}_B(B)$ is the image of $B$ on $M$, and it is an annulus rather than a disk. Nonetheless, $\mathfrak{p}_B: B \to M$ is a local embedding, meaning it maps small balls of $B$ onto small balls of $M$. 

 Figure~\ref{fig:immersed_disk}(b1) and (b2) are two ways to describe $B$. While (b2) gives more details, we shall prefer (b1) since it is more concise and contains the same necessary information (up to equivalence). We start by explaining (b2). Recall that $B\ni (x,q)$ according to the second definition of immersed region (in Definition~\ref{def:immersed_region}). $q$ is a discrete label of the layers. Here, part of $B$ is labeled by $q=1$, and the rest is labeled by $q=2$. A branch cut separates the two layers. The way we draw $B$ on top of $M$ specifies the $x$ coordinate. (b1) omitted the branch cut and the layer labels $q=1,2$. However, (b1) contains the same information for the following reasons. First, the layering is implied by the drawing of the two layers with transparency (opaque in the case of Fig.~\ref{fig:immersion-with-p}). Second, the branch cut can deform freely and its precise location is unimportant; (b1) omitted this nonessential information.

The Hilbert space structure of $M$ induces a Hilbert space on $A$ and $B$ through Definition~\ref{def:immersed_region-Hilbert-space}.
Note that the Hilbert space associated with immersed disk $B$ is larger than the Hilbert space associated with the image $\mathfrak{p}_B(B)$. This is because the overlapping region appears twice in the tensor product form of $\mathcal{H}_B$ (by Eq.~\eqref{eq:immersed-H-tensor}). In other words, some qudits are ``recycled".

 Information convex sets $\Sigma(A)$ and $\Sigma(B)$ are well-defined as long as $A$ and $B$ can be thickened to immersed regions $A_+$ and $B_+$ respectively. In the rest of this paragraph, we assume such thickenings exist, which applies to the cases shown in Fig.~\ref{fig:immersed_disk} as one can check pretty easily. 
 The information convex set $\Sigma(A)$ contains a unique element if $A_+$ is an embedded disk; this fact has an elementary proof (Proposition 3.5 of~\cite{shi2020fusion}). The same line of logic does not work for $\Sigma(B)$. The very fact that $\Sigma(B)$ is nonempty has to be justified. It is still possible to show that $\Sigma(B)$ contains a unique element; the proof needs the merging technique; see Corollary~\ref{coro:disk}. The idea is that we can obtain a state on $B$ by merging (Lemma~\ref{lemma:merging_lemma}, Theorem~\ref{thm:merging_info_convex_set}) reference states on a pair of (smaller) embedded disks.
\end{exmp}

\subsubsection{Generalized isomorphism theorem}

There is a huge number of possible choices of immersed regions. Are the structure of their information convex sets different? 
A rough answer is that these structures only depend on the topology of the region we choose, and therefore smooth deformation of the regions cannot change the structure.

The generalized isomorphism theorem (Theorem~\ref{thm:generalized_iso}) provides a precise version of this statement. It is a direct generalization of the isomorphism theorem proved in~\cite{shi2020fusion}.
This generalization is important for some of our applications.
We first explain what we mean by \emph{path} in this context. Consider a finite sequence of immersed regions $\{ \Omega^t\}$, where the parameter $t=i/N$, where $i\in \{0,1,\cdots, N\}$ and $N$ is a positive integer. We call the set $\{ \Omega^t\}$ as a path from $\Omega^0$ to $\Omega^1$ if an adjacent pair of elements in $\{ \Omega^t\}$ are related by adding/removing a small ball in a topologically trivial manner at the boundary of the region. (We say two adjacent elements in $\{ \Omega^t\}$ are related by an elementary step, where the elementary step can either be an extension or restriction.)

\begin{theorem}[Generalized isomorphism theorem]\label{thm:generalized_iso}
	Let $\Omega^0$ and $\Omega^1$ be two immersed regions that are connected by a path.
	Their information convex sets are isomorphic,
	\begin{equation}
		\Sigma(\Omega^0) \cong \Sigma(\Omega^1).
	\end{equation}
\end{theorem}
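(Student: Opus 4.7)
The plan is to reduce the statement to a single elementary step: since the path $\{\Omega^t\}$ consists of finitely many elementary extensions or restrictions, it suffices to exhibit a natural isomorphism $\Sigma(\Omega^t) \cong \Sigma(\Omega^{t+1})$ whenever $\Omega^{t+1}$ is obtained from $\Omega^t$ by adding a small ball $b$ in a topologically trivial way at the boundary; the restriction case is then the inverse, and composing the isomorphisms along the path yields the theorem. This mirrors the strategy of the original isomorphism theorem in~\cite{shi2020fusion}, and the main job is to check that the strategy still makes sense for immersed regions.

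First I would define the forward map $\Phi_t: \Sigma(\Omega^t) \to \Sigma(\Omega^{t+1})$ by merging. Given $\rho \in \Sigma(\Omega^t)$, view the enlarged layered region as $\Omega^{t+1}_+ \supset \Omega^t_+ \cup b_+$, where the small ball $b_+$ and its overlap $b_+ \cap \Omega^t_+$ are both balls in $\widetilde{\Gamma}(\Omega^{t+1}_+)$ (this is the content of ``topologically trivial''). Because $b_+$ is a ball, the lifted local reference state $\sigma^{[\mathfrak{p}]}_{b_+}$ is a well-defined density matrix on $b_+$, and by the definition of $\Sigma(\Omega^t)$ it agrees with $\rho$ on $b_+\cap \Omega^t_+$. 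The Markov-chain conditions required for merging follow from axiom {\bf A1} together with strong subadditivity in the usual way, and moreover they hold layer-by-layer in $\Omega^{t+1}_+$ because the immersion is invertible on each local patch. Thus the merged density matrix $\Phi_t(\rho) \equiv \rho \bowtie \sigma^{[\mathfrak{p}]}_{b_+}$ exists and is unique.

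Next I would verify that $\Phi_t(\rho)$ actually lies in $\Sigma(\Omega^{t+1})$: any ball $\omega \in \widetilde{\Gamma}(\Omega^{t+1}_+)$ is either contained in $\Omega^t_+$, contained in $b_+$, or decomposes (after a small deformation using the Isomorphism Theorem applied locally) into pieces of those two types. In each case the marginal of $\Phi_t(\rho)$ on $\omega$ coincides with $\sigma^{[\mathfrak{p}]}_\omega$ by the defining property of merging. The inverse map is the partial trace $\rho' \mapsto \Tr_b \rho'$, which sends $\Sigma(\Omega^{t+1})$ into $\Sigma(\Omega^t)$ because balls in $\widetilde{\Gamma}(\Omega^t_+)$ are balls in $\widetilde{\Gamma}(\Omega^{t+1}_+)$. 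That $\Phi_t$ and partial trace are mutually inverse follows from two facts: the merged state has $\rho$ as its marginal by construction, and conversely any element of $\Sigma(\Omega^{t+1})$ is the maximum-entropy state consistent with its marginals on $\Omega^t_+$ and $b_+$, so tracing and re-merging reproduces the original. Linearity of both operations in the density matrix makes $\Phi_t$ an affine isomorphism of compact convex sets, hence it carries extreme points to extreme points and preserves entropy differences and trace-norm distances as in the 2d proof.

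The main obstacle — and the only place the immersed setting genuinely enters — is the verification that the merging step makes sense when the added ball may sit near a self-overlap of $\Omega^t$, where the image in $M$ is multiply covered. The layered description $(\Omega_+,\Omega,\mathfrak{p})$ handles this: $b_+$ is chosen as a ball in $\Omega^{t+1}_+$, not as a subset of $M$, and merging is performed in the tensor-product Hilbert space associated with the lift, where there is no ambiguity about which layer one is on. Once this point is granted, the entire merging-based argument of~\cite{shi2020fusion} applies verbatim to each elementary step, and concatenating the resulting isomorphisms along the path $\{\Omega^t\}$ establishes $\Sigma(\Omega^0)\cong \Sigma(\Omega^1)$.
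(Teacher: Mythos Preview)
Your proposal is correct and follows essentially the same approach the paper sketches: the paper explicitly omits the proof as ``similar to the proof of the isomorphism theorem'' (Theorem~3.10 of \cite{shi2020fusion}), highlighting that the key ingredient is merging the state on $\Omega^t$ with the lifted reference state on a ball via the Merging Lemma~\ref{lemma:merging_lemma}, and then invoking the Merging Theorem~\ref{thm:merging_info_convex_set} to ensure the result lands in $\Sigma(\Omega^{t+1})$. Your observation that the only genuinely new point in the immersed setting is to perform the merging in the layered Hilbert space of $(\Omega_+,\Omega,\mathfrak{p})$ rather than in $M$ is exactly the point the paper is making by introducing that triple, and your argument that partial trace and merging are mutual inverses (because every element of the information convex set is already a quantum Markov chain with respect to the relevant partition) is the standard mechanism from \cite{shi2020fusion}.
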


	As in the original isomorphism theorem (Theorem 3.10 of Ref.~\cite{shi2020fusion}), the ``$\cong$" in Eq.~(\ref{eq:Hilbert_1}) is an isomorphism that preserves three things:
	\begin{enumerate}
		\setlength\itemsep{-0.2em}
		\item the structure as a convex set
		\item the entropy difference of two elements
		\item  distance measures and the fidelity between any two elements 
	\end{enumerate}
\begin{remark} A few remarks are in order.
	\begin{enumerate}
		\setlength\itemsep{-0.2em}
		\item  If we view the immersed region $\Omega^t$ through the pair $(\widehat{\Omega}, i_t)$ instead,  the sequence of immersion maps $\{ i_{i/N}\}_{i=1}^N$ then describes a discrete version of \emph{regular homotopy} equivalence of $i_0$ and $i_1$. Intuitively, the generalized isomorphism theorem is a statement about  regular homotopy equivalence, whereas the original isomorphism theorem (for subsystems) is about isotopy equivalence.
		\item We omit the proof because it is similar to the proof of the isomorphism theorem. We emphasize that the key ingredient is the ability to construct an unknown density matrix by merging two quantum Markov states,\footnote{We say a state $\rho$ is a quantum Markov state with respect to partition $A,B,C$ if the conditional mutual information $I(A:C\vert B)_{\rho}\equiv (S_{AB} + S_{BC} - S_B - S_{ABC})_{\rho}=0$.} by applying the merging lemma:
		\begin{lemma}[Merging Lemma \cite{kato2016information}]\label{lemma:merging_lemma}
			If there is a pair of quantum states $\rho_{ABC}$ and $\lambda_{BCD}$ satisfy
			$\rho_{BC}=\lambda_{BC}$
			and $I(A:C\vert B)_{\rho}= I(B:D\vert C)_{\lambda}=0$,  there exists a unique quantum state (``merged state") $\tau_{ABCD}$ such that 
			\begin{eqnarray}
				\Tr_D \tau_{ABCD} &=& \rho_{ABC} \nonumber\\
				\Tr_A \tau_{ABCD} &=& \lambda_{BCD} \nonumber\\
				I(A: CD\vert B)_{\tau} &=& I(AB: D\vert C)_{\tau}=0. \nonumber
			\end{eqnarray}
		\end{lemma}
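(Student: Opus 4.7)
The plan is to construct $\tau_{ABCD}$ explicitly using a Petz recovery map and then verify the marginal and Markov conditions by direct computation. The structural input I will rely on is the Hayden--Jozsa--Petz--Winter characterization of quantum Markov states: $I(B:D|C)_\lambda = 0$ is equivalent to the existence of a CPTP map $\mathcal{R}_{C\to CD}$ (the Petz map, determined by the marginal $\lambda_{CD}$) such that $\lambda_{BCD} = (\mathrm{id}_B \otimes \mathcal{R}_{C\to CD})(\lambda_{BC})$. The point I want to exploit is that $\mathcal{R}_{C\to CD}$ depends only on $\lambda_{CD}$, so it is agnostic about what other systems $C$ is entangled with; we are free to apply it inside any global state containing $C$.

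First I would define the candidate merged state by $\tau_{ABCD} := (\mathrm{id}_{AB} \otimes \mathcal{R}_{C\to CD})(\rho_{ABC})$. Two of the three desired properties are then immediate. Tracing out $D$ uses only trace preservation of $\mathcal{R}_{C\to CD}$ to give $\Tr_D \tau = \rho_{ABC}$. Tracing out $A$ first yields $(\mathrm{id}_B \otimes \mathcal{R}_{C\to CD})(\rho_{BC})$, and the hypothesis $\rho_{BC} = \lambda_{BC}$ identifies this with $\lambda_{BCD}$. The Markov condition $I(AB:D|C)_\tau = 0$ also holds by construction, because $\tau$ is obtained from its $ABC$-marginal by a recovery map that acts only on $C$ and appends $D$; equivalently, the Petz map outputs a Markov chain $AB - C - D$.

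The main nontrivial step will be $I(A:CD|B)_\tau = 0$. I would use the chain rule $I(A:CD|B) = I(A:C|B) + I(A:D|BC)$. The first summand vanishes because $\tau_{ABC} = \rho_{ABC}$ and $I(A:C|B)_\rho = 0$ by hypothesis. For the second, write $I(A:D|BC)_\tau = S(ABC) + S(BCD) - S(BC) - S(ABCD)$; the already-established $I(AB:D|C)_\tau = 0$ gives $S(ABCD) = S(ABC) + S(CD) - S(C)$, while $I(B:D|C)_\lambda = 0$ gives $S(BCD) = S(BC) + S(CD) - S(C)$, and the four entropies telescope to zero. For uniqueness, any $\tau'$ satisfying the three listed conditions has $\tau'_{ABC} = \rho_{ABC}$ and $\tau'_{CD} = \lambda_{CD}$, and $I(AB:D|C)_{\tau'} = 0$ forces $\tau'$ to be the recovery of its $ABC$-marginal via the Petz map determined by $\tau'_{CD} = \lambda_{CD}$, which is exactly $\tau$. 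The main obstacle I anticipate is not the construction itself but the bookkeeping that the $C \to CD$ recovery map built from $\lambda$ still reproduces the correct joint state when it is applied on top of $\rho_{ABC}$; this is underwritten by the state-independence of the Petz map in the unaffected factors, after which everything reduces to trace preservation and the short entropy computation above.
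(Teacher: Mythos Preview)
The paper does not prove this lemma; it is quoted from \cite{kato2016information} as a known tool. Your Petz-recovery construction is the standard route taken there, and your entropy bookkeeping for $I(A:CD|B)_\tau=0$ and your uniqueness argument are correct.

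There is, however, a genuine gap in the step you call ``immediate.'' The claim that $\Tr_D\tau=\rho_{ABC}$ follows from \emph{trace preservation} of $\mathcal{R}_{C\to CD}$ is false: trace preservation says only that $\Tr_{CD}\mathcal{R}_{C\to CD}(X)=\Tr_C X$, not that $\Tr_D\circ\mathcal{R}_{C\to CD}$ is the identity on $C$. In general it is not; e.g.\ if $\lambda_{CD}$ is maximally entangled then $\Tr_D\circ\mathcal{R}_{C\to CD}$ is the completely depolarizing channel on $C$. What rescues the argument is precisely the hypothesis $I(A:C|B)_\rho=0$ that you did not invoke here: write $\rho_{ABC}=(\mathcal{R}'_{B\to AB}\otimes\mathrm{id}_C)(\rho_{BC})$ via the $A$--$B$--$C$ Markov structure, and use that $(\mathrm{id}_B\otimes\Tr_D\circ\mathcal{R}_{C\to CD})$ fixes $\rho_{BC}=\lambda_{BC}$ (this \emph{does} follow, since $(\mathrm{id}_B\otimes\mathcal{R}_{C\to CD})(\lambda_{BC})=\lambda_{BCD}$ and then trace out $D$). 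Composing the two gives $\Tr_D\tau=\rho_{ABC}$. Equivalently, define $\tau$ symmetrically as $(\mathcal{R}'_{B\to AB}\otimes\mathcal{R}_{C\to CD})(\rho_{BC})$ and both marginal checks become parallel. Once this is repaired, your chain-rule verification of $I(A:CD|B)_\tau=0$ (which relies on $\tau_{ABC}=\rho_{ABC}$) goes through as written.
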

		Here $I(A:C\vert B)_{\rho}\equiv (S_{AB} + S_{BC} - S_B - S_{ABC})_{\rho}$ is the \emph{conditional mutual information}.

		By the merging lemma, for each elementary step of extension, one can obtain a density matrix on the new region by merging the original density matrix and the reference state on a ball.
		Finally, it is important that the merging lemma can be promoted to the merging theorem, which guarantees that the resulting state is in an information convex set. For readers' convenience, we write down the full statement of the merging theorem at the end of this section; see Theorem~\ref{thm:merging_info_convex_set}.
	\end{enumerate}	
\end{remark}

\begin{corollary} \label{coro:disk}
		Let $\omega$ be an immersed ball (in either 2d or 3d), then $\Sigma(\omega)$ contains a unique element.
\end{corollary}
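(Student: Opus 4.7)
The plan is to invoke the Generalized Isomorphism Theorem (Theorem~\ref{thm:generalized_iso}) to reduce the claim about immersed balls to the corresponding statement about embedded balls, which is already known.

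First, I would exploit the fact that the immersion map $i:\widehat{\omega}_+\to M$ is a local embedding (see the remark following Definition~\ref{def:immersed_region}). Pick a sub-ball $\widehat{\omega}'\subset\widehat{\omega}$ so small that $i$ is injective on a thickening $\widehat{\omega}'_+$ of it; the corresponding lifted region $\omega'$ is then an \emph{embedded} ball, i.e.\ an honest subsystem of $M$.

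Second, I would build a path from $\omega$ to $\omega'$ in the sense defined just above Theorem~\ref{thm:generalized_iso}. Because $\widehat{\omega}_+$ is a topological ball, we can smoothly contract $\widehat{\omega}$ onto $\widehat{\omega}'$, and then discretize this contraction into a finite sequence of elementary restrictions, each one shaving off a single topologically trivial small ball at the boundary of the abstract space $\widehat{\omega}_+$. Every intermediate region is again an immersed ball (the restriction of an immersion is an immersion), so the path is legal, and Theorem~\ref{thm:generalized_iso} supplies an isomorphism of convex sets $\Sigma(\omega)\cong\Sigma(\omega')$.

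Finally, for the embedded ball $\omega'$, Proposition~3.5 of~\cite{shi2020fusion}---which relies only on axiom {\bf A0} together with a standard application of the merging lemma, and therefore works uniformly in 2d and 3d---shows that $\Sigma(\omega')$ contains a unique element, namely the reduction of the reference state. Transporting this conclusion back through the isomorphism completes the proof. The only real technical point is the discretization in the second step: one must verify that the contraction can be arranged so that each transition is a genuine elementary step in the sense of the excerpt. This is essentially a combinatorial remark, handled by choosing a fine enough cellulation of $\widehat{\omega}_+$ so that successive sub-balls differ by a single topologically trivial collar ball, and is where I would expect to spend the small amount of care this proof requires.
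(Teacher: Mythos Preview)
Your proposal is correct and matches the paper's intended argument. The paper states this corollary immediately after the Generalized Isomorphism Theorem without writing out a proof, clearly meaning it to follow exactly as you describe: shrink the immersed ball along a path of elementary restrictions to a small embedded sub-ball (where the immersion is an honest embedding), invoke Theorem~\ref{thm:generalized_iso} for the isomorphism $\Sigma(\omega)\cong\Sigma(\omega')$, and then cite the known uniqueness result for embedded balls (Proposition~3.5 of \cite{shi2020fusion}).
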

 
\begin{figure}[h]
	\centering
	\begin{tikzpicture}	
		\begin{scope}
			\begin{knot}[
				consider self intersections=true,
				%  draft mode=crossings,
				flip crossing=2,
				%only when rendering/.style={
				%    show curve controls
				%}
				clip width=3, yshift=-4 cm, scale=0.7]
				
				\strand[blue!50!white, double=cyan!20!white, ultra thick, double distance=4 pt] (0,0) circle (1.8);
			\end{knot}
		\end{scope}
		\begin{scope}[xshift=5 cm, scale=1]
			\begin{knot}[
				consider self intersections=true,
				%  draft mode=crossings,
				flip crossing=2,
				%only when rendering/.style={
				%    show curve controls
				%}
				clip width=3, yshift=-4 cm, scale=0.7]
				
				\strand[blue!50!white, double=cyan!20!white, ultra thick, double distance=4 pt] (0,2) .. controls +(2.5,0) and +(120:-2.5) .. (210:2) .. controls +(120:2.5) and +(60:2.5) .. (-30:2) .. controls +(60:-2.5) and +(-2.5,0) .. (0,2);
			\end{knot}
		\end{scope}
	\end{tikzpicture}
	\caption{Two solid tori in 3d space, whose information convex sets are isomorphic. (Left) A solid unknot. (Right) A solid trefoil knot.}\label{fig:solid_knot}	
\end{figure}
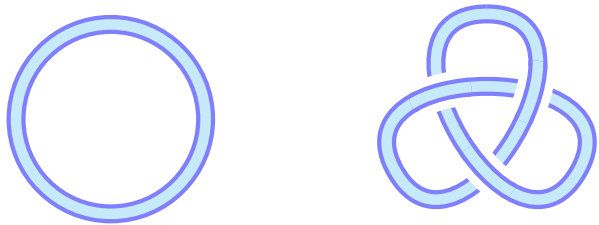

%Below is a 3d instance that we shall use often.
\begin{corollary}\label{coro:unknot_a_solid_torus}
 Let $T_1$ and $T_2$ be two solid tori embedded in a ball in 3d space. Their information convex sets are isomorphic, $\Sigma(T_1) \cong \Sigma(T_2)$.
\end{corollary}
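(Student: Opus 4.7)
The plan is to reduce this to the Generalized Isomorphism Theorem (Theorem~\ref{thm:generalized_iso}), which crucially requires only a path of immersed regions — not an ambient isotopy of embedded regions. This distinction is essential here, because $T_1$ and $T_2$ may be knotted differently (for instance, the solid unknot and the solid trefoil knot shown in Fig.~\ref{fig:solid_knot} are not ambient isotopic in a ball). First I would fix a reference region $T_0$, e.g.\ the standard solid unknot torus, and show that every embedded solid torus $T$ in a ball is connected to $T_0$ by a path in the sense of Theorem~\ref{thm:generalized_iso}. By transitivity this gives $\Sigma(T_1) \cong \Sigma(T_0) \cong \Sigma(T_2)$.

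The key topological input is that in $\mathbb{R}^3$ any knot is regularly homotopic to the unknot; equivalently, any two immersions of $S^1 \times D^2$ into a ball are regularly homotopic. Classically, any knot diagram can be reduced to the unknot by finitely many \emph{crossing changes}, i.e.\ local moves in which one strand is pushed through another. Away from these crossing changes, the deformation is an ambient isotopy, which the original isomorphism theorem of~\cite{shi2020fusion} already handles by discretization into elementary ball extensions and restrictions at the boundary.

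The step requiring the generalized version is the crossing change itself. Near a single crossing I would thin one strand of the solid torus, push it transversally through the other strand — producing an intermediate immersed region with a transverse double layer — and thicken back on the other side. Each micro-step in this local move adds or removes a small ball at the boundary of the (possibly immersed) region, and thus fits the definition of an elementary step preceding Theorem~\ref{thm:generalized_iso}. Concatenating finitely many such crossing changes with ambient isotopies yields the desired path $\{\Omega^t\}$ from $T$ to $T_0$, and Theorem~\ref{thm:generalized_iso} supplies the isomorphism.

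The main obstacle is the crossing-change step: one has to verify that the transverse passage of one strand through another can genuinely be implemented as a finite sequence of ball extensions and restrictions, staying within the immersed-region category at every intermediate stage (in particular, that the two layers near the crossing constitute a legitimate immersed region in the sense of Definition~\ref{def:immersed_region}, with a well-defined projection $\mathfrak{p}$ separating them). Once this local model is in place, the rest is routine discretization of a smooth regular homotopy, exactly parallel to the proof of the original isomorphism theorem.
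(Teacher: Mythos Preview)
Your proposal is correct and is essentially the approach the paper intends: the corollary is stated there without proof, as an immediate consequence of the Generalized Isomorphism Theorem~\ref{thm:generalized_iso}, the point being precisely that regular homotopy (rather than ambient isotopy) suffices. Your crossing-change model for passing one strand through another is exactly the local move the paper later spells out in detail in Fig.~\ref{fig:untie} when proving Proposition~\ref{prop:anypq}, so your ``main obstacle'' is resolved there.
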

 For example, $T_1$ can be an unknot and $T_2$ can be knotted. See Fig.~\ref{fig:solid_knot}.

The merging theorem below indicates that elements of the information convex sets are ``closed'' under the merging operation as long as a mild extra condition is satisfied.
\begin{theorem}[Merging Theorem \cite{shi2020fusion}]\footnote{Merging theorem is Proposition C.5 of \cite{shi2020fusion}, which is later printed as Theorem II.3 in \cite{Shi:2020rne}, under the current name.}\label{thm:merging_info_convex_set}
	Consider two density matrices  $\rho_{ABC}\in {\Sigma}(ABC)$ and  $\lambda_{BCD}\in {\Sigma}(BCD)$, such that $ABCD$ is  an immersed region. $E$ is an immersed region that thickens $ABCD$. Consider three conditions:
	\begin{enumerate}
		\setlength\itemsep{-0.2em}
		\item $\rho_{BC}= \lambda_{BC}$ and  $I(A:C\vert B)_{\rho}= I(B:D\vert C)_{\lambda}=0$.
		\item There exists a partition $B'C'=BC$, such that no bounded radius ball in $\widetilde \Gamma(E)$ overlaps with both $AB'$ and $CD$.
		\item $I(A:C'\vert B')_{\rho}= I(B':D\vert C')_{\lambda}=0$.
	\end{enumerate} 
	If these three conditions hold, the resulting density matrix generated by merging $\rho_{ABC}$ and $\lambda_{BCD}$ (by applying Lemma~\ref{lemma:merging_lemma}) belongs to ${\Sigma(ABCD)}$.
\end{theorem}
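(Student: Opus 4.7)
The existence and uniqueness of $\tau_{ABCD}$ with the stated marginals and Markov conditions is immediate from Lemma~\ref{lemma:merging_lemma} under condition 1 alone. The content of the theorem is therefore the membership $\tau_{ABCD} \in \Sigma(ABCD)$, which, by Definition~\ref{def:ICS}, requires exhibiting an extension $\tau_+$ on the thickening $(ABCD)_+$ that agrees with the reference state $\sigma$ on every ball in $\widetilde{\Gamma}((ABCD)_+)$. The strategy is a second application of the merging lemma, this time to thickened witnesses.

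Since $\rho \in \Sigma(ABC)$, choose an extension $\rho_+$ on $(ABC)_+$ consistent with $\sigma$ on every local ball; similarly choose $\lambda_+$ on $(BCD)_+$. The geometric content of condition 2 is that the partition $B'C' = BC$ can be fattened so that $(AB')_+$ and $(C'D)_+$ are disjoint and the thickenings $(ABC)_+$ and $(BCD)_+$ overlap only in a neighborhood of $B'C'$, with no ball in $\Gamma(M)$ straddling the interface. On this overlap $\rho_+$ and $\lambda_+$ share a common marginal: both reduce there (through their defining local consistency with $\sigma$) to the reference-state data on the overlap region, and this data is shared by $\rho$ and $\lambda$ because of the agreement $\rho_{BC} = \lambda_{BC}$ in condition 1.

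Now apply Lemma~\ref{lemma:merging_lemma} a second time to $\rho_+$ and $\lambda_+$, using the refined partition that respects $B'$ and $C'$. The required conditional-independence hypotheses are exactly condition 3, lifted to the enlarged regions. The lift is legitimate because each extension $\rho_+$ (respectively $\lambda_+$) is itself built, by iterated application of Lemma~\ref{lemma:merging_lemma}, from $\rho$ (resp.\ $\lambda$) and the reference state on balls; condition 2 guarantees that all of these balls lie entirely on one side of the $B'/C'$ partition, so the local merging cannot generate new correlations between $A$ and $C'$ conditional on $B'$ (or between $B'$ and $D$ conditional on $C'$). The merged output $\tau_+$ lives on $(ABCD)_+$; by the marginal property of the merging lemma it restricts to $\rho_+$ on $(ABC)_+$ and to $\lambda_+$ on $(BCD)_+$, and its further restriction to $ABCD$ equals $\tau$ by the uniqueness clause applied to the first merging. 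Condition 2 finally ensures that every ball in $\widetilde{\Gamma}((ABCD)_+)$ lies inside $(ABC)_+$ or $(BCD)_+$, hence $\tau_+$ agrees with $\sigma$ on every such ball, so $\tau \in \Sigma(ABCD)$.

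The delicate step, and the main obstacle, is the Markov-lifting: verifying that the zero conditional mutual informations of condition 3 propagate from $\rho,\lambda$ to the thickened states $\rho_+,\lambda_+$ on appropriately enlarged regions. The argument relies essentially on describing $\rho_+$ as an iterated local merging of $\rho$ with reference-state ball data, together with the geometric fact from condition 2 that the partition boundary $B'/C'$ can be chosen thick enough never to be crossed by such a ball; careful bookkeeping of which balls enter the extension and how they sit relative to the conditioning region is what the proof will have to carry out in detail.
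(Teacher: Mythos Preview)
The paper does not supply its own proof of this theorem: as the remark immediately after the statement makes explicit, the result is quoted from \cite{shi2020fusion} (Proposition~C.5 there; also Theorem~II.3 of \cite{Shi:2020rne}). There is therefore no in-paper argument to compare against.

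Your plan matches the strategy of the cited proof: obtain $\tau$ on $ABCD$ from Lemma~\ref{lemma:merging_lemma} under condition~1, then build the required thickened witness $\tau_+$ by a second merging of the thickened extensions $\rho_+$ and $\lambda_+$, with conditions~2 and~3 supplying precisely the separation and Markov data needed for that second merge. You have also correctly identified the main technical burden, namely lifting the Markov conditions of condition~3 from $\rho,\lambda$ to $\rho_+,\lambda_+$.

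One point deserves sharpening. Your claim that $\rho_+$ and $\lambda_+$ agree on the overlap of the thickened regions ``through their defining local consistency with $\sigma$'' is not quite the right reason: local consistency with $\sigma$ on balls does not by itself force two global states to coincide. What makes the overlap marginals match is that the extension from $BC$ into the thin collar is obtained by merging with reference-state balls and is therefore \emph{unique} (a quantum Markov chain with prescribed marginals is unique); since $\rho_{BC}=\lambda_{BC}$, the two uniquely determined extensions into the collar coincide. This is the same mechanism you later invoke for the Markov-lifting, and making it explicit here closes the only real gap in the outline.
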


\begin{remark}
	In the context that the merging theorem applies, we shall often denote the merged state as $\tau = \rho \bowtie \lambda$. We emphasize that the extra conditions 2 and 3 are introduced to avoid pathological cases, which may happen when the regions are ``too thin". For all the applications we consider in this paper, the regions are thick enough; thus, conditions 2 and 3 hold as long as condition 1 holds. 
\end{remark}

\subsection{Structure theorems}
\label{subsec:structure-theorems}
Information convex sets are convex. 
 What are the geometries of these sets? What are the entropy difference and distance measures between two elements of an information convex set? The (generalized) isomorphism theorem implies that the answer can only depend on the ``topological class" of the immersed regions.
 
 In this section, we review the structure theorems~\cite{shi2020fusion,Shi:2020rne}, which provide a concrete answer to these questions. In particular, the information convex set for any sectorizable region~\cite{Shi:2020rne} (see Definition~\ref{def:sectorizable} below) forms a simplex. For more general choices of immersed regions, 
 the information convex set is the set of density matrices on a set of finite-dimensional Hilbert spaces, which we call  fusion spaces.
 We describe a version of these theorems for immersed regions; these are immediate generalizations of the original version.

\subsubsection{Simplex theorem and superselection sectors}\label{subsec:simplex}
Under what conditions is the information convex set of a region a simplex, with orthogonal extreme points?
The simple and flexible notion of sectorizable regions~\cite{Shi:2020rne} captures this simple condition. For sectorizable regions, an element of the information convex set carries only classical information, and the set of extreme points can be identified with a set of superselection sectors.
 
\begin{definition}[Sectorizable Region~\cite{Shi:2020rne}]
	An immersed region $S$ is \emph{sectorizable} if there is a region $\widehat{S}$ such that:
	\begin{enumerate}
		\setlength\itemsep{-0.2em}
		\item $\widehat{S}$ contains disjoint regions $S$ and $S''$.
		\item  Both $S$ and $S''$ can be deformed to $\widehat{S}$ by a path formed by extensions.
	\end{enumerate}
	\label{def:sectorizable}
\end{definition}

\begin{exmp}
Here are a few simple examples of sectorizable regions:
	\begin{enumerate}
	\setlength\itemsep{-0.2em}
	\item 2d regions: disk, annulus, the union of spatially-separated disks and annuli
	\item 3d regions: ball, solid torus, sphere shell, torus shell (more in \S\ref{sec:ICS})
\end{enumerate}
\end{exmp}
\begin{remark}
	Every connected sectorizable region in the examples above has either one or two boundary components. This is a general fact; see Proposition~\ref{prop:1or2}. Furthermore, for every example above,
    $S=\CM \times  \bbI$, where $\bbI$ is an interval and $\CM$ is a $(d-1)$-dimensional manifold, possibly with boundaries. We do not know if this feature is general; see Conjecture~\ref{conj:sectorizable-product}.
	
\end{remark}

The simplex theorem for sectorizable region can be stated:
\begin{theorem}[Simplex theorem (Theorem 4.1 of Ref.~\cite{shi2020fusion})] \label{thm:sectorizable}
	 Let $S$ be a sectorizable region. Then $\Sigma(S)$ is a simplex, that is:
	\begin{equation}
		\Sigma(S) = \left\{\sum_I p_I \rho^I_S \left \vert \, \sum_I p_I=1, p_I\geq 0 \right.\right\},
	\end{equation}
	where $\{ \rho^I_S \}$ is a set of  mutually orthogonal density matrices. 
\end{theorem}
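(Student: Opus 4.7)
The plan is to exploit the sectorizable structure to lift elements of $\Sigma(S)$ into $\Sigma(\widehat{S})$, where the disjoint copies $S$ and $S''$ inside $\widehat{S}$ enforce a classical product structure, which in turn forces the extreme points of $\Sigma(S)$ to be mutually orthogonal.

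First I would use Definition~\ref{def:sectorizable} to fix $\widehat{S} \supset S \sqcup S''$ with both inclusions realized by sequences of extensions, and apply the Generalized Isomorphism Theorem~\ref{thm:generalized_iso} to obtain convex, entropy-preserving isomorphisms $\Phi : \Sigma(S) \xrightarrow{\sim} \Sigma(\widehat{S})$ and $\Phi'' : \Sigma(S'') \xrightarrow{\sim} \Sigma(\widehat{S})$. For $\rho \in \Sigma(S)$, write $\hat\rho := \Phi(\rho)$ and $\rho^{\flat} := (\Phi'')^{-1}(\hat\rho)$, so that $\hat\rho|_S = \rho$ and $\hat\rho|_{S''} = \rho^{\flat}$.

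The central technical step is to prove that for every extreme point $\hat\rho \in \text{ext}(\Sigma(\widehat{S}))$,
\begin{equation*}
\hat\rho\big|_{S \sqcup S''} \;=\; \rho \otimes \rho^{\flat}.
\end{equation*}
I would establish this by constructing $\hat\rho$ through iterated merging (Merging Lemma~\ref{lemma:merging_lemma} and Theorem~\ref{thm:merging_info_convex_set}) of $\rho$ with reference-state reductions on a sequence of small balls realizing an extension path from $S$ to $\widehat{S}$, the path being chosen so that the growing region never enters a fixed open neighborhood of $S''$. Each elementary step produces a quantum Markov chain condition between the attached ball and the untouched complement (which always contains $S''$); telescoping these conditions via the chain rule for conditional mutual information yields $I(S : S'')_{\hat\rho} = 0$, equivalent to the desired product form. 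This is the main obstacle, since one must verify that a path avoiding $S''$ exists (a consequence of the disjointness clause in Definition~\ref{def:sectorizable}) and that the geometric separation hypothesis~2 of Theorem~\ref{thm:merging_info_convex_set} is preserved throughout.

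Granted the factorization on extreme lifts, the simplex structure follows. Given distinct extreme points $\rho^I, \rho^J \in \text{ext}(\Sigma(S))$, the lifts $\hat\rho^I, \hat\rho^J$ are distinct extremes of $\Sigma(\widehat{S})$ by convex isomorphism. Form $\hat\rho = \tfrac{1}{2}(\hat\rho^I + \hat\rho^J)$, lifting $\rho = \tfrac{1}{2}(\rho^I + \rho^J)$, and use the factorization on the summands to get $\hat\rho|_{S \sqcup S''} = \tfrac{1}{2}\rho^I \otimes \rho^{I,\flat} + \tfrac{1}{2}\rho^J \otimes \rho^{J,\flat}$. Combining the entropy-preservation of $\Phi, \Phi''$ with the identity $S(\hat\rho|_{S \sqcup S''}) = S(\hat\rho|_S) + S(\hat\rho|_{S''}) - I(S:S'')_{\hat\rho}$, and matching the upper bound $S(\hat\rho|_{S \sqcup S''}) \le \tfrac{1}{2}[S(\rho^I\!\otimes\!\rho^{I,\flat}) + S(\rho^J\!\otimes\!\rho^{J,\flat})] + \log 2$ with its lower counterpart on $\Sigma(S)$, one identifies the equality case, which is saturated precisely when $\rho^I \perp \rho^J$. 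Hence the extreme points of $\Sigma(S)$ are mutually orthogonal density matrices and $\Sigma(S)$ is a simplex.
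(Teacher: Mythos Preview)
Your overall strategy---lift to $\widehat{S}$, establish the factorization $\hat\rho|_{S\sqcup S''}=\rho\otimes\rho^\flat$ on extreme points, then extract orthogonality from the doubled copy---is the right one and matches the argument of \cite{shi2020fusion} that the paper cites without reproducing. However, both of your key steps have genuine gaps in execution.

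For the factorization step, the path description is self-contradictory: an extension path from $S$ to $\widehat{S}$ must eventually cover $S''\subset\widehat{S}$, so the growing region cannot ``never enter a fixed open neighborhood of $S''$.'' More seriously, the quantum Markov conditions produced by the Merging Lemma~\ref{lemma:merging_lemma} relate the \emph{existing} region to the newly attached ball, not the ball to the untouched complement; these conditions do not telescope via the chain rule to $I(S:S'')_{\hat\rho}=0$. The route that works is to invoke the extreme point criterion (Lemma~\ref{lemma:ext_criterion}) on an intermediate region $\Omega_0\subset\widehat{S}$ with $S\subset\Omega_0\setminus\partial\Omega_0$ and $S''\cap\Omega_0=\emptyset$, extendable to $\widehat{S}$. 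Since $\hat\rho|_{\Omega_0}$ is then extreme, $\Delta(\partial\Omega_0,\Omega_0\setminus\partial\Omega_0)_{\hat\rho}=0$, and a short purification argument (purify $\hat\rho$ and use $S_E=S_{DF}$, $S_{DE}=S_F$, $S_D=S_{EF}$) yields $I(\Omega_0\setminus\partial\Omega_0:\widehat{S}\setminus\Omega_0)_{\hat\rho}=0$, hence $I(S:S'')_{\hat\rho}=0$.

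For the orthogonality step, your entropy bounds do not close: combining the upper bound on $S(\hat\rho|_{S\sqcup S''})$ with concavity on $S(\hat\rho|_S)$ and $S(\hat\rho|_{S''})$ only traps $I(S:S'')_{\hat\rho}$ in $[0,2\log 2]$, and entropy-difference preservation of $\Phi,\Phi''$ adds no independent equation that forces saturation. The argument that actually works uses fidelity (also preserved by Theorem~\ref{thm:generalized_iso}) together with its multiplicativity under tensor products: setting $f:=F(\rho^I,\rho^J)=F(\hat\rho^I,\hat\rho^J)$, monotonicity under the partial trace to $S\sqcup S''$ gives
\[
f\;\le\;F\bigl(\rho^I\otimes\rho^{I,\flat},\,\rho^J\otimes\rho^{J,\flat}\bigr)\;=\;F(\rho^I,\rho^J)\,F(\rho^{I,\flat},\rho^{J,\flat})\;=\;f^2,
\]
so $f\in\{0,1\}$, i.e.\ distinct extreme points are orthogonal.
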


	In the context of theorem~\ref{thm:sectorizable}, the set of labels $I$ forms a set $\calC_S$, which we shall refer to as the set of superselection sectors.
	
	In many contexts,\footnote{There are exceptions. For example, the simplex theorem can apply to an annulus surrounding a topological defect~\cite{Bombin2010}, even though none of the extreme points can be viewed as a vacuum.} it is meaningful to talk about a special label, the vacuum sector, denoted as ``$1$".
	 If $S$ is a subsystem of a ball, we shall define the vacuum sector such that $\rho_S^1= \sigma_S$ is the reduced density matrix of the reference state. (A nontrivial fact is that $\sigma_S$ is  an extreme point no matter how complex the sectorizable subsystem is. See Lemma~\ref{lemma:vacuum}.)
	
	\begin{definition}[Quantum dimension]\label{def:quantum_dim_added}
	Whenever the vacuum sector is well-defined, we define the quantum dimension of superselection sector $I\in \calC_S$ as
	\begin{equation}\label{def:quantum_dim}
		d_I\equiv \exp\left( \frac{S(\rho^I_S)-S(\rho^1_S)}{2}\right).
	\end{equation}
\end{definition}

For instance, a sphere shell in 3d is a sectorizable region. The extreme points correspond to the superselection sectors of the point excitations of the 3d topological order. We shall see a variety of 3d sectorizable regions in \S\ref{sec:ICS}.  For these examples, our definition is compatible with the idea that the quantum dimensions should be a positive eigenvector of the fusion multiplicities, as in 2d \cite{Kitaev2005,shi2020fusion}.

\subsubsection{Hilbert space theorem and fusion spaces}
\label{subsubsec:fusionspaces}
For regions that are not sectorizable, to describe the structure of the information convex set, a set of finite dimensional Hilbert spaces is needed.
These Hilbert spaces can be thought of as generalizations of the  notion of fusion spaces in anyon theory.
The Hilbert space theorem~\cite{shi2020fusion,Shi:2020rne} is a concrete version of this statement.

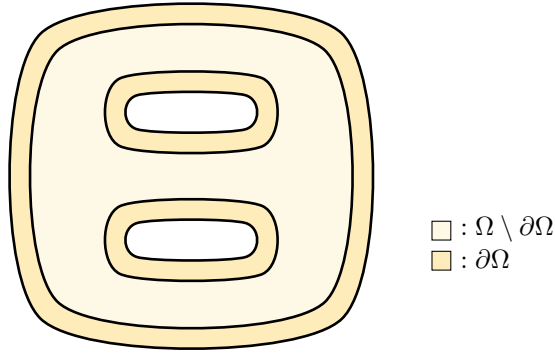
\begin{figure}[h]
	\centering
	\begin{tikzpicture}
		
		\begin{scope}[scale=1]
			\draw[fill=yellow!60!orange!30!white, line width=1pt] plot [smooth cycle] coordinates {(-0.5,-0.5-0.4) (3.5,-0.5-0.4) (3.5,2.5+0.4) (-0.5,2.5+0.4)};
			\draw[fill=yellow!60!orange!10!white, line width=1pt] plot [smooth cycle] coordinates {(-0.175-0.1,-0.175-0.1-0.4) (3.175+0.1,-0.175-0.1-0.4) (3.175+0.1,2.175+0.1+0.4) (-0.175-0.1,2.175+0.1+0.4)};
			\begin{scope}[yshift=-0.85 cm, xshift=0cm]
				\draw[fill=yellow!60!orange!30!white, line width=1pt] plot [smooth cycle] coordinates {(0.875-0.325, 0.875-0.325) (2.125+0.325, 0.875-0.325) (2.125+0.325, 1.125+0.325) (0.875-0.325, 1.125+0.325)};
				\draw[fill=white, line width=1pt] plot [smooth cycle] coordinates {(0.875-0.1, 0.875-0.1) (2.125+0.1, 0.875-0.1) (2.125+0.1, 1.125+0.1) (0.875-0.1, 1.125+0.1)};
			\end{scope}
			\begin{scope}[yshift=0.85 cm, xshift=0cm]
				\draw[fill=yellow!60!orange!30!white, line width=1pt] plot [smooth cycle] coordinates {(0.875-0.325, 0.875-0.325) (2.125+0.325, 0.875-0.325) (2.125+0.325, 1.125+0.325) (0.875-0.325, 1.125+0.325)};
				\draw[fill=white, line width=1pt] plot [smooth cycle] coordinates {(0.875-0.1, 0.875-0.1) (2.125+0.1, 0.875-0.1) (2.125+0.1, 1.125+0.1) (0.875-0.1, 1.125+0.1)};
			\end{scope}
			
		\end{scope}

		\begin{scope}[xshift=1.2cm, yshift=-1.2cm]
			\draw[fill=yellow!60!orange!10!white] (3.5,1.35) rectangle (3.75, 1.6);
			\draw[fill=yellow!60!orange!30!white] (3.5,0.95) rectangle (3.75, 1.2);
			\node [] (A) at (4.5,1.5) {\footnotesize{$:\Omega\setminus \partial\Omega$}};
			\node [below=0.4 cm of A.west,anchor=west] (B) {\footnotesize{$:\partial \Omega$}};
		\end{scope}			
	\end{tikzpicture}
	\caption{An illustration of $\Omega$ and $\partial \Omega$.}
	\label{fig:thickened_bdy}
\end{figure}

Let $\Omega$ be an immersed region. Let $\Sigma(\Omega)$ be the information convex set associated with it. To state the Hilbert space theorem, we adopt the notion of \emph{thickened boundary} of $\Omega$, denoted as $\partial \Omega$. $\partial\Omega$ is the subset of $\Omega$ that is obtained by thickening the boundary of $\Omega$ towards the interior of $\Omega$ by an enough distance\footnote{This is a few lattice spacings on the coarse-grained lattice.}; see Fig.~\ref{fig:thickened_bdy} for an illustration.

Below is a list of established facts about thickened boundary:
\begin{enumerate}
	\setlength\itemsep{-0.2em}
	\item The thickened boundary $\partial\Omega$ is a sectorizable region because $\partial \Omega=\CM \times \mathbb{I}$, where $\CM$ is a $(d-1)$-dimensional closed manifold, and $\bbI$ is an interval.\footnote{Whenever we write an immersed region as $\CM \times \bbI$, we assume that the region can be further extended and restricted at both ends of the interval by a path.}  Thus, $\Sigma(\partial \Omega)$ is a simplex. We can thus obtain a finite set of superselection sector labels $\calC_{\partial\Omega}$.
	\item If $\partial\Omega$ contains multiple connected components, then each component is sectorizable. Each label in $\calC_{\partial\Omega}$ will be a collection of labels associated with the superselection sectors of each connected component of $\partial \Omega$. This is known as the ``product rule" (Lemma IV.2 of \cite{Shi:2020rne}).
	\item Every extreme point of $\Sigma(\Omega)$ reduces to an extreme point of $\Sigma(\partial \Omega)$ under partial trace.  We review this fact in the proof of Lemma~\ref{lemma:ext_criterion} below.
\end{enumerate}

\begin{exmp}
 For example, if $\partial\Omega$ has three connected components, as is shown in Fig.~\ref{fig:thickened_bdy}, then $I=\{a,b,c\}$ is an ordered triple. Here the three entries are the labels of superselection sector of each connected component. 
\end{exmp}

These observations motivate the definition of the following convex subset of $\Sigma(\Omega)$. For any $I \in \calC_{\partial\Omega}$, we define
	 \begin{equation}
	 	\Sigma_I(\Omega) \equiv \{ \rho_{\Omega}\in \Sigma(\Omega) \left\vert \, \Tr_{\Omega\setminus\partial\Omega} \, \rho_{\Omega}= \rho^I_{\partial\Omega} \right. \},
	 	\label{eq:ics-with-bcs}
	 \end{equation}
 where $\rho^I_{\partial\Omega}$ is an extreme point of $\Sigma(\partial\Omega)$.
 Note that $\rho_{\Omega}\perp \lambda_{\Omega}$ if $\rho_{\Omega}\in \Sigma_I(\Omega)$ and $\lambda_{\Omega}\in \Sigma_J(\Omega)$, with $I\ne J$; this follows from the monotonicity of fidelity.
 
 Thus $\Sigma(\Omega)$ is the convex hull of mutually orthogonal subsets $\{\Sigma_I( \Omega)\}_{I\in \calC_{\partial\Omega}}$, namely
 \begin{equation}\label{eq:fusion_space1}
 	\Sigma(\Omega) = \left\{ \sum_{I \in \mathcal{C}_{\partial \Omega}} p_{I} \rho_{\Omega}^{I} \left \vert \rho^I_{\Omega}\in \Sigma_I(\Omega),\,\sum_{I} p_{I} = 1, \, p_{I} \geq 0 \right. \right\}.
 \end{equation}
  
There is a simple entropy condition that can unambiguously determine if an element is an extreme point of $\Sigma(\Omega)$:
\begin{lemma}[extreme point criterion]\label{lemma:ext_criterion}
	Let $\Omega$ be an immersed region. Let $\rho_{\Omega}\in \Sigma(\Omega)$. $\rho_{\Omega}$ is an extreme point if and only if 
	\begin{equation}\label{eq:factorization}
		(S_{\Omega} + S_{\Omega\setminus \partial\Omega} - S_{\partial\Omega})_{\rho}=0. 
	\end{equation}
	Here $\partial \Omega$ is the thickened boundary of $\Omega$.
\end{lemma}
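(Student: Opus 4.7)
The plan is to combine the Araki--Lieb inequality (and its saturation characterization) with the simplex structure of $\Sigma(\partial\Omega)$ from Theorem~\ref{thm:sectorizable} and the Merging Theorem~\ref{thm:merging_info_convex_set}. Abbreviating the left-hand side of \eqref{eq:factorization} as $g(\rho)$, and setting $A = \Omega\setminus\partial\Omega$ and $B = \partial\Omega$, the Araki--Lieb inequality gives $g(\rho) = S_{AB} + S_A - S_B \geq 0$ always, with equality if and only if there is a factorization of the support of $\rho_B$ as $\mathcal{H}_{B_L}\otimes \mathcal{H}_{B_R}$ under which $\rho_\Omega = |\Psi\rangle\langle\Psi|_{AB_L}\otimes \rho_{B_R}$ for some pure $|\Psi\rangle$.

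For the direction $(\Leftarrow)$, assume $g(\rho) = 0$, so $\rho_\Omega$ has the pure-plus-product form above. Any convex decomposition $\rho_\Omega = \lambda \rho^{(1)}_\Omega + (1-\lambda) \rho^{(2)}_\Omega$ with $\rho^{(i)}_\Omega \in \Sigma(\Omega)$ must share the rank-one factor $|\Psi\rangle\langle\Psi|_{AB_L}$, forcing $\rho^{(i)}_\Omega = |\Psi\rangle\langle\Psi|_{AB_L}\otimes \rho^{(i)}_{B_R}$ with $\rho_{B_R} = \lambda \rho^{(1)}_{B_R} + (1-\lambda) \rho^{(2)}_{B_R}$. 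The requirement $\rho^{(i)}_\Omega\in \Sigma(\Omega)$, through consistency with the reference state on balls inside $B_R$ combined with the simplex structure of $\Sigma(\partial\Omega)$, pins $\rho^{(i)}_{B_R} = \rho_{B_R}$. Hence $\rho^{(i)}_\Omega = \rho_\Omega$ and $\rho_\Omega$ is extreme.

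For the direction $(\Rightarrow)$, assume $\rho_\Omega$ is extreme. The sector-resolving projectors on $\partial\Omega$, coming from the mutually orthogonal supports of the extreme points of the simplex $\Sigma(\partial\Omega)$, produce a sector decomposition $\rho_\Omega = \sum_I q_I \rho^I_\Omega$ with components in $\Sigma(\Omega)$; extremity forces a single sector $I$, so $\rho_\Omega\in \Sigma_I(\Omega)$ (this reviews fact~(3) preceding the lemma). Next, I construct a maximum-entropy element $\rho^{I,*}_\Omega\in \Sigma_I(\Omega)$ by iteratively merging $\rho^I_{\partial\Omega}$ with the reference state on a ball-cover of $\Omega\setminus\partial\Omega$ via Theorem~\ref{thm:merging_info_convex_set}; the accumulated Markov conditions together with axiom~A0 give $g(\rho^{I,*}_\Omega) = 0$. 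Because $\rho^{I,*}_\Omega$ has full rank and lies in the relative interior of $\Sigma_I(\Omega)$, there exist $\lambda > 0$ and $\rho'_\Omega\in \Sigma_I(\Omega)$ with $\rho^{I,*}_\Omega = \lambda \rho_\Omega + (1-\lambda)\rho'_\Omega$. Since $g$ is concave on $\Sigma_I(\Omega)$ (where $S_{\partial\Omega}$ is constant), $0 = g(\rho^{I,*}_\Omega) \geq \lambda g(\rho_\Omega) + (1-\lambda) g(\rho'_\Omega) \geq 0$, forcing $g(\rho_\Omega) = 0$.

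The main obstacle is the entropic calculation $g(\rho^{I,*}_\Omega) = 0$: the merging of $\rho^I_{\partial\Omega}$ with the reference state must proceed iteratively over a sequence of overlapping balls (since $\Omega\setminus\partial\Omega$ is typically not itself a ball), verifying the three conditions of Theorem~\ref{thm:merging_info_convex_set} at each step and checking that the accumulated Markov conditions collapse precisely to the Araki--Lieb saturation identity on all of $\Omega$. A secondary subtlety in direction $(\Leftarrow)$ is that the factorization $\mathcal{H}_{B_L}\otimes \mathcal{H}_{B_R}$ provided by Araki--Lieb saturation is an algebraic factorization of the support of $\rho_B$ that need not be aligned with a physical partition of $\partial\Omega$; translating it into the local-consistency argument used above requires careful bookkeeping against the sector structure of the simplex $\Sigma(\partial\Omega)$.
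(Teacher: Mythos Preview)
Your $(\Rightarrow)$ direction has a genuine gap: the two properties you ascribe to the merged state $\rho^{I,*}_\Omega$ are mutually exclusive whenever $N_I>1$. If $\rho^{I,*}_\Omega$ really were the maximum-entropy element of $\Sigma_I(\Omega)$ \emph{and} satisfied $g(\rho^{I,*}_\Omega)=0$, then your concavity sandwich would force $g\equiv 0$ on all of $\Sigma_I(\Omega)$. But then your own $(\Leftarrow)$ argument would make \emph{every} element of $\Sigma_I(\Omega)$ extreme, contradicting $N_I>1$. Concretely, the maximum-entropy state has $g=2\ln N_I$, not $0$: on $\Sigma_I(\Omega)$ the boundary entropy $S_{\partial\Omega}$ is constant and the generalized isomorphism theorem (restricting $\Omega\to\Omega\setminus\partial\Omega$) makes $S_{\Omega\setminus\partial\Omega}-S_\Omega$ constant as well, so $g(\rho)=2\bigl(S_\Omega(\rho)-S_\Omega(\rho^{I,\langle e\rangle})\bigr)$ for any extreme $\rho^{I,\langle e\rangle}$. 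Thus $g$ vanishes precisely at the entropy \emph{minimizers}, not at the maximum. Your iterative merging of $\rho^I_{\partial\Omega}$ with reference-state balls, if it lands anywhere, lands on an extreme point (this is essentially Lemma~\ref{lemma:lr}), which is not in the relative interior---so the sandwich cannot be set up.

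The paper takes the opposite route: $(\Rightarrow)$ is the already-established ``factorization property'' of extreme points from \cite{Shi:2020rne}, and the work goes into $(\Leftarrow)$. There, the paper first shows that $g(\rho)=0$ forces $\rho_{\partial\Omega}$ to be an extreme point of the simplex $\Sigma(\partial\Omega)$, by splitting $\partial\Omega$ into three concentric layers and noting that a nontrivial sector mixture would create correlations between the inner and outer layers, whereas $g=0$ (via SSA) kills that mutual information. Once $\rho\in\Sigma_I(\Omega)$, the identity $g(\rho)=2\bigl(S_\Omega(\rho)-S_\Omega(\rho^{I,\langle e\rangle})\bigr)$ immediately gives extremality. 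Your $(\Leftarrow)$ via Araki--Lieb saturation is a different idea and could in principle replace the three-layer argument, but you correctly flag the obstruction: the factorization $\mathcal{H}_{B_L}\otimes\mathcal{H}_{B_R}$ is algebraic, so ``balls inside $B_R$'' is not meaningful, and pinning $\rho^{(i)}_{B_R}=\rho_{B_R}$ from the simplex structure of $\Sigma(\partial\Omega)$ still requires showing $\rho_{\partial\Omega}$ is extreme there---which is exactly the step the paper supplies and you do not.
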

Note that the statement applies to the case that $\Omega$ is a closed manifold as well; in that case $\partial\Omega$ is empty.
\begin{proof}
First, if $\rho_{\Omega}$ is an extreme point of $\Sigma(\Omega)$ then Eq.~(\ref{eq:factorization}) holds. This is known as the ``factorization property" of the extreme points (Appendix C of Ref.~\cite{Shi:2020rne}).	
Second, to see that Eq.~(\ref{eq:factorization}) and $\rho_{\Omega}\in \Sigma(\Omega)$ implies that $\rho_{\Omega}$ is an extreme point, we consider proof of contradiction. 
	\begin{enumerate}
		\item Let us observe that $\Tr_{\Omega\setminus \partial\Omega}\,\rho_{\Omega} \in \Sigma(\partial\Omega)$ and it must be an extreme point. If not, it must be a mixture of extreme points. Divide $\partial\Omega$ into three layers (outer, middle, and inner) of increasing distance to the boundary of $\Omega$. Then there will be a nontrivial correlation between the inner layer and the outer layer on the superselection sectors. This is in contradiction with Eq.~(\ref{eq:factorization}), which implies that the mutual information between the inner layer and outer layer vanishes.
		It follows that $\rho_{\Omega}\in \Sigma_I(\Omega)$, for some label $I\in \calC_{\partial\Omega}$.
		
		\item Any $\rho_{\Omega}\in \Sigma_I(\Omega)$ satisfying Eq.~(\ref{eq:factorization}) must be an extreme point.  This is because $(S_{\Omega} + S_{\Omega\setminus \partial\Omega} - S_{\partial\Omega})_{\rho}=2(S_{\Omega}(\rho)-S_{\Omega}(\rho^{I,\langle e\rangle}))$ for any $\rho_{\Omega}\in \Sigma_I(\Omega)$ and $\rho^{I,\langle e\rangle}_{\Omega}$ is an extreme point of $\Sigma_I(\Omega)$. The right-hand side is positive for nonextreme points.
	\end{enumerate}
This completes the proof.
\end{proof}

As a simple corollary of the extreme point criterion and Eq.~(\ref{def:quantum_dim}), we have
\begin{corollary}
When the definition of quantum dimension (Eq.~(\ref{def:quantum_dim})) is applicable, for each $I \in \calC_{\partial\Omega}$ that corresponds to a nonempty $\Sigma_{I}(\Omega)$,
	\begin{equation}\label{eq:d-for-extreme-point}
		\ln d_I = S(\rho^{I,\langle e\rangle }_{\Omega})- S(\rho^{1,\langle e\rangle }_\Omega),
	\end{equation}
where $\rho^{I,\langle e\rangle }_{\Omega}$ is an extreme point of $\Sigma_{I}(\Omega)$ and $1$ is the vacuum label.
\end{corollary}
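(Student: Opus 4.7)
My approach is to apply the extreme point criterion (Lemma~\ref{lemma:ext_criterion}) twice —- once at the sector $I$ and once at the vacuum sector $1$ —- and then to invoke the generalized isomorphism theorem (Theorem~\ref{thm:generalized_iso}) to eliminate the resulting entropy contribution from $\Omega\setminus\partial\Omega$.

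To carry this out, I first substitute $\rho=\rho^{I,\langle e\rangle}_\Omega$ into Eq.~(\ref{eq:factorization}). Because $\rho^{I,\langle e\rangle}_\Omega\in\Sigma_I(\Omega)$, its reduction to $\partial\Omega$ equals the extreme point $\rho^I_{\partial\Omega}$ of $\Sigma(\partial\Omega)$ by~(\ref{eq:ics-with-bcs}); similarly $\rho^{1,\langle e\rangle}_{\partial\Omega}=\rho^1_{\partial\Omega}$. Subtracting the two resulting identities gives
\[
S(\rho^{I,\langle e\rangle}_\Omega)-S(\rho^{1,\langle e\rangle}_\Omega) = \bigl[S(\rho^I_{\partial\Omega})-S(\rho^1_{\partial\Omega})\bigr] - \bigl[S(\rho^{I,\langle e\rangle}_{\Omega\setminus\partial\Omega})-S(\rho^{1,\langle e\rangle}_{\Omega\setminus\partial\Omega})\bigr].
\]
Since $\partial\Omega$ is sectorizable, Definition~\ref{def:quantum_dim_added} applies directly to it and gives $S(\rho^I_{\partial\Omega})-S(\rho^1_{\partial\Omega}) = 2\ln d_I$.

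Next I argue that the second bracket equals the left-hand side. The region $\Omega\setminus\partial\Omega$ is obtained from $\Omega$ by removing a thin inward collar, which can be realized as an iterated sequence of elementary restrictions at small boundary balls (none of which changes the topology of $\Omega$). The generalized isomorphism theorem therefore supplies an isomorphism $\Sigma(\Omega)\cong\Sigma(\Omega\setminus\partial\Omega)$ —- implemented by iterated partial trace —- that preserves entropy differences and sends extreme points in sector $I$ to extreme points in the corresponding sector of $\Sigma(\Omega\setminus\partial\Omega)$. The sector identification is canonical: applying Theorem~\ref{thm:generalized_iso} also to the thickened boundaries $\partial\Omega\cong\partial(\Omega\setminus\partial\Omega)$ (both being of the form $\CM\times I$) identifies their sector simplices, and tracing $\rho^I_{\partial\Omega}$ into the inner shell $\partial(\Omega\setminus\partial\Omega)$ produces exactly the sector-$I$ extreme point there, which fixes the identification.

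Substituting the resulting equality $S(\rho^{I,\langle e\rangle}_{\Omega\setminus\partial\Omega})-S(\rho^{1,\langle e\rangle}_{\Omega\setminus\partial\Omega}) = S(\rho^{I,\langle e\rangle}_\Omega)-S(\rho^{1,\langle e\rangle}_\Omega)$ back into the displayed equation reduces it to $2[S(\rho^{I,\langle e\rangle}_\Omega)-S(\rho^{1,\langle e\rangle}_\Omega)] = 2\ln d_I$, which yields Eq.~(\ref{eq:d-for-extreme-point}). The main obstacle is the sector-matching step: one must verify that the partial-trace isomorphism $\Sigma(\Omega)\to\Sigma(\Omega\setminus\partial\Omega)$ identifies sector labels canonically (rather than permuting them) and in particular maps vacuum to vacuum; both assertions follow from the simplex theorem together with Theorem~\ref{thm:generalized_iso} applied to $\partial\Omega$ itself, as sketched above.
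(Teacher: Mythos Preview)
Your proof is correct and follows exactly the approach the paper has in mind: the extreme point criterion (Lemma~\ref{lemma:ext_criterion}) applied to both sectors, the definition of $d_I$ on the sectorizable region $\partial\Omega$, and the generalized isomorphism theorem to identify entropy differences on $\Omega$ and $\Omega\setminus\partial\Omega$. The paper omits the proof, pointing to Lemma~4.8 of \cite{shi2020fusion}, whose argument is precisely this.
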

The proof is omitted because it is an analog of the proof of Lemma 4.8 of \cite{shi2020fusion}.
This result shows that we do not need to extend the definition of quantum dimension to non-sectorizable regions.
Now we are in the position to state the Hilbert space theorem, which describes the structure of $\Sigma_I(\Omega)$.

\begin{theorem}[Hilbert space theorem~\cite{shi2020fusion,Shi:2020rne}]\footnote{The original proof (Theorem 4.5 of Ref.~\cite{shi2020fusion}) was stated for $\Omega$ being a 2-hole disk. Theorem D.5 of Ref.~\cite{Shi:2020rne} was stated for a general embedded region $\Omega$. (The proof there works for immersed regions without change.) %rephases a litte.
}\label{thm:Hilbert}
For an immersed region $\Omega$,
	\begin{equation}
		\Sigma_I(\Omega) \cong \mathcal{S}(\mathbb{V}_I), \label{eq:Hilbert_1}
	\end{equation}
	where $\mathcal{S}(\mathbb{V}_I)$ is the state space\footnote{State space of Hilbert space $\calH$ is the set of all density matrices on $\calH$.} of a finite dimensional Hilbert space $\mathbb{V}_I$. 
\end{theorem}

	Therefore, we can completely characterize the convex set $\Sigma_I(\Omega)$ by a non-negative integer $N_I =\dim \mathbb{V}_I$. We shall refer to this integer as a \emph{(fusion) multiplicity}.
	
	\begin{exmp}
		When $N_I=0$, $\Sigma_I(\Omega)$ is empty. When $N_I=1$, $\Sigma_I(\Omega)$ contains a unique element; this element is an \emph{isolated} extreme point of $\Sigma(\Omega)$ because no extreme point is close to it in terms of distance measures. When $N_I= 2$, $\Sigma_I(\Omega)$ is isomorphic to a Bloch ball. It contains (infinite number of) continuously parameterized extreme points.
	\end{exmp}

	  The region $\Omega$ can store a piece of quantum information when $N_I>1$. Quantum information cannot be copied in the sense that if one tears apart  $\Omega$, the quantum information can be recovered from at most one party; the storage of this quantum information is nonlocal in the sense that one cannot decode this information from ball-shaped regions, nor from $\partial\Omega$. We shall come back to this point when discussing the storage of quantum information in knot complements \S\ref{subsec:fusion-from-knots}.

\subsubsection{General properties of sectorizable regions}
We discuss a few general properties of sectorizable regions.

\begin{Proposition}\label{prop:1or2}
	The thickened boundary of any connected sectorizable region  has either one or two connected components.
\end{Proposition}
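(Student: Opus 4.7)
The plan is to derive a contradiction from $n \ge 3$ by tracking how the collar pieces of $\widehat{S}\setminus \mathring S$ and of $\widehat{S}\setminus \mathring{S''}$ must embed inside each other within $\widehat{S}$.

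First, pick a sectorizing witness $\widehat{S}$ with disjoint $S, S'' \subset \widehat{S}$, and compose the two extension paths with additional extensions of $\widehat{S}$ itself; the composite is still a path of extensions, so without loss of generality we may assume both $S$ and $S''$ sit strictly in the interior of $\widehat{S}$. Since each elementary extension adds a topologically trivial ball, the diffeomorphism type is preserved and the $n$ boundary components of $S$ (resp.\ $S''$) are canonically bijected with those of $\widehat{S}$ via the extension. With $S$ in the interior, $\widehat{S}\setminus \mathring S$ is exactly the material added along the path and decomposes as a disjoint union of $n$ product collars,
\begin{equation*}
\widehat{S}\setminus \mathring S \;=\; \bigsqcup_{i=1}^{n} C_i,
\end{equation*}
where $C_i$ has one face on the $i$-th boundary component of $S$ and the opposite face on a unique boundary component of $\widehat{S}$, say $\partial_{\pi(i)}\widehat{S}$ for a labeling bijection $\pi$; similarly $\widehat{S}\setminus \mathring{S''} = \bigsqcup_i C''_i$ with labeling bijection $\pi''$.

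Next, since $S''$ is connected and disjoint from $S$, it lies entirely in a single collar $C_j$; symmetrically $S \subset C''_k$ for some $k$. For each $l \ne j$, the collar $C_l$ is disjoint from $S''$ and is connected, so it sits inside a unique $C''_{m(l)}$; the index $m(l)$ is pinned down by the fact that $C_l$ and $C''_{m(l)}$ must contain the same component $\partial_{\pi(l)}\widehat{S}$ of $\partial\widehat{S}$, giving $m(l)=(\pi'')^{-1}(\pi(l))$, so $m$ is a bijection on $\{1,\dots,n\}$. On the other hand, $S$ shares the whole $l$-th boundary component with $C_l$, so $S\cup C_l$ is connected inside $\widehat{S}\setminus \mathring{S''}$, and must lie in a single $C''$-component; combined with $S\subset C''_k$ this forces $m(l)=k$ for every $l \ne j$.

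Since $m$ is injective but sends every element of $\{1,\dots,n\}\setminus\{j\}$ to the same value $k$, we must have $|\{1,\dots,n\}\setminus\{j\}|\le 1$, i.e., $n \le 2$. The step I anticipate as the main technical obstacle is justifying the product-collar decomposition of $\widehat{S}\setminus \mathring S$ in the first paragraph: this is where the enlargement ``$S,S''$ lie strictly in the interior'' does the work, ensuring that each elementary extension really adds collar material to a single boundary component, and hence that the components of $\widehat{S}\setminus \mathring S$ are in bijection with the boundary components of both $S$ and $\widehat{S}$ in the way required by the subsequent label-tracking argument.
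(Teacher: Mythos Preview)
Your argument is correct and follows essentially the same route as the paper: decompose $\widehat{S}\setminus S$ into pieces indexed by the boundary components of $S$ (the paper calls these $R_i$), use connectedness of $S''$ to place it in a single piece, and then obtain a contradiction from the symmetric statement for $S$ inside $\widehat{S}\setminus S''$. Your injectivity-of-$m$ formulation is a slightly cleaner packaging of the same contradiction the paper obtains by ``switching the roles of $S$ and $S''$''. One remark on the step you flag as the main obstacle: you do not actually need the full product-collar structure of the $C_i$; all that the argument uses is that $\widehat{S}\setminus \mathring S$ has $n$ connected components, each meeting exactly one boundary component of $S$ and exactly one of $\widehat{S}$, which follows directly from the fact that elementary extensions attach material to a single boundary component without changing topology --- this is precisely how the paper justifies its regions $R_i$.
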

\begin{proof}
	We shall denote the sectorizable region as $S$ and denote its thickened boundary as $\partial S$. First, we use the defining properties (Definition~\ref{def:sectorizable}).
	Because $S$ and $S''$ can be deformed to $\hat{S}$ by a sequence of elementary steps and $S$ is connected, $S''$ and $\hat{S}$ must be connected as well.

	Below, we proceed with a proof by contradiction.  
	We shall assume that the thickened boundary of $S$, denoted as $\partial S$, has three connected components: $(\partial S)_1, (\partial S)_2$ and $(\partial S)_3$. (The proof generalizes straightforwardly to the case that the number of connected components of $\partial S$ is larger.)
	
	\begin{figure}[h]
		\centering
		\includegraphics[scale=0.6]{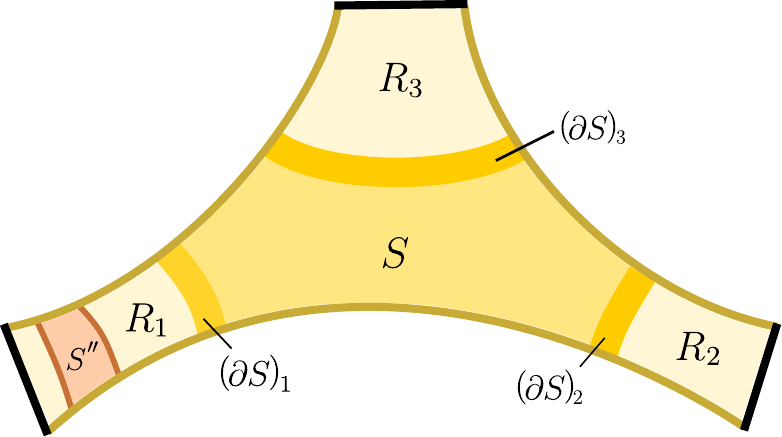}	
		\caption{A schematic illustration of the contradiction, supposing that the thickened boundary of sectorizable region $S$ has three connected components. (The black lines are the boundary of $\hat{S}$.)}
		\label{fig:contradiction_3bdy}
	\end{figure}
	
	The sequence of extensions that deforms $S$ to $\hat{S}$ is, in fact, a sequence of extensions of $(\partial S)_1, (\partial S)_2$ and $(\partial S)_3$ respectively. Denote the regions obtained after the extensions as $R_1, R_2$ and $R_3$, where $R_i \supset (\partial S)_i$. They must be spatially separated subsets of $\hat{S}$. Furthermore, $\partial \hat{S}$ must have precisely three connected components, denoted as $\{(\partial \hat{S})_i\}_{i=1}^3$, such that $(\partial \hat{S})_i \subset R_i$. 
	
	Since $S''$ is connected, it must be contained in one of $R_i$; without loss of generality, we write $S'' \subset R_1$. 
	Hence $S''$ can be extended to include $(\partial \hat{S})_1$ without overlapping with $S$. However, to extend $S''$ to include $(\partial \hat{S})_2$ or $(\partial \hat{S})_3$, the extension must overlap with $S$. (In comparison, we saw that $S$ can be extended to include $(\partial \hat{S})_2$ and $(\partial \hat{S})_3$ without overlapping with $S''$.)
	
	However, a parallel line of reasoning, switching the role of $S$ and $S''$, implies that it is possible to extend $S$ to only one of the three connected components of $(\partial \hat{S})$ without overlapping with $S''$. This is a contradiction, and this completes the proof.	
\end{proof}

\begin{conjecture}\label{conj:sectorizable-region}
	If the thickened boundary $\partial S$ of a connected sectorizable region $S$ has two connected components $(\partial S)_1$ and $(\partial S)_2$, then $ \Sigma(S)\cong\Sigma((\partial S)_1) \cong \Sigma((\partial S)_2) $, where the isomorphism between the two boundary components is induced by partial trace.
\end{conjecture}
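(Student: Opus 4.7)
The plan is to identify partial trace $\pi_i: \Sigma(S)\to \Sigma((\partial S)_i)$ as the desired isomorphism, and then derive the second isomorphism by composition. By Theorem~\ref{thm:sectorizable} all three sets $\Sigma(S)$, $\Sigma((\partial S)_1)$, $\Sigma((\partial S)_2)$ are simplexes, so it suffices to show that each affine map $\pi_i$ restricts to a bijection of the finite sets of extreme points.

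\textbf{Extreme points map to extreme points.} For any extreme point $\rho^I_S \in \Sigma(S)$, the factorization property of Lemma~\ref{lemma:ext_criterion} gives $(S_S+S_{S\setminus \partial S}-S_{\partial S})_{\rho^I}=0$. Partial-tracing onto $\partial S$ produces an extreme point of $\Sigma(\partial S)$, which by the product rule (Lemma IV.2 of \cite{Shi:2020rne}) factors as $\rho^{a(I)}_{(\partial S)_1}\otimes \rho^{b(I)}_{(\partial S)_2}$. A further partial trace exhibits $\pi_i(\rho^I_S)$ as an extreme point of the simplex $\Sigma((\partial S)_i)$.

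\textbf{Bijectivity via deformation.} To invert $\pi_1$, I would build a path of elementary extensions connecting $(\partial S)_1$ to $S$: starting at the inner edge of $(\partial S)_1$ and adding small balls one at a time, sweep through the interior of $S$ until the growing region equals $S$. Each step corresponds on density matrices to merging with the reference state on a ball, and Theorem~\ref{thm:generalized_iso} packages the sequence into an isomorphism $\psi_1 : \Sigma((\partial S)_1)\to \Sigma(S)$. Because each merge preserves the pre-existing marginal (Lemma~\ref{lemma:merging_lemma}), we have $\pi_1 \circ \psi_1 = \mathrm{id}$ and hence $\pi_1 = \psi_1^{-1}$; the same argument yields $\pi_2 = \psi_2^{-1}$. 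Composition then gives
$$
\pi_2 \circ \pi_1^{-1} = \pi_2 \circ \psi_1 : \Sigma((\partial S)_1) \xrightarrow{\cong} \Sigma((\partial S)_2),
$$
which is the claimed isomorphism, ``induced by partial trace'' in the precise sense that one canonically extends from $(\partial S)_1$ to $S$ via merging and then partial-traces down to $(\partial S)_2$.

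\textbf{Main obstacle.} The delicate point is the construction of the path of extensions from $(\partial S)_1$ to $S$. This is transparent when $S=\calM\times I$, since the path is simply $\calM\times [0,t]$ with $t\in[\epsilon,1]$. For a general sectorizable $S$ we do not know that such a product decomposition exists; this is exactly the content of the unresolved Conjecture~\ref{conj:sectorizable-product}. A proof of the present statement thus either assumes the product structure, or requires a substitute topological lemma, derived directly from the sectorizable data $(\widehat{S},S,S'')$, asserting that $S\setminus (\partial S)_1$ can be filled in by a sequence of small-ball extensions. Establishing such a lemma -- perhaps by exploiting the disjoint copy $S''\subset\widehat S$ together with Proposition~\ref{prop:1or2} -- is where the main difficulty lies, and is most likely why the statement remains a conjecture rather than a theorem.
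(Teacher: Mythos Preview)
Your analysis is on target, but note that the statement is a \emph{Conjecture} in the paper, and the paper explicitly says ``we do not have a general proof.''  There is therefore no paper proof to compare against.  What the paper does say, immediately after the statement, is that the stronger Conjecture~\ref{conj:sectorizable-product} (the product structure $S=\CM\times I$) implies this one; your conditional argument --- extend $(\partial S)_1$ to $S$ via a path of elementary steps, invoke the Generalized Isomorphism Theorem~\ref{thm:generalized_iso} for $\psi_1$, and check $\pi_1\circ\psi_1=\mathrm{id}$ so that $\pi_1=\psi_1^{-1}$ --- is exactly the mechanism behind that implication.  The obstacle you isolate (existence of such an extension path absent the product structure) is precisely the open point, and your closing paragraph correctly diagnoses why this remains a conjecture.
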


This conjecture is true for all sectorizable regions of which we are aware, but we do not have general proof. 
In fact, the following stronger conjecture holds for all sectorizable regions of which we are aware. (The stronger conjecture implies the previous one.)

\begin{conjecture}\label{conj:sectorizable-product}
	A connected sectorizable region $S$ can be written as $S = \CM \times \bbI$, where $\CM$ is a manifold and $\bbI$ is an interval. Furthermore, if $S$ has one boundary component,  $\CM$ has boundaries; if $S$ has two boundary components, $\CM$ is closed.	
\end{conjecture}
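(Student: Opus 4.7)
The plan is to combine the reasoning of Proposition~\ref{prop:1or2} with a Morse-theoretic analysis of the extension paths in Definition~\ref{def:sectorizable}.

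First, I would address the boundary topology of $S$. In the two-boundary case, I would show $(\partial S)_1 \cong (\partial S)_2$ by extending the analysis in Proposition~\ref{prop:1or2}: the extensions from $S$ to $\widehat{S}$ add balls at the boundary, building up a collar of each boundary component, so $\widehat{S} \setminus S$ consists of two shells $R_1$ and $R_2$, each homeomorphic to $(\partial S)_i \times I$ in the idealized limit. Since $S''$ is connected and disjoint from $S$, it sits in one shell, say $R_1$. The two boundary components of $S''$ are surfaces embedded in $R_1 \cong (\partial S)_1 \times I$, so each is isotopic to a parallel copy of $(\partial S)_1$; on the other hand, they must also have the topologies of $(\partial S)_1$ and $(\partial S)_2$ respectively (since $S$ and $S''$ have the same topology). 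Hence $(\partial S)_1 \cong (\partial S)_2$, and we may set $\CM := (\partial S)_1$, which is closed. In the one-boundary case, a similar but more delicate argument shows that $\partial S$ admits a splitting $\partial S = \CM \cup_{\partial \CM} \CM'$ with $\CM \cong \CM'$ a manifold with boundary, obtained from the ``entry'' and ``exit'' pieces of the sweep described next.

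Next, I would construct the product structure $S \cong \CM \times I$ from the extension path $\{A_t\}_{t \in [0,1]}$ taking $S''$ to $\widehat{S}$ (so $A_0 = S''$ and $A_1 = \widehat{S}$). Let $t_1, t_2$ be the times at which $A_t$ first meets $S$ and first covers $S$, respectively. For $t \in [t_1, t_2]$ the intersection $A_t \cap S$ grows monotonically from empty to all of $S$, and the associated ``time-of-absorption'' function $f : S \to [t_1, t_2]$ should, after smooth interpolation of the discrete path, foliate $S$ by level sets $f^{-1}(t)$, each homeomorphic to $\CM$. In the two-boundary case the level sets are closed surfaces parallel to $\partial S$; in the one-boundary case they are surfaces with boundary lying on $\partial S$. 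Either way this yields $S \cong \CM \times I$ with the claimed boundary structure of $\CM$.

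The main obstacle is to establish that $f$ can be made Morse with no critical points in the interior of $S$. Topology-preservation of extensions constrains $A_t$ globally, but a ball added to $A_t$ can, when intersected with $S$, alter the topology of $A_t \cap S$---for instance by fusing two previously disjoint portions through a handle entirely contained in $S$---even while preserving the topology of $A_t$ itself. To rule this out, one may need to refine the definition of ``path'' in sectorizability to admit only generic extensions whose added balls meet $S$ in either a ball or the empty set, a natural geometric condition not explicitly part of Definition~\ref{def:sectorizable}. An alternative strategy is to iterate sectorizability on $S$ itself, producing a nested sequence $S \supset S^{(1)} \supset S^{(2)} \supset \cdots$ of self-similar copies, and extracting the product structure via a telescoping argument. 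Either route likely requires strengthening the definition of sectorizability, which may be the essential reason the statement remains a conjecture.
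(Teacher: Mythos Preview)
The statement you are attempting to prove is labeled as a \emph{conjecture} in the paper (Conjecture~\ref{conj:sectorizable-product}), and the paper does not supply a proof; it merely remarks that the conjecture holds for all sectorizable regions the authors are aware of. So there is no paper proof to compare your proposal against, and you are correct in your closing sentence that the statement remains open.

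Your sketch identifies a plausible line of attack and, to your credit, also names the main obstruction. Let me sharpen where the gaps are. In your first step, the claim that $\widehat{S}\setminus S$ decomposes into shells $R_i \cong (\partial S)_i \times I$ is not justified: the extension path from $S$ to $\widehat{S}$ guarantees $\widehat{S}\cong S$ as topological spaces, but the complement $\widehat{S}\setminus S$ is built by gluing many small balls along disks on a moving boundary, and nothing in Definition~\ref{def:sectorizable} forces this complement to be a collar rather than, say, a collar with handles reattached. So already the argument that $(\partial S)_1 \cong (\partial S)_2$ is circular, since it presupposes a product structure on the complement.

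Your second step has the same structural issue, which you diagnose correctly: the time-of-absorption function $f$ need not be Morse without interior critical points, because the elementary extensions constrain the topology of $A_t$, not of $A_t\cap S$. Your proposed fixes (genericity of the extension balls relative to $S$, or a telescoping self-similar sequence) both amount to strengthening Definition~\ref{def:sectorizable}, which is exactly why the authors leave this as a conjecture rather than a theorem. In short, your proposal is a fair summary of why the conjecture is plausible and where the difficulty lies, but it is not a proof, and the paper does not claim one either.
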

\begin{remark}
	 We emphasize that the conjecture may apply to a broad context. For example, we expect that a system with a gapped boundary is not a counterexample if a proper notion of topology is adopted for regions adjacent to a gapped boundary.
\end{remark}

Consider a connected sectorizable region of the form $S = \mathcal{M} \times \bbI$. It is not difficult to see that if $\mathcal{M}$ has boundaries, then $S$ has one boundary component; if $\mathcal{M}$ is closed, then $S$ has two boundary components.

For this type of sectorizable region, starting from an extreme point $\rho^I_S \in \text{ext}(\Sigma(S))$, trace out the density matrix in the interior of $S$. The reduced density matrix on the thickened boundary $\partial S$ is an extreme point of $\Sigma(\partial S)$, whose label is completely determined by extreme point label $I \in \mathcal{C}_S$ (Proposition D.4 of Ref.~\cite{Shi:2020rne}). Quantum dimension $d_I$ for $I \in \mathcal{C}_S$ is related to the quantum dimension of the extreme point on the thickened boundary $\partial S$ in the following way. 

\begin{Proposition}\label{prop:d-partial-sectorizable} 
	Consider a connected $d$-dimensional immersed region $S = \mathcal{M} \times \bbI$ where $\mathcal{M}$ is a $(d-1)$-dimensional manifold and $\bbI$ is an interval. Assume that $\Sigma(S)$ admits a special extreme point, i.e., the  vacuum denoted as $\sigma_S$.  Let $I\in \calC_S$. % be the superselection sector label for an extreme point $\rho_S^I$ supported on $S$. 
 If $\mathcal{M}$ has boundaries, $\partial S$ only has one connected component. We can denote $h(I)$ as the sector label for the reduced density matrix of $\rho_S^I$ on $\partial S$, and 
	\begin{equation}\label{eq:d-for-one-boundary}
		d_{h(I)} = d^2_I.
	\end{equation}
	If $\mathcal{M}$ is closed, $\partial S$ has two connected components $(\partial S)_1, (\partial S)_2$. Similarly, $h_1(I), h_2(I)$ are denoted as sector labels for the reduced density matrix of $\rho_S^I$ on $(\partial S)_1, (\partial S)_2$ respectively, and
	\begin{equation}\label{eq:d-for-two-boundaries}
		d_{h_1(I)} = d_{h_2(I)} = d_I.
	\end{equation}	
\end{Proposition}

\begin{proof}
	If $\mathcal{M}$ has boundaries, $\partial S$ only has one connected component. We get
	\begin{align}\label{eq:double-entropy-diff}
		S(\rho_{\partial S}^{h(I)}) - S(\sigma_{\partial S}) &= [S(\rho_S^I) + S(\rho_{S \setminus \partial S}^I)] - [S(\sigma_S)+S(\sigma_{S \setminus \partial S})] \nonumber \\
		&= 2(S(\rho_S^I)-S(\sigma_S)).
	\end{align}
	The first equality follows from the extreme point criterion (Lemma~\ref{lemma:ext_criterion}). The second equality follows from generalized isomorphism theorem. Then Eq.~\eqref{eq:d-for-one-boundary} follows from the definition of quantum dimension (Definition~\ref{def:quantum_dim_added}).
	
	If $\mathcal{M}$ is closed, then $\partial S$ has two connected components $(\partial S)_1, (\partial S)_2$. From the tensor product structure for the density matrix supported on two disjoint regions $(\partial S)_1, (\partial S)_2$ (``product rule'' shown in Lemma IV.2 of \cite{Shi:2020rne}), we have $
		(S_{(\partial S)_1}+S_{(\partial S)_2} - S_{\partial S})_{\rho_{\partial S}^{(h_1(I), h_2(I))}} = 0.$
	Then 
	\begin{equation}\label{eq:sum-entropy-diff}
		[S(\rho_{(\partial S)_1}^{h_1(I)}) - S(\sigma_{(\partial S)_1})] + [S(\rho_{(\partial S)_2}^{h_2(I)}) - S(\sigma_{(\partial S)_2})] = S(\rho_{\partial S}^{(h_1(I), h_2(I))}) - S(\sigma_{\partial S}) 
	\end{equation}
	Since $(\partial S)_1$ and $(\partial S)_2$ are connected by a path within $S$, from generalized isomorphism theorem, entropy difference is conserved, i.e. $S(\rho_{(\partial S)_1}^{h_1(I)}) - S(\sigma_{(\partial S)_1}) = S(\rho_{(\partial S)_2}^{h_2(I)}) - S(\sigma_{(\partial S)_2})$. Similar to the proof for Eq.~\eqref{eq:double-entropy-diff}, the righthand side of Eq.~\eqref{eq:sum-entropy-diff} equals to $2(S(\rho_S^I)-S(\sigma_S))$. Then Eq.~\eqref{eq:d-for-two-boundaries} also follows from the definition of quantum dimension.
\end{proof}

\subsection{Associativity theorem}\label{sec:associativity_theorem}

\label{subsec:associativity-theorem}
The dimensions of the Hilbert spaces obey further consistency relations. One important such relation is known as associativity. In the anyon theory (Proposition 4.11 of Ref.~\cite{Kitaev2005}) the associativity relation relates the fusion multiplicities of the two-hole disk to those of the three-hole disk: $\sum_i N_{ab}^i N_{ic}^d=N_{abc}^d$. Similar relations also appear in broader physical contexts, e.g., in the presence of a gapped domain wall and in higher-dimensional systems. In particular, there are many associativity relations for 3d systems. Entanglement bootstrap is capable of deriving associativity relations. However, previous methods require a case-by-case analysis.

In this section, we present an associativity theorem (Theorem~\ref{thm:associativity}). The associativity relations for a large variety of cases can then be read off effortlessly as corollaries.
The theorem applies to subsystems as well as immersed regions. It works in 2d and 3d as well as higher dimensions. The idea of the proof is to cut a region into pieces and analyze the ability to merge the pieces back.

Let us first state the general setup. 
Consider a region $\Omega$, divided into two parts by a hypersurface. (See Fig.~\ref{fig:general} for an illustration. We emphasize that the consideration is general.) The hypersurface may have more than one connected component but must be disjoint from the boundary of $\Omega$.
We shall consider a partition of $\Omega$ into $A_LB_L C_LC_R B_R A_R$. Here $C=C_LC_R$ is the thickening of the hypersurface, where $C_L$ and $C_R$ lie on the opposite sides. $A_LA_R=\partial \Omega$. $\Omega_L=A_L B_L C$ and $\Omega_R=CB_RA_R$.

Because $A_L, C, A_R$ are sectorizable regions, we can talk about superselection sectors on them.
Let $\{a_L, \cdots \}, \{i,\cdots \},\{a_R, \cdots \}$ be the labels of the extreme points of $\Sigma(A_L)$, $\Sigma(C)$, $\Sigma(A_R)$ respectively. Let $N_{a_L}^{i}(\Omega_L)$ and $N_{i}^{a_R}(\Omega_R) $ be the dimensions of fusion spaces associated with $\Sigma_{a_L}^i(\Omega_L)$ and $\Sigma_{i}^{a_R}(\Omega_R)$, and let ${N_{a_L}^{a_R}}(\Omega)$ be the dimension of the fusion space associated with $\Sigma_{a_L}^{a_R}(\Omega)$.

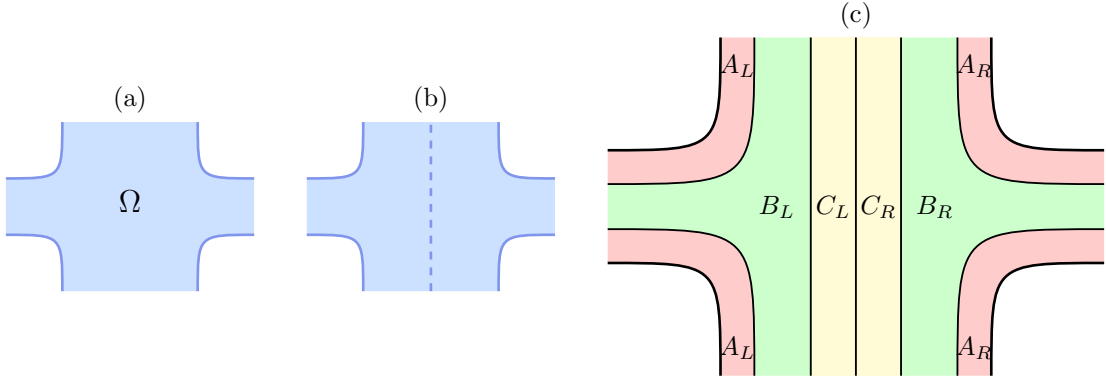
\begin{figure}[h]
	\centering
	\begin{tikzpicture}
		% Left
		\begin{scope}[scale=0.75, yshift=-1.5 cm]
			\fill[blue!60!cyan!20!white]
			(0,1) .. controls +(0:1) and +(90:1) .. (1,0)--	
			(2+1.4,0) .. controls +(90:1) and +(180:1) .. (3+1.4,1)--
			(3+1.4,2) .. controls +(180:1) and +(-90:1) .. (2+1.4,3)--
			(1,3) .. controls +(-90:1) and +(0:1) .. (0,2)--
			(0,1) --cycle;	
			
			\draw[color=blue!80!cyan!85!black!50!white, line width=1 pt] 
			{(0,1) .. controls +(0:1) and +(90:1) .. (1,0)}
			{(2+1.4,0) .. controls +(90:1) and +(180:1) .. (3+1.4,1)}
			{(3+1.4,2) .. controls +(180:1) and +(-90:1) .. (2+1.4,3)}
			{(1,3) .. controls +(-90:1) and +(0:1) .. (0,2)};	
			
				% labels (a) (b) (c)
			\node[] (A) at (2.2,1.6) {$\Omega$};	
			\node[] (A) at (2.2, 3.4) {\footnotesize{(a)}};	
			
		\end{scope}
	
	\begin{scope}[xshift=4 cm, scale=0.75, yshift=-1.5 cm]
		
	\fill[blue!60!cyan!20!white]
	(0,1) .. controls +(0:1) and +(90:1) .. (1,0)--	
	(2+1.4,0) .. controls +(90:1) and +(180:1) .. (3+1.4,1)--
	(3+1.4,2) .. controls +(180:1) and +(-90:1) .. (2+1.4,3)--
	(1,3) .. controls +(-90:1) and +(0:1) .. (0,2)--
	(0,1) --cycle;	
	
	\draw[color=blue!80!cyan!85!black!50!white, line width=1 pt] 
	{(0,1) .. controls +(0:1) and +(90:1) .. (1,0)}
	{(2+1.4,0) .. controls +(90:1) and +(180:1) .. (3+1.4,1)}
	{(3+1.4,2) .. controls +(180:1) and +(-90:1) .. (2+1.4,3)}
	{(1,3) .. controls +(-90:1) and +(0:1) .. (0,2)};
		
		\draw[dashed, color=blue!80!cyan!85!black!50!white, line width=1 pt] (2.2,0) -- (2.2,3);	
		
		\node[] (A) at (2.2, 3.4) {\footnotesize{(b)}};				
	\end{scope}

	\begin{scope}[xshift=8 cm, scale=1.5, yshift=-1.5 cm]
	
	% fill B_L and B_R
	\fill[green!20!white]
	(0,1+0.3) .. controls +(0:1.3) and +(90:1.3) .. (1+0.3,0) --
	(2.2-0.4,0) -- (2.2-0.4,3)--
	(1+0.3,3) .. controls +(-90:1.3) and +(0:1.3) .. (0,2-0.3)--
	(0,1+0.3) -- cycle;
	\fill[green!20!white]
	(4.4-0,1+0.3) .. controls +(180:1.3) and +(90:1.3) .. (4.4-1.3,0) --
	(4.4-2.2+0.4,0) -- (4.4-2.2+0.4,3)--
	(4.4-1.3,3) .. controls +(-90:1.3) and +(180:1.3) .. (4.4,2-0.3)--
	(4.4,1+0.3) -- cycle;
	
	% fill C
	\fill[yellow!20!white]
	(2.2-0.4,0) -- (2.2-0.4,3)-- (2.2+0.4,3)--(2.2+0.4,0)
	--(2.2-0.4,0) -- cycle;

	% fill A_L and A_R
    \fill[red!20!white]
   (0,1) .. controls +(0:1) and +(90:1) .. (1,0)--
    (1+0.3,0) .. controls +(90:1.3) and +(0:1.3) .. (0,1+0.3)
    --(0,1)--cycle;
    \fill[red!20!white]
    (2+1.4,0) .. controls +(90:1) and +(180:1) .. (3+1.4,1)--
    (3+1.4,1.3) .. controls +(180:1.3) and +(90:1.3) .. (3.1,0)--
    (2+1.4,0) -- cycle;
     \fill[red!20!white]
    (3+1.4,2) .. controls +(180:1) and +(-90:1) .. (2+1.4,3)--
    (3.1,3) .. controls +(-90:1.3) and +(180:1.3) .. (4.4,2-0.3)--
    (3+1.4,2) -- cycle;
    \fill[red!20!white]
    (1,3) .. controls +(-90:1) and +(0:1) .. (0,2)--
    (0,2-0.3).. controls +(0:1.3) and +(-90:1.3) ..(1+0.3,3) --
     (1,3)--cycle;

	\draw[color=black, line width=1 pt] 
	{(0,1) .. controls +(0:1) and +(90:1) .. (1,0)}
	{(2+1.4,0) .. controls +(90:1) and +(180:1) .. (3+1.4,1)}
	{(3+1.4,2) .. controls +(180:1) and +(-90:1) .. (2+1.4,3)}
	{(1,3) .. controls +(-90:1) and +(0:1) .. (0,2)};
		
	\draw[color=black, line width=0.7 pt] 
	{(2.2-0.4,0) -- (2.2-0.4,3)}
	{(2.2,0) -- (2.2,3)}
	{(2.2+0.4,0) -- (2.2+0.4,3)}
	%shifted from the boundary of Omega
	{(0,1+0.3) .. controls +(0:1.3) and +(90:1.3) .. (1+0.3,0)}
	{(2+1.4-0.3,0) .. controls +(90:1.3) and +(180:1.3) .. (3+1.4,1+0.3)}
	{(3+1.4,2-0.3) .. controls +(180:1.3) and +(-90:1.3) .. (2+1.4-0.3,3)}
	{(1+0.3,3) .. controls +(-90:1.3) and +(0:1.3) .. (0,2-0.3)};	
	
	% add letters
	\node[] (A) at (1.15,0.25) {\footnotesize{$A_L$}};	
	\node[] (A) at (1.15,3-0.25) {\footnotesize{$A_L$}};
	\node[] (A) at (4.4-1.15,0.25) {\footnotesize{$A_R$}};	
	\node[] (A) at (4.4-1.15,3-0.25) {\footnotesize{$A_R$}};
	\node[] (A) at (2.2-0.2, 1.5) {\footnotesize{$C_L$}};	
	\node[] (A) at (2.2+0.2, 1.5) {\footnotesize{$C_R$}};
	\node[] (A) at (2.2-0.7, 1.5) {\footnotesize{$B_L$}};	
	\node[] (A) at (2.2+0.7, 1.5) {\footnotesize{$B_R$}};
	
  \node[] (A) at (2.2, 3.2) {\footnotesize{(c)}};	
\end{scope}
	\end{tikzpicture}
	\caption{(a) A possibly immersed region $\Omega$, part of which is shown. (b) It is divided into halves by a hypersurface (dashed line). (c) Partition of $\Omega$ into $A_LB_L C_LC_R B_R A_R$. Here $C=C_LC_R$ is the thickening of the hypersurface, where $C_L$ and $C_R$ lie on opposite sides. $A_LA_R=\partial \Omega$. $\Omega_L=A_L B_L C$ and $\Omega_R=CB_RA_R$. Note that $\Omega=\Omega_L \cup \Omega_R$ and $\Omega_L\cap \Omega_R= C$.}
	\label{fig:general}
\end{figure}

\begin{lemma}\label{lemma:lr}
%	Consider the setup above.
	Suppose there is a pair of extreme points $\rho_{\Omega_L}\in \Sigma(\Omega_L)$ and $\lambda_{\Omega_R}\in \Sigma(\Omega_R)$, that are consistent on $C$. Then the following two statements hold:\vspace{-4mm}
	\begin{enumerate}
		\setlength\itemsep{-0.2em}
		\item $\rho_{\Omega_L}$ and $\lambda_{\Omega_R}$ can be merged.
		\item The result of merging is an extreme point of $\Sigma(\Omega)$.
	\end{enumerate}
\end{lemma}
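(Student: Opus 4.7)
The plan is to apply the Merging Theorem (Theorem~\ref{thm:merging_info_convex_set}) to construct $\tau = \rho \bowtie \lambda$, and then to verify the extreme point criterion (Lemma~\ref{lemma:ext_criterion}) on $\Omega$ to conclude extremality.

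\textbf{Part 1 (mergeability).} The Merging Theorem requires the Markov conditions $I(A_L B_L : C_R \mid C_L)_\rho = 0$ and $I(C_L : B_R A_R \mid C_R)_\lambda = 0$, together with $\rho_C = \lambda_C$, which is the hypothesis. To derive the first, note that shaving off the outer layer $C_R$ from $\Omega_L$ is a valid restriction path: it merely shrinks the thickened boundary component $C$ down to $C_L$ without changing topology. By the Generalized Isomorphism Theorem (Theorem~\ref{thm:generalized_iso}), $\Sigma(A_L B_L C_L) \cong \Sigma(\Omega_L)$ under partial trace, so $\Tr_{C_R}\rho$ is again an extreme point. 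Applying Lemma~\ref{lemma:ext_criterion} separately to $\Omega_L$ and to $A_L B_L C_L$, in each case using the product rule on the two disjoint components of the thickened boundary, yields
\begin{align*}
S^\rho_{A_L B_L C_L C_R} &= S^\rho_{A_L} + S^\rho_{C_L C_R} - S^\rho_{B_L}, \\
S^\rho_{A_L B_L C_L} &= S^\rho_{A_L} + S^\rho_{C_L} - S^\rho_{B_L}.
\end{align*}
Substituting into $I(A_L B_L : C_R \mid C_L)_\rho = S^\rho_{A_L B_L C_L} + S^\rho_{C_L C_R} - S^\rho_{C_L} - S^\rho_{A_L B_L C_L C_R}$, every term cancels. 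The identical argument on the $\Omega_R$ side handles $\lambda$. Since the partition is macroscopically thick, the thinness conditions (2) and (3) of Theorem~\ref{thm:merging_info_convex_set} are automatic, and hence $\tau = \rho \bowtie \lambda \in \Sigma(\Omega)$ is well defined.

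\textbf{Part 2 (extremality of $\tau$).} We verify $(S_\Omega + S_{B_L C B_R} - S_{A_L A_R})_\tau = 0$ using Lemma~\ref{lemma:ext_criterion} and the product rule on the two disjoint components of $\partial \Omega$. The Markov property of the merged state gives $S^\tau_\Omega = S^\rho_{\Omega_L} + S^\lambda_{C_R B_R A_R} - S^\tau_{C_R}$; applying the extreme point criterion to $\rho$ on $\Omega_L$ and to $\lambda|_{C_R B_R A_R}$ (extreme by the same restriction argument as in Part~1) expresses $S^\tau_\Omega$ entirely in terms of $S^\tau_{A_L}, S^\tau_{A_R}, S^\rho_{B_L}, S^\lambda_{B_R}$, and entropies of the common state on subregions of $C$. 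A parallel chain of Markov identities and extreme-point relations on further-restricted regions such as $B_L C_L$ and $C_R B_R$ computes $S^\tau_{B_L C B_R}$ in the same basis. Adding the two expressions, the $B_L$, $B_R$, and $C$ contributions cancel and leave $S^\tau_{A_L} + S^\tau_{A_R}$, which is the required factorization.

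\textbf{Main obstacle.} The only real difficulty is bookkeeping: at each step one must certify that the relevant partial trace of $\rho$ or $\lambda$ lies in the information convex set of some nested region and is an extreme point there. The Generalized Isomorphism Theorem handles this uniformly, since removing a layer of a thickened-boundary component is a valid restriction path. Once that is in place, every required Markov and factorization identity reduces to a single application of Lemma~\ref{lemma:ext_criterion} and the product rule. A subtle point is that the shared reduced state $\rho|_C = \lambda|_C$ is by construction an extreme point of the sectorizable $\Sigma(C)$, and this common superselection sector is implicitly the boundary condition that lets the entropy identities from the two sides be glued consistently.
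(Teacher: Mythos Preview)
Your approach is correct and essentially the same as the paper's: merge via the Markov conditions (which the paper simply asserts, while you derive them from Lemma~\ref{lemma:ext_criterion} applied to $\Omega_L$ and $A_LB_LC_L$), then verify the extreme point criterion on $\tau$ by assembling entropy identities. For Part~2 the paper is more explicit than your sketch: it writes down six identities on $\tau$ --- the factorizations $S_{\Omega_L}+S_{B_L}-S_{A_L}-S_C=0$ and $S_{\Omega_R}+S_{B_R}-S_{A_R}-S_C=0$, the symmetric Markov condition $I(A_LB_L:A_RB_R|C)_\tau=0$, the condition $S_{BC}+S_C-S_B=0$ (from extremality of $\tau|_C$), and $I(B_L:B_R)_\tau=I(A_L:A_R)_\tau=0$ --- and shows that their sum is literally $(S_\Omega+S_{\Omega\setminus\partial\Omega}-S_{\partial\Omega})_\tau=0$, which is a cleaner bookkeeping than the nested-restriction chain you outline.
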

\begin{remark}
Importantly, for generality, we allowed the hypersurface to be a union of connected components. Furthermore, we note that $A_L$ and $A_R$ can be empty sets.
\end{remark}
The proof of this lemma is presented in Appendix~\ref{appendix:associtivity}. The nontrivial part is to show that the merged state satisfies the extreme point criterion (Lemma~\ref{lemma:ext_criterion}).

\begin{lemma}\label{lemma:max_entropy_Markov}
 If $\Sigma_{a_L}^{a_R}(\Omega)$ is nonempty, its maximum-entropy state satisfies
	\begin{equation}
		I(A_L B_L : B_R A_R \vert C)=0.
	\end{equation}
\end{lemma}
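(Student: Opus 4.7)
The plan is to build a candidate max-entropy element by merging the two marginals of $\rho^\star$ and then argue that any maximizer must coincide with it. Let $\rho^\star\in\Sigma_{a_L}^{a_R}(\Omega)$ attain the maximum of the von Neumann entropy on this compact convex set. Its restrictions $\rho^\star_{\Omega_L}$ and $\rho^\star_{\Omega_R}$ lie in $\Sigma(\Omega_L)$ and $\Sigma(\Omega_R)$ (because they automatically agree with the reference state on every ball that fits inside $\Omega_L$ or $\Omega_R$) and are trivially consistent on the overlap $C=C_LC_R$.

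I would then invoke the Merging Theorem~\ref{thm:merging_info_convex_set} with the assignment $A=A_LB_L$, $B=C_L$, $C=C_R$, $D=B_RA_R$. The required conditional mutual informations $I(A_LB_L:C_R|C_L)_{\rho^\star_{\Omega_L}}=0$ and $I(C_L:B_RA_R|C_R)_{\rho^\star_{\Omega_R}}=0$ are exactly the local Markov identities that, as noted in the remark following Theorem~\ref{thm:merging_info_convex_set}, hold for any element of an information convex set once the separating strip is thick enough; we may assume so by mildly fattening $B_L,B_R$. The theorem delivers a unique $\tau\in\Sigma(\Omega)$ with marginals $\rho^\star_{\Omega_L}$ and $\rho^\star_{\Omega_R}$ satisfying the stronger Markov identity $I(A_LB_L:C_RB_RA_R|C_L)_\tau=0$. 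The chain rule
\[
I(A_LB_L:C_RB_RA_R|C_L)=I(A_LB_L:C_R|C_L)+I(A_LB_L:B_RA_R|C_LC_R),
\]
together with nonnegativity from strong subadditivity, then forces $I(A_LB_L:B_RA_R|C)_\tau=0$. Since $\tau$ and $\rho^\star$ share marginals on $\Omega_L\supset A_L$ and $\Omega_R\supset A_R$, the boundary sector is preserved, so $\tau\in\Sigma_{a_L}^{a_R}(\Omega)$.

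The conclusion now comes from a one-line entropy accounting. Using the general identity
\[
S(\rho^\star)=S(\rho^\star_{\Omega_L})+S(\rho^\star_{\Omega_R})-S(\rho^\star_C)-I(A_LB_L:B_RA_R|C)_{\rho^\star},
\]
together with the Markov relation $S(\tau)=S(\tau_{\Omega_L})+S(\tau_{\Omega_R})-S(\tau_C)$ and the equality of marginals, one obtains $S(\tau)-S(\rho^\star)=I(A_LB_L:B_RA_R|C)_{\rho^\star}\geq 0$. Maximality of $\rho^\star$ on $\Sigma_{a_L}^{a_R}(\Omega)$ gives $S(\tau)\leq S(\rho^\star)$, so both sides vanish and the claim follows. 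The only delicate point is the verification of the local Markov conditions feeding the Merging Theorem; as discussed above, this is the standard consequence of axioms {\bf A0}, {\bf A1}, and SSA on thick regions, so no new technical obstacle arises.
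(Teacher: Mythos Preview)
Your proof is correct and follows essentially the same approach as the paper's: take the marginals of the maximizer on $\Omega_L$ and $\Omega_R$, merge them back (using the standard Markov conditions $I(A_LB_L:C_R|C_L)=I(C_L:B_RA_R|C_R)=0$ available for any element of the information convex set), observe the merged state lies in $\Sigma_{a_L}^{a_R}(\Omega)$ with vanishing $I(A_LB_L:B_RA_R|C)$, and then compare entropies. The paper phrases this as a proof by contradiction and omits the chain-rule and entropy-identity details you spell out, but the mechanism is identical.
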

\begin{proof}
	Suppose this were not the case. To see the contradiction, we reduce the given state to $\Omega_L$ and $\Omega_R$ then merge the marginals back. This merging is always possible and the newly obtained state is an element of $\Sigma_{a_L}^{a_R}(\Omega)$ that satisfies $I(A_L B_L : B_R A_R \vert C)=0$. However, the newly obtained state has a larger entropy; this is because among states with identical marginals, the one with  minimal conditional mutual information has the greatest entropy. This completes the proof.
\end{proof}

\begin{theorem}[Associativity theorem] \label{thm:associativity}
	In the general setup concerning immersed region $\Omega$, its partition and the labeling (see Fig.~\ref{fig:general}), the following associativity condition holds:
 \begin{equation}\label{eq:associativity_general}
 	N_{a_L}^{a_R}(\Omega) = \sum_{i \in \calC_{C}} N_{a_L}^i(\Omega_L) \, N_{i}^{a_R}(\Omega_R).
 \end{equation}
\end{theorem}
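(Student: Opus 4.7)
The plan is to establish the Hilbert-space isomorphism
\be
\mathbb{V}_{a_L}^{a_R}(\Omega) \;\cong\; \bigoplus_{i \in \calC_C} \mathbb{V}_{a_L}^i(\Omega_L) \otimes \mathbb{V}_i^{a_R}(\Omega_R),
\ee
and then read off the multiplicity identity by taking dimensions. The two main ingredients will be Lemma~\ref{lemma:lr} (merging along the whole of $C$ produces extreme points) and Lemma~\ref{lemma:max_entropy_Markov} (the maximum-entropy state of $\Sigma_{a_L}^{a_R}(\Omega)$ is a quantum Markov chain across $C$).

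First I would dispose of the case $N_{a_L}^{a_R}(\Omega)=0$. If $\Sigma_{a_L}^{a_R}(\Omega)$ is empty but some term on the right-hand side is nonzero, pick extreme points $\rho_L \in \Sigma_{a_L}^i(\Omega_L)$ and $\rho_R \in \Sigma_i^{a_R}(\Omega_R)$ for the relevant $i$. By the product rule on the sectorizable thickened boundaries $\partial \Omega_L$ and $\partial \Omega_R$, both reduce on $C$ to the same extreme point $\rho_C^i$ of $\Sigma(C)$, so Lemma~\ref{lemma:lr} merges them into an extreme point of $\Sigma_{a_L}^{a_R}(\Omega)$, contradicting emptiness.

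Assume henceforth $\Sigma_{a_L}^{a_R}(\Omega)$ is nonempty and let $\tau$ be its maximum-entropy element. Lemma~\ref{lemma:max_entropy_Markov} gives $I(A_LB_L : B_RA_R \mid C)_\tau = 0$, and the simplex theorem on the sectorizable region $C$ writes $\tau_C = \sum_i p_i \rho_C^i$ with mutually orthogonal extreme points $\rho_C^i$. I would then invoke the standard structure theorem for vanishing conditional mutual information: the support of $\tau_C$ splits as $\bigoplus_i \mathcal{H}_C^{L,i} \otimes \mathcal{H}_C^{R,i}$ and
\be
\tau \;=\; \bigoplus_i p_i\, \tau_L^i \otimes \tau_R^i,
\ee
with $\tau_L^i$ a state on $A_LB_L\mathcal{H}_C^{L,i}$ and $\tau_R^i$ a state on $\mathcal{H}_C^{R,i}B_RA_R$. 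Matching local reductions to the reference state identifies $\tau_L^i \in \Sigma_{a_L}^i(\Omega_L)$ and $\tau_R^i \in \Sigma_i^{a_R}(\Omega_R)$, and the maximum-entropy property of $\tau$ forces each of these to be the maximum-entropy element of its respective convex set.

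To upgrade this to an isomorphism of the whole fusion spaces, I would argue as follows. Any extreme point of $\Sigma_{a_L}^{a_R}(\Omega)$ reduces (by the factorization argument behind Lemma~\ref{lemma:ext_criterion}) to extreme points of $\Sigma_{a_L}^i(\Omega_L)$ and $\Sigma_i^{a_R}(\Omega_R)$ for a common $i$; conversely, any such compatible pair merges via Lemma~\ref{lemma:lr} to an extreme point of $\Sigma_{a_L}^{a_R}(\Omega)$, and the uniqueness clause of the merging lemma makes this a bijection. Combined with the Hilbert space theorem applied on each side and on the whole, this promotes to the claimed tensor-sum decomposition, whose dimension is exactly $\sum_i N_{a_L}^i(\Omega_L)\,N_i^{a_R}(\Omega_R)$. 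The hard part will be verifying that the tensor factors appearing in the Markov decomposition of $\tau$ really are the fusion spaces $\mathbb{V}_{a_L}^i(\Omega_L)$ and $\mathbb{V}_i^{a_R}(\Omega_R)$ (rather than merely subspaces of the correct total dimension), which needs combining the max-entropy argument with merging uniqueness to rule out any further collapse of the identification.
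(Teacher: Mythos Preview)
Your handling of the $N_{a_L}^{a_R}=0$ case matches the paper exactly, and your use of Lemma~\ref{lemma:max_entropy_Markov} to get the Markov decomposition of the maximum-entropy state $\tau$ is fine. The gap is in the paragraph that tries to upgrade this to a Hilbert-space isomorphism.

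The specific false step is: ``Any extreme point of $\Sigma_{a_L}^{a_R}(\Omega)$ reduces (by the factorization argument behind Lemma~\ref{lemma:ext_criterion}) to extreme points of $\Sigma_{a_L}^i(\Omega_L)$ and $\Sigma_i^{a_R}(\Omega_R)$ for a common $i$.'' Lemma~\ref{lemma:ext_criterion} gives factorization across the thickened boundary $\partial\Omega = A_L A_R$, not across the interior cut $C$. A generic extreme point of $\Sigma_{a_L}^{a_R}(\Omega)$ is a superposition over $C$-sectors $i$; restricting to $\Omega_L$ traces out the $R$-side and yields a \emph{mixture} over $i$, not an extreme point. (Concretely: in 2d, take a three-hole disk, cut by an annulus enclosing two holes; a pure state in $\mathbb{V}_{abc}^d \cong \bigoplus_i \mathbb{V}_{ab}^i \otimes \mathbb{V}_{ic}^d$ that superposes two $i$'s restricts to a mixed state on the two-hole piece.) So Lemma~\ref{lemma:lr} gives you one direction (merged pairs are extreme), but the converse ``every extreme point comes from merging a pair'' fails, and with it the bijection argument.

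The paper's proof sidesteps this entirely by working entropically. It uses $\ln N_{a_L}^{a_R} = S(\rho^{\max}) - S(\rho^{\min})$, where $\rho^{\max}$ is Markov across $C$ by Lemma~\ref{lemma:max_entropy_Markov}, and $\rho^{\min}$ is a \emph{particular} extreme point constructed by merging (Lemma~\ref{lemma:lr}), hence also Markov. For both states $S_\Omega = S_{\Omega_L} + S_{\Omega_R} - S_C$, and the entropy difference reduces to a maximization over the probability distribution $\{p_i\}$ on $C$-sectors, yielding $\ln \sum_i N_{a_L}^i N_i^{a_R}$. No claim about generic extreme points is needed---only the existence of \emph{one} Markov extreme point.
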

The proof of is presented in Appendix~\ref{appendix:associtivity}. Intuitively, the ability of deriving this theorem lies in the fact that subregions $\Omega_L$ and $\Omega_R$ know enough about both the extreme points (by Lemma~\ref{lemma:lr}) and the maximum-entropy state of $\Sigma_{a_L}^{a_R}(\Omega)$ (by Lemma~\ref{lemma:max_entropy_Markov}).

\begin{Proposition}[sectorizable region and restriction]\label{prop:extreme-point-restriction}
    Let $S$ be a sectorizable region, and $\rho_S$ be an extreme point of $\Sigma(S)$. 
    Let $\Omega$ be a region embedded in $S$. Then the following statements are true:
    \vspace{-4mm}
    \begin{enumerate}
    	\setlength\itemsep{-0.2em}
    	\item The reduced density matrix $\rho_{\Omega}\equiv \Tr_{S\setminus \Omega}\, \rho_S$ is an extreme point of $\Sigma(\Omega)$. 
    	\item $\rho_{\Omega}$ is an isolated extreme point: Let $I\in \calC_{\partial\Omega}$ be the label such that  $\rho_{\Omega}\in \Sigma^I(\Omega)$, the associated fusion space dimension $N^I(\Omega)=1$.
    \end{enumerate} 
\end{Proposition}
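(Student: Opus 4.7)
My strategy is to apply the Associativity Theorem (Theorem~\ref{thm:associativity}) to decompose $S$ along a thickened copy of $\partial \Omega$, and to use that sectorizability makes $\Sigma(S)$ a simplex. First, I would verify $\rho_{\Omega} \in \Sigma(\Omega)$: since $\rho_S$ agrees with the reference state on every ball contained in $S$, and in particular on every such ball contained in $\Omega$, this local consistency is inherited by the partial trace $\rho_{\Omega} = \Tr_{S \setminus \Omega} \rho_S$.

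Next, I thicken $\partial \Omega$ within $S$ to a collar $C = C_L C_R$, with $C_L \subset \Omega$ and $C_R \subset S \setminus \Omega$, and form $\Omega_L \supset \Omega$ and $\Omega_R \supset S \setminus \Omega$ meeting on $C$, so that $\Omega_L \cup \Omega_R = S$. Writing $\partial S = A_L A_R$ with $A_L \subset \Omega_L$ and $A_R \subset \Omega_R$ (one or both possibly empty, for instance $A_L = \emptyset$ when $\Omega$ lies in the interior of $S$), the Associativity Theorem then yields
\begin{equation}
N_{a_L}^{a_R}(S) \;=\; \sum_{i \in \calC_C} N_{a_L}^i(\Omega_L)\, N_i^{a_R}(\Omega_R).
\end{equation}
Letting $(a_L, a_R)$ be the boundary sector of $\rho_S$, the Simplex Theorem (Theorem~\ref{thm:sectorizable}) combined with the Hilbert Space Theorem (Theorem~\ref{thm:Hilbert}) forces $N_{a_L}^{a_R}(S) = 1$, because a nonempty sector of a simplex contains a single extreme point and cannot be a higher-dimensional state space. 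Hence exactly one $I \in \calC_C$ satisfies $N_{a_L}^{I}(\Omega_L) = N_{I}^{a_R}(\Omega_R) = 1$, with all other summands vanishing.

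It remains to identify this unique $I$ with the $C$-sector of $\rho_{\Omega_L} = \Tr_{\Omega_R \setminus C} \rho_S$. The restrictions $\rho_{\Omega_L}$ and $\rho_{\Omega_R}$ agree on $C$ and can be merged back to $\rho_S$, so $\rho_{\Omega_L}$ cannot be a nontrivial mixture over distinct $C$-sectors: such a decomposition would produce a mixed state on $S$ under merging and violate the factorization condition of Lemma~\ref{lemma:ext_criterion} that $\rho_S$ satisfies as an extreme point. Therefore $\rho_{\Omega_L} \in \Sigma_{(a_L, I)}(\Omega_L)$ for the unique $I$ above, and by Theorem~\ref{thm:Hilbert} this sector contains a single element, namely $\rho_{\Omega_L}$ itself, which is thus an isolated extreme point. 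Transporting along the Generalized Isomorphism Theorem (Theorem~\ref{thm:generalized_iso}) gives $\rho_\Omega$ as an isolated extreme point of $\Sigma_I(\Omega)$ with $N_I(\Omega) = 1$, establishing both claims.

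The main obstacle is the argument in the third paragraph that $\rho_{\Omega_L}$ carries a definite $C$-sector; this requires carefully converting the factorization property of $\rho_S$ into a statement about its restrictions. A secondary issue is the bookkeeping when $\partial \Omega$ or $\partial S$ has multiple connected components, which is handled by the product rule on sectorizable boundaries so that each ``sector label'' is interpreted as a tuple of component labels.
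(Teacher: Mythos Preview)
Your overall strategy matches the paper's: decompose $S$ along $\partial\Omega$, apply the Associativity Theorem, and use that sectorizability forces the left-hand side to equal $1$. But the step you flag as the ``main obstacle'' is indeed a genuine gap. The merging heuristic does not do the work: the factorization condition of Lemma~\ref{lemma:ext_criterion} concerns $\partial S$, not the internal cut $C$, so a nontrivial $C$-mixture in $\rho_{\Omega_L}$ does not by itself contradict that condition for $\rho_S$. To make your route go through you would have to show that a $C$-mixture forces $\rho_S$ to decompose as a nontrivial convex combination of \emph{orthogonal} elements of $\Sigma(S)$; justifying that (including the requisite Markov condition $I(A_LB_L:B_RA_R|C)_{\rho_S}=0$ and linearity of merging across orthogonal sectors) is a separate argument you have not supplied.

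The paper avoids this entirely by running a proof by contradiction directly off the associativity identity. If $\rho_{\partial\Omega}$ were a mixture $p_1 \rho^{I_1}_{\partial\Omega} + p_2 \rho^{I_2}_{\partial\Omega} + \cdots$ with $p_1, p_2 > 0$, then by the convex decomposition \eqref{eq:fusion_space1} applied to $\rho_\Omega$ and to $\rho_{\Omega'}$, all of $N^{I_1}(\Omega), N^{I_2}(\Omega), N_{I_1}^{s}(\Omega'), N_{I_2}^{s}(\Omega')$ are at least $1$, so $\sum_I N^I(\Omega) N_I^{s}(\Omega') \ge 2$, contradicting~\eqref{eq:extreme-point-restriction}. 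If instead $\rho_\Omega$ lies in a single sector $I$ but is not extreme there, then $N^I(\Omega) \geq 2$ and again the sum exceeds $1$. No merging is invoked. The paper also streamlines your geometry by taking $C = \partial\Omega$ (the thickened boundary already inside $\Omega$), so that $\Omega_L = \Omega$ and $A_L = \emptyset$; the relevant multiplicity is then $N^I(\Omega)$ itself, and your final appeal to the isomorphism theorem becomes unnecessary.
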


\begin{proof}
We first show that the two statements hold if $\Omega$ is embedded in $S$ in such a way that $\Omega \cap \partial S= \emptyset$.
Let $S= \Omega \cup \Omega'$, where $\Omega' \equiv \partial \Omega \cup (S \setminus \Omega)$. Translating notations in Associativity Theorem~\ref{thm:associativity} to this context, we have $\Omega_L = \Omega$, $\Omega_R = \Omega'$, $A_L = \emptyset, A_R = \partial S$, and $C = \partial \Omega$. Because $\rho_S$ is an extreme point, it carries a \emph{particular} label $s\in \calC_{\partial S}$. Rewrite Eq.~(\ref{eq:associativity_general}) for this particular label, we find:
    	\begin{equation}\label{eq:extreme-point-restriction}
    		1 = N^{s}(S) = \sum_{I \in \calC_{\partial\Omega}} N^I(\Omega) N_I^{s}(\Omega')~.
    	\end{equation}    
    Here $N^{s}(S)=1$ because $S$ is sectorizable.
    If $\rho_{\Omega}$ were not an extreme point of $\Sigma(\Omega)$, then there can  only be two cases: \vspace{-4mm}
	\begin{enumerate}
		\setlength\itemsep{-0.2em}
		\item $\rho_{\partial\Omega}\equiv \Tr_{S\setminus \partial\Omega}\, \rho_S$ is a convex combination of more than one extreme point of $\Sigma(\partial\Omega)$. Let us say $\rho_{\partial \Omega} = p_1 \rho_{\partial \Omega}^{I_1} + p_2 \rho_{\partial \Omega}^{I_2} + \cdots$, where $p_1, p_2 > 0$. Then $N^{I_1}(\Omega)$, $N_{I_1}^{s}(\Omega')$, $N^{I_2}(\Omega)$ and $N_{I_2}^{s}(\Omega')$ are greater or equal to $1$. This violates Eq.~\eqref{eq:extreme-point-restriction}.
    	\item There is a certain label $I\in \calC_{\partial\Omega}$, such that $\rho_{\Omega}\in \Sigma_I(\Omega)$ but $\rho_{\Omega}  \notin \text{ext}(\Sigma_I(\Omega))$.
		This implies $N^I(\Omega) \geq 2, N_I^{s} \geq 1$. Again, this violates Eq.~\eqref{eq:extreme-point-restriction}. \vspace{-4mm}
	\end{enumerate}
This proves the case that $\Omega \cap \partial S = \emptyset$.

Suppose $\Omega \cap \partial S$ is nonempty, which happens when $\Omega$ share boundaries with $S$, for instance. It is possible to shrink $\Omega$ along its boundary and obtain $\Omega_-$ such that $\Omega_-= \Omega \setminus \partial \Omega$ and $\Omega_- \cap \partial S= \emptyset$. As the previous paragraph shows, the two statements of the proposition hold for $\Omega_-$. We then extend $\Omega_-$ back to $\Omega$ using a sequence of elementary steps of extensions. Both statement 1 and 2 still hold under each such step, and therefore they hold for the region $\Omega \subset S$.
This completes the proof of the general case.
\end{proof}

\subsection{Subdivision-invariance and minimal diagram: tools for explicit data}\label{sec:Minimal_digram_into}

A minimal diagram \cite{Shi:2018krj}
is a means to avoid unnecessary complication when studying information convex sets for ground states of exactly solvable models of topological order.
The idea is that the information convex set 
is a topological invariant in two senses: 
it depends only on the phase of matter represented by the reference state, 
and it depends only on the topology of the spatial region.
Therefore we may choose the reference state to be a renormalization group fixed-point 
state which is not changed by subdivisions of the cell complex on which the model is defined.
More concretely, starting from a fixed-point wave function, we can use entanglement renormalization 
\cite{Aguado:2007oza, Gu_2009}
to remove as many degrees of freedom as possible while preserving the topology of the region of interest.
Therefore, we can determine its structure by considering the simplest available cell complex that discretizes the region of interest.

\section{Information convex sets for various 3d topologies}\label{sec:ICS}

In this section, we study the information convex set for regions of various 3d topologies, applying the entanglement bootstrap techniques summarized in \S\ref{sec:general-theorems}. This allows us to identify a number of simplexes and finite-dimensional Hilbert spaces. Physically, these correspond to the superselection sectors of various excitation types of a 3d topological order and their fusion spaces. 
Some explicit data calculated from 3d quantum double models are presented for illustration purposes, the detailed calculation of which are shown in \S\ref{sec:QD}.
The content of each subsection is summarized in Table~\ref{tab:info_convex_summary}.

\begin{table}[h]
	\centering
	\begin{tabular}{|l|l|l|}
		\hline
		Section & Physical data  & Choice of regions       \\
		\hline
		\hline
		\S\ref{subsec:sphere-completion} &  vacuum   &   ball, $3$-sphere      \\
		\hline
		\S\ref{subsection:sectors} & superselection sectors & sectorizable regions \\
		 & \footnotesize{ point particles $\calC_{\text{point}}$} & \footnotesize{ sphere shell} \\
		 & \footnotesize{ pure-fluxes $\calC_{\text{flux}}$} & \footnotesize{ solid torus} \\
		 & \footnotesize{ Hopf excitations $\calC_{\text{Hopf}}$} & \footnotesize{ torus shell} \\
		 & \footnotesize{ shrinkable loops $\calC_{\text{loop}}$} & \footnotesize{ torus shell with a constraint} \\
		\hline
		\S\ref{subsec:3d-known} & fusion spaces (basic ones) &  3d regions with boundaries\\
		& dimensional reduction	&  see Appendix~\ref{appendix:Fusion-rules} for details \\
		\hline
		\S\ref{subsec:fusion-from-knots} & knot multiplicity &  knot complement\\
		\hline
		\S\ref{subsec:exotic-flux-fusion} & exotic fusion of fluxes  & 3d regions with boundaries\\
		\hline
	\end{tabular}
	\caption{Summary of the content of \S\ref{sec:ICS}.}
	\label{tab:info_convex_summary}
\end{table}

\subsection{Ball, $3$-sphere, and the vacuum}
\label{subsec:sphere-completion}

In the entanglement bootstrap approach, we often start with a reference state on a ball instead of that on a closed manifold.
The reason is that the density matrix on a ball is often an ``economical" starting point: a ball is a subsystem of any manifold, and therefore it is something easy to obtain. Starting with a reference state on other manifolds is possible, but that is considered as a stronger input.

Nonetheless, the sphere completion lemma (Lemma~\ref{lemma:sphere_completion}) below indicates that the sphere is equally simple, in the sense that one can always complete the ball to a $3$-sphere. The logic of the analysis applies in any space dimension.

\tikzset{snake it/.style={decorate, decoration=snake}}
\begin{figure}[h]
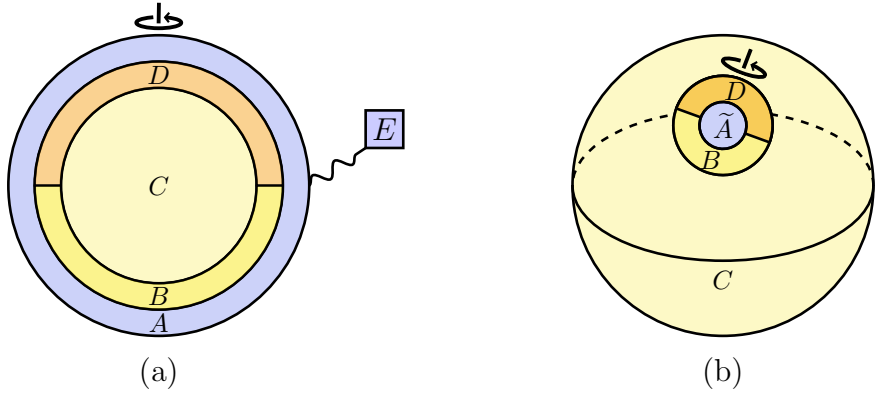

	\centering
	\begin{tikzpicture}
		\def\lnwdth{1}
		\def\BDradius{3.3}
		\def\Cradius{2.6}
		\begin{scope}[scale=0.5]
			\node[] (A) at (0.00,4.5) {\includegraphics[scale=1]{rotation}};
			% Fill A
			\draw[line width=\lnwdth pt, fill=\acolor, fill opacity=0.2, even odd rule] {(0,0) circle (4)}{(0,0) circle (\BDradius )};
			% Fill C
			\draw[line width=\lnwdth pt, fill=\ccolor, fill opacity=0.25] (0,0) circle (\Cradius);
			%Fill D
			\draw[black, thick, fill=\dcolor, fill opacity=0.5] (0:\Cradius) arc (0:180:\Cradius) -- 
			(180:\BDradius) arc (180:0:\BDradius)--(0:\Cradius) -- cycle;
			%Fill B
			\draw[black, thick, fill=\bcolor, fill opacity=0.5] (0:\Cradius) arc (0:-180:\Cradius) -- 
			(-180:\BDradius) arc (-180:0:\BDradius)--(0:\Cradius) -- cycle;
			
			\draw[fill = blue, fill opacity=.2, line width = \lnwdth pt] (5.5, 1) rectangle (6.5,2);
			\node[] (A) at (6,1.5) {$E$};												
			\draw[line width = \lnwdth pt, snake it] (4,0) -- (5.5,1);
			\node[] (A) at (0,0) {\footnotesize{$C$}};
			\node[] (A) at (0,\Cradius*0.5+\BDradius*0.5) {\footnotesize{$D$}};
			\node[] (A) at (0,-\Cradius*0.5-\BDradius*0.5) {\footnotesize{$B$}};							
			\node[] (A) at (0,-\BDradius*0.5-2) {\footnotesize{$A$}};							
			\node[] (A) at (0,-5) {(a)};									
		\end{scope}
		%% the completion (Right figure)
		\begin{scope}[scale=0.5, xshift=15cm]
			\draw[line width=\lnwdth pt, fill=\ccolor, fill opacity=.25] (0,0) circle (4.0cm);
			
			\begin{scope}
				\clip (-5, -3.5) rectangle (5,0);
				\draw[line width=\lnwdth pt] (0,0) circle (4.0cm and 2.0cm);
			\end{scope}
			\begin{scope}
				\clip (-5, 0) rectangle (5,3.5);
				\draw[line width=\lnwdth pt, dashed] (0,0) circle (4.0cm and 2.0cm);
			\end{scope}
			
			\begin{scope}[scale=0.4, yshift=4cm, rotate=-20]
				\node[] (A) at (0.00,4.5) {\includegraphics[scale=1, angle=-20]{rotation}};
				\draw[line width=\lnwdth pt, fill=white] (0,0) circle (\BDradius cm);
				\draw[line width=\lnwdth pt, fill=\bcolor, fill opacity=0.5] (0,0) circle (\BDradius cm);
				\begin{scope}
					\clip (-4,4) rectangle (4, 0);
					\draw[line width=\lnwdth pt, fill=\dcolor, fill opacity=0.5] (0,0) circle (\BDradius  cm);
				\end{scope}
				\draw[line width=\lnwdth pt] (-\BDradius ,0) -- (\BDradius ,0);			
				\draw[line width=\lnwdth pt, fill=white, opacity=1] (0,0) circle (0.6*\Cradius cm);
				\draw[line width=\lnwdth pt, fill=\acolor, fill opacity=.2] (0,0) circle (0.6*\Cradius cm);				
				\node[] (A) at (0,0) {\footnotesize{$\widetilde{A} $}};					
				\node[] (A) at (0,0.3*\Cradius+0.5*\BDradius) {\footnotesize{$D$}};								
				\node[] (A) at (0,-0.3*\Cradius-0.5*\BDradius) {\footnotesize{$B$}};								
			\end{scope}
			\node[] (A) at (0,-5) {(b)};						
			\node[] (A) at (0,-2.5) {\footnotesize{$C$}};
		\end{scope}	
	\end{tikzpicture}
	\caption{Sphere completion for 3d: completion of the reference state on a ball to a 3-sphere $S^3$.
		(a) A reference state on a ball $Z=ABCD$; $E$ is a finite-dimensional purifying system. (b) We map the system to a $3$-sphere $S^3=\widetilde{A}BCD$ by treating the union of $A$ and $E$ as a single site $\widetilde{A}$.}
	\label{fig:sphere_completion}
\end{figure}

\begin{lemma}[sphere completion lemma]\label{lemma:sphere_completion}
	Let $\sigma_Z$ be a reference state of a ball $Z$. Let $Z=ABCD$ as is shown in Fig.~\ref{fig:sphere_completion}(a). There exists a pure state $\vert \psi\rangle$ on a $3$-sphere $S^3=\widetilde{A}BCD$ (Fig.~\ref{fig:sphere_completion}(b)) such that:
	\begin{enumerate}
		\item $\vert \psi\rangle$ is a reference state of $S^3$ if we treat $\widetilde{A}$ as a single site on the coarse-grained lattice. (Namely, the Hilbert space on each site is finite dimensional and that the 3d version of axioms {\bf A0} and {\bf A1} are satisfied on $\vert \psi\rangle$.)
		\item $\Tr_{\widetilde{A}} \vert \psi\rangle \langle \psi \vert = \sigma_{BCD}$.
	\end{enumerate}
We shall call $\vert \psi\rangle $ as the completion of reference state $\sigma_Z$ onto the $3$-sphere.
\end{lemma}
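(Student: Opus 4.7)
The plan is to take $|\psi\rangle$ to be any purification of $\sigma_{BCD}\equiv\Tr_A\sigma_Z$ on $\mathcal{H}_{\widetilde{A}}\otimes\mathcal{H}_{BCD}$, where $\widetilde{A}=A\cup E$ is regarded as a single coarse-grained site. By Schmidt decomposition one may take $\dim\mathcal{H}_{\widetilde{A}}\le\dim\mathcal{H}_{BCD}<\infty$, so the site at $\widetilde{A}$ is finite-dimensional as required by property~1. Property~2 is then immediate. A useful preliminary observation: every subsystem $X\subset S^3$ either lies in $BCD$ (in which case $S_X^{|\psi\rangle}=S_X^{\sigma_Z}$) or contains $\widetilde{A}$ (in which case by purity $S_X^{|\psi\rangle}=S_{S^3\setminus X}^{|\psi\rangle}=S_{S^3\setminus X}^{\sigma_Z}$, with $S^3\setminus X\subset BCD$). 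Hence every entropy relevant to verifying the axioms on $S^3$ is determined by $\sigma_Z$ alone and is independent of the choice of purification.

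For property~1, I would check \textbf{A0} and \textbf{A1} on every ball $b\in\Gamma(S^3)$ in two classes. Balls $b\subset BCD$ pose no issue: reductions agree and the axioms hold by assumption on $\sigma_Z$. For $b\supset\widetilde{A}$, applying the purity rewrite to each entropy whose region contains $\widetilde{A}$ converts the axiom into a statement about $\sigma_Z|_{BCD}$. For \textbf{A0} on $b=B_0\cup C_0$ (inner sub-ball $C_0$, outer shell $B_0$), the subcase $\widetilde{A}\subset C_0$ yields $S_{b^c}+S_{b^c\cup B_0}-S_{B_0}=0$, which is literally axiom \textbf{A0} applied to the ball $B_0\cup b^c=S^3\setminus C_0\subset BCD$ (with inner sub-ball $b^c$ and outer shell $B_0$), and hence holds by assumption. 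The subcase $\widetilde{A}\subset B_0$ instead yields the vanishing mutual information $I_{\sigma_Z}(C_0:b^c)=0$ between two disjoint balls in $BCD$ separated by the thick shell $A\cup(B_0\cap BCD)$. Axiom \textbf{A1} on $b$ branches analogously into a three-way case analysis according to which of its three layers contains $\widetilde{A}$; each subcase reduces by purity either to a rearranged restatement of \textbf{A0}/\textbf{A1} on some ball in $BCD$, or to a vanishing (conditional) mutual information between separated subregions of $BCD$.

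The hard part is the shell subcases, which demand short-range correlation statements for $\sigma_Z$. The plan is to derive $I_{\sigma_Z}(C_0:b^c)=0$ and its \textbf{A1} analogues by iterating axiom \textbf{A1} along a chain of nested balls interpolating between the two separated regions: each application produces a quantum Markov chain condition, and strong subadditivity then collapses the (conditional) mutual information across the thick separating region to zero. This is the same mechanism underlying the merging theorems (Lemma~\ref{lemma:merging_lemma} and Theorem~\ref{thm:merging_info_convex_set}), and will need a careful but standard diagrammatic argument to organize the interpolating balls on $S^3$. Once these correlation-vanishing statements are in hand, every subcase closes, $|\psi\rangle$ satisfies \textbf{A0} and \textbf{A1} on all of $\Gamma(S^3)$, and the lemma follows.
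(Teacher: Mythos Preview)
Your construction and verification strategy match the paper's: purify $\sigma_Z$ on $\widetilde{A}=AE$ and reduce every entropy containing $\widetilde{A}$ to one supported in $BCD$ via purity. The paper's proof is much shorter, however, because the partition in Fig.~\ref{fig:sphere_completion} is engineered so that the only new axiom checks are $\Delta(BD,\widetilde{A})_{|\psi\rangle}=0$ and $\Delta(B,\widetilde{A},D)_{|\psi\rangle}=0$ on the ball centered at the single site $\widetilde{A}$; by purity these become $\Delta(BD,C)_\sigma=0$ and $\Delta(B,C,D)_\sigma=0$, which are precisely {\bf A0} and {\bf A1} on the ball $BCD\subset Z$. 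Your ``shell subcases'' (with $\widetilde{A}\subset B_0$) and the iterated-SSA program you sketch for them are therefore unnecessary: since $\widetilde{A}$ is declared a single coarse-grained site, one may choose $\Gamma(S^3)$ so that the bounded-radius balls containing $\widetilde{A}$ are centered on it, bypassing those cases entirely.
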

\begin{proof}
	Consider a partition of the ball $Z=ABCD$ as is shown in Fig.~\ref{fig:sphere_completion}(a). Let $E$ be its purifying system with a finite dimensional Hilbert space, and $\vert \psi\rangle $ be a purification of $\sigma_Z$. We map $AE$ to a single site $\widetilde{A}$. The topology of the coarse-grained lattice is now a $3$-sphere. It is easy to check that the state $\vert \psi \rangle$ satisfies the axioms that we would expect for a reference state on a $3$-sphere.
	In particular, we have
	\begin{equation}
\Delta(BD,\widetilde{A})_{\vert \psi\rangle }=0	\quad \textrm{and}	\quad  \Delta(B,\widetilde{A},D)_{\vert \psi\rangle }=0,
	\end{equation}
	among other relations.
	This completes the proof.
\end{proof}

The important physical object associated to the ball and $3$-sphere is the vacuum. To prepare the discussion of the vacuum, we recall the basic facts about information convex sets on these simple regions:
\begin{itemize}
	\item The information convex set of a disk in 2d  (ball in 3d) contains a unique element. The 2d case is Proposition~3.5 of \cite{shi2020fusion}, and the proof of the 3d case is a straightforward generalization.
	\item The information convex set on a sphere in 2d ($3$-sphere in 3d) contains a unique element, and that element is a pure state. The 2d case is Proposition~3.7 of \cite{shi2020fusion}, and the proof of the 3d case is a straightforward generalization. 
\end{itemize}

The following lemma is the key to putting many of the anticipated properties of the vacuum into a firm footing. The logic of the proof works for general space dimensions.

\begin{lemma}[vacuum lemma]\label{lemma:vacuum}
	Let $\sigma$ be the reference state defined on a ball ( $3$-sphere).
	$\Omega$ is a region embedded in the ball ($3$-sphere). Then the following are true:
	 \vspace{-4mm}
	\begin{enumerate}
		\setlength\itemsep{-0.2em}
		\item $\sigma_\Omega\in \text{ext}(\Sigma(\Omega))$.
		\item Let $I\in \calC_{\partial\Omega}$ be the label such that  $\sigma_{\Omega}\in \Sigma_I(\Omega)$. The associated fusion space dimension  satisfies $N_I(\Omega)=1$.
	\end{enumerate} 
\end{lemma}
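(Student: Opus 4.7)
The plan is to derive both claims from Proposition~\ref{prop:extreme-point-restriction} (applied directly in the ball case, and after a short reduction in the 3-sphere case). Recall from the two bullet points immediately preceding the lemma that for $Z$ a ball or a 3-sphere, $\Sigma(Z)$ consists of the single element $\sigma_Z$, which is automatically an extreme point of $\Sigma(Z)$.

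In the ball case the argument is immediate: the ball is sectorizable, so Proposition~\ref{prop:extreme-point-restriction} applies directly with $S=Z$ and $\rho_S = \sigma_Z$, yielding both conclusions.

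For the 3-sphere case, $S^3$ has no boundary and hence does not satisfy Definition~\ref{def:sectorizable}, so Proposition~\ref{prop:extreme-point-restriction} does not apply verbatim. The plan is to reduce to the ball case. Excluding the trivial situation $\Omega = S^3$ (when $\Sigma(\Omega)$ is a single pure state and both claims are immediate), we can choose a ball $B \subset S^3$ whose interior contains $\Omega$ together with a small collar on $\partial \Omega$. Because axioms {\bf A0} and {\bf A1} are local conditions on balls of bounded radius, the marginal $\sigma_B \equiv \Tr_{S^3 \setminus B}\, \sigma_{S^3}$ inherits them from $\sigma_{S^3}$, and is therefore itself a legitimate reference state on $B$. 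Moreover, $\Sigma(\Omega)$ depends on the reference state only through its marginals on the balls in $\widetilde{\Gamma}(\Omega_+)$, and these agree for $\sigma_{S^3}$ and $\sigma_B$. Hence applying the ball case with reference state $\sigma_B$ yields both desired conclusions for $\Omega \subset S^3$.

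The only conceptual move beyond Proposition~\ref{prop:extreme-point-restriction} is this reduction from $S^3$ to a ball, which is not a substantive obstacle; no new ingredients are needed and the entire lemma is essentially a corollary of the work already in place.
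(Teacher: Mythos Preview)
Your proof is correct and follows essentially the same approach as the paper: apply Proposition~\ref{prop:extreme-point-restriction} directly for the ball case, and for the 3-sphere reduce to the ball case by embedding $\Omega$ in a ball-shaped subsystem (handling $\Omega=S^3$ separately). Your write-up is somewhat more explicit about why the reduction is legitimate (locality of the axioms, dependence of $\Sigma(\Omega)$ only on local marginals), but the argument is the same.
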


\begin{proof}
First, we prove the case of a ball. A ball is a sectorizable region; moreover, the reference state is an extreme point. By applying Proposition~\ref{prop:extreme-point-restriction}, we prove the desired answer.
Second, we prove the case of a 3-sphere. If $\Omega=S^3$, the result is true. If $\Omega\subsetneq S^3$, then $\Omega$ is embedded in a ball-shaped subsystem of $S^3$. By the previous argument, we prove the desired answer.	
\end{proof}

One implication is that vacuum is a well-defined superselection sector. Another implication is that vacuum fuse trivially among each other, as long as we consider fusion spaces associated with regions embedded in a ball (or a $3$-sphere).

\begin{definition}[vacuum sector]\label{def:vacuum_sector}
	Let $S$ be a sectorizable region embedded in a ball ($3$-sphere)  
	for which the reference state $\sigma$ is defined. We define the vacuum sector $1$ to be the label of the following unique extreme point:
	\begin{equation}
		\rho^1_S \equiv \sigma_S.
	\end{equation}
\end{definition}

In the rest of Section~\ref{sec:ICS}, all regions are embedded/immersed in either a ball or a $3$-sphere, whichever is more convenient.

\subsection{Sectorizable regions and superselection sectors}\label{subsection:sectors}

By the simplex theorem (Theorem \ref{thm:sectorizable}), we can assign a set of superselection sectors to each sectorizable region.
In 3d, sectorizable regions are diverse. Simple choices include sphere shell,  solid torus, and torus shell; see Fig.~\ref{fig:sectors}. (The solid torus and torus shells can be knotted.)
Another class of sectorizable regions is the genus-$g$ handlebodies for each $g>1$; these play an important role in our companion study of braiding \cite{braiding-paper}.

\begin{figure}[h]
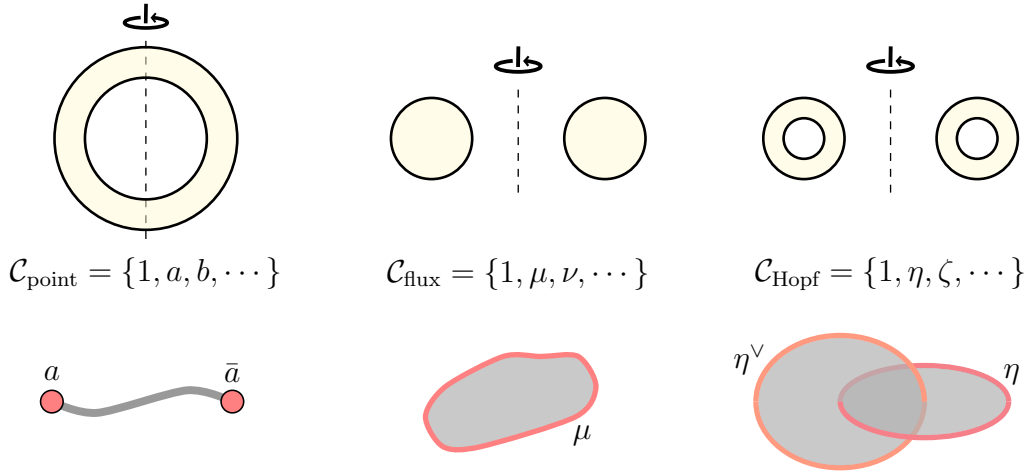

	\centering
	\begin{tikzpicture}
		\begin{scope}[scale=0.9]
			% sphere shell
			\begin{scope}[xshift=-5.5 cm]
				\begin{scope}[scale=0.9]
					\node[] (A) at (0.00,2) {\includegraphics[scale=0.95]{rotation}};
					\draw[dashed, line width=0.5 pt] (0,-1.65) -- (0,1.65);
					\fill[yellow!20!white, opacity=0.5, even odd rule] {(0,0) circle (1.0)}{(0,0) circle (1.5)};
					\draw[ line width=1pt] (0,0) circle (1.0);
					\draw[ line width=1pt] (0,0) circle (1.5);
				\end{scope}
				\node [] () at (0,-2.0) {{$\calC_{\text{point}}=\{1,a,b,\cdots\}$}};
			\end{scope}
			% sold torus
			\begin{scope}[xshift=0 cm]
				\begin{scope}[scale=0.8]
					\node[] (A) at (0.00,1.45) {\includegraphics[scale=1]{rotation}};
					\draw[dashed, line width=0.5 pt] (0,-1) -- (0,1);
					\draw[fill=yellow!10!white, line width=1pt] (-1.6,0) circle (0.75);
					\draw[fill=yellow!10!white, line width=1pt] (1.6,0) circle (0.75);
				\end{scope}
				\node [] () at (0,-2.0) {{$\calC_{\text{flux}}=\{1,\mu,\nu,\cdots\}$}};
			\end{scope}
			% torus shell
			\begin{scope}[xshift=5.5 cm]
				\begin{scope}[scale=0.8]
					\node[] (A) at (0.00,1.45) {\includegraphics[scale=1]{rotation}};
					\draw[dashed, line width=0.5 pt] (0,-1) -- (0,1);
					\draw[fill=yellow!10!white, even odd rule, line width=1pt] {(-1.6,0) circle (0.375)} {(-1.6,0) circle (0.75)};
					\draw[fill=yellow!10!white, even odd rule, line width=1pt] {(1.6,0) circle (0.375)} {(1.6,0) circle (0.75)};
				\end{scope}
				\node [] () at (0,-2.0) {{$\calC_{\text{Hopf}}=\{1,\eta, \zeta,\cdots\}$}};
			\end{scope}
		\end{scope}			
		% below is string operator
		\begin{scope}[yshift=-3.5 cm, xshift=-5 cm]
			\begin{scope}[yshift=-2 cm,scale=1]
				\draw[line width=3pt, gray!80] plot [smooth] coordinates {(-1.2,2) (-0.6,1.85) (0.6,2.15) (1.2,2) };
				\draw[line width=0.6pt,fill=red!50] (-1.2,2) circle (0.15cm);
				\draw[line width=0.6pt,fill=red!50] (1.2,2) circle (0.15cm); 
				\node[] (P) at (-1.2,2.38)  {${a}$};
				\node[] (P) at (1.2,2.38)  {$\bar{a}$}; 
				
			\end{scope}
			\begin{scope}[xshift=4 cm, yshift=-0.6 cm, scale=1]
				\fill[gray!60!white, opacity=0.7] plot [smooth cycle] coordinates {(0,0) (0.5,0) (1.5+0.8/3,0.4) (1.5+1.6/3,0.8) (1.8,1.2) (1.3,1.2) (0.8,1.2) (-0.5 +1.6/3, 0.8) (-0.5 +0.8/3,0.4)};
				\draw[red!50!white, line width=2 pt] plot [smooth cycle] coordinates {(0,0) (0.5,0) (1.5+0.8/3,0.4) (1.5+1.6/3,0.8) (1.8,1.2) (1.3,1.2) (0.8,1.2) (-0.5 +1.6/3, 0.8) (-0.5 +0.8/3,0.4)};
				\node[] (P) at (1.85,0.15)  {${\mu}$}; 
			\end{scope}
			
			\begin{scope}[xshift=10.4 cm, yshift=0 cm, scale=0.8]
				%horizontal
				\draw[red!70!purple!50!white, line width=2 pt] (0:1.4)arc(0:180:1.4 and 0.6);
				%vertical
				\begin{scope}[xshift=-1.4 cm]
					\draw[red!80!yellow!50!white, line width=2 pt] (0:1.4 and 1.1) arc(0:-180:1.4 and 1.1);
				\end{scope}
				\fill[gray!60!white, opacity=0.7](0,0) ellipse (1.36 and 0.56);
				\fill[gray!60!white, opacity=0.7](-1.4,0) ellipse (1.36 and 1.06);
				\draw[red!70!purple!50!white, line width=2 pt] (0:1.4)arc(0:-180:1.4 and 0.6);
				\draw[red!80!yellow!50!white, line width=2 pt] (0,0) arc(0:180:1.4 and 1.1);
				\node[] (P) at (1.45,0.45)  {${\eta}$};
				\node[] (P) at (-2.9,0.7)  {$\eta^{\vee}$}; 
			\end{scope}
		\end{scope}
	\end{tikzpicture}
	\caption{A list of basic sectorizable regions in 3d, the corresponding superselection sector types, and the string/membrane operators. Each region is embedded in a ball. (Left) sphere shell, the superselection sectors of point particles, and a string operator. (Middle) solid torus, the superselection sectors of pure-fluxes, and a membrane operator supported on a disk. (Right) Torus shell, the Hopf excitations, and a membrane operator bounded by a Hopf link. $\eta$ is a label of an extreme point of the information convex set of a torus shell in a neighborhood of one of the loops; $\eta^\vee$ is the label on a torus shell thickening the other.  The relation between the loop labels $\eta^{\vee}$ and $\eta$ is analogous to the relation between particle and antiparticle, as explained in Footnote \ref{footnote:14}. }\label{fig:sectors}
\end{figure}

The superselection sectors associated with each region in Fig.~\ref{fig:sectors} are summarized below. These superselection sectors are associated with different classes of point (loop) excitations. Each class of  point (loop) excitations can be created by applying a suitable type of string (membrane) operator.\footnote{String and membrane operators have been considered extensively for exactly solvable models; see, for example,~\cite{Kitaev1997,Levin2005,Bombin2007}. The existence of such operators can be established independently in entanglement bootstrap with the logic presented in Appendix~H of \cite{shi2020fusion}. The idea is to use the structure theorems of the information convex set and Uhlmann's theorem to constrain the general shape of the deformable operator. } Intuitively, the view of the information convex set is ``dual" to the view of excitations because it focuses on the complement of the excitations (when viewed on a $3$-sphere).  Each region in Fig.~\ref{fig:sectors} is embedded in a ball (or 3-sphere), and therefore the vacuum sector is well-defined. 
\begin{enumerate}
	\item (Sphere shell) Let $X$ be a sphere shell. $\Sigma(X)$ is a simplex with extreme points $\text{ext}(\Sigma(X)) \equiv \{\rho^a_X\}_{a\in	\calC_{\text{point}}}$, where 
	\begin{equation}\label{eq:cpoint}
		\calC_{\text{point}}=\{1, a, b, \cdots \}.
	\end{equation} 
    These are the labels for the point particles, among which $1$ is the vacuum sector. 
    
    For each $a\in \calC_{\text{point}}$, there is a unique antiparticle $ \bar{a}\in \calC_{\text{point}}$. Furthermore, $\bar{1}=1$, $\bar{\bar{a}}=a$ and $d_a= d_{\bar{a}}\ge 1$. Point particles can be created using a string operator connecting $a$ and $\bar{a}$. 
    These are explained in Appendix~\ref{appendix:Fusion-rules}.
    
    The \emph{total quantum dimension} is defined as
    \begin{equation}\label{eq:total_D_def}
    	\calD \equiv \sqrt{\sum_{a\in \calC_{\text{point}}} d_a^2}.
    \end{equation}
    
	\item (Solid torus) Let $T$ be a solid torus. $\Sigma(T)$ is a simplex with extreme points $\{\rho^{\mu}_T\}_{\mu \in \calC_{\text{flux}}}$, where
\begin{equation}\label{eq:cflux}
	\calC_{\text{flux}}=\{1, \mu, \nu, \cdots \}.
\end{equation}
These are the labels for the pure-flux sectors, among which $1$ is the vacuum sector. They correspond to excitations on closed loops that can be created by a single membrane operator supported on a disk; see Fig.~\ref{fig:sectors}.

For each $\mu\in \calC_{\text{flux}}$ there is an anti-flux which we shall denote as $\bar{\mu} \in \calC_{\text{flux}}$. Here are three ways to think about anti-fluxes: 
 \vspace{-4mm}
\begin{itemize}\setlength\itemsep{-0.2em}
	\item Rotating the flux excitation by $\pi$, about an in-plane axis. This maps the loop excitation back to the same position, and it maps $\mu$ to $\bar{\mu}$. Fig.~\ref{fig:anti_flux_definition}(a).
	\item Deform the solid torus $T$, such that it rotates by $\pi$, about an in-plane axis, and maps back to itself. This generates an automorphism of the information convex set, such that $\rho^{\mu}_T \to \rho^{\bar{\mu}}_T$. Fig.~\ref{fig:anti_flux_definition}(b).
	\item Consider an element in the information convex set of a genus-2 handlebody. If, after a partial trace, it reduces to $1$ ($\mu$) on the outer (left) solid torus shown in Fig.~\ref{fig:anti_flux_definition}(c), then it must reach the extreme point labeled by $\bar{\mu}$ on the third solid torus.\footnote{To make this mathematically accurate, we have fixed the sector label of the blue solid torus on the right by a ``translation" of the labels of the left blue solid torus.}
\end{itemize} \vspace{-4mm}

\begin{figure}[h]
	\centering
	\begin{tikzpicture}
		\begin{scope}[scale =0.75]
			\node[] (P) at (0:0) {\includegraphics[width=0.28\textwidth]{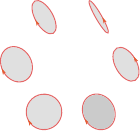}};
			\draw[->,>=stealth',semithick,line width=0.15mm] (20:1.0) arc (20:340:1.0);
			\node[] (P) at (0:0)  {\small{$\mu\to \bar{\mu}$}}; 
		\end{scope}
		\begin{scope}[xshift=5.3 cm,  scale=0.87]
			\foreach~in{0,15,...,75}
			\pic[fill=orange!~!yellow!50!white, rotate=30] at (4*~:2) {torus={0.5cm}{0.17cm}{2*~+60}};
			\draw[->,>=stealth',semithick,line width=0.15mm] (20:1.0) arc (20:340:1.0);
			\node[] (P) at (0:0)  {\small{$\rho^{\mu}_T\to \rho^{\bar{\mu}}_T$}}; 
		\end{scope}
    	\begin{scope}[xshift=10.2 cm,yshift=-0.1 cm, scale =0.78]
    	\node[] (P) at (0:0) {\includegraphics[width=0.24\textwidth]{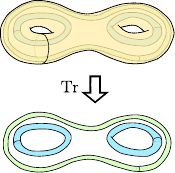}};
    	\node[] (P) at (-1.25,-1.52)  {$\small{\mu}$}; 
    	\node[] (P) at (1.3,-1.46)  {$\small{\bar{\mu}}$}; 
    	\node[] (P) at (2.5,-2)  {\small{$1$}};        
       \end{scope}
   \node[] (P) at (0,-2.6)  {\small{(a)}}; 
   \node[] (P) at (5.3,-2.7)  {\small{(b)}}; 
   \node[] (P) at (10.2,-2.8)  {\small{(c)}}; 
	\end{tikzpicture}
	\caption{Three ways to think about anti-fluxes: (a) deform the excitation, (b) automorphism of the information convex set of the solid torus, and (c) do a partial trace for a genus-2 handlebody.}
	\label{fig:anti_flux_definition}
\end{figure}
Furthermore,
\begin{equation} 
	\bar{\bar{\mu}}= \mu,\quad d_{\bar{\mu}} = d_{\mu}\ge 1, \quad \forall \mu \in \calC_{\text{flux}}.
\end{equation}
We explain these conditions and say more about anti-fluxes when discussing dimensional reductions in Appendix~\ref{appendix:Fusion-rules}. 
The total quantum dimension, defined in Eq.~(\ref{eq:total_D_def}), can be alternatively expressed as $\calD = \sqrt{\sum_{\mu\in \calC_{\text{flux}}} d_{\mu}^2}$; see Proposition~\ref{prop:equality-of-total-dimension}.

\item (Torus shell) Let $\mathbb{T}$ be a torus shell. $\Sigma(\mathbb{T})$ is a simplex with extreme points $\{ \rho^{\eta}_{\mathbb{T}} \}_{\eta \in \calC_{\text{Hopf}}}$, where
\begin{equation}\label{eq:chopf}
	\calC_{\text{Hopf}}=\{1, \eta , \zeta, \cdots \}.
\end{equation}
We call them ``Hopf sectors" (or ``Hopf excitations") because all such superselection sectors can be realized by a loop that participates in a linked loop pair on a 3-sphere such that the pair of loop excitations form a Hopf link. See Fig.~\ref{fig:sectors}(c) for an illustration.\footnote{Note that a subset of excitations in $\calC_{\text{Hopf}}$ can be created without exciting both of the loops. They are distinguished with genuine Hopf excitations that necessarily need both of the loops.} We emphasize that a Hopf superselection sector is associated with \emph{a single} loop excitation (possibly linked with other excitations) in the sense that the sector can be detected from the tubular neighborhood of the loop. (This feature implies, for example, that a Hopf sector can be assigned to any single loop that is part of three linked Hopf fibers on $S^3$.)

As we shall show in Proposition~\ref{prop:mu_decomposition}, the set of Hopf  excitations has a natural decomposition into a union of disjoint subsets (see Fig.~\ref{fig:torus_label} for the procedure):
\begin{equation}\label{eq:Hopf_mu_decomposition}
	\calC_{\text{Hopf}} = \bigcup_{\mu \in \calC_{\text{flux}}} \calC_{\text{Hopf}}^{[\mu]}.
\end{equation}
We shall develop a dimensional reduction understanding of $\calC_{\text{Hopf}}^{[\mu]}$ in Appendix~\ref{appendix:Fusion-rules}; therein, each set is mapped to the set of anyons in a 2d entanglement bootstrap problem.

The information convex set $\Sigma(\mathbb{T})$ has nontrivial automorphisms. 
One obvious automorphism is achieved by deforming $\mathbb{T}\subset S^3$ in such a way that the Hopf link in its complement has its two constituent loops permuted.\footnote{Specifically, we can identify the two loops as part of a right-handed Hopf fibration of $S^3$ and deform them in such a way that they are thickened fibers of this fibration at any time. (A left-handed Hopf fibration corresponds to a different automorphism.) For this deformation, the region remains embedded. 
Viewed from the base space of the Hopf fibration, the image of the torus shell is an annulus, and this deformation swaps the two holes of the annulus.  See Appendix~\ref{appendix:automorphisms} for more exotic automorphisms making use of immersion.
\label{footnote:14}} This automorphism permutes the labels in $\calC_{\rm{Hopf}}$. We denote this map as $v:\calC_{\rm{Hopf}}\to \calC_{\rm{Hopf}}$, and $v(\eta)\equiv \eta^{\vee}$.

\item (Shrinkable loops) We define the set of ``shrinkable loops" as $\calC_{\text{loop}}\equiv \calC^{[1]}_{\text{Hopf}}$. Each of these excitations can be realized by a loop that is not linked with other loops and thus can be continuously shrunk to a point. In fact, every shrinkable loop sector $l \in \calC_{\text{loop}}$ can be created by a membrane operator supported on the side of a cylinder, together with another shrinkable loop:

\begin{equation}
	\begin{tikzpicture}
			\node[] (A) at (0,0) {\includegraphics[scale=2.75]{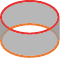}};
			\node[] (A) at (-1.1,1.3) {\small{$l$}};
			\node[] (A) at (-1.1,-0.7-0.6) {\small{$\bar{l}$}};
	\end{tikzpicture}
\end{equation}

Each shrinkable loop has a well-defined anti-sector. (We denote the anti-sector of $l$ as $\bar{l}$.) This is because $\calC_{\text{loop}}$ is mapped to the set of anyons in a 2d entanglement bootstrap problem, and anyons have anti-sectors. This is explained in Appendix~\ref{appendix:Fusion-rules}. 

The set of fluxes $\calC_{\text{flux}}$ is naturally embedded in $\calC_\text{loop}$ as a subset.
\begin{equation}\label{eq:shrinkable-loops}
	\calC_{\text{flux}} \overset{\varphi}{ \hookrightarrow} \calC_{\text{loop}} \subset \calC_{\text{Hopf}}.
\end{equation}
Here the embedding $\varphi: \calC_{\rm{flux}}\hookrightarrow \calC_{\text{loop}}$ is defined by the following sequence of operations: (1) solid torus $T$ is embedded in a ball. Complete the ball to an $S^3$ if it is not already part of an $S^3$. (2) Let $T=BC$ where $B=\partial T$ and let $A=S^3\setminus (BC)$. Deform $BC$ to $AB$ by a path. (Such a path exist because any pair of embedded solid tori on $S^3$ can be converted to one another by smooth deformations. There is, however, a possible flip. We chose one for concreteness.) By the isomorphism theorem, $\rho^{\mu}_T$ is deformed to an extreme point $\lambda^{\mu}_{AB}$. 
(3) Take a partial trace, and let $\widetilde{\rho}_{B} \equiv \Tr_A\, \lambda^{\mu}_{AB}$. The label $\varphi({\mu}) \in \calC_{\text{loop}}$ is defined to be the label associated with the extreme point $\widetilde{\rho}_{B}$. 

From this definition we further see that $\varphi(1)=1$ and that the quantum dimension $d_{\varphi{(\mu)}} = d_{\mu}^2$. 

Intriguingly, the set of point particles are naturally embedded in the set of shrinkable loops as well. As we explain in \S\ref{sec:subsub-embeding-of-point},
\begin{equation}\label{eq:shrinkable-loops-2}
	\calC_{\text{point}} \overset{\phi}{ \hookrightarrow} \calC_{\text{loop}} \subset \calC_{\text{Hopf}},
\end{equation}
such that $d_{\phi(a)} = d_a$. 

\begin{remark}
	The embedding of fluxes and point particles defined in Eq.~(\ref{eq:shrinkable-loops}) and Eq.~(\ref{eq:shrinkable-loops-2}) are natural in 3d quantum double models (with or without twist), where point particles are irreducible representations of the finite group $G$ and the fluxes are conjugacy classes. From our analysis, this structure holds on broader classes of systems, whose ground state satisfies axioms {\bf A0} and {\bf A1}.
\end{remark}

\end{enumerate}

We further notice a relation of the total quantum dimension:
\begin{Proposition}[matching of total quantum dimension]\label{prop:equality-of-total-dimension}
	\begin{equation}\label{eq:torusDequalssphereshellD}
		\sum_{a\in \calC_{\rm{point}}} d_a^2 = \sum_{\mu\in\calC_{\rm{flux}}} d_{\mu}^2.
	\end{equation}	
\end{Proposition}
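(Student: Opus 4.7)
The plan is to relate both sums to a single intrinsic quantity of the 3d theory, namely the total quantum dimension $\calD_{\text{loop}}$ of the 2d anyon theory obtained by dimensional reduction (Appendix~\ref{appendix:Fusion-rules}). This 2d theory has $\calD_{\text{loop}}^2 = \sum_{l \in \calC_{\text{loop}}} d_l^2$ by the 2d entanglement bootstrap. Using the embeddings $\phi: \calC_{\text{point}} \hookrightarrow \calC_{\text{loop}}$ with $d_{\phi(a)} = d_a$ and $\varphi: \calC_{\text{flux}} \hookrightarrow \calC_{\text{loop}}$ with $d_{\varphi(\mu)} = d_\mu^2$, both established in the preceding discussion, I would rewrite the two sums as $\sum_a d_a^2 = \sum_a d_{\phi(a)}^2$ and $\sum_\mu d_\mu^2 = \sum_\mu d_{\varphi(\mu)}$. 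The goal then reduces to showing each quantity equals $\calD_{\text{loop}}$.

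The main step is to argue that $\phi(\calC_{\text{point}})$ and $\varphi(\calC_{\text{flux}})$ are, respectively, the underlying objects of the ``pure charge'' and ``pure flux'' Lagrangian (condensable) algebras in the 2d anyon theory. The pure-charge Lagrangian algebra carries multiplicities equal to the charge dimensions $d_{\phi(a)}$, so its algebra dimension is $\sum_a d_{\phi(a)}^2 = \sum_a d_a^2$. The pure-flux Lagrangian algebra carries unit multiplicities, so its algebra dimension is $\sum_\mu d_{\varphi(\mu)} = \sum_\mu d_\mu^2$. Lagrangianness forces both algebra dimensions to equal $\calD_{\text{loop}}$, so the two sums coincide.

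The principal obstacle is justifying this Lagrangian condition directly in the entanglement-bootstrap framework rather than importing modular-tensor-category machinery. A more self-contained alternative I would pursue uses the associativity theorem (Theorem~\ref{thm:associativity}) applied to a region $R = S^3 \setminus (b \cup t)$ where $b$ is a small ball and $t$ a small unknotted solid torus, embedded disjointly in $S^3$. The fusion multiplicities $N_{a,\eta}(R)$ with $a \in \calC_{\text{point}}$ and $\eta \in \calC_{\text{Hopf}}$ can be computed by merging $b$ back into $R$---which yields $S^3 \setminus t$, itself a solid torus, forcing $a = 1$ and selecting $\eta \in \varphi(\calC_{\text{flux}})$---versus merging $t$ back into $R$, which yields a ball and constrains $\eta$ against the other solid-torus embedding $\varphi'$ while forcing $a$-vacuum. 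Combining these two associativity identities with Lemma~\ref{lemma:vacuum} and a max-entropy comparison on the sphere-shell and torus-shell boundary components of $R$ should extract $\sum_a d_a^2 = \sum_\mu d_\mu^2$ directly without invoking the 2d theory.
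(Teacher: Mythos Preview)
Both routes you sketch have real gaps. Your first route correctly names its own obstacle: establishing that $\phi(\calC_{\rm point})$ and $\varphi(\calC_{\rm flux})$ underlie Lagrangian algebras of the 2d reduction requires modular-tensor-category input (maximality, commutativity, nondegenerate braiding) that is not available from axioms {\bf A0}, {\bf A1}. There is also a logical-ordering problem: the embedding $\phi$ with $d_{\phi(a)}=d_a$ is only constructed in \S\ref{sec:subsub-embeding-of-point}, well after this proposition, so invoking it here is circular within the paper's development. Your second route is too schematic to be a proof. The associativity theorem relates integer multiplicities, not quantum-dimension sums; passing from one to the other needs an entropy or max-entropy argument you have not supplied. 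The region $S^3\setminus(b\cup t)$ is just a ball minus an unknot, and the max-entropy comparisons on it are exactly the content of \S\ref{eq:ball-torus-consistency-with-d}; those in turn rest (via Lemma~\ref{lemma:simple-TEE}) on the equality of different 3d TEE partitions (Proposition~\ref{prop:3dTEE}), which is essentially the fact you are trying to establish. So this path, fully developed, would contain the paper's argument as its core step dressed in extra machinery.

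The paper's proof is direct and uses none of the above apparatus. Build a solid torus $T$ by merging the reference state on two balls; the merged state is the maximum-entropy element of $\Sigma(T)$, and its Markov property forces $I(A{:}C|B)_{\sigma}=\ln\sum_\mu d_\mu^2$ for that partition. The same construction with a sphere shell $X$ gives $I(A{:}C|B)_{\sigma}=\ln\sum_a d_a^2$. Both conditional mutual informations are the 3d topological entanglement entropy $2\gamma$ (Proposition~\ref{prop:3dTEE}), so the two sums coincide. No dimensional reduction, no embeddings $\phi,\varphi$, no associativity theorem.
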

	\begin{figure}[h!]
	\centering
	\includegraphics[scale=1.05]{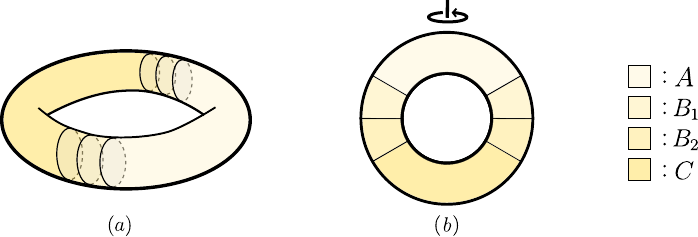}
	\caption{ Making the maximum-entropy state of the information convex set of a solid torus  (sphere shell) by merging the reduced density matrices of the reference state on balls.}
	\label{fig:total-quantum-dimensions-agree}
\end{figure}

\begin{proof} 
	Make a solid torus $T$ from two balls $AB$ and $BC$ as in Fig.~\ref{fig:total-quantum-dimensions-agree}(a), where $B=B_1B_2$.
	Merging the unique states in $\Sigma(AB)$ and $\Sigma(BC)$ 
	produces the maximum-entropy state of $\Sigma(T)$, denoted as $\rho^{\star}_T =  \sum_{\mu} {d_{\mu}^2 \over \sum_{\nu} d_{\nu}^2 } \rho^{\mu}_T$. This implies that $\rho^{\star}_T$ is a quantum Markov chain
	\begin{equation} 
		\begin{aligned}
			0 & = I(A:C|B)_{\rho^\star_T} \\
			  &=- \ln(\sum_{\mu} d_{\mu}^2 ) + I(A:C|B)_{\sigma_T}   \\
			  &=  -\ln(\sum_{\mu} d_{\mu}^2 ) +  2 \gamma.
		\end{aligned}		
	\end{equation}
	The second line uses the orthogonality of the extreme points and the definition of the quantum dimension.  $\gamma\equiv \frac{1}{2} I(A:C|B)_{\sigma_T}$ in the third line is the 3d version of Levin-Wen topological entanglement entropy.

	Similarly, we can make a sphere shell $X$ from two balls $AB$ and $BC$; see Fig.~\ref{fig:total-quantum-dimensions-agree}(b).
	The merged state is $\rho^{\star}_X =  \sum_a {d_a^2 \over \sum_b d_b^2 } \rho^a_X $.  
		\begin{equation} 
		\begin{aligned}
			0 & = I(A:C|B)_{\rho^{\star}_X} \\
			&=- \ln(\sum_{a} d_{a}^2 ) + I(A:C|B)_{\sigma_X} .
		\end{aligned}		
	\end{equation}
The second line follows from the orthogonality of the extreme points and the definition of the quantum dimension. The remaining thing is to show that $I(A:C|B)_{\sigma_X}=2\gamma$. This is true according to Proposition~\ref{prop:3dTEE}. This completes the proof.
\end{proof}

\begin{Proposition}[decomposition of $\calC_{\rm{Hopf}}$]\label{prop:mu_decomposition}
	There is a well-defined map from $\calC_{\rm{Hopf}}$ to $\calC_{\rm{flux}}$ described by the following process. (By this map, we define the decomposition (\ref{eq:Hopf_mu_decomposition}).)
	Let $\mathbb{T}$ be a torus shell, and $T\subset \mathbb{T}$ is a solid torus; see Fig.~\ref{fig:torus_label} below for an illustration. 	Let $\rho_{\mathbb{T}}^{\eta}\in \Sigma(\mathbb{T})$ be an extreme point. Then
	$\Tr_{\mathbb{T}\setminus T} \,\rho_{\mathbb{T}}^{\eta}$ is an extreme point  $\rho_T^{\mu}\in\Sigma(T)$. 
\begin{figure}[h]
	\centering
		\begin{tikzpicture}
			\begin{scope}[scale=1]
				\node[] (A) at (0.00,1.45) {\includegraphics[scale=1]{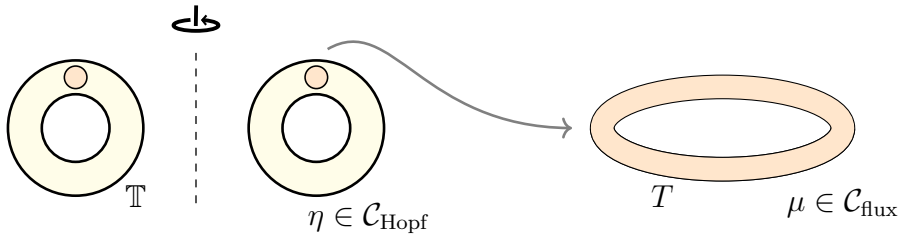}};
				\draw[dashed, line width=0.5 pt] (0,-1) -- (0,1);
				\draw[fill=yellow!10!white, even odd rule, line width=1pt] {(-1.6,0) circle (0.45)} {(-1.6,0) circle (0.9)};
				\draw[fill=yellow!10!white, even odd rule, line width=1pt] {(1.6,0) circle (0.45)} {(1.6,0) circle (0.9)};
				
				\draw[fill=orange!20!white, line width=0.6pt] {(1.6,  0.675) circle (0.15)};
				\draw[fill=orange!20!white, line width=0.6pt] {(-1.6, 0.675) circle (0.15)};
				\node[] (a) at (2.3,-1.2) {$\eta \in \calC_{\rm{Hopf}}$};
				\node[] (a) at (-0.8,-0.9) {$\mathbb{T}$};
			\end{scope}
		\begin{scope}[scale=1]
			\draw[->, color=gray, line width=1pt] (1.75,1.05) .. controls +(30:1) and + (180:2) .. (5,0);
		\begin{scope}[xshift=7 cm, yshift=0 cm]				
			\draw[fill=orange!20!white, line width=0.6pt] {(1.6,0) circle (0.15)};
			\draw[fill=orange!20!white, line width=0.6pt] {(-1.6,0) circle (0.15)};
			
			\pic[fill=orange!20!white, rotate=0] at (0:0) {torus={1.6cm}{0.15cm}{70}};
			
			\node[] (a) at (1.6,-1.0) {$\mu \in \calC_{\rm{flux}}$};
			\node[] (a) at (-0.8,-0.9) {$T$};
	    \end{scope}	
     	\end{scope}
		\end{tikzpicture}
	\caption{On the proof of the decomposition of $\calC_{\rm{Hopf}}$.}
    \label{fig:torus_label}
\end{figure}
\end{Proposition}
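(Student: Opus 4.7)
The plan is to reduce the claim to a direct application of Proposition~\ref{prop:extreme-point-restriction} (sectorizable region and restriction). The key observation is that both $\mathbb{T}$ (a torus shell) and $T$ (a solid torus embedded in $\mathbb{T}$ as in Fig.~\ref{fig:torus_label}) are sectorizable, as enumerated in the examples of \S\ref{subsec:structure-theorems}. By the simplex theorem (Theorem~\ref{thm:sectorizable}), the information convex sets $\Sigma(\mathbb{T})$ and $\Sigma(T)$ are simplices with mutually orthogonal extreme points $\{\rho_{\mathbb{T}}^{\eta}\}_{\eta \in \calC_{\text{Hopf}}}$ and $\{\rho_T^{\mu}\}_{\mu \in \calC_{\text{flux}}}$, respectively.

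With this setup, the main claim is essentially immediate. Applying Proposition~\ref{prop:extreme-point-restriction} with $S = \mathbb{T}$ and $\Omega = T$: since $\rho_{\mathbb{T}}^{\eta}$ is an extreme point of $\Sigma(\mathbb{T})$ and $T$ is embedded in $\mathbb{T}$, the reduced density matrix $\Tr_{\mathbb{T} \setminus T}\, \rho_{\mathbb{T}}^{\eta}$ is automatically an (isolated) extreme point of $\Sigma(T)$. By the orthogonality of the extreme points in the simplex $\Sigma(T)$, this restricted state coincides with exactly one $\rho_T^{\mu}$, which defines the assignment $\eta \mapsto \mu$.

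To obtain the disjoint-union decomposition \eqref{eq:Hopf_mu_decomposition}, it remains to verify that the resulting $\mu$ depends only on $\eta$ and not on the particular geometric realization of the embedding $T \hookrightarrow \mathbb{T}$. This is handled by the generalized isomorphism theorem (Theorem~\ref{thm:generalized_iso}): any two embeddings of $T$ into $\mathbb{T}$ with the topological relationship shown in Fig.~\ref{fig:torus_label} are connected by a regular homotopy taking place inside $\mathbb{T}$, so they induce the same identification of $\Sigma(T)$ with itself and preserve the extreme-point labels. Setting $\calC_{\text{Hopf}}^{[\mu]} := \{\eta \in \calC_{\text{Hopf}} : \eta \mapsto \mu\}$ then produces the partition in \eqref{eq:Hopf_mu_decomposition}. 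There is no substantive obstacle here, since the argument is a one-step invocation of previously established machinery; the only conceptual care required is the well-definedness step just addressed.
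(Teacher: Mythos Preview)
Your proof is correct and follows the same route as the paper: both reduce the claim to a direct application of Proposition~\ref{prop:extreme-point-restriction}, using that $\mathbb{T}$ is sectorizable and $\rho_{\mathbb{T}}^{\eta}$ is an extreme point. Your additional remark on well-definedness via the generalized isomorphism theorem is not in the paper's proof but is a reasonable supplementary observation.
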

\begin{proof}
	It follows from Proposition~\ref{prop:extreme-point-restriction}. To see this, we observe that $\mathbb{T}$ is a sectorizable region and that $\rho^\eta_{\mathbb{T}}$ is an extreme point.
\end{proof}	
\begin{remark}
	The map identified in Proposition~\ref{prop:mu_decomposition} is a special case of a class of maps we identify in \S\ref{sec:spiral-map}. It is identified with $t_{(1,0)}$ among the maps $t_{(p,q)}$ defined in Eq.~(\ref{eq:spiral_shell_2}).
\end{remark}

Below are a few examples. The explicit data presented here can be calculated by the procedure described in \S\ref{sec:QD}.
\begin{exmp}[3d toric code] 3d toric code is the 3d quantum double model with finite group $G=Z_2$. The data are
	\begin{enumerate}
		\item $\calC_{\rm{point}}=\{ 1, e\}$, with quantum dimension $d_1=d_e =1$.
		\item $\calC_{\rm{flux}}=\{ 1, m\}$, with quantum dimension $d_1=d_m =1$.
		\item $\calC_{\text{loop}}=\{1,e, m, f\}$. We choose the same label as the 2d toric code model, because these can be identified with the 4 anyon types of the toric code model by a dimensional reduction.
		\item $\calC_{\rm{Hopf}}$ contains 8 labels. Each of them has quantum dimension equals to 1.  $\calC_{\rm{Hopf}}= \calC_{\rm{Hopf}}^{[1]}\cup \calC_{\rm{Hopf}}^{[m]}$. Both $\calC_{\rm{Hopf}}^{[1]}$ and $\calC_{\rm{Hopf}}^{[m]}$ contain 4 labels.
	\end{enumerate}
\end{exmp}
\begin{exmp}[3d $S_3$ quantum double] The finite group $S_3=\{ 1,r,r^2, s, sr, sr^2\}$, with $r^3=s^2=1$ and $sr=r^2 s$. It is the smallest non-Abelian group.
 This model has non-Abelian superselection sectors (whose quantum dimension is greater than 1). Without introducing heavy notations, we describe the number of excitations and the quantum dimensions.
 	\begin{enumerate}
 	\item $\calC_{\rm{point}}$  contains 3 labels, with  $\{d_a\}=\{1,1,2\}$.
 	\item $\calC_{\rm{flux}}$ contains 3 labels, with  $\{d_{\mu}\}=\{ 1, \sqrt{2}, \sqrt{3}\}$.
 	\item $\calC_{\text{loop}}$ contains 8 labels, with  $\{d_l\}=\{1,1,2,2,2,2,3,3\}$.
 	\item $\calC_{\rm{Hopf}}= \cup_{\mu} \calC^{[\mu]}_{\rm{Hopf}}$ contains $8+9+4=21$ labels, with
 	\begin{equation}
 	\{d_{\eta}\}=\{1,1,2,2,2,2,3,3\} \cup \{2,2,2,2,2,2,2,2,2 \} \cup \{3,3,3,3\}.
 	\end{equation}
 \end{enumerate}
\end{exmp}

\begin{remark}
	From these examples, we can explicitly see that $\sum_a d_a^2 = \sum_{\mu} d_{\mu}^2$. In fact,  $\sum_{\eta\in\calC^{[\mu]}_{\rm{Hopf}}} d^2_{\eta}$ is independent of $\mu$; see Appendix~\ref{appendix:Fusion-rules}.
	Furthermore, both of the examples satisfy  $|\calC_{\rm{flux}}|=|\calC_{\rm{point}}|$ and this is a general fact. An argument can be found in Ref.~\cite{Wen2016}. We shall provide an entanglement bootstrap derivation of this statement in \cite{braiding-paper}.
\end{remark}

\subsection{Fusion spaces for 3d systems: basic ones}
\label{subsec:3d-known}

In the next three sections, we identify a variety of fusion spaces and fusion processes of the excitations types identified in \S\ref{subsection:sectors}.  In this section, we start with a few basic ones that have been studied in literature by other approaches \cite{Moradi2014,2014PhRvL.113h0403W,2015PhRvB..91p5119W,Wen2017,2018PhRvX...8b1074L,Lan:2018bui}. The key technique is the Hilbert space theorem (Theorem~\ref{thm:Hilbert}). An independent dimensional reduction method to understand these basic processes is presented in Appendix~\ref{appendix:Fusion-rules}. A few novel fusion spaces and processes are discussed in \S\ref{subsec:fusion-from-knots} and \S\ref{subsec:exotic-flux-fusion}.

\begin{figure}[h]
	\centering
	\begin{tikzpicture}
		% left
		\begin{scope}[xshift=-4 cm]
			\node[] (A) at (0.00,1.65) {\includegraphics[scale=1]{rotation}};	
			\draw[fill=yellow!10!white, even odd rule, line width=1pt] {(0,0) circle (0.9 and 1.3)};
			\draw[fill=yellow!0!white, even odd rule, line width=1pt] {(0, 0.6) circle (0.3)};
			\draw[fill=yellow!0!white, even odd rule, line width=1pt] {(0, -0.6) circle (0.3)};
			\draw[dashed, line width=0.5 pt] (0,-1.5) -- (0,1.5);
			\node[] (a) at (0.14,-0.6) {\small{$a$}};
			\node[] (a) at (0.12,0.60) {\small{$b$}};
			\node[] (a) at (0.83,1.0) {\small{$c$}};
			\node[] (a) at (0,-1.8) {\small{$\{N_{ab}^c\}$}};
			\node[] (a) at (0,-3.4) {\includegraphics[scale=2]{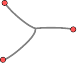}};
			\node[] (a) at (-1.45,-0.9-3.4) {\small{$a$}};
			\node[] (a) at (-1.45,0.9-3.4) {\small{$b$}};
			\node[] (a) at (1.1,-0.25-3.4) {\small{$\bar{c}$}};
		\end{scope}	
	
	    % middle
		\begin{scope}[scale=1]		
		\node[] (A) at (0.00,1.35) {\includegraphics[scale=1]{rotation}};	
		\draw[fill=yellow!10!white, even odd rule, line width=1pt] {(0,0) circle (1.3 and 0.9)};
		\draw[fill=yellow!0!white, even odd rule, line width=1pt] {(-0.6,0) circle (0.3)};
		\draw[fill=yellow!0!white, even odd rule, line width=1pt] {(0.6,0) circle (0.3)};
		\draw[dashed, line width=0.5 pt] (0,-1) -- (0,1);
		\node[] (a) at (0.6,0.0) {\small{$l$}};
		\node[] (a) at (1,0.82) {\small{$a$}};	
		\node[] (a) at (0,-1.8) {\small{$\{N_{l}^a \}$}};
		
		\node[] (a) at (0,-3.4) {\includegraphics[scale=2.2]{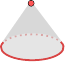}};
		\node[] (a) at (0.6,-0.8-3.4) {\small{$l$}};
		\node[] (a) at (0.3,1-3.4) {\small{$\bar{a}$}};
	\end{scope}
		
		% right
		\begin{scope}[xshift=4.5 cm, yshift=0 cm]
			\begin{scope}[scale=1.1]
				\node[] (A) at (0.00,1.4) {\includegraphics[scale=1]{rotation}};
				\draw[dashed, line width=0.5 pt] (0,-1) -- (0,1);
				\begin{scope}[xshift=1.0 cm,scale=2.2]
					\draw[fill=yellow!10!white, line width=1 pt, even odd rule] {(0,-0.18) circle (0.1)}{(0,0.18) circle (0.1)}{(0,0) ellipse (0.24 and 0.4)};
				\end{scope}
				\begin{scope}[xshift=-1.0 cm, scale=2.2]
					\draw[fill=yellow!10!white, line width=1pt, even odd rule] {(0,-0.18) circle (0.1)}{(0,0.18) circle (0.1)}{(0,0) ellipse (0.24 and 0.4)};
				\end{scope}
			\end{scope}
			\node[] (a) at (0.47,0.8) {\small{$h$}};
			\node[] (a) at (1.1,-0.43) {\small{$i$}};
			\node[] (a) at (1.1,0.43) {\small{$j$}};
			\node[] (a) at (0,-1.8) {\small{$\{N_{ij}^h\}$}};
			
			\node[] (a) at (0,-3.4) {\includegraphics[scale=2.8]{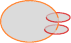}};
			\node[] (a) at (0.9,0.9-3.4) {\small{$h^{\vee}$}};
			\node[] (a) at (1.7,-0.43-3.4) {\small{$i$}};
			\node[] (a) at (1.85,0.33-3.4) {\small{$j$}};
		\end{scope}
	\end{tikzpicture}
	\caption{A list of regions, the corresponding fusion multiplicities, and the excitations with the associated string (membrane) operators. (left) The fusion of point particles. (Middle) For a shrinkable loop. (Right) Fusion of two loops in $\calC^{[\mu]}_{\text{Hopf}}$.}
	\label{fig:fusion_spaces_known}
\end{figure}

By the Hilbert space theorem (Theorem~\ref{thm:Hilbert}), information convex sets of 3d regions with boundaries can be associated with a set of fusion spaces.\footnote{Sectorizable regions are special cases for which the fusion space dimensions are either 0 or 1.} Traditionally, the dimensions of fusion spaces appear in fusion equations. These are, roughly speaking,  equations that specify the content before the fusion on the left and the possible outcomes of the fusion on the right.
Below is a list of basic fusion processes in 3d. In Fig.~\ref{fig:fusion_spaces_known} we illustrate the regions, and string (membrane) operators.
\begin{enumerate}
	\setlength\itemsep{-0.2em}
	\item (ball minus two balls) The subsystem ball-minus-two-balls characterizes the fusion of two point particles\footnote{We will write formal equations like \eqref{eq:fusion-of-particles} indicating fusion rules by analogy with fusion rules for anyons in 2d. One precise meaning will be that replacing sector labels with an appropriate quantum dimension produces a true consistency relation.}:
	\begin{equation}
		a\times b = \sum_c N_{ab}^c  c,\quad a,b,c \in \calC_{\text{point}}.
		\label{eq:fusion-of-particles}
	\end{equation}
    See the left of Fig.~\ref{fig:fusion_spaces_known}. The multiplicities $\{N_{ab}^c\}$ are characterized by the information convex set of ball-minus-2-balls:
    $N_{ab}^c = \text{dim} \mathbb{V}_{ab}^c(\text{ball-minus-2-balls})$.
    These non-negative integers satisfy a set of consistency rules. We review these rules in Appendix~\ref{appendix:Fusion-rules}.

    \item (ball minus unknotted solid torus) The complement of an unknot in a ball characterizes the shrinking of a shrinkable loop to a point: 
    \begin{equation}\label{eq:shrinking_rule}
    l= \sum_a N_l^a \,a,\quad \textrm{with}\quad l\in \calC_{\text{loop}},\,\,\,\, a\in \calC_{\text{point}}.
    \end{equation}
  We shall refer to this as the ``shrinking rule". (By the Hilbert space theorem we should let the lower label of the multiplicity be $\eta\in \calC_{\rm{Hopf}}$. We restrict the lower index to be $l\in\calC_{\text{loop}}$ for the reason that $N^a_{\eta}=0$ for $\eta \notin \calC_{\text{loop}}$.)
  It is easy to see that
  \begin{equation}
  	N_l^1 = \sum_{\mu \in \calC_{\rm{flux}}} \delta_{l, \,\varphi(\mu)},
  \end{equation}
  where $\varphi$ is the map defined in Eq.~(\ref{eq:shrinkable-loops}).
  The multiplicities $\{N_l^a\}$ and the consistency relations can be understood by dimensional reduction; see Appendix~\ref{appendix:Fusion-rules}.
 
   \item (solid torus minus two solid tori) The fusion of Hopf excitations in each class $\calC^{[\mu]}_{\text{Hopf}}$ is closed among themselves. It is tempting to write down a fusion equation for it:
    \begin{equation}
     h \times i = \sum_j N_{hi}^j \, j,\quad h,i,j \in \calC_{\text{Hopf}}^{[\mu]}.
    \end{equation}
    The precise meaning of multiplicities $\{ N_{hi}^j\}$ are the dimensions of Hilbert spaces identified by the Hilbert space theorem (Fig.~\ref{fig:fusion_spaces_known} top right). (They can be understood as the fusion spaces of a 2d system, by dimension reduction.)
    When $\mu=1$, this corresponds to the fusion of shrinkable loops; see Refs.~\cite{Moradi2014,2018PhRvX...8b1074L,Lan:2018bui} for a discussion of the same phenomena. When $\mu \ne 1$, this corresponds to the fusion of two loops that are linked with a third loop. The configuration of the loop excitations is closely related to that involved in the 3-loop braiding statistics \cite{2014PhRvL.113h0403W,2015PhRvB..91p5119W}.
\end{enumerate}

\begin{remark}
	
	These basic fusion spaces and the superselection sectors can be understood from a rigorous dimensional reduction point of view. This dimensional reduction is distinct from existing literature in the sense that it rigorously maps a 3d entanglement bootstrap problem to a 2d entanglement bootstrap problem.  A detailed discussion is postponed to Appendix~\ref{appendix:Fusion-rules}. 
	The idea is to look at the 2d region which becomes the 3d region under revolution. This applies to either a ball or a solid torus. The former case works for the vacuum, and the 2d system is adjacent to a gapped boundary; for the latter case, it is possible to apply the dimensional reduction to any flux. The idea is illustrated in the following figure.
	\begin{equation}
		\begin{tikzpicture}
			% Left
			\begin{scope}[xshift=-7.3 cm]
				\node[] (A) at (0.00,1.8) {\includegraphics[scale=1.1]{rotation}};
				\draw[fill=yellow!10!white, line width=1pt] (0,0) circle (1.5);
				\draw[dashed, line width=0.5 pt]  (0,-1.6) -- (0,1.6);	
				\node[] (A) at (-0.4,-1.1) {$\sigma$};
			\end{scope}
			
			\begin{scope}[xshift=-5.1 cm, scale=1.1]
				\draw[->, color=blue!50!cyan!90!black, line width=0.8pt] (120:1) arc (120:60:1);
				\draw[->, color=blue!50!cyan!90!black, line width=0.8pt] (-60:1) arc (-60:-120:1);
				\node[] (A) at (0,1.30) {\footnotesize{vacuum reduction}};
				\node[] (A) at (0,-1.2) {\footnotesize{revolution}};
			\end{scope}
			
			\begin{scope}[xshift=-2.9 cm, rotate=180]
				
				\fill[gray!10!white](0,0)--(-90:1.5) arc (-90:90:1.5) -- (0,0);
				\begin{scope}[scale=0.8]
					\axiomone{1}{0.55}
					\axiomtwoangled{1.1}{-0.4}{-90}{110}
					\axiomone{0.02}{-0.75}
					\axiomtwoangled{0.02}{0.7}{20}{160}
				\end{scope}
				\fill[white] (0,2.4) rectangle (-0.5,-2.4);
				\draw[line width=1pt] (0,0)--(-90:1.5) arc (-90:90:1.5)--(0,0);
				\draw[line width=1.2pt,color=purple!60!blue!70!white] (0,-1.5)--(0,1.5);
				\node[] (A) at (0.4,1.05) {$\sigma^{[1]}$};
				\node[] at (-0.35,0) {\footnotesize{\rotatebox{90}{\color{purple!60!blue!90!black}gapped boundary}}};
			\end{scope}
			
			% Right	
			\begin{scope}[xshift=0.0 cm]
			\begin{scope}[xshift=0.1 cm]			
				\node[] (A) at (0.00,0.9) {\includegraphics[scale=1]{rotation}};
				\draw[dashed, line width=0.5 pt] (0,-0.6) -- (0,0.6);
				\draw[fill=yellow!10!white, line width=1pt] (-0.9,0) circle (0.45);
				\draw[fill=yellow!10!white, line width=1pt] (0.9,0) circle (0.45);
				\node[] (A) at (0,-0.9) {\footnotesize{$\rho^{\mu}_T\in \Sigma(T)$}};
			\end{scope}
			\begin{scope}[xshift=2.1 cm, scale=1.1]
				\draw[->, color=blue!50!cyan!90!black, line width=0.8pt] (120:1) arc (120:60:1);
				\draw[->, color=blue!50!cyan!90!black, line width=0.8pt] (-60:1) arc (-60:-120:1);
				\node[] (A) at (0,1.30) {\footnotesize{flux reduction}};
				\node[] (A) at (0,-1.2) {\footnotesize{revolution}};
			\end{scope}
			\begin{scope}[xshift=5.7 cm, scale=0.9]
				\draw[fill=gray!10!white, line width=1pt] (-2,0) circle (1.3);
				\axiomone{-2.5}{0.24}
				\axiomtwoangled{-1.5}{0.0}{0}{180}	
				\node[] (A) at (-2,-0.9) {$\sigma^{[\mu]}$};
			\end{scope}
		\end{scope}
		\end{tikzpicture}	
	\end{equation}
\end{remark}

\begin{exmp}
	We provide a few basic examples of 3d quantum double, with finite group $G$, to familiarize the readers with our notations. The fusion processes described in this example are studied in Ref.~\cite{Moradi2014}.
	\begin{itemize}
		\item $\{N_{ab}^c\}$: The point excitations correspond to irreducible representations of the finite group $G$. $N_{ab}^c$ is the integers that appear in the tensor product of these irreducible representations. 
		
		    When $G=S_3=\{ 1,r,r^2, s, sr, sr^2\}$, denote $1={\text{Id}}_{S_3}, a={\text{Sign}}_{S_3}, b= \Pi_{S_3}$, where $\Pi_{S_3}$ is the 2-dimensional irreducible representation. Then the nontrivial fusion rules for point particles are
		\begin{align*}
			a \times a &= 1 \\
			a \times b &= b \\
			b \times b &= 1 + a + b
		\end{align*}
		\item $\{ N_l^a \}$: 
       When $G=S_3$, there are 3 choices of $a$ and 8 choices of $l$. The multiplicities is a $3\times 8$ table. 
	    \begin{equation}\label{eq:table:ball-torus-maintext}
	\small{	\begin{tabular}{ | c | c | c | c | c | c | c | c | c | }
		\hline
		\diagbox[width=9em]{$d_a$}{$N_l^{a}$}{$d_l$} & $1$ & $1$ & $2$  & $2$& $2$& $2$& $3$& $3$\\ 
		\hline
		$1$ & 1 & 0 & 0& 1& 0& 0& 1& 0  \\
		\hline
		$1$ & 0 & 1 & 0 & 1& 0& 0& 0& 1\\
		\hline
		$2$ & 0 & 0 & 1 & 0& 1& 1& 1& 1 \\
		\hline
	\end{tabular}
    }
	\end{equation}
		
		Here the first row and column are the vacuum sector. We, nonetheless, omit the detailed labels here. 
		    More details of the labeling can be found in Table~\ref{table:ball-torus} of Appendix~\ref{subsection:ball-torus}.

		\item $\{ N_{hi}^j \}$ for $\calC_{\rm{Hopf}}^{[\mu]}$ are identical to the fusion multiplicities of anyons in a 2d quantum double model (depending on $\mu$) obtained by a dimensional reduction consideration. 
		When $G=S_3$, the choice of flux $\mu$ corresponds to the three conjugacy classes $C_1$, $C_r$ and $C_s$. The multiplicities $\{ N_{hi}^j \}$ for $\calC_{\text{Hopf}}^{[C_1]} , \calC_{\text{Hopf}}^{[C_r]}, \calC_{\text{Hopf}}^{[C_s]}$ are the same as that for the 2d quantum double with group $G=S_3, \mathbb{Z}_3, \mathbb{Z}_2$ respectively. 
	\end{itemize}
\end{exmp}

\subsection{Knot multiplicity}\label{subsec:fusion-from-knots} 
Knots are intriguing mathematical objects~\cite{adams1994knot}. A knot  is an embedding of the circle $S^1$ into a three-dimensional space. Knot complement plays an important role in the classification of knots. Here, we identify nontrivial fusion data associated with the information convex set of knot complements.

One may wonder why not consider the information convex set of a knot instead. The answer is twofold. First, the information convex set of a knotted solid torus does not provide now data, compared to that of an unknotted torus; see Corollary~\ref{coro:unknot_a_solid_torus}. Second, the information convex set of a knot complement physically characterizes loop excitations located on the knot.

For concreteness, we shall consider knot complements on a 3-sphere.
 See Fig.~\ref{fig:trefoil_complement} for an illustration. Recall that the sphere completion lemma~\ref{lemma:sphere_completion} indicates that even if we start from a reference state on a ball, we can recover from it a reference state on $S^3$. This mirrors a standard maneuver in the study of knots, where one can interchangeably study knot complements in $S^3$ and in $\mathbb{R}^3$.
\begin{figure}[h]
	\centering
	\begin{tikzpicture}
		\begin{scope}
		\begin{knot}[
			consider self intersections=true,
			%  draft mode=crossings,
			flip crossing=2,
			%only when rendering/.style={
			%    show curve controls
			%}
			clip width=1, yshift=-4 cm, scale=0.7];
			\draw[fill=yellow!20!white] (0,0) circle (3);
			
			\strand[black, double=white, line width=0.7pt, double distance=4 pt] (0,2) .. controls +(2.5,0) and +(120:-2.5) .. (210:2) .. controls +(120:2.5) and +(60:2.5) .. (-30:2) .. controls +(60:-2.5) and +(-2.5,0) .. (0,2);
		\end{knot}
	\node[] (a) at (0,-5.5) {\small{$S^3\setminus K$}};
	\node[] (a) at (0,-6.5) {\small{(a)}};
	\end{scope}
\begin{scope}[xshift= 6.5 cm]
	\begin{scope}[yshift=-4 cm,scale=0.7]
		\node[] (A) at (0,0.26) {\includegraphics[scale=2.9]{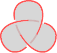}};
		\draw[] (0,0) circle (3);
	\end{scope}
	\node[] (a) at (0,-5.5) {\small{$K\subset S^3$}};
	\node[] (a) at (0,-6.5) {\small{(b)}};
\end{scope}	
	\end{tikzpicture}
	\caption{(a) The knot complement $S^3\setminus K$. (b) The knot excitation located at $K$ and the associated (most general) membrane operator. Here $K$ can be any thickened knot. (For illustration purposes, $K$ is chosen to be a right-handed trefoil knot.)}\label{fig:trefoil_complement}
\end{figure}
In the following we denote the knot complement on $S^3$ as
\begin{equation}\label{eq:knot-complement}
	\Omega_{K}\equiv S^3\setminus{K}, 
\end{equation}
where $K$ is a solid thickened knot.\footnote{When we consider a knot $K$ as a 3d region or its complement, we always think of a solid knot with a thickness large compared to the correlation length.}
The knot complement has one thickened boundary ($\partial \Omega_K$), which is a knotted and embedded torus shell. By the generalized isomorphism theorem (Theorem~\ref{thm:generalized_iso}),
\begin{equation}
	\calC_{\partial\Omega_K} \simeq \calC_{\text{Hopf}},
\end{equation}
meaning that these two sets are identical up to a possible permutation of some kind. Below, we do not distinguish these two sets.

 The information convex set $\Sigma(\Omega_K)$ is thus a convex hull of the subsets
 $\Sigma_{\zeta}(\Omega_K)$, where ${\zeta}\in\calC_{\text{Hopf}}$. In general, $\Sigma_{\zeta}(\Omega_K)$ is nonempty only for a subset of ${\zeta}\in\calC_{\text{Hopf}}$. This leads to the following theorem:
\begin{theorem}[knot excitation type]
	The set of superselection sectors for knotted loop excitations that can exist on a knot $K$ of a 3-sphere alone is a subset of $\calC_{\rm{Hopf}}$.
\end{theorem}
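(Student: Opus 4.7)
The plan is to observe that the theorem is an essentially immediate corollary of the Generalized Isomorphism Theorem (Theorem~\ref{thm:generalized_iso}) together with the Hilbert space decomposition of $\Sigma(\Omega_K)$ into sectors labeled by $\calC_{\partial\Omega_K}$, so the proof should be short. The subtlety, such as it is, lies entirely in identifying $\calC_{\partial\Omega_K}$ with $\calC_{\text{Hopf}}$ when $K$ is knotted.

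First I would recall that $\partial \Omega_K$, the thickened boundary of the knot complement $\Omega_K = S^3 \setminus K$, is topologically a torus shell $T^2 \times I$: the boundary of a thickened knot is a torus, and thickening it inward inside $\Omega_K$ produces a region homeomorphic to $\mathbb{T}$. This torus shell is however generally knotted as an embedded submanifold of $S^3$, so one cannot directly apply the definition of $\calC_{\text{Hopf}}$, which was given in \S\ref{subsection:sectors} in terms of an unknotted embedded torus shell.

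The key step is then to invoke Theorem~\ref{thm:generalized_iso} to produce a regular homotopy (through immersed regions) from $\partial \Omega_K$ to a standard unknotted torus shell; this gives an isomorphism $\Sigma(\partial\Omega_K) \cong \Sigma(\mathbb{T})$ of simplices, and hence a bijection $\calC_{\partial\Omega_K} \simeq \calC_{\text{Hopf}}$ of extreme-point labels. Intuitively the required homotopy unknots the torus shell by letting it pass through itself, which is precisely what immersed regions allow. I would argue this by exhibiting a sequence of elementary extensions/restrictions that drags a crossing strand of the knotted torus shell through a neighboring strand; since the ambient manifold is $S^3$, any two embedded torus shells with empty interiors can be connected by such a sequence of moves in the immersed category. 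This is the step that deserves the most care in writing up, and is where I expect a reader to push back most.

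Finally, using the decomposition $\Sigma(\Omega_K) = \bigcup_{\zeta \in \calC_{\partial \Omega_K}} \Sigma_\zeta(\Omega_K)$ from \eqref{eq:fusion_space1}, the set of superselection sectors realizable by loop excitations supported on $K$ is by definition
\begin{equation}
    \{\zeta \in \calC_{\partial\Omega_K} : \Sigma_\zeta(\Omega_K) \neq \emptyset\},
\end{equation}
which under the above identification is a subset of $\calC_{\text{Hopf}}$. This completes the argument; the statement that this is in general a \emph{proper} subset, and the computation of which sectors actually appear for specific knots (e.g.\ the trefoil), is the subject of the explicit examples in \S\ref{sec:QD} and is not part of this theorem.
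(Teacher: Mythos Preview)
Your proposal is correct and follows essentially the same route as the paper: the paper's argument, given in the paragraph immediately preceding the theorem, invokes the generalized isomorphism theorem to identify $\calC_{\partial\Omega_K}\simeq\calC_{\text{Hopf}}$ (via a regular homotopy through immersed regions that unknots the torus shell, as in Corollary~\ref{coro:unknot_a_solid_torus}) and then reads off the conclusion from the decomposition \eqref{eq:fusion_space1}. Your write-up is, if anything, slightly more explicit than the paper's about why the regular homotopy exists.
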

\begin{remark}
	This implies that the knot excitation type is a small number on \emph{any} knot, compared to the simplest link (Hopf link). However, as we shall see, excitations on knots are more coherent compared to excitations on a Hopf link. The former can be coherent, whereas the latter cannot. For example, $\Omega_{\text{trefoil}}$ can encode quantum information; see Example~\ref{exmp:knot-is-coherent}.
\end{remark}

By the Hilbert space theorem,
$	\Sigma_{\zeta}(\Omega_K) \cong \mathcal{S}(\mathbb{V}_{\zeta}(\Omega_K))$, where
the finite-dimensional Hilbert space $\mathbb{V}_{\zeta}(\Omega_K)$  specifies the possible ways to put the loop excitation $\zeta$ on the knot $K$. The dimension of this Hilbert space will be referred to as the knot multiplicity.

\begin{definition}[Knot multiplicity]
	The knot multiplicity for knot $K$ is
	\begin{equation}
		N_{\zeta}(K)\equiv \dim \mathbb{V}_{\zeta}(\Omega_K).
	\end{equation}
\end{definition}
The knot multiplicity is nonnegative for any $\zeta \in\calC_{\rm{Hopf}}$.   Below are a few remarks on the physical meaning of $\mathbb{V}_{\zeta}(\Omega_K)$ and $N_{\zeta}(K)$. The observation is general and works for any knot including the unknot:
\begin{enumerate}
	\item A vector $ \vert \alpha \rangle \in \mathbb{V}_{\zeta}(\Omega_K)$, (up to the overall phase) represents the quantum information that can be encoded in the information convex set $\Sigma_{\zeta}(\Omega_K)$. This information can be decoded from the state on the knot complement $\Omega_K$. In fact, one can decode the information on any region $\widetilde{\Omega}_K\subset \Omega_K$ that can be continuously deformed to $\Omega_K$ by a sequence of extensions.
	\item $\mathbb{V}_{\zeta}(\Omega_K)$ corresponds to the effective low energy Hilbert space associated with a single knotted excitation $\zeta$, that is robust under \emph{any} perturbation of the excitation plus \emph{any local} perturbations in other places. Note that the perturbation along the excitation does not have to be local. This is precisely the condition required by the authors of Ref.~\cite{2018PhRvX...8b1074L,Lan:2018bui}. 
	\item On the other hand, to protect quantum information from decoherence to the environment, often a weaker statement is needed. We only need locally-indistinguishable states (i.e., the states are indistinguishable on balls with bounded radius). One may wonder if the protected degeneracy can be larger than the knot multiplicities. We do not have an answer to this question for the most general context. However, we would like to observe two things. First, when the excitation has an extra ``smoothness", e.g., it behaves like a codimension-2 defect and satisfies a generalization of boundary axioms (a generalization of the version of {\bf A0} and {\bf A1} 2d gapped boundary), then there can be extra degeneracy; a related observation is made in Ref.~\cite{else-nayak-cheshire}. The coherence of the extra degeneracy is, however, protected by the details in the vicinity of the excitation. Quantum information stored in this extra degeneracy (1) cannot be decoded in the knot complement, away from the excitation, and (2) is not robust to arbitrary perturbation along the excitation.
\end{enumerate}

\begin{exmp}[unknot]
	Let $K$ be the unknot. Then the knot multiplicity is
	\begin{equation}
		N_{\eta} (\textrm{unknot}) =\sum_{\mu\in \calC_{\rm{flux}}} \delta_{\eta, \varphi(\mu)}, \quad \forall \eta \in \calC_{\rm{Hopf}}.
	\end{equation}
\end{exmp}

\begin{exmp}\label{exmp:knot-is-coherent}
    When the knot $K$ is a trefoil knot, the knot multiplicities and quantum dimensions for excitation types that can exist alone on the trefoil are as follows:\\
    For the 3d toric code:
    \begin{equation}\label{eq:TC-trefoil-data}
    \begin{tabular}{ | c | c | c | }
    \hline
    	$N_{\zeta}(K)$ & 1 & 1  \\
     \hline
     $d_{\zeta}$    & 1 & 1 \\
     \hline
    \end{tabular}
    \end{equation}
        \item For the 3d $S_3$ quantum double model:
        \begin{equation}\label{eq:S3-trefoil-data}
    \begin{tabular}{ | c | c | c | c | c | }
    	 \hline
    	$N_{\zeta}(K)$ & 1 & 1 & 2 &  1 \\
     \hline
     $d_{\zeta}$    & 1 & 2 & 3 &  3 \\   
     \hline
    \end{tabular}
        \end{equation}
        (Note that the data shown is identical for the left-handed trefoil and the right-handed trefoil for both models; if we included the labels $\zeta$ in the table, they need not be. The omitted data for the labeling of $\zeta$ can be found in Table~\ref{table:ball-trefoil}.)
\end{exmp}

With a slight generalization, one can define the fusion space associated with a  excitation located at the knot, labeled by $\zeta \in\calC_{\rm{Hopf}}$ and a point particle $a\in \calC_{\rm{point}}$. Because we are now allowed to put a point particle, the excitation type associated with the knot can now be a bigger set. Such excitation is detected by the information convex set of a ball with the knot removed;
its thickened boundary has two components: an outer sphere shell labeled by a particle type and an inner torus shell boundary labeled by the knot excitation. According to the Hilbert space theorem, then, we can define
\begin{equation}\label{eq:knot-multiplicity-with-a}
N_{\zeta}^a (K) \equiv \text{dim} \mathbb{V}_{\zeta}^a\left(\text{ball} \setminus K\right)~.
\end{equation}
The multiplicities depend on the choice of the knot and  $N_{\zeta}(K)=N_{\zeta}^1(K)$. 
It is tempting to write a fusion rule 
for the shrinking of the knot into a particle,
generalizing \eqref{eq:shrinking_rule} in the case of the unknot:
\begin{equation}
	\zeta \buildrel{?}\over{=} \sum_{a\in \calC_{\rm{point}}} N_{\zeta}^a (K) \, a.
\end{equation}
The precise meaning of such a relation is not clear to us at the moment, and the formula obtained by naively replacing labels with quantum dimensions does not hold in this case (see Table~\ref{table:ball-trefoil}).

\subsubsection{Torus knots}
\label{subsubsec:torus-knots-in-general}
Knots can be classified as torus knots, satellite knots and hyperbolic knots. Every knot falls into exactly one of
the three categories~\cite{adams1994knot}. It is interesting to ask whether it is possible to distinguish the three categories by looking at the fusion (braiding) data associated with them. We leave this as an open question. Another interesting question is whether there are reference states for which the data distinguishes a knot and its mirror image. For example, a trefoil is topologically distinct from its mirror image.

In this section, we study a specific property of torus knots. Torus knots are knots that can be put on the surface of an unknotted torus.
A torus knot can be labeled by a pair $(p,q)$, where $p$ and $q$ are coprime integers. The $(p,-q)$ torus knot is the  mirror image  of the $(p,q)$ torus knot.  The $(-p,-q)$ torus knot is equivalent to the $(p,q)$ torus knot except for the reversed orientation. The $(p,q)$ torus knot is equivalent to the $(q,p)$ torus knot.
For example, the $(2,3)$ torus knot and the $(2,-3)$ torus knot are trefoil knots with opposite chiralities; the $(p,1)$ torus knot, for any $p\in \mathbb{Z}$ is an unknot.

\begin{figure}[h]
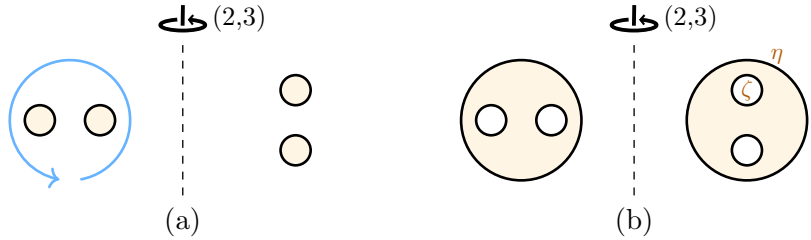

\centering
\begin{tikzpicture}
			\begin{scope}[yshift=0 cm,scale=1]	
			\node[] (A) at (0.00,1.35) {\includegraphics[scale=1]{rotation}};	
			\node[] (A) at (0.00+0.75,1.35) {\footnotesize{(2,3)}};	
			\draw[fill=yellow!30!orange!10!white, even odd rule, line width=1pt]  {(-1.9,0) circle (0.2)} {(-1.1,0) circle (0.2)};
			
			\draw[fill=yellow!30!orange!10!white, even odd rule, line width=1pt] {(1.5,0.4) circle (0.2)} {(1.5,-0.4) circle (0.2)};
			\draw[dashed, line width=0.5 pt] (0,-1) -- (0,1);
			\node[] (A) at (0,-1.35) {\small{(a)}};	
			% the 2nd rotation axis:
			\begin{scope}[xshift=-1.5 cm]
				\draw[->, color=blue!50!cyan!60!white, line width=1 pt] (-80:0.8) arc (-80:260:0.8);
			\end{scope}
			
		\end{scope}	
		\begin{scope}[xshift=6 cm,scale=1]	
			\node[] (A) at (0.00,1.35) {\includegraphics[scale=1]{rotation}};	
			\node[] (A) at (0.00+0.75,1.35) {\footnotesize{(2,3)}};	
			\draw[fill=yellow!30!orange!10!white, even odd rule, line width=1pt] {(-1.5,0) circle (0.8)} {(-1.9,0) circle (0.2)} {(-1.1,0) circle (0.2)};
			
			\draw[fill=yellow!30!orange!10!white, even odd rule, line width=1pt] {(1.5,0) circle (0.8)} {(1.5,0.4) circle (0.2)} {(1.5,-0.4) circle (0.2)};
			\draw[dashed, line width=0.5 pt] (0,-1) -- (0,1);
			\node[] (A) at (0,-1.35) {\small{(b)}};
			
            \def\radius{1}			
			\node[scale=.8] at ($(1.5,0.4*\radius)$) {\small{\color{orange!70!black}{$\zeta$}}};
			\node[scale=.8] at ($(1.9*\radius,0.85*\radius)$) {\small{\color{orange!70!black}{$\eta$}}};
			
		\end{scope}	
\end{tikzpicture}
	\caption{An illustration of $(p,q)$-type revolution: (a) A solid 
	trefoil knot.  (b) A solid torus with a trefoil knot removed; the labels are those used in \eqref{eq:Npq-def}.} 
\label{fig:p_q_rotation_illustration}
\end{figure}

The number $(p,q)$ provides explicit instruction for  constructing the torus knot. For our purpose, it is convenient to introduce a revolution of $(p,q)$-type; see Fig~\ref{fig:p_q_rotation_illustration} for an illustration. Here $p$ is the number of times that a 2d region is rotated around the shown (vertical) axis, and $q$ is the number of times that the region is rotated around the circle located at the center of a solid torus. (The blue arrow in Fig.~\ref{fig:p_q_rotation_illustration}(a) illustrates the rotation around this second axis.)
We allow $(p,q)$ to be $(1,0)$ or $(0,1)$ taking the obvious meaning.

For a torus knot, it is natural to consider the process of ``fusing" a knot excitation to a Hopf excitation:
\begin{equation}
\label{eq:shrinking_rule_knots}
   	\zeta \buildrel{?}\over{=} \sum_{\eta} N_{\zeta}^{\eta}(p,q)~\eta, \quad \zeta, \eta \in \calC_{\rm{Hopf}}.
\end{equation}
More precisely, the Hilbert space theorem says that we can define a set of integers
\begin{equation}
\label{eq:Npq-def}
N_{\zeta}^{\eta}(p,q)
\equiv \text{dim}\mathbb{V}_\zeta^\eta\left(T \setminus K_{p,q}\right)~, \quad \zeta, \eta \in \calC_{\rm{Hopf}}
\end{equation}
where $K_{p,q}$ denotes the $(p,q)$ torus knot;
see Fig.~\ref{fig:p_q_rotation_illustration}(b) for an illustration. $\eta$ is the label on the (unknotted) outer boundary and $\zeta$ is the label on the knot boundary.  
Another view of this fusion process is:
	\begin{equation}
	\begin{tikzpicture}
		\node[] (A) at (0,0) {\includegraphics[scale=2.6]{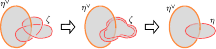}};
	\end{tikzpicture}	
\end{equation}	
Here the excitations are located in the complement of the region we consider, on a 3-sphere. 
(The difference between the second and third figures needs an explanation. We relabel $\zeta$ as $\eta$. This is allowed when either of the following happens. (1) The excitation on trefoil effectively becomes an unknot because the two strings become so close to each other that the distance between them is small compared to the correlation length. (2) We zoom out to a larger length scale compared to the distance between the two strings, which can be done by, e.g., coarse-grain the lattice further.)

 \subsection{Exotic fusion processes of flux-loops}\label{subsec:exotic-flux-fusion}
The simplest class of loop excitations in 3d are the fluxes in $\calC_{\text{flux}}$. We emphasize that, even for this simple set, the fusion processes can be very diverse, making use of the 3d space.\footnote{Each of the following processes can be considered for shrinkable loops as well. The positions of the loop excitations are identical, but the membrane operators are different.} 
As with previous examples, it is useful to look at the complement of the excitations on a $S^3$. The information convex sets know about the fusion processes. 

We say some of the fusion processes below are ``exotic" in the sense that: (1) for some cases, it is unclear if it is possible to assign a fusion space by applying the Hilbert space theorem; (2) for some cases, an intrinsic 3d view seems necessary, and we do not know any dimensional reduction understanding of them.

As a warm-up, we mention that the shrinking of flux loops is simple to understand.
If we shrink a flux-loop to a point, it becomes the vacuum sector. This is because the membrane operator is supported on the disk. Shrinking the disk makes the operator local.
	\begin{equation}
		\begin{tikzpicture}
			\node[] (A) at (0,0) {\includegraphics[scale=1.7]{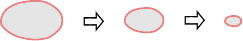}};
		\end{tikzpicture}	
	\end{equation}
Below is a list of exotic cases we identify:
\begin{enumerate}
	\item Two flux-loops can be fused on top of each other:
	\begin{equation}
		\begin{tikzpicture}
			\node[] (A) at (0,0) {\includegraphics[scale=1.7]{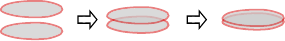}};
			\node[] (A) at (-4.3,0.4) {\small{$\mu$}};
			\node[] (A) at (-4.3,-0.4) {\small{$\nu$}};
		\end{tikzpicture}	
	\end{equation}
    Note that the shape of the membrane operator matters, and from the shape of the membrane operator, we see that the possible fusion outcomes must be fluxes as well.

    We say this case is exotic because
	 it is unclear whether there exists a 3d region for which a set of integers labeled by three fluxes are defined:
	 \begin{equation}\label{eq:flux-fusion-space}
	 	N_{\mu\nu}^{\lambda} \buildrel{?}\over{=} \dim \mathbb{V}_{\mu\nu}^{\lambda}(\textrm{a certain 3d region}).
	 \end{equation}
 Here $\mathbb{V}_{\mu\nu}^{\lambda}$ is a fusion space defined by the Hilbert space theorem.
	  It is therefore even less clear if an equation of the form:
	\begin{equation}\label{eq:flux-fusion-rule}
		\mu\times \nu \buildrel{?}\over{=} \sum_{\lambda}N_{\mu\nu}^{\lambda} \lambda,\quad \mu,\nu, \lambda\in \calC_{\text{flux}},
	\end{equation}
    make sense physically. Nonetheless, a natural set of integer $N_{\mu\nu}^{\lambda}$ seems to be a candidate.
	For example, in the quantum double models, the set of fluxes is identified with the set of conjugacy classes. The ``fusion" of conjugacy classes naturally provide a set of integers:
	\begin{exmp}[fusion rule for conjugacy classes]
		For a general finite group $G$, consider the fusion of conjugacy classes:
		\begin{equation}
		C_\mu \times C_\nu = \sum_{C_\lambda \in (G)_{\text{cj}}} \mathcal{F}_{\mu\nu}^\lambda \cdot C_\lambda
		\end{equation}
		The precise definition for $\mathcal{F}_{\mu\nu}^\lambda$ is: for fixed group element $g_\lambda \in C_\lambda$, $\mathcal{F}_{\mu\nu}^\lambda$ is the number of ordered pairs $g_\mu, g_\nu$ with $g_\mu \in C_\mu, g_\nu \in C_\nu$ and $g_\mu g_\nu = g_\lambda$ (see, for example chapter 19 in \cite{dummit1991abstract}). We see $\mathcal{F}_{\mu\nu}^\lambda$ are non-negative integers and $\mathcal{F}_{\mu\nu}^\lambda=\mathcal{F}_{\nu\mu}^\lambda$.

		For the finite group $S_3$, there are three conjugacy classes $C_1=\{1\}$, $C_r=\{ r, r^2\}$ and $C_s=\{s ,sr, sr^2\}$. The fusion rules (algebra of classes) are:
		\begin{equation}\label{eq:fusion-conj-classes}
			\small{		\begin{array}{lll}
					C_1 \times C_1 = C_1, 
				& \quad  C_1 \times  C_r = C_r,  & \quad C_1 \times C_s = C_s, \\
					& \quad C_r \times C_r =2 C_1 + C_r,  & \quad C_r \times C_s = 2 C_s, \\
					& & \quad C_s \times C_s = 3C_1  + 3 C_r.
				\end{array}
			}
		\end{equation}
	\end{exmp}
    Can this set of integers play a physical role, whether or not Eq.~(\ref{eq:flux-fusion-space}) holds? We leave the solution to this puzzle as an open question.

	It is worth noting a distinct fusion process: the fusion of two independently-created shrinkable loops, each of which is created on a membrane operator supported on the side of a cylinder:
	\begin{equation}
		\begin{tikzpicture}
			\node[] (A) at (0,0) {\includegraphics[scale=2.2]{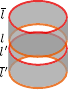}};
		\end{tikzpicture}	
	\end{equation}
     The fusion multiplicities of this process are understood in dimensional reduction picture and there is a well-defined set of integers $\{N_{ll'}^{l''}\}$. Here $l,l',l''\in \calC_{\rm{loop}}$. (In the context of quantum double models, this is first discussed in Ref.~\cite{Moradi2014}.)
	 Restricting the set of shrinkable loops to the subset $\{\varphi(\mu) \}$ (in both the incoming loops and the fusion outcomes), in general, provides a set of integers different from those obtained in Eq.~(\ref{eq:fusion-conj-classes}). For example, it is impossible to have $N^1_{\varphi(\mu)\varphi(\nu)}>1$.
	We argue that this does not contradict what we discussed above because the two physical processes are different. 
	
	\item A flux-loop can be twisted spirally and then becomes a new flux-loop. 
	\begin{equation}
		\begin{tikzpicture}
			\node[] (A) at (0,0) {\includegraphics[scale=1.4]{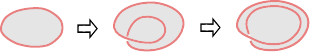}};
		\end{tikzpicture}
		\label{eq:twisted-spiral-fusion}
	\end{equation}
What is illustrated here is the spiral labeled by an integer $n=2$. It is possible to generalize this to any integers.
\begin{exmp}
In the case of the quantum double with finite group $G$, the operation described in \eqref{eq:twisted-spiral-fusion} acts by the group law of $G$ on a representative of the conjugacy class labeling the initial flux loop. 
\end{exmp}
Interestingly, every $\mu \in \calC_{\rm{flux}}$ is mapped to a unique outcome. We shall explain a general proof of this fact in \S\ref{subsubsec:spiral-fusion-from-spiral-map}, which makes use of the spiral map that we introduce in \S\ref{sec:spiral-map}. 
	
	\item A pair of fluxes can turn into a ``graph excitation" by fusing part of the loops.
	\begin{equation}
	\begin{tikzpicture}
		\node[] (A) at (0,0) {\includegraphics[scale=1.75]{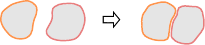}};
		\node[] (A) at (-2.4,-0.31) {\small{$\mu$}};
		\node[] (A) at (-1.0,-0.34) {\small{$\nu$}};
	\end{tikzpicture}
    \end{equation}
    In general, the possible fusion outcomes for $\mu,\nu\in \calC_{\rm{flux}}$ can be multiple graph excitations. 
	 We will have more to say about these graph excitations and their creation by this fusion process in \cite{braiding-paper}.

	\item Two flux-loops can ``collide"  and become a new flux-loop as follows:
	\begin{equation}
		\begin{tikzpicture}
			\node[] (A) at (0,0) {\includegraphics[scale=1.55]{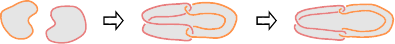}};
			\node[] (A) at (-4.7,-0.23) {\small{$\mu$}};
			\node[] (A) at (-3.4,-0.4) {\small{$\nu$}};
		\end{tikzpicture}
		\label{eq:borromean-fusion} 
	\end{equation}
    The final outcome of this process can be a non-vacuum flux sector only if both initial fluxes are non-vacuum ($\mu \ne 1$ and $\nu \ne 1$).  
   However, we do not know if it is sensible to write an equation:
	 \begin{equation}
	     \mu \times \nu \buildrel{?}\over{=} \sum_\lambda N_{\mu\nu}^\lambda \lambda,
	 \end{equation}
    with an appropriate set of integers $\{N_{\mu\nu}^\lambda\}$.  
	
	We note that this fusion process is closely related to the rule for the crossing of string defects in a 3d ordered medium whose order parameter space has non-abelian fundamental group $\pi$ \cite{mermin1979topological}. In that context, 
	string types are labeled by conjugacy classes of $\pi$. When 
	two string segments pass through each other, they leave behind a connecting string labeled by the group commutator  $PQP^{-1}Q^{-1}$ of representatives of the respective conjugacy classes, $P$ and $Q$.  
	(Though different choices of representatives $P$ and $Q$ can lead to different fusion outcomes,
	in this classical context, there is not really a non-abelian fusion rule; rather, the outcome depends on the details of the fusion process.)
	
\begin{exmp}
		\label{exmp:borromean-fusion-for-quantum-double}
	In the case of the 3d quantum double model, the outcome of the process in \eqref{eq:borromean-fusion} is also constrained by the possible values of group commutators $PQP^{-1} Q^{-1}$, where $P$ and $Q$ are representatives of the conjugacy classes $\mu$ and $\nu$, respectively.  
	In particular, therefore, the outcome is always trivial if $G$ is Abelian.
\end{exmp}	
 
 More generally, the outcome is constrained by the fusion multiplicity of the Borromean rings complement: let $Y$ be a solid torus surrounding the red and orange loops in the right figure of \eqref{eq:borromean-fusion}. Explicitly, $Y$ is the yellow solid torus in the following figure:
	 	 	 \begin{equation}\label{eq:Y-Borromean}
	 	 	 	\begin{tikzpicture}
	 	 	 		\node[] (A) at (0,0) {\includegraphics[scale=1.75]{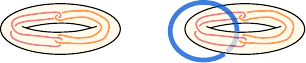}};
	 	 	 		\node[] (A) at (-2.8,-0.95) {\small{$Y$}};
	 	 	 	\end{tikzpicture}	 	 	 	
	 	 	 \end{equation}
	 	Then the outcome of the fusion is measured by the state of a solid torus which is contained in the complement of $Y$, e.g., the blue solid torus in Eq.~(\ref{eq:Y-Borromean}).
	 	 	 $Y$ minus the thickened red and orange loops is a Borromean rings complement.
	 	 	 Therefore, we will call this type of fusion of loops `Borromean fusion'. 
	 	 However, the relation between this fusion process and the multiplicities associated with the information convex set of $Y$ 
	 	 We leave the precise relation as an open question.
	 In Appendix~\ref{subsection:ball-borromean}, we calculate Borromean ring multiplicities in quantum double examples, including the special cases relevant to the Borromean fusion of flux loops.
\end{enumerate}

\section{Examples from solvable models}
\label{sec:QD}

In this section, we present explicit examples of the data we identified in \S\ref{sec:ICS}. These examples come from a particular class of solvable models: the 3d quantum double model associated with a finite group $G$. It is a natural generalization of Kitaev's quantum double model~\cite{Kitaev1997}, and at low energies reduces to lattice gauge theory with gauge group $G$. The ground state of this model is known to satisfy the two entanglement bootstrap axioms~\cite{Grover2011}, and that is why it can serve as an example. The subdivision-invariance of the 3d quantum double model, i.e., the ability to add and remove qubits in a ``smooth'' way using entanglement renormalization~\cite{Aguado:2007oza,Gu_2009} makes it possible to reduce the calculation to a small lattice. This calculation approach may apply to other solvable models with subdivision-invariance, e.g., the 3d Dijkgraaf-Witten models~\cite{DijkgraafWitten1990,WanWangHe2014} and the 3d Walker-Wang models with interesting boundary excitations and with or without deconfined bulk excitations~\cite{WalkerWang2011,Keyserlink2012}.

In \S\ref{subsec:3dQD} we review the 3d quantum double model. In \S\ref{subsec:minimal_diagram} we review the ``minimal diagram" technique~\cite{Shi:2018krj}, which is a way to calculate the information convex set of quantum double models. In \S\ref{subsec:explicit_QD_data}, we present the minimal diagram for a few basic cases and the rules that lead to the explicit data.

\subsection{3d quantum double model}\label{subsec:3dQD}
The 3d quantum double is a lattice model with input: a finite group $G$ and a 3d lattice.
The Hilbert space of the 3d quantum double model is a tensor product of local Hilbert spaces associated with each link of the 3d lattice ($\calH= \otimes_e \calH_e$). 
Each link of the 3d lattice is associated with a finite dimensional Hilbert space  $\calH_e= \text{span} \{ \vert g \rangle \vert g\in G \}$. Here $\{\vert g\rangle \}$ is an orthonormal basis labeled by group elements.
The orientation of each link can be chosen at will, but when the orientation is flipped, the basis vector is relabeled as $\vert g\rangle  \rightarrow \vert {g}^{-1}\rangle$, where ${g}^{-1}\in G$ is the inverse of $g$. The physics of interest is insensitive to the detailed geometry of the lattice, and when we introduce the Hamiltonian, we shall consider the cubic lattice for concreteness. 

The Hamiltonian for the 3d quantum double model is local, and it consists of two types of local terms:
\begin{equation}
	H = -\sum_v A_v -\sum_p B_p.
\end{equation}
Here  $A_v$ is a vertex term acting on the (six) links adjacent to vertex $v$ (and is a sum of generators of gauge transformations for lattice gauge theory with gauge group $G$).  
$B_p$ is the plaquette term acting on (four) links that make up the boundary of a plaquette $p$ (and, in the language of lattice gauge theory, measures the flux through the plaquette). These operators are defined according to the following action on their supports:
\begin{itemize}
	\item  $A_v \equiv \frac{1}{|G|}\sum_{g\in G}A_v^g$, with 
	\begin{equation}\label{eq:A_v^g}
		\large{A_v^g}\,\,\,\parfig{0.15}{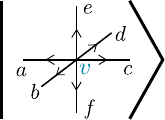}=\parfig{0.15}{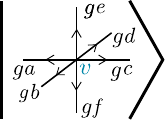}.
	\end{equation}
    \item $B_p$ is defined for a plaquette $p$ oriented $xy$, $yz$, $zx$-plane, respectively, as
    \begin{equation}
    \begin{aligned}\label{eq:Bp}
    	\large{B_p} \,\,\,\parfig{0.15}{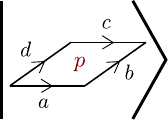}&=\large{\delta_{1, ab c^{-1} d^{-1}}}\parfig{0.15}{pxy_3d.pdf}\\
    		\large{B_p} \,\,\,\parfig{0.15}{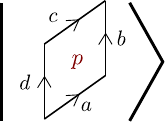}&=\large{\delta_{1, ab c^{-1} d^{-1}}}\parfig{0.15}{pyz_3d.pdf}\\
    			\large{B_p} \,\,\,\parfig{0.15}{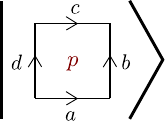}&=\large{\delta_{1, ab c^{-1} d^{-1}}}\parfig{0.15}{pxz_3d.pdf}
    \end{aligned}
    \end{equation}
    Note that it does not matter on which one of the four vertices of the plaquette the product starts. All that matters is to take into account the arrows and the ordering. 
\end{itemize}

\subsection{Information convex set and minimal diagram}\label{subsec:minimal_diagram}

To compute the information convex set (defined in Definition~\ref{def:ICS}) for various regions in the 3d quantum double model, we use the minimal diagram technique introduced in \cite{Shi:2018krj}. Note that in Ref.~\cite{Shi:2018krj}, the definition of information convex set uses a parent Hamiltonian rather than a reference state. In general, these two definitions of information convex set are inequivalent. Nevertheless, these two definitions are equivalent if the ground state satisfies the area law; see \cite{shi2020anyon} (Theorem 10.1 of Chapter 10 in particular).

\subsubsection{Choose a minimal diagram}

Roughly speaking, a ``minimal diagram'' is a graph (more precisely, a cell complex) with a small number of links and a finite-dimensional Hilbert space, obtained from reducing links and constraints from a finite but arbitrarily large subsystem of the lattice model, using the subdivision-invariance of the solvable lattice model.
This philosophy is familiar in the calculation of ground states of solvable models, e.g., the string-net model. 
The main difference is that the rules for minimal diagrams come from those for computing the information convex set. These rules are, in general, inequivalent to those for finding the set of ground states of a Hamiltonian on a small lattice. Without further ado, here is what we mean by a minimal diagram explicitly:

{\bf Minimal diagram:} A minimal diagram, of 3d quantum double model,  for a region $\Omega$ contains:
\begin{itemize}
	\item a number of links
	\begin{itemize}
		\item boundary links %(we shall color as [])
		\item bulk links %(we shall color as [])
	\end{itemize}
	\item a number of 2d faces 	
	\item a number of vertices 
	\begin{itemize}
		\item bulk vertices, not an endpoint of any boundary link
		\item boundary vertices, an endpoint of a boundary link 
	\end{itemize}
\end{itemize}
We shall choose blue for boundary links and purple for bulk links. 
 The local Hilbert space on each link is $\vert G\vert$-dimensional. The total Hilbert space for this minimal diagram, denoted by $\calH^{mini}_{\Omega}$, is the tensor product of these local Hilbert spaces.

\begin{remark} A few remarks are in order:
	\begin{enumerate}
		\item If the group element of a link is constrained to be the identity $1\in G$ by a zero-flux constraint described in \S\ref{subsubsection:mini_rules}, then it is possible to go back and simplify the diagram by deleting some links and faces. 
		\item We do not require a minimal diagram to be the smallest choice. We want it to be small enough to do a calculation.
		\item When we draw a minimal diagram, it is often useful to draw additional lines to illustrate the background topological space in which the diagram lives. Those lines are not part of the minimal diagram. 
		\item For all the examples in 3d bulk of which we are aware, we find that it is possible to choose one boundary vertex for each connected component of the boundary. However, this is not true in a broader context. For example, this stops being true when we consider regions attached to a gapped boundary.
		\item If $\Omega$ is a closed manifold, the sets of boundary links and boundary vertices are empty.
	\end{enumerate}
\end{remark}

\subsubsection{Rules for calculation with a chosen minimal diagram}\label{subsubsection:mini_rules}
The goal of a minimal diagram calculation is to find a convex set of density matrices on  $\calH^{mini}_{\Omega}$, satisfying a few constraints. The convex set obtained this way is isomorphic to the information convex set $\Sigma(\Omega)$.

These constraints follow from the three types of constraints for the calculation of an information convex set of quantum double for an arbitrarily large subsystem: (1) the boundary links are block-diagonal, (2) the constraints from terms acting on the interior (zero-flux constraints and vertex projection), and  (3) the invariance under conjugation by (truncated) boundary vertex terms.\footnote{See equations (34), (35) and (36) of Ref.~\cite{Shi:2018krj} for the explicit expressions in the context of 2d quantum double. The constraints for 3d quantum double are similar because both models have vertex terms and plaquette terms.}

For a chosen minimal diagram, the explicit rules for the calculation are the following.
We would like to find the set of all density matrices  $\rho$ on $\calH^{mini}_{\Omega}$ satisfying:
	\begin{enumerate}
		\item (block diagonal) $\rho$ is block-diagonal in the group element basis of the boundary links, namely
		\begin{equation}
			\rho= \sum_{\{ g_i\}} \, p_{\{g_i\}} \rho_{\{ g_i \}} ,
		\end{equation}
		where $\{g_i\}$ is a set of group element labels for the boundary links, $\{ p_{\{g_i\}} \}$ is a probability distribution and $\{ \rho_{\{g_i\}} \}$ is a density matrix living in the subspace of $\calH^{mini}_{\Omega}$ such that the boundary links are fixed to be the chosen group elements $\{g_i\}$.
 
		\item (projectors within the bulk) There are two types. The first type is the projector for each face $f$ of the minimal diagram:  $B_f \rho = \rho$. Here $B_f$ is the projector onto zero total flux on the links surrounding face $f$, an analog of Eq.~(\ref{eq:Bp}); we call them zero-flux constraints. The second type is of the form $A_v \rho =\rho$ for each bulk vertex. 
		$A_v$ is an analog of the vertex term in the 3d quantum double Hamiltonian.
  
		\item (conjugation invariant) $A^g_v \rho A^{{g}\dagger}_v = \rho$, for every boundary vertex $v$ of the minimal diagram and $\forall \, g\in G$. Here the unitary operator $A_v^g$ is analogous to that defined in Eq.~(\ref{eq:A_v^g}).		
	\end{enumerate}

\begin{remark}
	In practice, one can simplify the calculations further by using the technique reviewed in Appendix~\ref{appendix:sub_group}. In particular, Theorem~\ref{Thm_main} is useful in solving the conjugation constraint.
When $\Omega$ is a closed manifold, only the second rule survives; the rules reduce to the familiar rules for calculating the convex set of (possibly mixed) ground states.
\end{remark}

\subsection{Explicit data}\label{subsec:explicit_QD_data}

We provide the calculation of the information convex sets of 3d quantum double models, focusing on the simplest superselection sectors and one nontrivial fusion space (the knot multiplicity of trefoil). Additional cases are presented in Appendix~\ref{appendix:more-QD-examples}.

\subsubsection{Superselection sectors}

Below, we summarize the superselection sector data calculated from 3d quantum double models (with finite group $G$). The minimal diagram and the explicit constraints that lead to the calculation results are described.

     {\bf(Sphere shell and point excitations)} The superselection sectors of point excitations correspond to the irreducible representations of finite group $G$: $R\in (G)_{\text{ir}}$.
    \begin{equation}\label{eq:data_a}
    	\boxed{
    	\calC_{\text{point}}=\{ 1, a, \cdots\}, \quad \textrm{where } a =R \in (G)_{\text{ir}}.
    	}
    \end{equation}
    Here, the label $1$ corresponds to the 1-dimensional identity representation ($\text{Id}_G$). The quantum dimensions are
    \begin{equation}\label{eq:data_da}
    	\boxed{
    		d_a = \dim R, \quad \textrm{for } a =R \in (G)_{\text{ir}},
    	}
    \end{equation}
   where $\dim R$ is the dimension of the irreducible representation $R$.
    
    The minimal diagram calculation that leads to this result is as follows. We consider the
    minimal diagram for the sphere shell, shown in Fig.~\ref{fig:mini-sphere-shell}(a). It contains two vertices and a single link, which is a bulk link. (An alternative minimal diagram is Fig.~\ref{fig:mini-sphere-shell}(b), which contains 1 bulk link, 2 boundary links, 2 vertices and 5 faces. This one can be simplified to that in Fig.~\ref{fig:mini-sphere-shell}(a), according to the rule to remove faces and links.)
    \begin{figure}[h!]
    	\centering
    	\includegraphics[width=0.50\textwidth]{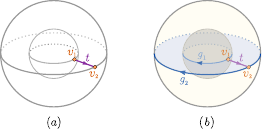}
    	\caption{Minimal diagram for sphere shell.}
    	\label{fig:mini-sphere-shell}
    \end{figure}
     The Hilbert space is $\calH^{mini}_{S^2\times \bbI}= \text{span} \{ |t \rangle  \vert t \in G\}$. Two constraints are associated with vertices $v_1$ and $v_2$:
   \begin{itemize}
   	\item $A_{v_1}^g \rho A_{v_1}^{g \dagger}=\rho$, $\forall g\in G$, where $A_{v_1}^g$ is defined according to $A_{v_1}^g \vert t \rangle = \vert g t \rangle$.
   	\item $A_{v_2}^g \rho A_{v_2}^{g \dagger}=\rho$, $\forall g\in G$, where $A_{v_2}^g$ is defined according to $A_{v_2}^g \vert t \rangle = \vert t {g}^{-1} \rangle$.
   \end{itemize}
    The goal is to find the convex set of density matrices supported on $\calH^{mini}_{S^2\times \bbI}$ that satisfy these two constraints. The end result is a simplex with extreme points labeled by $R\in (G)_{\text{ir}}$. The quantum dimension Eq.~(\ref{eq:data_da}) is verified by calculating the entropy difference between the extreme points. We omit the details here since the problem is solved directly by applying Proposition~\ref{Thm_main}, using Eq.~(\ref{eq:useful-1}).

    If $G= S_3$, the information convex set of the sphere shell has three extreme points. They correspond to three point particle types:
    \begin{equation}
    	\calC_{\text{point}}=\{ \text{Id}_{S_3}, \text{Sign}, \Pi\}, \quad \textrm{with quantum dimensions}\quad \{d_a\} = \{ 1,1,2\}.
    \end{equation} 
    Here $\text{Id}_{S_3}$, $\text{Sign}$ and $\Pi$ are the identity, sign and two dimensional irreducible representation of $S_3$ respectively.

	{\bf (Solid torus and pure fluxes)} The superselection sectors of pure fluxes correspond to the conjugacy classes of finite group $G$: $C\in (G)_{\text{cj}}$.
	\begin{equation}\label{eq:data_mu}
		\boxed{
			\calC_{\text{flux}}=\{ 1, \mu, \cdots\}, \quad \textrm{where } \mu =C \in (G)_{\text{cj}}.
		}
	\end{equation}
	Here, $1$ corresponds to the conjugacy class that contains the identity group element. The quantum dimensions are
	\begin{equation}\label{eq:data_dmu}
		\boxed{
			d_{\mu}= \sqrt{|C|}, \quad \textrm{for } \mu =C \in (G)_{\text{cj}},
		}
	\end{equation}
    where $|C|$ is the number of group elements in conjugacy class $C$.
	
	The minimal diagram calculation of this result is as follows. Take the minimal diagram of the solid torus shown in the following figure. 
		\begin{figure}[h!]
		\centering 
		\includegraphics[width=0.25\textwidth]{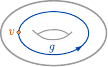}
		\end{figure}
	It contains a single link living on the surface of the solid torus, a single vertex, and no face.  (Accurately speaking, one may draw a diagram with additional faces and links, but then the faces are removed in the process of simplifying the diagram.) 
	The Hilbert space associated with this minimal diagram is $\calH^{mini}_{T}= \{ |g\rangle \vert g \in G\}$. We consider the convex set of density matrices supported on $\calH^{mini}_{T}$ that satisfy the following two conditions:
	\begin{itemize}
		\item $\rho= \sum_g p_g \vert g\rangle \langle g\vert$, with a probability distribution $\{p_g\}$.
		\item $\rho= A_v^h \rho A_v^{h \dagger}, \forall h\in G$.
	\end{itemize}
The first condition comes from the fact that the link is a boundary link. The second condition comes from the vertex. Here the operator $A_v^h$ acts as $A_v^h | g \rangle = | h g {h}^{-1}\rangle$. 
The solution is a convex set with extreme points labelled by conjugacy classes $C$:
\begin{equation}
	\rho^C= \frac{1}{|C|} \sum_{g\in C}  \vert g\rangle \langle g\vert,\quad \forall C\in (G)_{\text{cj}}.
\end{equation}
This verifies Eq.~(\ref{eq:data_mu}). Furthermore, Eq.~(\ref{eq:data_dmu}) is obtained by computing the entropy differences between these extreme points and the vacuum sector $\rho^{C_1}$, with $C_1 \equiv \{ 1 \}$.

If $G=S_3$, the information convex set of the solid torus has three extreme points. Thus, there are three fluxes:
\begin{equation}
	\calC_{\text{flux}}=\{ C_1, C_r,C_s\}\quad \textrm{with}\quad \{d_{\mu}\}=\{ 1,\sqrt{2},\sqrt{3}\}.
\end{equation}
Here $C_g$ is the conjugacy class of $G$ that contains element $g\in G$. Explicitly, for $S_3= \{ 1, r, r^2, s, sr, sr^2 \}$, where $r^3=s^2=1$ and $sr=r^2s$, the three conjugacy classes are $C_1=\{ 1\}$, $C_r=\{r,r^2\}$ and $C_s=\{ s, sr, sr^2 \}$.

{\bf(Torus shell and Hopf excitations) }
	The information convex set of the torus shell characterizes the Hopf excitations.  For 3d quantum double models, these are
    \begin{equation}\label{eq:data_Hopf}
		\boxed{
			\calC_{\text{Hopf}}=\{ 1, \eta, \cdots\}, \quad \textrm{where}\quad  \eta =(C_{(g,h)}, R), \textrm{ and } gh=hg.
		}
	\end{equation}
 Here 
     $C_{(g,h)} \equiv \{(tg{t}^{-1}, th{t}^{-1})|t \in G\}$, $R\in (E_{(g,h)})_{\text{ir}}$ and $E_{(g,h)}\equiv \{t \in G| (g,h)=(t gt^{-1}, th t^{-1}) \}$. 
    The quantum dimensions are
	\begin{equation}\label{eq:data_dHopf}
		\boxed{
			d_{\eta}=\frac{|G|}{|E_{(g,h)}|}\cdot \dim R, \quad \textrm{for}\quad \eta =(C_{(g,h)}, R).
		}
	\end{equation}
$\eta \in \calC_{\rm{Hopf}}^{[\mu]}$, for $\mu =C_g$.

The minimal diagram calculation that leads to this result is as follows. Again, we first draw a minimal diagram of the torus shell. The choice below contains 5 links, 2 vertices, and 3 faces; one of the links is a bulk link, and the others are boundary links.  
	\begin{figure}[h!]
	\centering
	\quad
	\includegraphics[width=0.35\textwidth]{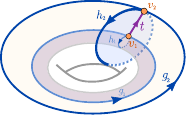}
	\end{figure}

The Hilbert space associated with this minimal diagram is 
\begin{equation}
	\calH^{mini}_{\mathbb{T}}= \{ |g_1, h_1, g_2, h_2, t \rangle \vert g_1, h_1, g_2, h_2, t \in G\}.
\end{equation}
 (The association of the group elements with the links is specified in the figure.)
The problem is to find the convex set of density matrices on $\calH^{mini}_{\mathbb{T}}$ satisfying the following constraints:
\begin{itemize}
	
	\item Constraints from the boundary links indicate that the relevant density matrices can be written in the diagonal basis as
	\begin{equation}\label{eq:torus_shell_mini}
		\rho=  \sum_{g_1, h_1, g_2, h_2, \lambda} p^{\lambda}_{g_1,h_1, g_2, h_2} \vert \{g_1, h_1, g_2, h_2\}; \lambda \rangle \langle \{g_1, h_1, g_2, h_2\}; \lambda \vert,
	\end{equation}
where $\{ p^{\lambda}_{g_1,h_1, g_2, h_2} \}$ is a probability distribution and $\lambda$ is an abstract label such that
\begin{equation}\label{eq:torus_shell_mini_bdy}
	\vert \{g_1, h_1, g_2, h_2\}; \lambda \rangle = \sum_{t} c_{\lambda}(t,g_1,h_1,g_2,h_2) \vert g_1, h_1, g_2, h_2 , t \rangle
\end{equation}
 is a normalized vector. 
    \item Zero-flux constraints from the three faces are:   $g_1 h_1 = h_1 g_1$, $h_2 = t^{-1} h_1 t$ and $g_2 =t^{-1} g_1 t$ ($g_2 h_2= h_2 g_2$ is implied by these). This constraint is satisfied for all configurations with $c_{\lambda}(t,g_1,h_1,g_2,h_2)\ne 0$. 
	\item Constraints from the vertices are:
	\begin{equation}
		A_{v_1}^g \rho A_{v_1}^{g \dagger}=\rho,\quad  A_{v_2}^g \rho A_{v_2}^{g \dagger}=\rho,\quad \forall g\in G.
	\end{equation}
Here $A_{v_1}^g$ and $A_{v_2}^g$ act as 
\begin{equation}
	\begin{aligned}
		A_{v_1}^g |g_1, h_1, g_2, h_2, t \rangle &=|g g_1 g^{-1}, g h_1 g^{-1}, g_2, h_2,  gt\rangle, \\
		A_{v_2}^g |g_1, h_1, g_2, h_2, t \rangle &= |g_1, h_1, g g_2 g^{-1}, g h_2 g^{-1},  t g^{-1} \rangle.
	\end{aligned}
\end{equation} 
\end{itemize}
By solving this constraint problem we are able to verify the labeling of the superselection sectors Eq.~(\ref{eq:data_Hopf}) and the quantum dimensions Eq.~(\ref{eq:data_dHopf}). We omit the details but point out that Proposition~\ref{Thm_main} is useful for solving this problem. This result agrees with the calculation using the minimal entangled states on $T^3$ \cite{Wen2017}.

\begin{exmp}[Hopf excitations] 
For the quantum double model with finite group $G=S_3=\{1,r,r^2,s,sr,sr^2\}$, where $r^3=s^2=1$ and $sr = r^2 s$, there are  3 point particles and 3 pure fluxes, as we have already described. The set of Hopf excitations, $\calC_{\rm{Hopf}}= \cup_{\mu} \calC_{\rm{Hopf}}^{[\mu]}$, contains 8+9+4=21 labels, summarized in Table~\ref{table:Hopf_S3}. 

	\begin{table}[h!]
		\begin{center}
			\begin{tabular}{ | c | c | c |  }
				\hline
				 {$\mu$} & $\{\eta\}_{\eta\in \calC_{\rm{Hopf}}^{[\mu]} }$  & $\{ d_{\eta} \}_{\eta\in \calC_{\rm{Hopf}}^{[\mu]} }$   \\ 
				\hline
				$C_1$ & 	$	\small {
					\left\{	\begin{array}{c}
						(C_{(1,1)},\text{Id}_{S_3}),
						(C_{(1,1)},\text{Sign}),
						(C_{(1,1)},\Pi),\\
						(C_{(1,r)},\text{Id}_{Z_3}),
						(C_{(1,r)}, \omega),
						(C_{(1,r)},\omega^2), \\
						(C_{(1,s)},\text{Id}_{Z_2}),
						(C_{(1,s)}, \text{Sign})
					\end{array} 
					\right\}
				} $
                & $\{1,1,2, 2,2,2,3,3 \}$    \\
			  \hline
				$C_r$ & $	\small {
					\left\{	\begin{array}{c}
						(C_{(r,1)},\text{Id}_{Z_3}),
						(C_{(r,1)},\omega),
						(C_{(r,1)},\omega^2),\\
						(C_{(r,r)},\text{Id}_{Z_3}),
						(C_{(r,r)}, \omega),
						(C_{(r,r)},\omega^2), \\
						(C_{(r,r^2)},\text{Id}_{Z_3}),
						(C_{(r,r^2)}, \omega),
						(C_{(r,r^2)}, \omega^2),
					\end{array} 
					\right\}
				} $ & $\{2,2,2, 2,2,2, 2,2,2 \}$   \\
				\hline
				$C_s$ & $\small {
					\left\{	\begin{array}{c}
						(C_{(s,1)},\text{Id}_{Z_2}),
						(C_{(s,1)},\text{Sign}),\\
						(C_{(s,s)},\text{Id}_{Z_2}),
						(C_{(s,s)},\text{Sign}),
					\end{array} 
					\right\}
				}$ & $\{3,3,3, 3\}$    \\
				\hline
			\end{tabular}
			\caption{The Hopf excitations for the 3d $S_3$ quantum double, and their quantum dimensions.}
			\label{table:Hopf_S3}
		\end{center}
	\end{table}	
\end{exmp}

{\bf(Shrinkable loops) } Because the set of shrinkable loops $\calC_{\rm{loop}}$ is a subset of the Hopf excitations, we do not need a new minimal diagram. Instead, we take $g_1=g_2=1$ in the minimal diagram for the torus shell, and find
\begin{equation}\label{eq:data_shrinkable}
	\boxed{
		\calC_{\text{loop}}=\{ 1, l, \cdots\}, \quad \textrm{where}\quad  l =(C, R).
	}
\end{equation}
Here 
$C\in (G)_{\text{cj}}$ is a conjugacy class, $R\in (E_{h})_{\text{ir}}$, and $E_{h}\equiv \{t \in G| h=th t^{-1} \}$ is the centralizer of a representative $h\in C$. 
The quantum dimensions are
\begin{equation}\label{eq:data_d_shrinkable}
	\boxed{
		d_{l}=|C| \cdot \dim R, \quad \textrm{for}\quad l =(C, R).
	}
\end{equation}
Eqs~(\ref{eq:data_shrinkable}) and (\ref{eq:data_d_shrinkable}) should be familiar since they are the labels for the anyons of a 2d quantum double model with finite group $G$ \cite{Kitaev1997,Bombin2007}. This is consistent with existing dimensional reduction statements in literature~\cite{Moradi2014,Wang2014,2014PhRvX...4c1048J}.

\subsubsection{Knot multiplicity}
\label{subsec:quantum-double-knot-mulitiplicity}

	 \begin{figure}[ht!]
    \centering
  \begin{tikzpicture}
	\node[] (A) at (0,0) {\includegraphics[width=0.35\textwidth]{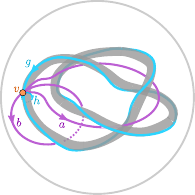}};
	\node[] (A) at (0,-2.3) {\small{$S^3$}};
	\node[] (A) at (0,-3.2) {\small{(a)}};
	% h illustration
	\begin{scope}[xshift=6.5 cm, yshift=0 cm,scale=1]	
		\node[] (A) at (0,0) {\includegraphics[width=0.3\textwidth]{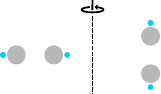}};
		\node[] (A) at (0.95,1.35) {\footnotesize{(2,3)}};	
		\node[] (A) at (0.38,-3.2) {\small{(b)}};
	\end{scope}
\end{tikzpicture}
    \caption{(a) The minimal diagram for a trefoil complement. The links labeled by group elements $g$ and $h$ live on the surface of the trefoil. The links labeled by $a$ and $b$ are bulk links corresponding to the generators of the knot group. (b) An illustration of the position relation between the removed trefoil (gray) and the link labeled by $g$ (blue), i.e., the choice of ``framing" of the loop labeled by $g$.}
     \label{fig:trefoil-complement}
\end{figure}

In this section, we discuss the minimal diagram calculation of the knot multiplicity of a trefoil knot. Trefoil is a $(2,3)$ torus knot. The strategy of the calculation generalizes straightforwardly to other torus knots. A notable feature is the role played by the knot group.
The knot group of knot $K$ is the fundamental group of the knot complement $S^3\setminus K$. For the trefoil knot 
\begin{equation}\label{eq:knot-group-trefoil}
	\pi_1(S^3 \setminus \textrm{trefoil})=\langle \mathfrak{a},\mathfrak{b}|\mathfrak{a}^2 =\mathfrak{b}^3\rangle.
\end{equation}
 In other words, the fundamental group is a group with two generators $\mathfrak{a}$ and $\mathfrak{b}$ such that $\mathfrak{a}^2 =\mathfrak{b}^3$.

In the minimal diagram, Fig.~\ref{fig:trefoil-complement}(a), we have two bulk links (labeled by $a$ and $b$), two boundary links (labeled by $g$ and $h$) and a vertex $v$. The minimal diagram contains faces (not shown). The Hilbert space associated with it is $\calH^{mini}_{S^3\setminus K}=\textrm{span} \{ | g,h, a, b \rangle | g,h,a,b\in G\}$. Again, the problem is to find the set of density matrices on this Hilbert space satisfying a few constraints.

The choice of the two bulk links is inspired by the structure of the knot group (\ref{eq:knot-group-trefoil}); as the labels $a$ and $b$ suggest, the group elements on these two links satisfy the knot group constraint $a^2 =b^3$ by the zero-flux constraints of the faces of the minimal diagram. ($a, b\in G$ are analogs of the generators $\mathfrak{a}$ and $\mathfrak{b}$ but they are restricted to the finite group $G$.  The Hilbert space of the minimal diagram furnishes a representation of the knot group in $G$.) Because the minimal diagram has only one vertex and thus every link is a closed-loop, $a$ and $b$ determine the group element on all other links. 
\begin{remark}
	This feature generalizes to the minimal diagram of all knots! For any knot, the minimal diagram contains a single vertex, two boundary links, and the number of bulk links can be chosen to have a one-to-one correspondence with knot group generators (with values restricted to finite group $G$).
\end{remark}

Note that the choice for the loop labeled by $g$ is not unique since we can add a ``Dehn twist" so that the framing changes. The choice of framing does affect the names of the $\eta$ labels without changing the physics. For a concrete calculation, however, it is crucial to keep track of the specific choice,\footnote{As a matter of fact, the choice of framing in Fig.~\ref{fig:trefoil-complement} is consistent with the generalized isomorphism theorem in the following sense: There exists a path (with immersed regions in the intermediate steps) which turns the trefoil shell to an unknotted torus shell such that the framing becomes the usual framing for the unknotted torus shell. } and this is illustrated in Fig.~\ref{fig:trefoil-complement}(b). For this choice, $\{h,g\}$ are determined by knot group generators $a$ and $b$ as follows:
\begin{equation}\label{eq:constraint-gh-from-ab}
	g=a^2= b^{3},\quad h=a^{-1} b.
\end{equation}
From this, we already see $gh=hg$, even before the explicit usage of the boundary face.
The solution of the other two constraints leads to the formula:
\begin{equation}\label{eq:trefoil-boxed}
\boxed{	N_{\zeta}({\rm trefoil})= \textrm{ the number of $R\in(E_{(g,h)})_{\rm{ir}}$ contained in $\calH_{(g,h)}$.}}
\end{equation}
Here, $\zeta=(C_{(g,h)},R)$ and the $\calH_{(g,h)}$ is a representation of $E_{(g,h)}$ defined according to the following steps:
\begin{enumerate}
	\item Pick a $C_{(g,h)}$ such that $g, h\in G$ and $gh=hg$.
	\item Find the set of ordered pairs $\{a, b\}$, $a,b\in G$, such that  Eq.~(\ref{eq:constraint-gh-from-ab}) holds.
	\item $\calH_{(g,h)}$, as a Hilbert space, is defined as $\textrm{span} \{ \vert a, b \rangle | \{a,b\} \textrm{ from step 2}\}$.
	\item $\calH_{(g,h)}$ is a representation of $E_{(g,h)}$ by the group action:
	\begin{equation}
		\Gamma(t) |a,b\rangle = |tat^{-1}, tbt^{-1}\rangle,\quad \forall t \in E_{(g,h)}.
	\end{equation} 
\end{enumerate}

The solution to this problem can be converted to to a calculation in terms of the characters.
The character of representation $\CH_{(g,h)}$ is
\begin{equation}
    \chi_{\CH_{(g,h)}}(t) = \sum_{\text{allowed} (a, b)}\delta_{a, t a t^{-1}} \delta_{b, t b t^{-1}} ,\quad \forall t \in E_{(g,h)}.
\end{equation}
The expression in Eq.~(\ref{eq:trefoil-boxed}) is then translated into
\begin{equation}
    N_\zeta(\text{trefoil}) = \frac{1}{|E_{(g,h)}|} \sum_{t \in E_{(g,h)}} \chi_{R}^* (t)  \chi_{\CH_{(g,h)}} (t).
    \label{eq:character-formula-for-knot-multiplicity}
\end{equation}
The solution to this problem for specific finite groups are summarized in Table~\ref{table:knot-multiplicit-trefoil-Z_3} and Table~\ref{table:knot-multiplicit-trefoil-S_3}.

\begin{table}[h!]
	\begin{center}
		\small{	\begin{tabular}{ | c | c | c | }
				\hline
				$\zeta$  &  $(C_{(1,1)}, \text{Id}_{\IZ_2})$ & $(C_{(1,s)}, \text{Id}_{\IZ_2})$\\ 
				\hline
				$N_{\zeta}({\rm trefoil})$ & 1 & 1\\
				\hline
				$d_{\zeta}$ & 1 & 1  \\
				\hline
			\end{tabular}
		}
		\caption{ Trefoil knot multiplicities for 3d toric code, with $\IZ_2 = \{ 1, s\}$. }
		\label{table:knot-multiplicit-trefoil-Z_3}
	\end{center}
\end{table}

\begin{table}[h!]
	\begin{center}
	\small{	\begin{tabular}{ | c | c | c | c | c | }
			\hline
			$\zeta$  & $(C_{(1,1)}, \text{Id}_{S_3})$ & $(C_{(1,r)}, \text{Id}_{\mathbb{Z}_3})$ & $(C_{(1,s)}, \text{Id}_{\mathbb{Z}_2})$ & $(C_{(1,s)}, \text{Sign}_{\mathbb{Z}_2})$  \\ 
			\hline
			$N_{\zeta}({\rm trefoil})$ & 1 & 1 & 2 & 1 \\
			\hline
			$d_{\zeta}$ & 1 & 2 & 3 & 3  \\
			\hline
		\end{tabular}
	}
	\caption{ Trefoil knot multiplicities for 3d $S_3$ quantum double. Only the excitation types with $N_{\zeta}(\textrm{trefoil})\ge 1$ are shown. Note that the label of excitations can depend on the framing. This particular choice is based on that in Fig.~\ref{fig:trefoil-complement}.}
		\label{table:knot-multiplicit-trefoil-S_3}
	\end{center}
\end{table}

\section{Spiral map and beyond}
\label{sec:spiral-map}

The set of pure fluxes $\calC_{\text{flux}}$ is in an elementary position of the 3d theory. While they may be understood, to some extent, from the dimension reduction picture, we argue that an intrinsic 3d view of them is worthwhile. 

In this section, we introduce a class of \emph{spiral maps}, which maps pure fluxes to pure fluxes. The spiral maps are of intrinsic 3d nature; they arise because a solid torus (embedded in a solid ball) has nontrivial subsystems spirals within it that are deformable to the original solid torus.
We further discuss a few generalizations to broader contexts.

\subsection{Spiral map}

We shall begin with the definition of the spiral map $\mathcal{T}_n$ on the information convex set of the solid torus:

\begin{definition}[Spiral map]\label{def:spiral_map}
The $n$th spiral map is
\be \mathcal{T}_n: \Sigma(T)\to \Sigma(T) \ee
where $T$ is a solid torus embedded in a ball and $n\in \IZ $ is an integer.
It is defined by the following two steps, illustrated in Fig.~\ref{fig:spiral-map}:
\begin{enumerate}
\item Trace out the complement of a $(n,1)$ unknot contained in $T$.
\item Deform the $(n,1)$ unknot back to $T$ using the isomorphism theorem,
along a particular path (fixed by convention) within the ball. 
\end{enumerate}
\end{definition}

\begin{figure}[h]
	\begin{tikzpicture}
		\begin{scope}[yshift=0 cm, scale=1.1]
			\fill[yellow!20!white, opacity=0.5, even odd rule]
			{(-0.55,-0.18) .. controls +(-30:0.4) and +(210:0.4) ..(0.55,-0.18)
				.. controls +(150:0.5) and +(30:0.55) ..(-0.55,-0.18)}
			{(0,-0.33) ellipse (1.8 and 1.15)};
			
			\draw[black, thick] (-0.8,-0.05)--(-0.55,-0.18) .. controls +(-30:0.4) and +(210:0.4) ..(0.55,-0.18)-- (0.8,-0.04);
			\draw[black, thick] (-0.55,-0.18) .. controls +(30:0.5) and +(150:0.55) .. (0.55,-0.18);
			
			\draw[black, thick] (0,-0.33) ellipse (1.8 and 1.15);
		\end{scope}
		\begin{scope}[xshift=5.0 cm, yshift=-0.25 cm,scale=2]
			\begin{knot}[
				consider self intersections=true,
				%draft mode=crossings,
				%flip crossing=2,
				%only when rendering/.style={
				%    show curve controls
				%}
				clip width=1,  scale=0.7];
				%\draw[fill=cyan!20!white] (0,0) circle (3);
				
				\strand[black, double=yellow!10!white, line width=1pt, double distance=8 pt] 
				(-1,0.25) .. controls +(90:0.6) and +(90:0.9) .. 
				(1.5,0) .. controls +(-90:0.9) and +(-90:0.6) .. 
				(-1,-0.25) .. controls +(90:0.6) and +(90:0.4) .. 
				(0.95, 0).. controls +(-90:0.4) and +(-90:0.6) ..
				(-1,0.25);
			\end{knot}
		\end{scope}
		\begin{scope}[scale=1, xshift= 11 cm, yshift=-0.15 cm]
			\node[] (A) at (0.00,1.35) {\includegraphics[scale=1]{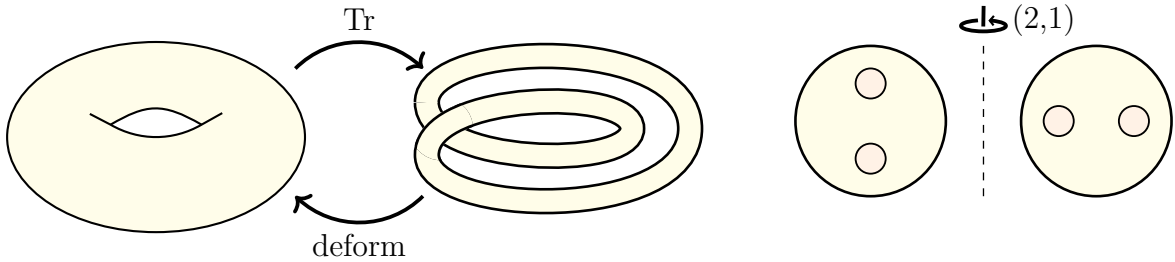}};	
			\node[] (A) at (0.8,1.35) {(2,1)};	
			\draw[fill=yellow!10!white, even odd rule, line width=1pt] {(-1.5,0) circle (1)};

			\draw[fill=yellow!10!white, even odd rule, line width=1pt] {(1.5,0) circle (1)}; 
			\draw[fill=orange!10!white, even odd rule, line width=0.7 pt] {(-1.5,0.5) circle (0.2)} {(-1.5,-0.5) circle (0.2)};
			\draw[fill=orange!10!white, even odd rule, line width=0.7 pt] {(2,0) circle (0.2)} {(1,0) circle (0.2)};
			\draw[dashed, line width=0.5 pt] (0,-1) -- (0,1);
		\end{scope}
		% add arrows
		\begin{scope}[xshift=2.7 cm, yshift=-0.3 cm]
			\draw[->,line width=0.5mm] (135:1.2) arc (135:45:1.2);
			\draw[->,line width=0.5mm] (-45:1.2) arc (-45:-135:1.2);
			
			\node[] (P) at (0, 1.5)  {$\Tr$};
			\node[] (Q) at (0, -1.5)  {deform};
		\end{scope}
	\end{tikzpicture}
	\caption{The steps involved in defining the spiral map. The case $\mathcal{T}_2$ is illustrated explicitly, while the strategy works for an arbitrary $\mathcal{T}_n$ for $n\ne 0$. The case $T_0$ is defined separately making use of the $(0,1)$ unknot, which is an unknot contained in a ball-shaped subsystem of the original solid torus.}
	\label{fig:spiral-map}
\end{figure}

We are interested in the spiral map is that it has many nice properties:
\begin{enumerate}
	\item (linearity) For any $n \in \IZ $ and any collection of states $\rho^i_T \in \Sigma(T)$, 
	\begin{equation}\label{eq:convex_combine_spiral}
		 \mathcal{T}_n \( \sum_i p_i \rho^i_T \) = \sum_i p_i \mathcal{T}_n(\rho^i_T).
	\end{equation}
This is because each of the steps is linear in the input density matrix.
    \item (preservation of vacuum) Let $\rho^1_T\in \Sigma(T)$ be the vacuum sector, then
    \begin{equation}
    	\mathcal{T}_n (\rho_T^1) = \rho_T^1,\quad  \forall n \in \IZ.
    \end{equation}
    This is because every configuration along the deformation path is embedded in the ball on which the reference state is defined; the initial state $\rho^1_T$, after the partial trace, is (globally) consistent with the reference state. Thus any deformation step allowed by the isomorphism theorem cannot break this consistency.\footnote{This argument uses embedding and thus it does not apply to paths containing immersed regions. It turns out that the statement remains true more generally, due to Proposition~\ref{prop:anypq}.} Moreover, 
    \begin{equation}
    	\mathcal{T}_0 (\lambda_T)= \rho_T^1,\quad \forall \lambda_T\in \Sigma(T).
    \end{equation}
    \item (product rule) The spiral maps can be composed as:
    \begin{equation}
    	\mathcal{T}_n \circ \mathcal{T}_m = \mathcal{T}_{mn} . 
    \end{equation}
    See Proposition~\ref{prop:product_rule}.
   \item (spiral map for fluxes) Extreme points are mapped to extreme points under the spiral map; see Corollary~\ref{coro:T_n_extreme_spiral}. Therefore, we can define the (induced) spiral map on pure flux labels:   \begin{equation}\label{eq:spiral_on_fluxes}
   	t_n: \calC_{\text{flux}}\to \calC_{\text{flux}}, \quad \textrm{by} \quad \mathcal{T}_n (\rho^{\mu}_T)= \rho_T^{t_n(\mu)}.
   \end{equation}
 Note that $t_n$ completely fixes the action of $\mathcal{T}_n$ by Eq.~(\ref{eq:convex_combine_spiral}), and therefore it is an equivalent description of the spiral map.
 
 \item (monotonicity) The spiral map induces a monotonic decrease on the quantum dimension (and thus monotonic decrease of entropy of quantum states):
 \begin{equation}
 	d_{t_n(\mu)} \le d_{\mu},\qquad \forall n \in \mathbb{Z},\,\,\,\,                 \mu \in \calC_{\rm{flux}}.
 \end{equation}
 See Proposition~\ref{prop:monotonicity-of-the-spiral-map}.
\end{enumerate}

\begin{exmp}
    For the quantum double model, the spiral map $t_n$ acts on the flux labelled $C_g$ (the conjugacy class with representative $g$) by $t_n: C_g \to C_{g^n}$.  
    
So in the case of $G=\IZ_2$ and $S_3$, the actions of $t_2$ and $t_3$ are given in Table~\ref{table:spiral-action-for-S3}.
    \begin{table}[h!]
\begin{center}
\begin{tabular}{ | c || c | c |  }
 \hline
 $\mu$ & $C_1$ & $C_s$   \\ 
 \hline
 $t_2(\mu)$ & $C_1$ & $C_1$  \\
 \hline
 $t_3(\mu)$  & $C_1$ & $C_s$  \\
 \hline
\end{tabular}
~~~~~~~~~
\begin{tabular}{ | c || c | c | c |  }
 \hline
 $\mu$
 & $C_1$ & $C_r$ & $C_s$  \\ 
 \hline
 $t_2(\mu)$ & $C_1$ & $C_r$ & $C_1$  \\
 \hline
 $t_3(\mu)$  & $C_1$ & $C_1$ & $C_s$ \\
 \hline
\end{tabular}
\caption{The action of the first two nontrivial spiral maps on pure fluxes in the $\IZ_2$ (left) and $S_3$ (right) quantum double models. 
}
\label{table:spiral-action-for-S3}
\end{center}
\end{table}
    
\end{exmp}

\subsubsection{Spiral fusion of fluxes}
\label{subsubsec:spiral-fusion-from-spiral-map}

In \S\ref{subsec:exotic-flux-fusion} we introduced a notion of `spiral fusion' of a pure flux loop.  Here we use the spiral map to describe the outcome.  

\begin{Proposition}
    The process depicted in \eqref{eq:twisted-spiral-fusion} takes 
    $\mu \in \calC_{\rm{flux}}$ to $t_n(\mu) \in \calC_{\rm{flux}}$.  
\end{Proposition}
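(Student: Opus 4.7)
The plan is to identify the physical spiral fusion process in \eqref{eq:twisted-spiral-fusion} with the two-step construction defining $\mathcal{T}_n$ in Definition~\ref{def:spiral_map}, so that the statement reduces to the definition of $t_n$ in Eq.~\eqref{eq:spiral_on_fluxes}.

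First, I would fix a solid torus $T$ (embedded in a ball) such that the initial flux-$\mu$ loop configuration corresponds to the state $\rho^\mu_T \in \Sigma(T)$; this is the ``initial flux loop'' on the left of \eqref{eq:twisted-spiral-fusion}. The key geometric observation is that the $n$-fold spiraled configuration in the middle of \eqref{eq:twisted-spiral-fusion} places the flux loop on an $(n,1)$-torus knot (topologically an unknot), which is precisely the core of a thin sub-solid-torus $T' \subset T$ of the kind featured in Step~1 of Definition~\ref{def:spiral_map}. The ``new flux loop'' on the right-hand side of \eqref{eq:twisted-spiral-fusion} is then naturally identified with the original $T$, recovered by unwinding the spiral.

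With this identification in hand, the two stages of the physical process match the two steps of $\mathcal{T}_n$: restriction of the global state to $T'$ corresponds to Step~1 (partial trace), and identification of the unwound loop with $T$ corresponds to Step~2 (deformation back via the generalized isomorphism theorem, Theorem~\ref{thm:generalized_iso}, along a path within the ball). Hence the state on $T$ after the spiral fusion equals $\mathcal{T}_n(\rho^\mu_T)$, which by Eq.~\eqref{eq:spiral_on_fluxes} is $\rho^{t_n(\mu)}_T$, so the outcome is the flux sector $t_n(\mu)$.

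The main obstacle is the geometric identification: we need to justify that the spiraled flux configuration's topological content is fully captured by $T'$ (so that tracing out $T \setminus T'$ loses no relevant information about the flux sector) and that the physical unwinding corresponds to the specific isomorphism-theorem path fixed by convention in Definition~\ref{def:spiral_map}. Both facts are geometrically plausible and can be checked against the quantum double example, where the spiral map acts as $C_g \mapsto C_{g^n}$ (see Table~\ref{table:spiral-action-for-S3}), matching the physical spiral-fusion rule stated just above the proposition. A fully rigorous treatment would likely invoke the monotonicity property (Proposition~\ref{prop:monotonicity-of-the-spiral-map}) and the fact that $\mathcal{T}_n$ takes extreme points to extreme points (Corollary~\ref{coro:T_n_extreme_spiral}) to pin down the unique flux label of the resulting extreme point.
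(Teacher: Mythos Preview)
Your overall strategy---reduce the spiral fusion to the spiral map and invoke the definition of $t_n$---is the same as the paper's. But the core geometric identification in your second paragraph is off, and this is exactly the step the paper handles differently.

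You write that the spiraled flux loop ``is precisely the core of a thin sub-solid-torus $T' \subset T$.'' This conflates the \emph{excitation} with the \emph{detecting region}. In the spiral map (Definition~\ref{def:spiral_map}), $T$ and the $(n,1)$ sub-torus $T'$ are both regions whose information convex set \emph{detects} the flux; the flux loop itself lives in the complement of $T$ (threading its hole). So it is not the spiraled excitation that sits on the $(n,1)$ curve---it is the detecting sub-torus $T'$ that winds $n$ times around the original flux loop. Your attempt to match ``restriction to $T'$'' with the physical spiraling of the excitation therefore does not go through as stated, and the ``main obstacle'' you flag is real.

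The paper's fix is to work entirely on the detector side. One takes the solid torus that detects the \emph{final} flux (after the spiral has collapsed to a single loop) and exhibits a regular homotopy carrying it to the $(n,1)$ sub-torus $T'$ inside the original detecting torus $T$; see Fig.~\ref{fig:spiral-map-spiral-fusion}. By the generalized isomorphism theorem this identifies the state on the final detector with $\mathrm{Tr}_{T\setminus T'}\rho^\mu_T$, and the remaining unwinding is precisely Step~2 of the spiral map. No appeal to monotonicity or the extreme-point-preservation corollary is needed; the regular homotopy of the detector does all the work.
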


\begin{proof}
This statement is best understood from Fig.~\ref{fig:spiral-map-spiral-fusion}.  The idea is that the solid torus that detects the final flux can be deformed by regular homotopy to the $n$-fold-twisted solid torus defined in the first (partial trace) step of the spiral map definition.
\end{proof}

\begin{figure}[h]
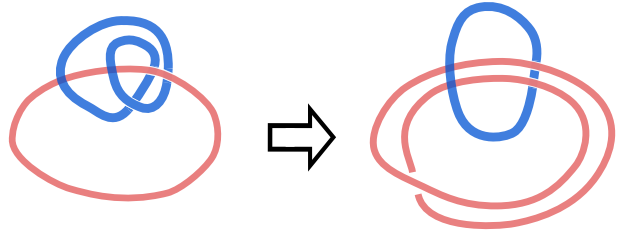

$$\parfig{0.52}{fig_spiral-fusion-in-sec5} $$
\caption{
\label{fig:spiral-map-spiral-fusion}The first and last frames of the regular homotopy taking the region (blue) detecting the result of spiral fusion to the subregion of the solid torus defining the spiral map.
}
\end{figure}

\subsubsection{Proofs}

\begin{Proposition}\label{prop:anypq}
	Let $\Omega$ be an immersed region in a ball and $\rho_{\Omega}\in\Sigma(\Omega)$.
	Suppose $K^0$ and $K^1$ are two knots embedded in $\Omega$, that can be related by a path $\{K^t\}_{t\in[0,1]}$ immersed in $\Omega$. 
	Then the following diagram commutes:
	\begin{equation}
		\begin{tikzpicture}[commutative diagrams/every diagram]
			\node (P0) at (90:2.3-0.5) {$\rho_{\Omega}$};
			\node (P1) at (180:2.3-0.5) {$\rho_{K^0}$} ;
			\node (P2) at (0:2.3-0.5) {$\rho_{K^1}$};
			\node (P1u) at (-2.0+0.5,+0.07) {};
			\node (P2u) at (2.0-0.5,+0.07) {};
			\node (P1d) at (-2.0+0.5,-0.07) {};
			\node (P2d) at (2.0-0.5,-0.07) {};
			\path[commutative diagrams/.cd, every arrow, every label]
			(P0) edge node[swap] {$\Tr_{\Omega\setminus K^0}$} (P1)
			(P0) edge node {$\Tr_{\Omega\setminus K^1}$} (P2)
			(P1u) edge node {$\Phi_{\{K^t\}}$} (P2u)
			(P2d) edge node {$\Phi_{\{K^{1-t}\}}$} (P1d);
		\end{tikzpicture}
	\end{equation}
where $\Phi_{\{K^t\}}:\Sigma(K^0)\to \Sigma(K^1)$ is the isomorphism associated with the path $\{K^t\}$.	
\end{Proposition}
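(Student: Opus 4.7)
The plan is to reduce the global statement to an induction on the discrete path $\{K^t\}_{t=i/N}$, using the fact that the isomorphism $\Phi_{\{K^t\}}$ from Theorem~\ref{thm:generalized_iso} is itself built step-by-step from elementary extensions and restrictions. So it suffices to prove commutativity of the diagram whenever $K^{t+1/N}$ and $K^t$ differ by a single elementary move; chaining these squares then gives the full result.

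For an elementary \emph{restriction} $K^{t+1/N}\subset K^t$, the map $\Phi$ is simply partial trace over the removed small ball, so commutativity reduces to associativity of partial trace on $\rho_\Omega$ and is automatic. The nontrivial case is an elementary \emph{extension} $K^t\subset K^{t+1/N}=K^t\cup b$, where $b$ is a small ball added at the (immersed) boundary. Here $\Phi$ is defined by merging: $\Phi_{\text{ext}}(\rho_{K^t})=\rho_{K^t}\bowtie\sigma^{[\mathfrak p]}_b$, using the reference state pulled back to $b$. I need to verify that $\rho_{K^{t+1/N}}\equiv \Tr_{\Omega\setminus K^{t+1/N}}\rho_\Omega$ is this merged state. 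Three inputs make this work: (i) $\Tr_{K^{t+1/N}\setminus K^t}\rho_{K^{t+1/N}}=\Tr_{\Omega\setminus K^t}\rho_\Omega$ is trivially the original starting marginal; (ii) on the ball $b$, since $\rho_\Omega\in\Sigma(\Omega)$ the marginal on $b$ agrees with $\sigma^{[\mathfrak p]}_b$ by the defining consistency condition of the information convex set (Definition~\ref{def:ICS}); and (iii) the Markov chain condition across the junction where $b$ is glued to $K^t$ is implied by axioms \textbf{A0}/\textbf{A1} applied to a local ball containing the gluing region, in the standard way used to establish the merging step of Theorem~\ref{thm:generalized_iso}. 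With these three facts, the uniqueness statement of the Merging Lemma~\ref{lemma:merging_lemma} (or the strengthened Merging Theorem~\ref{thm:merging_info_convex_set}) forces $\rho_{K^{t+1/N}}=\rho_{K^t}\bowtie\sigma^{[\mathfrak p]}_b$, which is exactly commutativity of the elementary square.

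The reverse arrow $\Phi_{\{K^{1-t}\}}$ is the inverse isomorphism, so commutativity in that direction follows from the same square diagrams read backward. Composing all $N$ squares gives the full commutative diagram relating $\Tr_{\Omega\setminus K^0}$ and $\Tr_{\Omega\setminus K^1}$.

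The main obstacle will be bookkeeping for the immersed intermediate regions $K^t$: because the immersion $i_t$ is only a local embedding, the symbol ``$\Tr_{\Omega\setminus K^t}\rho_\Omega$'' has to be interpreted via the lift $(\widehat{K^t}_+,\widehat{K^t},i_t)$ and the pulled-back reference data $\sigma^{[\mathfrak p]}$ from Definition~\ref{def:immersed_region}. In particular, the added ball $b$ may be glued at a place whose image $\mathfrak p(b)$ overlaps with $\mathfrak p(K^t)$, so the local ball used in step (iii) must be taken inside $\widetilde\Gamma(K^{t+1/N}_+)$ rather than inside $M$; this is precisely the setting in which the Generalized Isomorphism Theorem was proved, so the needed Markov conditions are available but must be invoked on the lifted side. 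Once this care is taken, no new analytic input is required beyond Lemma~\ref{lemma:merging_lemma}, Theorem~\ref{thm:merging_info_convex_set}, and Definition~\ref{def:ICS}.
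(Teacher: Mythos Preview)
Your step-by-step approach is correct and essentially automatic when every intermediate $K^t$ is \emph{embedded} in $\Omega$: the isomorphism theorem makes restriction $\Sigma(K^{t+1/N})\to\Sigma(K^t)$ bijective, so the two elements of $\Sigma(K^{t+1/N})$ with the same marginal on $K^t$ must coincide. The gap is precisely at the place you flag as ``bookkeeping'': when $K^t$ is only immersed, the symbol $\Tr_{\Omega\setminus K^t}\rho_\Omega$ has no meaning, because the lifted Hilbert space of $K^t$ is strictly larger than that of its image $\mathfrak{p}(K^t)$ (overlapping sites appear twice in the tensor product). There is no canonical state on the lifted $K^t$ induced by $\rho_\Omega$, so your inductive hypothesis cannot even be formulated at those steps. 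Saying that the merged state $\lambda_{K^t}$ is locally consistent with $\sigma^{[\mathfrak p]}$ is just the statement $\lambda_{K^t}\in\Sigma(K^t)$, which the Merging Theorem already gives and which does not pin down the state.

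The paper's proof confronts exactly this obstruction. For a single crossing it passes through the immersed envelope $K^+=K^0BC'$ but then restricts to an \emph{embedded} region $K^0BC$ containing $K^1$, and shows that $\lambda|_{K^0BC}=\rho_\Omega|_{K^0BC}$ by proving both are quantum Markov chains with matching marginals. The nontrivial Markov condition is $I(C:K^0|B)_{\rho_\Omega}=0$, which is bounded by $\Delta(B,C,D)_\sigma$ for a solid-torus partition $T=BCD$ (Fig.~\ref{fig:TEE-lemma}) and shown to vanish via Proposition~\ref{prop:solid_torus_partition_for_dmu}. This entropy identity is the genuine analytic input your argument is missing; it is not a consequence of \textbf{A0}/\textbf{A1} on a single ball containing the gluing junction, but uses the specific cylinder-like local geometry of the knot. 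This is why the Remark after the Proposition emphasizes that the proof ``does make use of a special property, namely that a knot looks like a solid cylinder locally,'' and why the analogous question for sphere eversion is left open: if your purely local inductive argument worked, it would apply equally to sphere shells, contradicting that remark.
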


\begin{remark}
   The application of this proposition is broad because it does not require $\Omega$ to be embedded in a ball. The proof does make use of a special property, namely that a knot looks like a solid cylinder locally (see Fig.~\ref{fig:untie}).
   Pushing the sphere shell through itself is trickier to analyze. For example, we do not know if, in general, a \emph{sphere eversion} \cite{smale1959classification, outside-in, levy1995making} which turns a sphere shell inside out, takes the vacuum sector $1\in \calC_{\rm{point}}$ back to itself.
\end{remark}
	
\begin{proof}
Without loss of generality, we shall consider $K^0$ and $K^1$ that are related by the basic move described below. (More general cases are proved by applying the same strategy multiple times.)
The basic move is to pass the solid torus through itself smoothly via a sequence of regions immersed in $\Omega$; in the depiction Fig.~\ref{fig:untie}, the basic move is to deform $S\subset K^0$ smoothly to the left and obtain $S'\subset K^1$.

\begin{figure}[h]
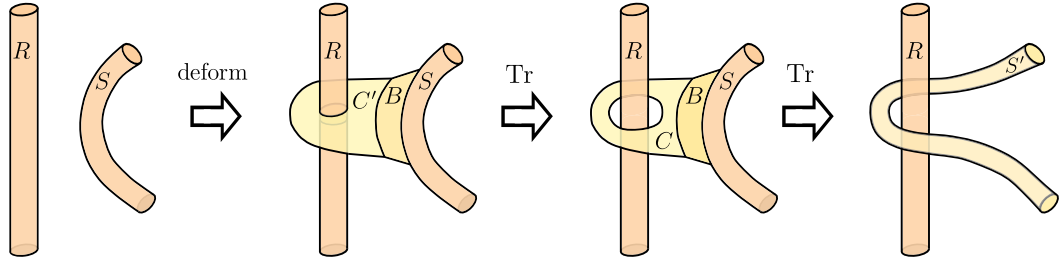

$$\parfig{0.9}{fig_solid_torus_passing}
$$
\caption{Illustrated is the strategy used in the proof for a basic move. All regions are within $\Omega$. The basic move is to deform $S$ to the left to obtain $S'$, where $RS= K^0$ and $RS' = K^1$. (Part of $R$ is not shown.) The first step is to enlarge  the knot $K^0$ into immersed region $K^{+}\equiv K^0BC'$, where $BC'$, $B$, and $C'$ are disks, respectively. Here we require $K^+$ to contain $K^1$.  The second step is to do a partial trace $K^+ \to K^0BC$, where $C$ is an annulus and $K^0BC$ is a subsystem containing $K^1$. The final step is to do a partial trace to reduce $ BCS\to S'$ and thus $K^0BC \to K^1$.
}	\label{fig:untie}
\end{figure}

The strategy of the proof is illustrated in Fig.~\ref{fig:untie}. Let the state on immersed region $K^+=K^0 BC'$, obtained by an extension of $\rho_{K^0}$, as $\lambda_{K^+}$. It is easy to see that the both $\rho_{K^0}$ and $\Phi_{\{K^t\}}(\rho_{K^0})$ are consistent with $\lambda_{K^+}$. We want to further show  $\Phi_{\{K^t\}}(\rho_{K^0})=\rho_{K^1}$. This follows from two quantum Markov chain conditions.

 First, for the state $\lambda$, we have
\begin{equation}
	I(C:K^0 \vert B)_{\lambda} \le I(C' : K^0 \vert B)_{\lambda}=0.
\end{equation}
Second, for the state $\rho_{\Omega}\in \Sigma(\Omega)$, we have
\begin{equation}
	I(C:K^0 \vert B)_{\rho}\le (S_{BC}+S_{CD}-S_B-S_D)_{\rho}=0,
\end{equation}
where  $BD$ is a torus shell which covers the boundary of $C$ such that  $D\cap K^0=\emptyset$. 

(Here, we have used the following result: 
Consider a partition of a solid torus $T$ into three regions $BCD$; see Fig.~\ref{fig:TEE-lemma}. Here $C$ is $T$ minus its thickened boundary, $BD$. 
$B$ is a ball connecting $C$ to the complement of $T$. 
Then, for the reference state $\sigma$, $\( S_{BC} + S_{CD} - S_B - S_D \)_{\sigma}= 0 $. This statement is proved in Appendix~\ref{appendix:TEE-exercises}; see Proposition~\ref{prop:solid_torus_partition_for_dmu}.)
\begin{figure}[h]
	\centering
	\begin{tikzpicture}
		\begin{scope}[yshift=0 cm, scale=1.7]
			\draw[black, thick] (-0.7,-0.1)--(-0.55,-0.18) .. controls +(-20:0.4) and +(200:0.4) ..(0.55,-0.18)-- (0.7,-0.1);
			\draw[black, thick] (-0.55,-0.18) .. controls +(20:0.4) and +(160:0.4) .. (0.55,-0.18);
			
			\draw[black, thick ] (0,-0.33) ellipse (1.8 and 1.05);
			
			% inside
			\fill[yellow!20!white, opacity=0.2, even odd rule]
			{(-0.55-0.5,-0.18) .. controls +(-60:0.45) and +(240:0.45) ..(0.55+0.5,-0.18) .. controls +(120:0.5) and +(60:0.5) ..(-0.55-0.5,-0.18) -- cycle}
			{(0,-0.32) ellipse (1.5 and 0.75)};
			
			\draw[black] (-0.55-0.5-0.05,-0.18+0.09)--(-0.55-0.5,-0.18) .. controls +(-60:0.45) and +(240:0.45) ..(0.55+0.5,-0.18)-- (0.55+0.5+0.05,-0.18+0.09);
			\draw[black] (-0.55-0.5,-0.18) .. controls +(60:0.5) and +(120:0.5) .. (0.55+0.5,-0.18);
			
			\draw[black] (0,-0.32) ellipse (1.5 and 0.75);
			
			% window
			\begin{scope}[xshift=1.4 cm, yshift=-0.10 cm, scale=1.8]
				\fill[green!50!yellow!25!white, opacity=0.6, even odd rule] (-1.2,-0.3) -- (-1.2-0.03, -0.3+0.14) arc(180:0: 0.1 and 0.05) --(-1.0,-0.3) arc (0:-180: 0.1 and 0.05);
				\draw[black] (-1.2,-0.3) -- (-1.2-0.03, -0.3+0.14) arc(180:0: 0.1 and 0.05) --(-1.0,-0.3) arc (0:-180: 0.1 and 0.05);
				\draw[black, dotted] (-1.2,-0.3) arc(180:0: 0.1 and 0.05);
				\draw[black, thick] (-1.2-0.03, -0.3+0.14) arc(180:-180: 0.1 and 0.05);
			\end{scope}
			% add letters
			\node[] (P) at (-0.6,-0.6)  {$B$};
			\node[] (P) at (0,-0.76)  {$C$};
			\node[] (P) at (0,0.58)  {$D$};
		\end{scope}
	\end{tikzpicture}
	\caption{
		A partition of the solid torus, $T=BCD$.
	}\label{fig:TEE-lemma}
\end{figure}
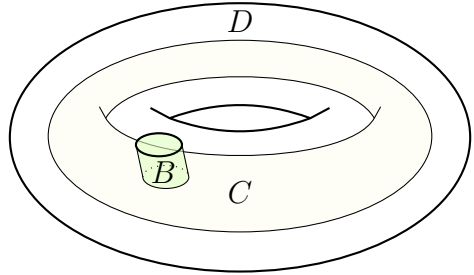

So, both $\rho$ and $\lambda$ are quantum Markov states with respect to the partition $C,B,K^0$. Moreover, they have the same marginals on $BC$ and $BK^0$. Therefore, $\rho_{BCK^0}=\lambda_{BCK^0}$. This implies the desired property and completes the proof.  
\end{proof}

\begin{corollary}\label{coro:anypq}
	The same map $\mathcal{T}_n$ is obtained if we replace the $(n,1)$ unknot with any knot that can be deformed to the $(n,1)$ unknot through a path within the solid torus.  
\end{corollary}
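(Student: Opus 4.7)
The plan is to deduce this directly from Proposition~\ref{prop:anypq} applied with $\Omega = T$. Denote the standard $(n,1)$ unknot appearing in Definition~\ref{def:spiral_map} by $K^0$, and let $K^1$ be the alternative knot, connected to $K^0$ by some path $\{K^t\}_{t \in [0,1]}$ within $T$ (which is in particular a path immersed in $\Omega = T$, as required by Proposition~\ref{prop:anypq}). Let $\{L^s\}_{s \in [0,1]}$ denote the fixed ``convention'' path in the ambient ball that takes $L^0 = K^0$ back to $L^1 = T$, and write $\Phi$ for the isomorphism of information convex sets associated with a given path.

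First I would verify that the alternative definition of $\mathcal{T}_n$ using $K^1$ has two steps: (i) trace out $T \setminus K^1$ from a state $\rho_T \in \Sigma(T)$; (ii) deform $K^1$ back to $T$ inside the ball along some specified path. The natural choice for step (ii) is the concatenation $\gamma$ of $\{K^t\}$ (from $K^1$ to $K^0$ inside $T$) with $\{L^s\}$ (from $K^0$ to $T$ inside the ball), which yields
$$\mathcal{T}_n^{(K^1)}(\rho_T) \;\equiv\; \Phi_\gamma\bigl(\Tr_{T\setminus K^1}\rho_T\bigr) \;=\; \Phi_{\{L^s\}} \circ \Phi_{\{K^t\}}\bigl(\Tr_{T\setminus K^1}\rho_T\bigr).$$
Now Proposition~\ref{prop:anypq}, with $\Omega = T$ and the immersed path $\{K^t\}$, gives $\Phi_{\{K^t\}}\bigl(\Tr_{T\setminus K^1}\rho_T\bigr) = \Tr_{T\setminus K^0}\rho_T$. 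Substituting this into the display produces $\Phi_{\{L^s\}}\bigl(\Tr_{T\setminus K^0}\rho_T\bigr)$, which is precisely the original $\mathcal{T}_n(\rho_T)$.

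There is no serious obstacle; the essential content of the corollary is already contained in Proposition~\ref{prop:anypq}, and what remains is bookkeeping about paths. The only minor subtlety is the independence of step (ii) from the choice of path back to $T$ in the ambient ball: any two such paths differ by a self-deformation of the solid torus within the ball, and the composed isomorphism is the identity by the (generalized) isomorphism theorem, so the map $\mathcal{T}_n^{(K^1)}$ is well-defined regardless of which convention is adopted. Alternatively, one simply \emph{defines} the alternative map using the path $\gamma$ above, and this concern evaporates entirely.
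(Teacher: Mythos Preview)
Your proposal is correct and takes essentially the same approach as the paper: the paper's proof is simply ``This follows from Proposition~\ref{prop:anypq} directly,'' and you have spelled out the bookkeeping behind that one line. One small caution: your first remark on the path-independence of step~(ii), that ``the composed isomorphism is the identity by the (generalized) isomorphism theorem,'' is not something the isomorphism theorem guarantees in general (nontrivial self-paths can induce nontrivial automorphisms); but your alternative of simply \emph{defining} the map via the concatenated path $\gamma$ cleanly sidesteps this and is all that the corollary requires.
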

	This implies, for example, that we obtain the same map if we use the $(n,p)$ torus knot instead of $(n,1)$ torus knot, where $n$ and $p$ are relatively prime.
 \begin{proof}
	This follows from Proposition~\ref{prop:anypq} directly.
\end{proof}

\begin{Proposition}[Product rule]\label{prop:product_rule}
	For any $m,n\in \IZ$,
\begin{equation} 
	\mathcal{T}_n \circ \mathcal{T}_m = \mathcal{T}_{mn} . 
	\end{equation}
\end{Proposition}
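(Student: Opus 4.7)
The plan is to express the composition $\mathcal{T}_n\circ \mathcal{T}_m$ as a single ``trace-and-deform'' operation and then apply Corollary~\ref{coro:anypq} to identify it with $\mathcal{T}_{mn}$. Fix a solid torus $T$ and a $(m,1)$ sub-solid-torus $T^{(m)}\subset T$; let $\Phi_m\colon \Sigma(T^{(m)}) \to \Sigma(T)$ be the isomorphism induced by a chosen deformation path. By definition, $\mathcal{T}_m(\rho_T) = \Phi_m(\Tr_{T\setminus T^{(m)}}\rho_T)$. To compute $\mathcal{T}_n(\mathcal{T}_m(\rho_T))$, choose a $(n,1)$ sub-solid-torus $T^{(n)}\subset T$ for the outer map. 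The key move is to \emph{pull back} $T^{(n)}$ through the deformation $\Phi_m$: the regular-homotopy path defining $\Phi_m$ can be performed with $T^{(n)}$ dragged along, yielding a sub-solid-torus $\widetilde{T}^{(n)}\subset T^{(m)}$ such that $\Phi_m$ restricts to an isomorphism $\Sigma(\widetilde{T}^{(n)}) \cong \Sigma(T^{(n)})$.

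First I would verify the factorization
\begin{equation}
\mathcal{T}_n\circ\mathcal{T}_m(\rho_T)
= \Psi\!\left( \Tr_{T\setminus \widetilde{T}^{(n)}}\,\rho_T \right),
\end{equation}
where $\Psi\colon \Sigma(\widetilde{T}^{(n)})\to\Sigma(T)$ is the isomorphism obtained by composing the (restricted) $\Phi_m$-deformation with the $\Phi_n$-deformation. The nested partial traces collapse because $\widetilde T^{(n)}\subset T^{(m)}\subset T$, so $\Tr_{T^{(m)}\setminus \widetilde T^{(n)}}\circ \Tr_{T\setminus T^{(m)}} = \Tr_{T\setminus \widetilde T^{(n)}}$, and commuting $\Phi_m$ past the outer partial trace is legitimate since $\Phi_m$ acts by a sequence of extensions/restrictions that preserve the $\widetilde T^{(n)}$ marginal. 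Next I would identify, by an explicit isotopy argument, that $\widetilde{T}^{(n)}\subset T$ is homotopic through immersed sub-solid-tori of $T$ to the $(mn,1)$-unknot $T^{(mn)}$: winding $n$ times around a curve that itself winds $m$ times yields total winding $mn$.

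Finally, Corollary~\ref{coro:anypq} (a direct consequence of Proposition~\ref{prop:anypq} applied with $\Omega=T$) asserts that the map defined by trace-and-deform depends only on the regular-homotopy class of the inner torus within $T$, not on its specific embedding or on the chosen path of deformation. Applying this to $\widetilde T^{(n)}$ and $T^{(mn)}$ gives $\Psi\circ \Tr_{T\setminus \widetilde T^{(n)}} = \Phi_{mn}\circ \Tr_{T\setminus T^{(mn)}} = \mathcal{T}_{mn}$, completing the argument. The degenerate cases $m=0$ or $n=0$ reduce to the vacuum-preservation property: if either factor is $\mathcal{T}_0$, the inner region is contained in a ball, so the composition produces the vacuum, matching $\mathcal{T}_{0}=\mathcal{T}_{mn}$. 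The main obstacle I expect is the bookkeeping in the first step: writing the composition of two deformation paths as a single regular homotopy that may pass through immersed intermediate regions (even when the initial and final regions are embedded), which is exactly why the \emph{generalized} isomorphism theorem, and not just its embedded version, is essential here.
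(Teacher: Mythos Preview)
Your proposal is correct and matches the paper's approach: both arguments identify the composite $\mathcal{T}_n\circ\mathcal{T}_m$ with a single trace-and-deform using the nested sub-torus (which the paper calls a satellite of the $(n,1)$ unknot by a $(m,1)$), observe that this nested curve is regularly homotopic within $T$ to the $(mn,1)$ unknot, invoke Corollary~\ref{coro:anypq}, and handle $m=0$ or $n=0$ separately via the vacuum. Your write-up is simply a more explicit unpacking of the paper's two-sentence proof; the one step you flag as needing care---commuting $\Phi_m$ past the outer partial trace---is exactly the content of Proposition~\ref{prop:anypq}, which you correctly invoke through its corollary.
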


\begin{proof} 
	When both $m$ and $n$ are nonzero, this follows from Corollary~\ref{coro:anypq} because the knot that is associated with the left-hand side of Eq.~(\ref{prop:product_rule}) is a satellite (un)knot of the $(n,1)$ unknot (by a $(m,1)$), and it is deformable to the $(mn,1)$ unknot through a path within the solid torus.
	
	When either $m$ or $n$ equals to $0$, the relation holds. This is because the map defined on either side of the equation maps $\lambda_T\in \Sigma(T)$ to the vacuum sector $\rho^1_T$.
\end{proof}

\begin{Proposition}\label{prop:T_n_extreme}
	Tracing out the complement of any knot or link $K$ contained in the solid torus $T$ maps extreme points of $\Sigma(T)$  to extreme points of $\Sigma(K)$.
\end{Proposition}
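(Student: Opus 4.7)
The plan is to read this off as an instance of Proposition~\ref{prop:extreme-point-restriction}. A thickened knot or link $K$ is, by hypothesis, a region embedded in the solid torus $T$, and $T$ is sectorizable (as noted in \S\ref{subsection:sectors}). Therefore the reduced density matrix $\Tr_{T \setminus K}\rho_T^\mu$ of any extreme point $\rho_T^\mu \in \Sigma(T)$ is an extreme point of $\Sigma(K)$, and moreover is the unique element of the sector it labels.

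To spell out how the general argument specializes: I would apply the Associativity Theorem~\ref{thm:associativity} along the thickened boundary $\partial K$, taking $\Omega_L = K$, $\Omega_R = \partial K \cup (T \setminus K)$, $C = \partial K$, $A_L = \emptyset$, and $A_R = \partial T$. Since $T$ is sectorizable, $N^\mu(T) = 1$, so
\begin{equation*}
1 \;=\; N^\mu(T) \;=\; \sum_{I \in \calC_{\partial K}} N^I(K)\, N_I^\mu(T \setminus K).
\end{equation*}
A sum of non-negative integers equaling one forces exactly one label $I$ to contribute with $N^I(K) = 1 = N_I^\mu(T \setminus K)$. That label is the sector of $\Tr_{T\setminus K}\rho_T^\mu$ (which lies in $\Sigma_I(K)$ by monotonicity of fidelity), and $N^I(K) = 1$ tells us $\Sigma_I(K)$ contains a single extreme point, which must therefore be the reduced state itself.

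No serious obstacle appears. The mildest point to verify is that the argument is indifferent to $K$ being a link rather than a knot: for an $n$-component link, $\partial K$ has $n$ toroidal components, and the product rule (Lemma~IV.2 of~\cite{Shi:2020rne}) gives $\calC_{\partial K} \cong \calC_{\text{Hopf}}^{\times n}$. The sum over $I$ is unchanged in form, and the counting argument above goes through verbatim.
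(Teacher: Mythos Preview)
Your proof is correct and matches the paper's first proof essentially verbatim: invoke Proposition~\ref{prop:extreme-point-restriction} with $S=T$ (sectorizable) and $\Omega=K$. The paper also supplies a second, more geometric proof that subdivides $T$ into concentric layers and uses the factorization property directly for torus knots, then extends to arbitrary knots and links via Proposition~\ref{prop:anypq}; your route via the associativity theorem is the cleaner one and is exactly what the paper records as its primary argument.
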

\begin{remark}
	The scope of this statement goes beyond the context of the spiral map because we do not need the solid torus to be contained in a ball. We shall provide two proofs based on different ideas.
\end{remark}

\begin{proof} (The 1st proof.)
Solid torus $T$ is a sectorizable region. Therefore, the desired answer follows from Proposition~\ref{prop:extreme-point-restriction}.
\end{proof}

\begin{figure}[h]
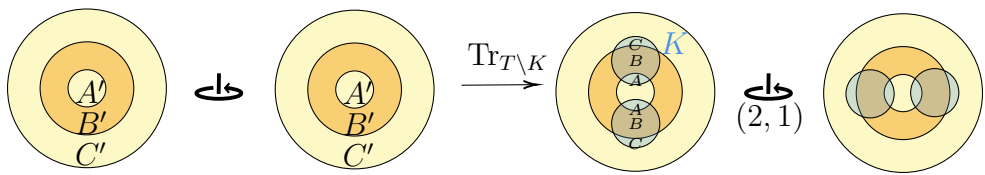

	\centering
%	$$
	\begin{tikzpicture}[x=0.75pt,y=0.75pt,yscale=-.6,xscale=.6]
		%uncomment if require: \path (0,390); %set diagram left start at 0, and has height of 390
		
		\node[] (A) at (193, 111) {\includegraphics[scale=1]{rotation}};
		\node[] (A) at (655, 111) {\includegraphics[scale=1]{rotation}};
		\draw (625,122) node [anchor=north west][inner sep=0.75pt]    {$(2,1)$};

		%Shape: Circle [id:dp25835167772468515] 
		\draw  [fill={rgb, 255:red, 248; green, 231; blue, 28 }  ,fill opacity=0.25 ] (15.38,112.13) .. controls (15.38,75.26) and (45.26,45.38) .. (82.13,45.38) .. controls (118.99,45.38) and (148.88,75.26) .. (148.88,112.13) .. controls (148.88,148.99) and (118.99,178.88) .. (82.13,178.88) .. controls (45.26,178.88) and (15.38,148.99) .. (15.38,112.13) -- cycle ;
		%Shape: Donut [id:dp5557284298977523] 
		\draw  [fill={rgb, 255:red, 245; green, 166; blue, 35 }  ,fill opacity=0.5 ,even odd rule] (66.22,112.13) .. controls (66.22,103.48) and (73.34,96.48) .. (82.13,96.48) .. controls (90.91,96.48) and (98.03,103.48) .. (98.03,112.13) .. controls (98.03,120.77) and (90.91,127.78) .. (82.13,127.78) .. controls (73.34,127.78) and (66.22,120.77) .. (66.22,112.13)(42.75,112.13) .. controls (42.75,90.52) and (60.38,73) .. (82.13,73) .. controls (103.87,73) and (121.5,90.52) .. (121.5,112.13) .. controls (121.5,133.73) and (103.87,151.25) .. (82.13,151.25) .. controls (60.38,151.25) and (42.75,133.73) .. (42.75,112.13) ;
		%Shape: Circle [id:dp8154888090131791] 
		\draw  [fill={rgb, 255:red, 248; green, 231; blue, 28 }  ,fill opacity=0.25 ] (240.38,113.13) .. controls (240.38,76.26) and (270.26,46.38) .. (307.13,46.38) .. controls (343.99,46.38) and (373.88,76.26) .. (373.88,113.13) .. controls (373.88,149.99) and (343.99,179.88) .. (307.13,179.88) .. controls (270.26,179.88) and (240.38,149.99) .. (240.38,113.13) -- cycle ;
		%Shape: Donut [id:dp8405567724673386] 
		\draw  [fill={rgb, 255:red, 245; green, 166; blue, 35 }  ,fill opacity=0.5 ,even odd rule] (291.23,113.13) .. controls (291.23,104.48) and (298.34,97.48) .. (307.13,97.48) .. controls (315.91,97.48) and (323.02,104.48) .. (323.02,113.13) .. controls (323.02,121.77) and (315.91,128.78) .. (307.13,128.78) .. controls (298.34,128.78) and (291.23,121.77) .. (291.23,113.13)(267.75,113.13) .. controls (267.75,91.52) and (285.38,74) .. (307.13,74) .. controls (328.87,74) and (346.5,91.52) .. (346.5,113.13) .. controls (346.5,134.73) and (328.87,152.25) .. (307.13,152.25) .. controls (285.38,152.25) and (267.75,134.73) .. (267.75,113.13) ;
		%Straight Lines [id:da17469013821448853] 
		\draw    (397,109.5) -- (458,109.02) ;
		\draw [shift={(460,109)}, rotate = 539.55] [color={rgb, 255:red, 0; green, 0; blue, 0 }  ][line width=0.75]    (10.93,-3.29) .. controls (6.95,-1.4) and (3.31,-0.3) .. (0,0) .. controls (3.31,0.3) and (6.95,1.4) .. (10.93,3.29)   ;
		%Shape: Circle [id:dp7246367401472196] 
		\draw  [fill={rgb, 255:red, 248; green, 231; blue, 28 }  ,fill opacity=0.25 ] (476.38,115.13) .. controls (476.38,78.26) and (506.26,48.38) .. (543.13,48.38) .. controls (579.99,48.38) and (609.88,78.26) .. (609.88,115.13) .. controls (609.88,151.99) and (579.99,181.88) .. (543.13,181.88) .. controls (506.26,181.88) and (476.38,151.99) .. (476.38,115.13) -- cycle ;
		%Shape: Donut [id:dp6500006990385957] 
		\draw  [fill={rgb, 255:red, 245; green, 166; blue, 35 }  ,fill opacity=0.5 ,even odd rule] (527.23,115.13) .. controls (527.23,106.48) and (534.34,99.48) .. (543.13,99.48) .. controls (551.91,99.48) and (559.03,106.48) .. (559.03,115.13) .. controls (559.03,123.77) and (551.91,130.78) .. (543.13,130.78) .. controls (534.34,130.78) and (527.23,123.77) .. (527.23,115.13)(503.75,115.13) .. controls (503.75,93.52) and (521.38,76) .. (543.13,76) .. controls (564.87,76) and (582.5,93.52) .. (582.5,115.13) .. controls (582.5,136.73) and (564.87,154.25) .. (543.13,154.25) .. controls (521.38,154.25) and (503.75,136.73) .. (503.75,115.13) ;
		%Shape: Circle [id:dp8968874145274575] 
		\draw  [fill={rgb, 255:red, 248; green, 231; blue, 28 }  ,fill opacity=0.25 ] (701.38,116.13) .. controls (701.38,79.26) and (731.26,49.38) .. (768.13,49.38) .. controls (804.99,49.38) and (834.88,79.26) .. (834.88,116.13) .. controls (834.88,152.99) and (804.99,182.88) .. (768.13,182.88) .. controls (731.26,182.88) and (701.38,152.99) .. (701.38,116.13) -- cycle ;
		%Shape: Donut [id:dp05337023446242872] 
		\draw  [fill={rgb, 255:red, 245; green, 166; blue, 35 }  ,fill opacity=0.5 ,even odd rule] (752.23,116.13) .. controls (752.23,107.48) and (759.34,100.48) .. (768.13,100.48) .. controls (776.91,100.48) and (784.03,107.48) .. (784.03,116.13) .. controls (784.03,124.77) and (776.91,131.78) .. (768.13,131.78) .. controls (759.34,131.78) and (752.23,124.77) .. (752.23,116.13)(728.75,116.13) .. controls (728.75,94.52) and (746.38,77) .. (768.13,77) .. controls (789.87,77) and (807.5,94.52) .. (807.5,116.13) .. controls (807.5,137.73) and (789.87,155.25) .. (768.13,155.25) .. controls (746.38,155.25) and (728.75,137.73) .. (728.75,116.13) ;
		%Shape: Circle [id:dp8849189146347445] 
		\draw  [fill={rgb, 255:red, 74; green, 144; blue, 226 }  ,fill opacity=0.25 ] (523.13,88.75) .. controls (523.13,77.57) and (532.19,68.5) .. (543.38,68.5) .. controls (554.56,68.5) and (563.63,77.57) .. (563.63,88.75) .. controls (563.63,99.93) and (554.56,109) .. (543.38,109) .. controls (532.19,109) and (523.13,99.93) .. (523.13,88.75) -- cycle ;
		%Shape: Circle [id:dp5105371469877241] 
		\draw  [fill={rgb, 255:red, 74; green, 144; blue, 226 }  ,fill opacity=0.25 ] (523.13,141.75) .. controls (523.13,130.57) and (532.19,121.5) .. (543.38,121.5) .. controls (554.56,121.5) and (563.63,130.57) .. (563.63,141.75) .. controls (563.63,152.93) and (554.56,162) .. (543.38,162) .. controls (532.19,162) and (523.13,152.93) .. (523.13,141.75) -- cycle ;
		%Shape: Circle [id:dp39754488171911384] 
		\draw  [fill={rgb, 255:red, 74; green, 144; blue, 226 }  ,fill opacity=0.25 ] (720.13,116.75) .. controls (720.13,105.57) and (729.19,96.5) .. (740.38,96.5) .. controls (751.56,96.5) and (760.63,105.57) .. (760.63,116.75) .. controls (760.63,127.93) and (751.56,137) .. (740.38,137) .. controls (729.19,137) and (720.13,127.93) .. (720.13,116.75) -- cycle ;
		%Shape: Circle [id:dp31238712954821435] 
		\draw  [fill={rgb, 255:red, 74; green, 144; blue, 226 }  ,fill opacity=0.25 ] (774.5,115.75) .. controls (774.5,104.57) and (783.57,95.5) .. (794.75,95.5) .. controls (805.93,95.5) and (815,104.57) .. (815,115.75) .. controls (815,126.93) and (805.93,136) .. (794.75,136) .. controls (783.57,136) and (774.5,126.93) .. (774.5,115.75) -- cycle ;
		
		% Text Node
		\draw (70,155) node [anchor=north west][inner sep=0.75pt]    {$\scriptsize{C'}$};
		% Text Node
		\draw (70,103) node [anchor=north west][inner sep=0.75pt]    {$\scriptsize{A'}$};
		% Text Node
		\draw (70,129) node [anchor=north west][inner sep=0.75pt]    {$\scriptsize{B'}$};
		% Text Node
		\draw (295,156) node [anchor=north west][inner sep=0.75pt]    {$\scriptsize{C'}$};
		% Text Node
		\draw (295,104) node [anchor=north west][inner sep=0.75pt]    {$\scriptsize{A'}$};
		% Text Node
		\draw (295,130) node [anchor=north west][inner sep=0.75pt]    {$\scriptsize{B'}$};
		% Text Node
		\draw (400,72) node [anchor=north west][inner sep=0.75pt]   [align=left] {Tr$_{T \setminus K}$};
		% Text Node
		\draw (535,99) node [anchor=north west][inner sep=0.75pt]  [font=\tiny]  {$A$};
		% Text Node
		\draw (535,123) node [anchor=north west][inner sep=0.75pt]  [font=\tiny]  {$A$};
		% Text Node
		\draw (535,82) node [anchor=north west][inner sep=0.75pt]  [font=\tiny]  {$B$};
		% Text Node
		\draw (535,135) node [anchor=north west][inner sep=0.75pt]  [font=\tiny]  {$B$};
		% Text Node
		\draw (535,152) node [anchor=north west][inner sep=0.75pt]  [font=\tiny]  {$C$};
		% Text Node
		\draw (535,69) node [anchor=north west][inner sep=0.75pt]  [font=\tiny]  {$C$};

		\draw (562,63) node [anchor=north west][inner sep=0.75pt]    {$\textcolor[rgb]{0.29,0.56,0.89}{K}$};
		
	\end{tikzpicture}
%	$$
	\caption{A partition of the solid torus $T$ and its intersection with a thickened torus knot $K$ contained in $T$.  
		The case where $K$ is a $(2,1)$ unknot is shown.}
	\label{fig:extreme-point-preservation}
\end{figure}

\begin{proof} (The 2nd proof.)
The strategy is to subdivide the solid torus into three concentric regions, ($T=A'B'C'$ as in Fig.~\ref{fig:extreme-point-preservation}), by which we shall provide a proof for torus knots. The case of more general knots and links then follows from Proposition~\ref{prop:anypq}.

First, by the factorization property of extreme points, $I(A':C')_\rho =0$ if $\rho_T$ is an extreme point of $\Sigma(T)$.
This implies, by SSA, that for any subsets $ A \subset A', C \subset C'$ $I(A:C)_{\rho}=0$ as well.

Now consider a torus knot. By definition, it can be embedded in an unknotted torus surface. Choose the torus to lie within $B'$ and let the thickened torus knot be $K$; see Fig.~\ref{fig:extreme-point-preservation}. We require $K$ to be thick enough so that
 $A = A' \cap K, C= C' \cap K$ are of the same topology as $K$ (namely, they can be deformed back to $K$ by a sequence of extensions). 
 We conclude that the reduced density matrix $\rho_K$ is an extreme point of $\Sigma(K)$; this is because $I(A:C)_{\rho}=0$ is a necessary and sufficient condition for $\rho_K$ to be an extreme point of $\Sigma(K)$. 
\end{proof}

\begin{corollary}\label{coro:T_n_extreme_spiral}
	 The spiral map	($\mathcal{T}_n$, $\forall n\in \mathbb{Z}_{>0}$) maps extreme points to extreme points.
\end{corollary}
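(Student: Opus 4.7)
The plan is to decompose the spiral map $\mathcal{T}_n$ into its two defining steps and track extreme points through each. Given an extreme point $\rho^\mu_T \in \text{ext}(\Sigma(T))$, the first step produces the reduced density matrix $\Tr_{T \setminus K_{n,1}} \rho^\mu_T$ on the thickened $(n,1)$ unknot $K_{n,1} \subset T$; by Proposition~\ref{prop:T_n_extreme} (with $K = K_{n,1}$), this reduced state is an extreme point of $\Sigma(K_{n,1})$.

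The second step applies the isomorphism $\Sigma(K_{n,1}) \cong \Sigma(T)$ induced by a path (in general through immersed intermediate regions) that unwinds the $(n,1)$ unknot back to the standard solid torus. By the Generalized Isomorphism Theorem~\ref{thm:generalized_iso}, this isomorphism preserves the convex set structure, and in particular maps extreme points bijectively to extreme points. Composing the two steps yields $\mathcal{T}_n(\rho^\mu_T) \in \text{ext}(\Sigma(T))$, which is exactly the claim of the corollary.

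I do not expect any real obstacle: both inputs are already established in the excerpt. The only point worth being careful about is that the homotopy unwinding $K_{n,1}$ inside the ball does require immersed intermediate regions (the torus has to pass through itself), so one must invoke the generalized version of the isomorphism theorem rather than its embedded predecessor. The preservation-of-extreme-points property is part of the three structural properties (convex structure, entropy differences, and fidelity) that the isomorphism preserves, as stated right after Theorem~\ref{thm:generalized_iso}.

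Because the corollary is then immediate, the proof can be written in two or three lines: state the factorization $\mathcal{T}_n = \Phi \circ \Tr_{T \setminus K_{n,1}}$, cite Proposition~\ref{prop:T_n_extreme} for the first factor, and cite Theorem~\ref{thm:generalized_iso} for the second. One may optionally remark that this gives the induced map $t_n : \calC_{\text{flux}} \to \calC_{\text{flux}}$ on labels via $\mathcal{T}_n(\rho^\mu_T) = \rho^{t_n(\mu)}_T$, which is exactly what equation~\eqref{eq:spiral_on_fluxes} requires.
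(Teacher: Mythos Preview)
Your proposal is correct and matches the paper's approach exactly: the paper's proof simply observes that the only step in Definition~\ref{def:spiral_map} that could fail to preserve extreme points is the partial trace, and that step is handled by Proposition~\ref{prop:T_n_extreme}. One small inaccuracy in your commentary: since the $(n,1)$ torus knot is always an unknot, the unwinding path can in fact be taken through embedded regions in the ball (no immersion is required), though invoking the generalized isomorphism theorem is harmless.
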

\begin{proof}
	Looking back at the definition of the spiral map (Definition~\ref{def:spiral_map}), the only danger is the step involving the partial trace.  This step is safe due to Proposition~\ref{prop:T_n_extreme}.
\end{proof}

\begin{figure}[h]
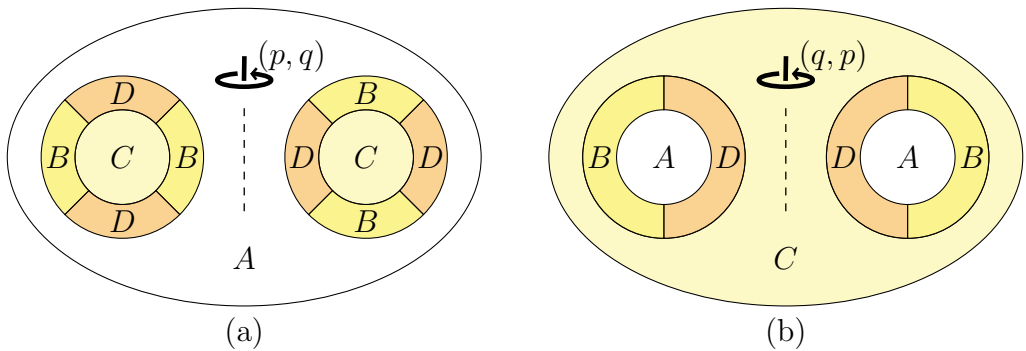

	\centering
	\begin{tikzpicture}
		\begin{scope}[scale=0.9]
		\begin{scope}
			\node[] (A) at (0.00,1.25) {\includegraphics[scale=1.2]{rotation}};
			\draw[dashed, line width=0.5 pt] (0,-0.8) -- (0,0.8);
			\draw (0.7,1.5) node []   {$(p,q)$};
			%left
			\draw[fill={rgb, 255:red, 248; green, 231; blue, 28}, fill opacity=0.25]{(-1.8,0) circle (0.7)};
			\draw[fill={rgb, 255:red, 248; green, 231; blue, 28}, fill opacity=0.5]{(-1.8,0)+(45:0.7)--+(45:1.2)arc(45:-45:1.2)--+(-45:-0.5)arc(-45:45:0.7)};
			\draw[fill={rgb, 255:red, 248; green, 231; blue, 28}, fill opacity=0.5]{(-1.8,0)+(135:0.7)--+(135:1.2)arc(135:225:1.2)--+(45:0.5)arc(225:135:0.7)};
			\draw[fill={rgb, 255:red, 245; green, 166; blue, 35}, fill opacity=0.5]{(-1.8,0)+(135:0.7)--+(135:1.2)arc(135:45:1.2)--+(45:-0.5)arc(45:135:0.7)};
			\draw[fill={rgb, 255:red, 245; green, 166; blue, 35}, fill opacity=0.5]{(-1.8,0)+(-135:0.7)--+(-135:1.2)arc(-135:-45:1.2)--+(-45:-0.5)arc(-45:-135:0.7)};
			%right	
			\draw[fill={rgb, 255:red, 248; green, 231; blue, 28}, fill opacity=0.25]{(1.8,0) circle (0.7)};
			\draw[fill={rgb, 255:red, 245; green, 166; blue, 35}, fill opacity=0.5]{(1.8,0)+(45:0.7)--+(45:1.2)arc(45:-45:1.2)--+(-45:-0.5)arc(-45:45:0.7)};
			\draw[fill={rgb, 255:red, 245; green, 166; blue, 35}, fill opacity=0.5]{(1.8,0)+(135:0.7)--+(135:1.2)arc(135:225:1.2)--+(45:0.5)arc(225:135:0.7)};
			\draw[fill={rgb, 255:red, 248; green, 231; blue, 28 }, fill opacity=0.5]{(1.8,0)+(135:0.7)--+(135:1.2)arc(135:45:1.2)--+(45:-0.5)arc(45:135:0.7)};
			\draw[fill={rgb, 255:red, 248; green, 231; blue, 28 }, fill opacity=0.5]{(1.8,0)+(-135:0.7)--+(-135:1.2)arc(-135:-45:1.2)--+(-45:-0.5)arc(-45:-135:0.7)};
			% big circle
			\draw[]{(0,0) ellipse (3.5 and 2.2)};
			% add letters
			\node[] (C) at (-1.8,0) {{$C$}};
			\node[] (B) at (-1.8+0.95,0) {{$B$}};
			\node[] (D) at (-1.8-0.95,0) {{$B$}};
			\node[] (B) at (-1.8,+0.95) {{$D$}};
			\node[] (D) at (-1.8,-0.95) {{$D$}};
			
			\node[] (D) at (0,-1.5) {{$A$}};
			\node[] (D) at (0,-2.6) {{(a)}};
			
			\node[] (C) at (1.8,0) {{$C$}};
			\node[] (B) at (1.8+0.95,0) {{$D$}};
			\node[] (D) at (1.8-0.95,0) {{$D$}};
			\node[] (B) at (1.8,+0.95) {{$B$}};
			\node[] (D) at (1.8,-0.95) {{$B$}};
		\end{scope}
		
		\begin{scope}[xshift=8 cm]
			% big circle
			\draw[fill={rgb, 255:red, 248; green, 231; blue, 28}, fill opacity=0.25, even odd rule]{(1.8,0) circle (1.2)}{(-1.8,0) circle (1.2)}{(0,0) ellipse (3.5 and 2.2)};
			\node[] (A) at (0.00,1.25) {\includegraphics[scale=1.2]{rotation}};
			\draw[dashed, line width=0.5 pt] (0,-0.8) -- (0,0.8);
			\draw (0.7,1.5) node []   {$(q,p)$};
			\draw[fill={rgb, 255:red, 248; green, 231; blue, 28}, fill opacity=0.5]{(-1.8,0)+(90:0.7)--+(90:1.2)arc(90:270:1.2)--+(90:0.5)arc(270:90:0.7)};
			\draw[fill={rgb, 255:red, 245; green, 166; blue, 35}, fill opacity=0.5]{(-1.8,0)+(90+180:0.7)--+(90+180:1.2)arc(90+180:270+180:1.2)--+(90+180:0.5)arc(270+180:90+180:0.7)};
			\draw[fill={rgb, 255:red, 245; green, 166; blue, 35}, fill opacity=0.5]{(1.8,0)+(90:0.7)--+(90:1.2)arc(90:270:1.2)--+(90:0.5)arc(270:90:0.7)};
			\draw[fill={rgb, 255:red, 248; green, 231; blue, 28}, fill opacity=0.5]{(1.8,0)+(90+180:0.7)--+(90+180:1.2)arc(90+180:270+180:1.2)--+(90+180:0.5)arc(270+180:90+180:0.7)};
			% add letters
			\node[] (C) at (-1.8,0) {{$A$}};
			\node[] (B) at (-1.8+0.95,0) {{$D$}};
			\node[] (D) at (-1.8-0.95,0) {{$B$}};
			
			\node[] (C) at (1.8,0) {{$A$}};
			\node[] (B) at (1.8+0.95,0) {{$B$}};
			\node[] (D) at (1.8-0.95,0) {{$D$}};

			\node[] (D) at (0,-1.5) {{$C$}};
			\node[] (D) at (0,-2.6) {{(b)}};
		\end{scope}
			\end{scope}
	\end{tikzpicture}
\caption{ Solid torus $T=BCD$ is a subsystem of a ball, where $BD$ forms the thickened boundary of $T$ with rotation type $(p,q)$. Here $(p,q)$ are coprime integers. Completion on a sphere $S^3=ABCD$ allows us to consider a pure reference state $\vert \psi_{S^3}\rangle$ and a ``dual" solid torus $\widetilde{T}=BAD$, where the rotation type, show in (b) is $(q, p)$. The illustration is accurate when $(p,q)=(2,1)$.}	
\label{fig:pq_on_sphere}
\end{figure}

\begin{lemma}\label{lemma:ref-state-SSA-saturation}
	Consider a solid torus $T$ that is a subsystem of a ball. The reference state of the ball is $\sigma$. Consider partition $T=BCD$, 
	according to Fig.~\ref{fig:pq_on_sphere}(a). Then
	\begin{equation}
		\( S_{BC} + S_{CD} - S_B - S_D\)_\sigma = 0,\quad \textrm{if} \label{eq:Bq1}
	\end{equation}
	\begin{enumerate}
		\item  $(p,q)=(1,q)$ for any integer $q$;
		\item $(p,q)=(p,1)$ for any integer $p$.
	\end{enumerate} 
\end{lemma}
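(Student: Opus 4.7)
The strategy is to (i) use sphere completion and purity to collapse the two cases into one, and (ii) prove the remaining case by a wedge decomposition together with axiom {\bf A1} and the merging theorem.

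\emph{Step 1 (duality via sphere completion).} Apply Lemma~\ref{lemma:sphere_completion} to extend $\sigma$ on the ball to a pure reference state $\vert\psi\rangle$ on $S^3 = ABCD$, as in Fig.~\ref{fig:pq_on_sphere}. Purity yields $S_X = S_{X^c}$ for every $X \subset S^3$; in particular
$S_B = S_{ACD}$, $S_D = S_{ABC}$, $S_{BC}=S_{AD}$, and $S_{CD}=S_{AB}$. Substituting into the definition of $\Delta$ gives
\begin{equation}
\Delta(B,C,D)_\sigma \;=\; \Delta(B,A,D)_\psi,
\end{equation}
which identifies the $\Delta$ for the rotation-type-$(p,q)$ partition of $T$ with that for the rotation-type-$(q,p)$ partition of the dual solid torus $\widetilde{T} = BAD$. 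This duality exchanges cases (1) and (2) of the lemma, so it suffices to prove either; I focus on case (2), namely $(p,q) = (p,1)$ for every integer $p$.

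\emph{Step 2 (wedge-and-merge for $(p,1)$).} For rotation type $(p,1)$ there is no twist around the minor circle of $T$, so every 2D cross-sectional disk of $T$ transverse to the core circle carries the same disk-partition, matching the {\bf A1} configuration (an inner disk $C$ flanked by complementary regions assigned to $B$ and $D$). Cover $T$ by a sequence of overlapping ball-shaped angular wedges $\{W_i\}_{i=1}^N$, each thin enough that the induced partition on $W_i$ is precisely the {\bf A1} partition of a ball. Axiom {\bf A1} then gives $I(B_i : D_i \vert C_i)_\sigma = 0$ on each wedge. Axioms {\bf A0} and {\bf A1} on the overlaps supply the quantum Markov conditions required by the merging theorem (Theorem~\ref{thm:merging_info_convex_set}), so iteratively merging around the major circle propagates the local QMC statements into the global identity $I(B : D \vert C)_\sigma = 0$, which is the desired saturation $\Delta(B,C,D)_\sigma = 0$.

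\emph{Main obstacle.} The hardest point is closing up the wedge chain when $p \geq 2$: the wedges must trace a sequence that winds $p$ times around the vertical axis before returning to the starting wedge, and one must verify that the merging conditions remain consistent along this closing cycle. This works precisely because the $B$-$D$ boundary curve on the surface of $T$ is an unknot whenever $q=1$ (equivalently $p=1$ after dualizing), so no topological obstruction is encountered; the base case $p=1$ reduces to the standard revolution argument that lifts axiom {\bf A1} from a ball to the solid torus, and induction on $p$ (or, equivalently, one full trip around the major circle of the SSA chain) closes the general case.
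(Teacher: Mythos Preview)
Your Step~1 (duality via sphere completion and purity) is correct and parallels the paper's setup. Step~2, however, rests on a confusion between two distinct entropy combinations: the quantity to be proved zero is $\Delta(B,C,D)=S_{BC}+S_{CD}-S_B-S_D$, whereas you repeatedly invoke and aim for $I(B:D\vert C)=S_{BC}+S_{CD}-S_C-S_{BCD}$. These differ by $S_C+S_{BCD}-S_B-S_D$, which does not vanish in general. In particular, axiom~{\bf A1} asserts $\Delta(B_i,C_i,D_i)_\sigma=0$ on a ball, \emph{not} $I(B_i:D_i\vert C_i)_\sigma=0$; combining {\bf A0} (on the concentric partition of the same ball) with {\bf A1} one finds $I(B_i:D_i\vert C_i)_\sigma=S_{B_i}+S_{D_i}-S_{B_iD_i}=I(B_i:D_i)_\sigma$, which is generically positive since $B_i$ and $D_i$ share an interface. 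So the local quantum-Markov input your merging scheme requires is simply not supplied by the axioms, and the wedge-and-merge argument never gets started. Even granting the endpoint, $I(B:D\vert C)_\sigma=0$ on $T$ is not the statement of the lemma, and you give no bridge from it to $\Delta(B,C,D)_\sigma=0$.

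The paper's route is different and sidesteps this issue. After purity it rewrites the goal as $I(A:C\vert B)_{|\psi_{S^3}\rangle}=0$ on $S^3=ABCD$ --- note the roles: $A$ and $C$ conditioned on $B$, not $B$ and $D$ conditioned on $C$. It then proves this by a chain of elementary extensions: for $(1,q)$ one grows $B$ to $BC$ ball by ball; for $(p,1)$ one grows $B$ to $AB$ ball by ball. Each step preserves the relevant entropy difference $S_{C\,\cdot}-S_{\cdot}$ by the same mechanism that underlies the isomorphism theorem (as in Lemma~D.2 of~\cite{shi2020fusion}). What singles out $p=1$ or $q=1$ is precisely that one of these two extension paths is topologically unobstructed (each added ball meets the growing region along a disk). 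Your ``closing the wedge chain when $p\ge 2$'' discussion does not identify this obstruction.
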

\begin{remark}
	The basic intuition is that the $(1,q)$ and $(p,1)$ cases are dual to each other. This is why they can be solved simultaneously. In fact, the statement is true for all coprime $(p,q)$; the proof is more subtle, and it is presented in the appendix (Proposition~\ref{prop:the-p,q}).
\end{remark}

\begin{proof}
	Because of the sphere completion lemma (Lemma~\ref{lemma:sphere_completion}), we only need to prove the statement \eqref{eq:Bq1} for the case where the solid torus is embedded in $S^3$. This case is  illustrated in Fig.~\ref{fig:pq_on_sphere}(a), where $A$ is the complement of the solid torus $BCD$ on $S^3$.  
	Let $\ket{\psi_{S^3}}$ be the global reference state on $S^3$.
	 
	Because $\ket{\psi_{S^3}}$ is pure, Eq.~\eqref{eq:Bq1} is equivalent to the statement
	\begin{equation}
		I(A:C|B)_{\ket{\psi_{S^3}}} = 0 . \label{eq:Iacb}
	\end{equation}
	There are two views of the same subsystems, related by a rotation of the 3-sphere, namely Fig.~\ref{fig:pq_on_sphere}(a) and its ``dual" view Fig.~\ref{fig:pq_on_sphere}(b).
	
	When $(p, q)=(1, q)$, there is a way to smoothly deform $B$ to $BC$ by a sequence of enlargements (as in the proof of Lemma D.2 of \cite{shi2020fusion}). When $(p,q)=(p,1)$, instead, there is a way to smoothly deform $B$ to $AB$ by a sequence of enlargements. This proves Eq.~(\ref{eq:Iacb}). The details of the enlargement are reviewed in the next paragraph for $B\to BA$ of the case $(p,q)=(p,1)$.

	Consider a sequence of subsystems 
	$\{ BA_i\}_{i=1}^M$ where $A_0 = \emptyset$ and $A_M = A$.  For any $i \in \{0, 1, \cdots M-1\}$, $\delta A_{i+1} \equiv A_{i+1} \setminus A_i$ is a small ball attached to $BA_i$ in a way that does not change the topology.  
	Then
	for all $i\in \{0, 1, \cdots M-1\}$, 
	\begin{equation} 
		(S_{CBA_i} - S_{BA_i})_{\vert \psi_{S^3}\rangle }= (S_{CBA_{i+1}} - S_{BA_{i+1}})_{\vert \psi_{S^3}\rangle} .
	\end{equation}
	Therefore $I(A:C|B)_{\ket{\psi_{S^3}}} = 0 $.
	This completes the proof.
\end{proof}

\begin{Proposition} [Monotonicity of the spiral map] \label{prop:monotonicity-of-the-spiral-map}
	For any pure flux sector $\mu\in \calC_{\rm{flux}}$ and any positive integer $p$, 
	\begin{equation}
		d_\mu \geq d_{t_p(\mu)} .
	\end{equation}
\end{Proposition}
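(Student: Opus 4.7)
The plan is to combine the SSA-saturation of the reference state on the $(p,1)$-type partition of the solid torus (Lemma~\ref{lemma:ref-state-SSA-saturation}) with weak monotonicity of von Neumann entropy for the extreme point $\rho^\mu_T$, and then identify the resulting entropy differences with quantum dimensions via the generalized isomorphism theorem.

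Concretely, I would take the partition $T=BCD$ of Fig.~\ref{fig:pq_on_sphere}(a) with $(p,1)$-rotation type, so that $BD=\partial T$ carries the $(p,1)$ boundary pattern and $C=T\setminus\partial T$. Lemma~\ref{lemma:ref-state-SSA-saturation} then gives $(S_{BC}+S_{CD}-S_B-S_D)_\sigma=0$, while weak monotonicity (SSA in the form $S_{AB}+S_{BC}\ge S_A+S_C$) applied to $\rho^\mu_T$ gives $(S_{BC}+S_{CD}-S_B-S_D)_{\rho^\mu_T}\ge 0$. Writing $\Delta S_X\equiv S(\rho^\mu_X)-S(\sigma_X)$ and subtracting the two relations yields the inequality $\Delta S_{BC}+\Delta S_{CD}-\Delta S_B-\Delta S_D\ge 0$.

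Next I would identify the sector label of $\rho^\mu_T$ on each subregion and read off the $\Delta S_X$ as quantum dimensions. Both $BC$ and $CD$ are solid tori that deform to $T$ by sequences of extensions (simply filling in the removed strip on $\partial T$), so the restrictions of $\rho^\mu_T$ remain in the flux sector $\mu$, giving $\Delta S_{BC}=\Delta S_{CD}=2\ln d_\mu$ by Definition~\ref{def:quantum_dim_added}. On the other hand, $B$ and $D$ are each connected solid tori whose core curves are $(p,1)$ torus knots on $\partial T$; pushing these cores inward realizes regular homotopies within $T$ from $B$ and $D$ to the solid-torus neighborhood of the $(p,1)$ unknot $T'$ that defines $\mathcal{T}_p$. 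By Proposition~\ref{prop:T_n_extreme} the restrictions $\rho^\mu_B,\rho^\mu_D$ are extreme, and by Theorem~\ref{thm:generalized_iso} their sector labels agree with that of $\rho^\mu_{T'}$, namely $t_p(\mu)$, so $\Delta S_B=\Delta S_D=2\ln d_{t_p(\mu)}$. Substituting into the inequality above yields $4\ln d_\mu-4\ln d_{t_p(\mu)}\ge 0$, which is the desired conclusion.

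The main technical obstacle will be the geometric step: verifying that $B$ and $D$ really do admit regular homotopies within $T$ to the spiral map's $(p,1)$ unknot $T'$ with matching framing, so that the induced label identification is genuinely $t_p(\mu)$ rather than some automorphic relabeling such as $t_{-p}(\mu)=t_p(\bar\mu)$ (fortunately the quantum dimension is insensitive to this ambiguity since $d_{\bar\nu}=d_\nu$). This amounts to exhibiting an explicit path of embedded (or, if necessary, immersed) regions connecting $B$ and $D$ to $T'$ inside $T$ and then invoking the generalized isomorphism theorem; the construction of the $(p,1)$-rotation partition in Fig.~\ref{fig:pq_on_sphere}(a) is essentially designed precisely so that such a path exists, so once this identification is pinned down the inequality is purely entropic and falls out immediately.
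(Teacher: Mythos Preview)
Your proposal is correct and follows essentially the same approach as the paper's proof: apply weak monotonicity $\Delta(B,C,D)\ge 0$ to $\rho^\mu_T$ on the $(p,1)$-type partition of Fig.~\ref{fig:pq_on_sphere}(a), subtract the reference-state saturation from Lemma~\ref{lemma:ref-state-SSA-saturation}, and identify the four entropy differences as $2\ln d_\mu$ (for $BC,CD$) and $2\ln d_{t_p(\mu)}$ (for $B,D$). The paper's write-up is terser about the geometric identification of the label on $B$ and $D$, but your more explicit justification via regular homotopy to the $(p,1)$ unknot defining $\mathcal T_p$ (and your observation that any residual $t_p$ vs.~$t_{-p}$ ambiguity is harmless since $d_{\bar\nu}=d_\nu$) is a welcome clarification rather than a different argument.
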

\begin{proof} 
Consider a solid torus $T$ embedded in a ball.
SSA says that for any state on $BCD$, the combination of entropies
$
(S_{BC} + S_{CD} - S_B - S_D) \geq 0.
$
By applying this result to a partition  $T=BCD$ of type $(p,1)$, as is shown in Fig.~\ref{fig:pq_on_sphere}, and using the fact that $(S_{BC} + S_{CD} - S_B - S_D)_{\sigma} = 0$ for this partition (Lemma~\ref{lemma:ref-state-SSA-saturation}), we arrive at
\begin{equation}
	\begin{aligned}
		0\le&(S_{BC} + S_{CD} - S_B - S_D)_{\rho^\mu}-(S_{BC} + S_{CD} - S_B - S_D)_{\sigma}\\
		=&(2 \ln d_{\mu} + 2 \ln d_{\mu}- 2 \ln d_{t_p(\mu)}-2 \ln d_{t_p(\mu)})\\
		=& 4 ( \ln d_{\mu} - \ln d_{t_p(\mu)}).
	\end{aligned}
\end{equation}
Here $\mu\in \calC_{\text{flux}}$ and $\rho^{\mu}_T$ is an extreme point of  $\Sigma(T)$.
To arrive at the second line, we have used the fact that the reduced density matrices of $\rho^{\mu}_T$ on $B$ and $D$ each contribute $2\ln d_{t_p(\mu)}$ entropy difference with the reference state. Therefore, $d_{\mu}\ge d_{t_p(\mu)}$ and this completes the proof.
\end{proof}

\subsection{Generalizations}
\def\torusshell{\mathbb{T}}
A similar class of maps can be generalized to other regions.  We provide two such examples.

The first generalization is to a genus $g$ handlebody $X_g$.
We can define a map from $\Sigma(X_g)$ to itself.
A first step is to take the trace over $X \setminus X_-$ in this figure (for $g=2$):
\begin{equation}
		\parfig{.55}{fig_genus2_spiral}  
\end{equation}
Then deform $X_-$ along a specific path back to $X$.  This map takes extreme points to extreme points by Proposition~\ref{prop:extreme-point-restriction}.

The second generalization involves the torus shell $\torusshell$; we require the torus shell to be embedded in a ball. (See Fig.~\ref{fig:spiral-map-shell} for the illustration of the idea.)
For each pair of coprime positive integers $(p,q)$ we define
\begin{equation}\label{eq:spiral_shell}
\mathcal{T}_{(p,q)} : \Sigma(\torusshell) \to \Sigma(T)
\end{equation}
by:
\begin{enumerate}
\item Trace out the complement of a thickened $(p,q)$ torus knot contained within the torus shell $\torusshell$.
\item Use the generalized isomorphism theorem to deform the knot to a reference solid torus $T$.
\end{enumerate}

Similar to the spiral maps, $\mathcal{T}_{(p,q)}$ maps extreme points to extreme points (by Proposition~\ref{prop:extreme-point-restriction}). Thus, it induces a map
\begin{equation}\label{eq:spiral_shell_2}
	t_{(p,q)}: \calC_{\text{Hopf}} \to \calC_{\text{flux}}, \quad \textrm{according to}\quad \mathcal{T}_{(p,q)} (\rho^{\eta}_{\mathbb{T}}) = \rho_T^{t_{(p,q)}{(\eta)}}.
\end{equation} 
 $t_{(p,q)}$ maps the vacuum sector to the vacuum sector. (Note that, $\mathcal{T}_{(p,q)}$ cannot be composed because it maps a torus shell to a solid torus.)

\begin{figure}[h]
	\begin{tikzpicture}
				\begin{scope}[scale=1, xshift= -0.3 cm, yshift=-0.15 cm]
				\node[] (A) at (0.00,1.35) {\includegraphics[scale=1]{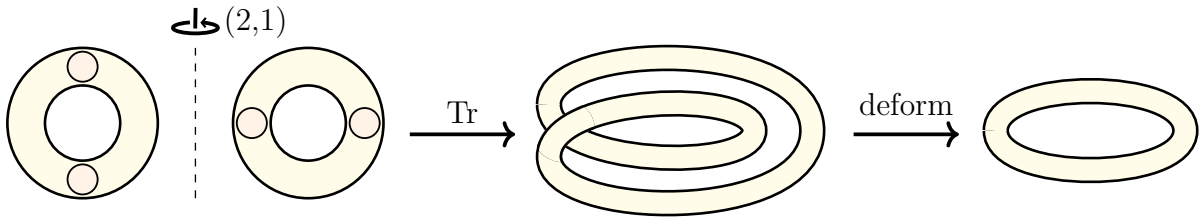}};	
				\node[] (A) at (0.8,1.35) {(2,1)};	
				\draw[fill=yellow!10!white, even odd rule, line width=1pt] {(-1.5,0) circle (1)} {(-1.5,0) circle (0.5)};

				\draw[fill=yellow!10!white, even odd rule, line width=1pt] {(1.5,0) circle (1)}{(1.5,0) circle (0.5)} ;
				\draw[fill=orange!10!white, even odd rule, line width=0.7 pt] {(-1.5,0.75) circle (0.2)} {(-1.5,-0.75) circle (0.2)};
				\draw[fill=orange!10!white, even odd rule, line width=0.7pt] {(1.5-0.75,0) circle (0.2)} {(1.5+0.75,0) circle (0.2)};
				\draw[dashed, line width=0.5 pt] (0,-1) -- (0,1);
			\end{scope}
		\begin{scope}[xshift=5.8 cm, yshift=-0.25 cm,scale=2]
			\begin{knot}[
				consider self intersections=true,
				%draft mode=crossings,
				%flip crossing=2,
				%only when rendering/.style={
				%    show curve controls
				%}
				clip width=1,  scale=0.7];

				\strand[black, double=yellow!10!white, line width=1pt, double distance=8 pt] 
				(-1,0.25) .. controls +(90:0.6) and +(90:0.9) .. 
				(1.5,0) .. controls +(-90:0.9) and +(-90:0.6) .. 
				(-1,-0.25) .. controls +(90:0.6) and +(90:0.4) .. 
				(0.95, 0).. controls +(-90:0.4) and +(-90:0.6) ..
				(-1,0.25);
			\end{knot}
		\end{scope}

		% add arrows
		\begin{scope}[xshift=3.25 cm, yshift=-0.3 cm]
			\draw[->,line width=0.5mm] (180:0.7) -- (0:0.7);
			\node[] (P) at (0, 0.3)  {$\Tr$};
		\begin{scope}[xshift=5.9 cm]
			\draw[->,line width=0.5mm] (180:0.7) -- (0:0.7);
			\node[] (Q) at (0, 0.4)  {deform};
		\end{scope}	
		\end{scope}
	
	\begin{scope}[scale=1, xshift=3.6 cm]
		\begin{scope}[xshift=8 cm, yshift=-0.25 cm,scale=2]
			\begin{knot}[
				consider self intersections=true,
				%draft mode=crossings,
				%flip crossing=2,
				%only when rendering/.style={
				%    show curve controls
				%}
				clip width=1,  scale=0.7];
				
				\strand[black, double=yellow!10!white, line width=1pt, double distance=8 pt] 
				(-0.9,0) .. controls +(90:0.5) and +(90:0.5) .. 
				(0.9,0) .. controls +(-90:0.5) and +(-90:0.5) .. 
				(-0.9,0);
			\end{knot}
		\end{scope}
	\end{scope}
	\end{tikzpicture}
	\caption{The idea behind the definition of $\mathcal{T}_{(p,q)}$. Being illustrated is $(p,q)=(2,1)$.}
	\label{fig:spiral-map-shell}
\end{figure}

\begin{remark}
In defining each of these spiral maps, we have chosen a particular path from the outcome of the partial trace to the final reference region.  It will be interesting to study the space of ambiguities avoided by this arbitrary choice, namely the homotopy groups of the space of such deformation paths.
\end{remark}

\section{Consistency Relations}
\label{sec:consistency-conditions}

In this section, we study the consistency relations among the fusion data of various 3d regions. The focus is on the consistency relation for the knot multiplicities of torus knots because they are simple and escape (known) dimensional reduction descriptions. (Many other consistency relations of 3d data can be derived by applying the dimensional reduction technique we developed. These relations and a few that are beyond dimensional reduction can be found in Appendix~\ref{appendix:Fusion-rules}.)  
These consistency relations come from a standard entanglement bootstrap analysis: computing the entropy of a certain maximum-entropy state of an information convex set in two different ways and comparing them. The relations we identify below are of two types. 
Conditions of the first type follow directly from the associativity theorem \ref{thm:associativity}; these are consistency relations among multiplicities. 
 Relations of the second type involve both multiplicities and quantum dimensions.

\subsection{Associativity constraints}
The associativity theorem \ref{thm:associativity} is a powerful statement. It indicates that the fusion spaces of a few simple regions can be used as ``building blocks" to construct more complex fusion spaces. In 3d, we do not know the complete list of such building blocks. Nonetheless, the fusion space associated with a large class of regions can be obtained this way.  One example is:
 
\begin{Proposition}
\label{prop:6.1}
	Let $\Omega(m,n)$ be a ball with $m$ balls removed and $n$ (unlinked) unknots removed. The fusion multiplicities associated with $\Omega(m,n)$ are $\{N^a_{b_1\cdots b_m l_1\cdots l_n}\}$ with $a, b_1,\cdots , b_m \in \calC_{\rm{point}}$ and $l_1, \cdots , l_n \in \calC_{\rm{loop}}$. These multiplicities are determined by two sets of basic multiplicities:
	$\{N_{ab}^c \}_{a,b,c\in \calC_{\rm{point}}}$, the multiplicities for $\Omega(2,0)$ and $\{ N_{l}^a \}_{l\in \calC_{\rm{loop}}}$, the multiplicities for $\Omega(0,1)$.
\end{Proposition}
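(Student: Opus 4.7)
The plan is to prove this by reducing $\Omega(m,n)$ to the basic regions $\Omega(2,0)$ and $\Omega(0,1)$ through repeated application of the Associativity Theorem (Theorem \ref{thm:associativity}) along spherical cuts. Because the $m$ small balls and $n$ unknots are pairwise unlinked and spatially separated, we can always enclose any one of them (or any selected pair of balls) inside an embedded $2$-sphere disjoint from every other removed piece, and from the outer boundary of $\Omega(m,n)$. The thickened hypersurface $C$ along such a cut is a sphere shell, which is sectorizable with $\calC_C = \calC_{\mathrm{point}}$. This is precisely the setup required to apply \eqref{eq:associativity_general}.

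I would use two elementary cuts. First, if $n \geq 1$, pick an unknot $l_n$ and cut along a sphere enclosing only this unknot; this divides $\Omega(m,n)$ into an inner piece homeomorphic to $\Omega(0,1)$ (with unknot label $l_n$ and outer-sphere label $c$), and an outer piece homeomorphic to $\Omega(m+1,n-1)$ (the former unknot has been replaced by a removed ball labelled $c$). Associativity gives
\begin{equation}
N^{a}_{b_1 \cdots b_m\, l_1 \cdots l_n}(\Omega(m,n)) \;=\; \sum_{c \in \calC_{\mathrm{point}}} N^{c}_{l_n}(\Omega(0,1)) \cdot N^{a}_{b_1 \cdots b_m c\, l_1 \cdots l_{n-1}}(\Omega(m+1,n-1)).
\end{equation}
Second, once $n=0$ and $m \geq 3$, pick two removed balls, say those with labels $b_{m-1}, b_m$, and cut along a sphere enclosing exactly both; the inner piece is $\Omega(2,0)$ and the outer piece is $\Omega(m-1,0)$, giving
\begin{equation}
N^{a}_{b_1 \cdots b_m}(\Omega(m,0)) \;=\; \sum_{c \in \calC_{\mathrm{point}}} N^{c}_{b_{m-1} b_m}(\Omega(2,0)) \cdot N^{a}_{b_1 \cdots b_{m-2}\, c}(\Omega(m-1,0)).
\end{equation}
Iterating the first identity $n$ times reduces to the case $n=0$, and iterating the second identity then reduces to $\Omega(2,0)$ (with the trivial base cases $\Omega(1,0)$, a sphere shell, and $\Omega(0,0)$, a ball, handled by sectorizability and the vacuum lemma).

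The only nonroutine part is checking that each cut genuinely meets the hypotheses of Theorem~\ref{thm:associativity}: that the hypersurface is disjoint from $\partial\Omega(m,n)$, that the two resulting regions $\Omega_L, \Omega_R$ are of the claimed topological type, and that the merging along the full cut sphere $C$ is valid. The first two are immediate since the separating sphere is smoothly embedded in the interior of the region and bounds a ball in $S^3$ enclosing the desired subset of removed pieces. Merging across the full sphere $C$ is guaranteed by Lemma~\ref{lemma:lr} once we recognize $C$ as one complete boundary component of both pieces. The main obstacle I anticipate is simply bookkeeping of the induction — making sure that in the first identity the $(m+1,n-1)$ region arising from replacing a removed unknot by a removed ball is genuinely of the form $\Omega(m+1,n-1)$, which follows because shrinking the thickened torus boundary onto the cut sphere is a regular homotopy to which Theorem~\ref{thm:generalized_iso} applies.
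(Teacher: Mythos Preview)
Your proposal is correct and follows essentially the same route as the paper: both argue by repeatedly applying the Associativity Theorem along embedded spheres, first replacing each unknot by a removed ball via $\Omega(0,1)$, then reducing the number of removed balls via $\Omega(2,0)$. One small remark: in your final paragraph, the outer piece after cutting around an unknot already \emph{is} $\Omega(m+1,n-1)$ (the cut sphere bounds a removed ball), so there is no torus boundary to shrink and no appeal to Theorem~\ref{thm:generalized_iso} is needed there.
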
 
\begin{proof}
	 First, the multiplicities associated with $\Omega(m+n,0)$ are determined entirely by $\{N_{ab}^c \}$. This is because one can glue a boundary of $\Omega(k-1,0)$ and $\Omega(2,0)$ to make a $\Omega(k,0)$. Next, we can glue $\Omega(k+1,l)$ with $\Omega(0,1)$ to obtain $\Omega(k,l+1)$. For example, for $k=0$, we get $N_{l_1l_2}^a = \sum_b N_{bl_1}^a N_{l_2}^b$.  By induction, we see that the multiplicities of $\Omega(m,n)$ are determined by $\{N_{ab}^c \}_{a,b,c\in \calC_{\rm{point}}}$ and $\{ N_{l}^a \}_{l\in \calC_{\rm{loop}}}$.  A precise formula can be worked out for the general case, but we omit it.
\end{proof}
\begin{remark}
	This statement can be understood from the dimensional reduction picture. The idea is that balls and unlinked unknots can be arranged along a line segment. The whole region $\Omega(m,n)$ then becomes the revolution of a 2d region with a boundary.
\end{remark}

\begin{Proposition} Let $K$ be a $(p,q)$ torus knot. Then, the knot multiplicities satisfy:
	\begin{equation}\label{eq:pq_multiplicity-1}
		N_{\zeta}(K) = \sum_{\mu\in \calC_{\rm{flux}}} N_{\zeta}^{\varphi(\mu)}(p,q),	\quad \forall \zeta \in \calC_{\rm{Hopf}}.
	\end{equation}
Here, $\{N_{\zeta}^{\varphi(\mu)}(p,q)\}$ is a subset of the multiplicities defined in Eq.~(\ref{eq:Npq-def}).  
	More generally, the multiplicities defined in Eq.~(\ref{eq:knot-multiplicity-with-a}) satisfy
	\begin{equation}\label{eq:pq_multiplicity-2}
		N_{\zeta}^a (K) = \sum_{l \in \calC_{\rm{loop}}} N_{\zeta}^l(p,q) N_l^a,\qquad \forall \zeta \in \calC_{\rm{Hopf}},\,\,\forall a \in \calC_{\rm{point}}.
	\end{equation}
\end{Proposition}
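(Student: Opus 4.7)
The plan is to apply the Associativity Theorem (Theorem~\ref{thm:associativity}) after a single decomposition of the relevant region along a torus shell. By definition, a $(p,q)$ torus knot $K = K_{p,q}$ sits on an unknotted torus in $S^3$, so I can always choose an unknotted solid torus $T$ with $K\subset T$ embedded as a $(p,q)$ torus knot. A thickened torus $\mathbb{T}$ lying strictly between $\partial K$ and the outer boundary of the ambient region is the single hypersurface along which I will cut.

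First I would handle the stronger identity~\eqref{eq:pq_multiplicity-2}. The region ``ball minus $K$'' decomposes as $\Omega_L \cup_{\mathbb{T}} \Omega_R$ with $\Omega_L = T\setminus K$ and $\Omega_R = (\text{ball})\setminus T$. By definition of the multiplicities in~\eqref{eq:Npq-def}, the left piece contributes $N_{\zeta}^\eta(p,q)$ with outer torus-shell label $\eta \in \calC_{\mathrm{Hopf}}$ and knot-shell label $\zeta$. The right piece is a ball with an unknotted solid torus removed, whose fusion multiplicities are the shrinking multiplicities $\{N_{l}^a\}$ appearing in~\eqref{eq:shrinking_rule}; in particular they vanish unless the internal label lies in $\calC_{\mathrm{loop}}\subset \calC_{\mathrm{Hopf}}$. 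Applying Theorem~\ref{thm:associativity} with $A_L$ the knot boundary, $A_R$ the outer sphere shell, and $C = \mathbb{T}$ yields~\eqref{eq:pq_multiplicity-2}.

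Equation~\eqref{eq:pq_multiplicity-1} then follows immediately as the $a=1$ specialization of~\eqref{eq:pq_multiplicity-2}, using $N_\zeta(K) = N_\zeta^1(K)$ together with the identity $N_l^1 = \sum_{\mu} \delta_{l,\varphi(\mu)}$ noted just after~\eqref{eq:shrinking_rule}. As a cross-check, one can derive~\eqref{eq:pq_multiplicity-1} directly by instead decomposing $S^3\setminus K$ as $(T\setminus K) \cup_{\mathbb{T}} (S^3\setminus T)$ and observing that $S^3\setminus T$ is a second unknotted solid torus whose pure-flux sectors are mapped to the labels $\{\varphi(\mu)\}_{\mu\in\calC_{\mathrm{flux}}}$ on $\mathbb{T}$ by the definition of $\varphi$ in~\eqref{eq:shrinkable-loops}.

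The principal obstacle I anticipate is not conceptual but bookkeeping: the identification $\calC_{\partial\mathbb{T}} \simeq \calC_{\mathrm{Hopf}}$ is only fixed after a choice of framing on $\mathbb{T}$, and one must confirm that the framing used to define $N_\zeta^\eta(p,q)$ on the outer boundary of $T\setminus K$ agrees with the framings implicitly used both to label the boundary of $S^3\setminus T$ by $\varphi(\mu)$ and to label the inner boundary of ``ball minus $T$'' by $l\in \calC_{\mathrm{loop}}$. Once a consistent framing on $\mathbb{T}$ is transported across the two sides via the generalized isomorphism theorem (Theorem~\ref{thm:generalized_iso}), the sums over internal labels collapse to sums over $\mu$ (resp.\ $l\in\calC_{\mathrm{loop}}$), and the stated identities follow.
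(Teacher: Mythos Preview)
Your proposal is correct and follows essentially the same route as the paper: decompose $D^3\setminus K$ along the torus shell $\partial T$ into $T\setminus K$ and $D^3\setminus T$, apply the Associativity Theorem to obtain~\eqref{eq:pq_multiplicity-2} (noting that $N_\eta^a=0$ unless $\eta\in\calC_{\rm loop}$), and then specialize to $a=1$ using $N_l^1=\sum_\mu\delta_{l,\varphi(\mu)}$ to get~\eqref{eq:pq_multiplicity-1}. Your additional cross-check via $S^3\setminus K$ and your framing remark are extra commentary not present in the paper's proof, but the core argument is the same.
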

\begin{figure}[h!]
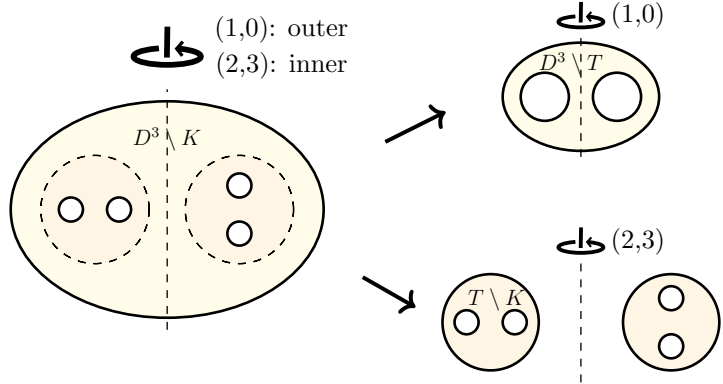

	\centering
	\begin{tikzpicture}
		\begin{scope}[scale=1]
		\begin{scope}[xshift=11 cm, scale=0.8]		
			\node[] (A) at (0.00,1.35) {\includegraphics[scale=1]{rotation}};
			\node[] (A) at (0.00+0.95,1.35) {\footnotesize{(1,0)}};		
			\draw[fill=yellow!10!white, even odd rule, line width=1pt] {(0,0) circle (1.3 and 0.9)};
			\draw[fill=yellow!0!white, even odd rule, line width=1pt] {(-0.6,0) circle (0.4)};
			\draw[fill=yellow!0!white, even odd rule, line width=1pt] {(0.6,0) circle (0.4)};
			\draw[dashed, line width=0.5 pt] (0,-1) -- (0,1);
			\draw (-0.73,0.77) node [anchor=north west][inner sep=0.75pt][scale=.7]    {$D^3 \setminus T$};
		\end{scope}
		
		\begin{scope}[yshift=-3cm,xshift=11 cm, scale=0.8]	
			\node[] (A) at (0.00,1.35) {\includegraphics[scale=1]{rotation}};	
			\node[] (A) at (0.00+0.95,1.35) {\footnotesize{(2,3)}};	
			\draw[fill=yellow!30!orange!10!white, even odd rule, line width=1pt] {(-1.5,0) circle (0.8)} {(-1.9,0) circle (0.2)} {(-1.1,0) circle (0.2)};
			
			\draw[fill=yellow!30!orange!10!white, even odd rule, line width=1pt] {(1.5,0) circle (0.8)} {(1.5,0.4) circle (0.2)} {(1.5,-0.4) circle (0.2)};
			\draw[dashed, line width=0.5 pt] (0,-1) -- (0,1);
			\draw (-1.9,0.6) node [anchor=north west][inner sep=0.75pt][scale=.7]    {$T\setminus K$};
		\end{scope}	
		
		\begin{scope}[yshift=-1.5 cm, xshift=5.5 cm, scale=1.6]		
			\node[] (A) at (0.00,1.35) {\includegraphics[scale=1.5]{rotation}};
			\node[] (A) at (0.00+0.95,1.35+0.15) {\footnotesize{(1,0): outer}};	
			\node[] (A) at (0.00+0.95,1.35-0.15) {\footnotesize{(2,3): inner}};		
			\draw[fill=yellow!10!white, even odd rule, line width=1pt] {(0,0) circle (1.3 and 0.9)};
			\draw[fill=yellow!30!orange!10!white, line width=0.5pt,dashed]{(-0.6,0) circle (0.45)} {(-0.6,0) circle (0.45)};
			\draw[fill=yellow!30!orange!10!white, line width=0.5pt,dashed]  {(0.6,0) circle (0.45)} {(0.6,0) circle (0.45)};
			\draw[dashed, line width=0.5 pt] (0,-1) -- (0,1);
			% add small orange
			\draw[fill=white, line width=1pt] {(-0.6-0.2,0) circle (0.1)} {(-0.6+0.2,0) circle (0.1)};
			\draw[fill=white, line width=1pt] {(0.6,-0.2) circle (0.1)} {(0.6,0.2) circle (0.1)};
			\draw (-0.3,0.7) node [anchor=north west][inner sep=0.75pt][scale=.7]    {$D^3 \setminus K$};		
		\end{scope}	
		
		\draw[->, line width=1.5 pt]   (11-2.6, -0.6) to (11-1.8,-0.2);
		
		\draw[->, line width=1.5 pt] (11-2.9, -2.4) to (11-2.2,-2.8);
		\end{scope}
	\end{tikzpicture}
	\caption{A partition of a ball minus a torus knot ($D^3 \setminus K$) for the associativity proof. Here $K$ is a $(p,q)$ torus knot. $(p,q)=(2,3)$ is illustrated accurately while the same idea works for general $(p,q)$.}
	\label{fig:make_a_trefoil}
\end{figure}

\begin{proof}
 We first derive Eq.~(\ref{eq:pq_multiplicity-2}). Let $D^3\supset T \supset K$, where $D^3$ is a ball, $T$ is a solid torus. $K$ is the $(p,q)$ torus knot that is described by a $(p,q)$ type rotation along the same axis that defines $T$ and $D^3$. The boundaries of these three regions do not touch each other, as is illustrated in Fig.~\ref{fig:make_a_trefoil}.
 Thus $D^3\setminus K$ is divided into halves by the boundary of $T$. By the associativity theorem,   
 \begin{equation}
 	N_{\zeta}^a (K) = \sum_{\eta \in \calC_{\rm{Hopf}}} N_{\zeta}^\eta(p,q) N_\eta^a,\quad \forall a \in \calC_{\rm{point}}. \nonumber
 \end{equation}
Only shrinkable loops contribute because $N^a_{\eta}=0$ if $\eta \notin \calC_{\rm{loop}}$. Therefore, we arrive at Eq.~\eqref{eq:pq_multiplicity-2}.
To derive Eq.~\eqref{eq:pq_multiplicity-1}, we recall that $N_{l}^1= \sum_\mu \delta_{l, \varphi(\mu)}$ and that $N_{\zeta}^1(K)= N_{\zeta}(K)$. We plug these in Eq.~\eqref{eq:pq_multiplicity-2}. 
\end{proof}

\begin{remark}% Added a remark for satellite knotts.
	In fact, this idea can be extended to provide insight into a certain type of satellite knots. Let $K$ be the $(p,q)$ torus knot discussed in Fig.~\ref{fig:make_a_trefoil}. Let $K'\subset K$ be a knot such that $K\setminus K'$ can be deformed to $T\setminus K$ by a path (formed by immersed regions), under which process $K'$ becomes $K$. It is easy to see $K'$ is a (special type of) satellite knot. The set of regions $K'\subset K\subset T\subset D^3$ is useful for deriving a consistency relation.
	
\end{remark}

\subsection{Consistency relations on shrinking rules}

Below, we study a few consistency relations related to the multiplicities $\{N_l^a\}$ associated with the shrinking rule. The region in question is the ball minus a solid torus $D^3 \setminus T$. While it is not impossible to continue the analysis of cutting $D^3 \setminus T$ by a hypersurface contained in the interior of $D^3 \setminus T$, there are simpler partitions that involve cuts that intersect the boundary. We prove two such conditions. A general feature is that when the cut intersects with the boundary, quantum dimensions appear in the consistency relation.
We further establish an embedding $\phi: \calC_{\rm{point}} \hookrightarrow \calC_{\rm{loop}}$ which preserves the quantum dimension.

\subsubsection{Consistency with quantum dimensions}
\label{eq:ball-torus-consistency-with-d}
The two consistency relations we discuss here can be understood by dimensional reduction (Appendix~\ref{appendix:Fusion-rules}, Proposition \ref{prop:2d_consistency_1}), which maps them to a 2d consistency in the presence of a gapped boundary. Nonetheless, the 3d analysis is simple and pleasant. We present them for pedagogical purposes.

\begin{Proposition}[for shrinking rules]
	\begin{eqnarray}
	d_l &=& \sum_{a\in \calC_{\rm{point}} } N_l^{a} d_a, \quad\quad \forall l \in \calC_{\rm{loop}}. \label{eq:particles-from-loops}\\
d_a &=& \frac{1}{\mathcal{D}^2} \sum_{l\in \calC_{\rm{loop}}} N_l^{a} d_l,\quad \forall a \in \calC_{\rm{point}}. \label{eq:loops-from-particles}
	\end{eqnarray}
\end{Proposition}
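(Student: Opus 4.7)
The plan is to analyze the maximum-entropy element of $\Sigma(\Omega)$, where $\Omega \equiv D^3 \setminus T$ is a ball minus an unknotted solid torus, and to derive both relations simultaneously by matching its two boundary marginals with the known maximum-entropy states on the sphere shell and the torus shell.

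First, by the Hilbert space theorem (Theorem~\ref{thm:Hilbert}) together with the product rule on the two disjoint boundary components, I would decompose $\Sigma(\Omega) = \bigoplus_{l \in \calC_{\rm{loop}},\, a \in \calC_{\rm{point}}} \Sigma_l^a(\Omega)$ with $\dim \mathbb{V}_l^a = N_l^a$. By the extreme-point entropy formula \eqref{eq:d-for-extreme-point} and the vacuum lemma (Lemma~\ref{lemma:vacuum}), every extreme point of $\Sigma_l^a(\Omega)$ has entropy $S(\sigma_\Omega)+\ln(d_l d_a)$. Expanding $\rho^*_\Omega = \sum_{l,a} q_l^a\, \rho^{l,a,\max}_\Omega$ and maximizing the entropy over $\{q_l^a\}$ via a Lagrange-multiplier computation gives $q_l^a = N_l^a d_l d_a/Z$ with $Z \equiv \sum_{l,a} N_l^a d_l d_a$, and hence $S(\rho^*_\Omega) = S(\sigma_\Omega)+\ln Z$.

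Next, since each $\rho^{l,a,\langle e\rangle}_\Omega$ restricts to $\rho^l_{\mathbb{T}}$ on the inner torus shell $\mathbb{T}$ and to $\rho^a_X$ on the outer sphere shell $X$, the marginals of $\rho^*_\Omega$ are
\begin{equation*}
\mathrm{Tr}_{\Omega\setminus X}\, \rho^*_\Omega = \sum_a \frac{d_a\, (\sum_l N_l^a d_l)}{Z}\, \rho^a_X, \qquad \mathrm{Tr}_{\Omega\setminus \mathbb{T}}\, \rho^*_\Omega = \sum_l \frac{d_l\, (\sum_a N_l^a d_a)}{Z}\, \rho^l_{\mathbb{T}}.
\end{equation*}
I would then argue that these marginals coincide, respectively, with the max-entropy state $\sum_a (d_a^2/\mathcal{D}^2)\rho^a_X$ on $\Sigma(X)$ and with the max-entropy state $\sum_l (d_l^2/\sum_{l'} d_{l'}^2)\rho^l_{\mathbb{T}}$ on the shrinkable-loop part of $\Sigma(\mathbb{T})$. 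Matching coefficients yields $\sum_l N_l^a d_l = d_a\, Z/\mathcal{D}^2$ and $\sum_a N_l^a d_a = d_l\, Z/\sum_{l'} d_{l'}^2$. Substituting the first back into $Z = \sum_{l,a} N_l^a d_l d_a = \sum_a d_a (\sum_l N_l^a d_l)$ forces $Z = \mathcal{D}^4$, and symmetrically $\sum_{l'} d_{l'}^2 = \mathcal{D}^4$, whence \eqref{eq:particles-from-loops} and \eqref{eq:loops-from-particles} follow immediately.

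The main obstacle will be to justify that the marginals of $\rho^*_\Omega$ really equal the boundary max-entropy states displayed above: in general, a marginal of a max-entropy state need not itself be max-entropy. I plan to handle this via a merging argument using Theorem~\ref{thm:merging_info_convex_set}, showing that $\rho^*_\Omega$ is a quantum Markov chain across each of the partitions that separate a boundary component from the rest of $\Omega$; this Markov property should be inherited from the SSA-saturation of the reference state on analogous partitions, in the spirit of Lemma~\ref{lemma:ref-state-SSA-saturation} and Proposition~\ref{prop:solid_torus_partition_for_dmu}, so that the boundary marginal saturates the entropy permitted by $\Sigma$ of that boundary region.
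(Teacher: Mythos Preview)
Your proposal takes a genuinely different route from the paper, and as written it has two real gaps.

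The paper does \emph{not} analyze the single global maximum-entropy state $\rho^*_\Omega$. Instead it runs two separate merging computations on two different partitions $ABC$ of $\Omega=D^3\setminus T$: in partition~(a) the torus-shell label $l$ is held fixed and the sphere boundary is produced by the merge; in partition~(b) the sphere-shell label $a$ is held fixed and the torus boundary is produced. The crucial input (Lemma~\ref{lemma:simple-TEE}) is that the reference state is already a Markov chain across partition~(a), $I(A:C|B)_\sigma=0$, but \emph{not} across partition~(b), where $I(A:C|B)_\sigma=2\ln\mathcal D$. This asymmetry is exactly what generates the factor $1/\mathcal D^2$ in \eqref{eq:loops-from-particles}. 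In each case one compares the entropy of the merged state computed from the Hilbert-space structure of $\Sigma(\Omega)$ with the entropy computed from the marginals, and the two equations drop out separately.

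Your route has two problems. First, the step you flag as the main obstacle is in fact circular: the assertion that $\mathrm{Tr}_{\Omega\setminus X}\,\rho^*_\Omega=\sum_a(d_a^2/\mathcal D^2)\rho^a_X$ is \emph{equivalent} to $\sum_l N_l^a d_l\propto d_a$, and your torus-shell marginal claim is equivalent to $\sum_a N_l^a d_a\propto d_l$; these are precisely the two relations you are trying to prove. The Markov property of $\rho^*_\Omega$ alone cannot pin down these marginals, because the two boundary distributions are coupled through the $N_l^a$. One needs an external input---in the paper, the explicit values of $I(A:C|B)_\sigma$ on the two partitions---to break the loop. Second, even granting your marginal identifications, the step ``substituting the first back into $Z=\sum_a d_a(\sum_l N_l^a d_l)$ forces $Z=\mathcal D^4$'' is a tautology: inserting $\sum_l N_l^a d_l=d_a Z/\mathcal D^2$ gives $Z=(Z/\mathcal D^2)\sum_a d_a^2=Z$, which determines nothing. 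You would still need an independent determination of $Z$ (or of $\sum_l d_l^2$) to finish, and that is exactly what the paper's partition-by-partition entropy comparison supplies.
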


\begin{figure}[h!]
	\centering
	\begin{tikzpicture}		
		\node[] (A) at (0,0) {\includegraphics[width=.63\textwidth]{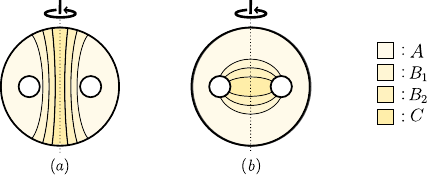}};
		\node[] (A) at (-2.82,0) {$l$};
		\node[] (A) at (2.1,0.9) {$a$};
	\end{tikzpicture}
	\caption{Make a ball minus unknot $(D^3 \setminus T)$ by two different merging processes. (a) The newly formed boundary is the sphere boundary. We have the freedom to choose $l\in \calC_{\rm{loop}}$. (b)  The newly formed boundary is the torus boundary. We have the freedom to choose $a\in \calC_{\rm{point}}$.   }
	\label{fig:ball-torus-1}
\end{figure}

\begin{proof}
We first prove Eq.~(\ref{eq:particles-from-loops}).
Consider the partition shown in Fig.~\ref{fig:ball-torus-1}(a). $D^3\setminus T= ABC$, where $B=B_1B_2$. For each choice of $l\in \calC_{\rm{loop}}$, there is a well-defined merging process. Let the merged state be $\tau^{\star,\,l}_{ABC}$. Note that $\tau^{\star,\,1}_{ABC}=\sigma_{ABC}$. (In other words, $I(A:C|B)_{\sigma}=0$ for the partition in Fig.~\ref{fig:ball-torus-1}(a). For reader's convenience, we review this fact in Lemma~\ref{lemma:simple-TEE}.) The entropy different $S(\tau^{\star,\,l}_{ABC})-S(\sigma_{ABC})$ can be expressed in two different ways. First, by analyzing the structure of $\Sigma(D^3 \setminus T)$ we have $S(\tau^{\star,\,l}_{ABC})-S(\sigma_{ABC})= \ln (\sum_a N_l^a d_{a})+ \ln d_l$. Second, by considering the entropy difference of the marginals, we derive $S(\tau^{\star,\,l}_{ABC})-S(\sigma_{ABC})= 2 \ln d_l$. Therefore, $d_l =\sum_a N_l^a d_a$, which is precisely Eq.~(\ref{eq:particles-from-loops}).

Next, we prove Eq.~(\ref{eq:loops-from-particles}). Now we choose an extreme point of the sphere shell, labeled by $a \in \calC_{\rm{point}}$, and consider the merging process in Fig.~\ref{fig:ball-torus-1}(b). Let the merged state be $\lambda^{\star,\,a}_{ABC}$. This time, note that $\lambda^{\star,\,1}_{ABC}$ is \emph{not} the same with $\sigma_{ABC}$, although its marginals on $AB$ and $BC$ are identical with that of the reference state. (The reason is that $I(A:C|B)_{\sigma}= 2\ln \calD$ by Lemma~\ref{lemma:simple-TEE} while $I(A:C|B)_{\lambda^{\star,\,a}}=0$ for any $a$.) Again, we have two ways to compute the entropy difference $S(\lambda^{\star,\,a}_{ABC})-S(\sigma_{ABC})$. By analyzing the structure of $\Sigma(D^3\setminus T)$ we have $S(\lambda^{\star,\,a}_{ABC})-S(\sigma_{ABC})= \ln (\sum_l N_l^a d_l) + \ln d_a$. On the other hand, by looking at the marginals we see $S(\lambda^{\star,\,a}_{ABC})- S(\lambda^{\star,\,1}_{ABC})=2 \ln d_a$ and that $S(\lambda^{\star,\,1}_{ABC})- S(\sigma_{ABC})= 2 \ln \calD$. This leads to the constraint $\ln d_a +  \ln (\sum_l N_l^a d_l)= 2 \ln d_a + \ln \calD^2 $. By simplifying this, we get Eq.~(\ref{eq:loops-from-particles}).
\end{proof}

\begin{remark}
 It is interesting to observe the existence of an extra $1/\calD^2$ factor on the right-hand side of the second equation. This makes shrinkable loops and point particles manifestly different. This contribution comes from a factor of the form $e^{-I(A:C|B)_{\sigma}}$. This contribution will appear in a broader context in \cite{braiding-paper}.
\end{remark}

\begin{lemma}\label{lemma:simple-TEE}
	For the $ABC$ partition in Fig.~\ref{fig:ball-torus-1}(a), $I(A:C|B)_{\sigma}=0$. For the $ABC$ partition in Fig.~\ref{fig:ball-torus-1}(b), $I(A:C|B)_{\sigma}=2 \ln \calD$. 	  
\end{lemma}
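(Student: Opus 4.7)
The plan is to derive both identities from the axioms $\bf A0, A1$ for the reference state $\sigma$ on the ball $D^3$, following the same pattern used in the proof of Proposition~\ref{prop:equality-of-total-dimension}. The key point is that the value of $I(A:C|B)_\sigma$ depends only on the topology of the triple $(A,B,C)$, and is determined entirely by how the ``seam'' $B$ sits between $A$ and $C$.

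For partition~(a), I would argue that $B$ is topologically a (thickened) disk separating $A$ from $C$, so the merging along $B$ produces no nontrivial loop in the merged geometry; the newly created surface is just (part of) the outer sphere boundary of $D^3\setminus T$. Under these circumstances, one can produce a sequence of extensions $B = B_0 \subset B_1 \subset \cdots \subset B_N$, each obtained by attaching a small ball via axiom {\bf A1}, so that at every step the entropy combinations $S_{B_iC} - S_{B_i}$ and $S_{AB_iC} - S_{AB_i}$ are preserved, and eventually $B_N$ contains a neighborhood that together with a further local rearrangement witnesses the conditional Markov property. Concretely, one deforms $B$ to a shape that directly satisfies the hypothesis of axiom {\bf A1} on the partition $(A,B_N,C)$, and concludes $I(A:C|B)_\sigma = 0$. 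This is a direct three-dimensional analogue of Proposition~3.5/3.7 of~\cite{shi2020fusion}, applied to the reference state.

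For partition~(b), I would reduce to the solid torus computation from the proof of Proposition~\ref{prop:equality-of-total-dimension}. Because the seam $B$ in this case has the topology of two disjoint balls $B = B_1 B_2$ whose closing-off produces the \emph{inner torus boundary} of $D^3\setminus T$, the entropy combination $(S_{AB}+S_{BC}-S_B-S_{ABC})_\sigma$ is isomorphic, via the generalized isomorphism theorem applied to the reference state, to the entropy combination evaluated on the $(A,B,C)$ partition of a solid torus depicted in Fig.~\ref{fig:total-quantum-dimensions-agree}(a). In that setting we already showed that $I(A:C|B)_\sigma = 2\gamma$, and the identification $e^{2\gamma}=\sum_\mu d_\mu^2 = \calD^2$ was derived in the same proof. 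Hence $I(A:C|B)_\sigma = 2\ln\calD$.

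The main obstacle is the second step: one must justify carefully that the two seams (the one in~(b) here and the one in Fig.~\ref{fig:total-quantum-dimensions-agree}(a)) really give identical values, since the ambient regions $D^3\setminus T$ and $T$ are different. The justification is that the entropies entering the conditional mutual information only see a neighborhood of $B$ (by the isomorphism theorem applied to the reduction of $\sigma$ onto a thickening of $AB\cup BC$), and that neighborhood has the same topological type in both cases, namely two solid balls glued along two disjoint disks to form a solid-torus-shaped piece with a sphere-shell-shaped piece. Making this matching explicit — and verifying that the two copies of $B_1$, $B_2$ play the same role in both pictures — is the only non-routine part of the argument.
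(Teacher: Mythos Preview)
Your argument for part~(a) has a concrete gap. First, a topology slip: in partition~(a) the seam is $B=B_1B_2$ with two components, not a single thickened disk. More importantly, the extension scheme you sketch --- grow $B$ by small balls until axiom~{\bf A1} applies directly to $(A,B_N,C)$ --- cannot close without a topology change somewhere, and it is precisely at that step that the naive entropy-difference preservation breaks down. The paper's trick is different and much cleaner: it enlarges $A$ to $AT$ (filling back in the removed solid torus), uses SSA to get $I(A:C|B)_\sigma \le I(AT:C|B)_\sigma$, and then observes that $(AT)BC$ is a ball, so either the uniqueness of $\Sigma(D^3)$ or the solid-torus $\Delta=0$ result (Lemma~\ref{lemma:ref-state-SSA-saturation}, with an auxiliary region $E$) forces $I(AT:C|B)_\sigma=0$. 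You should adopt this ``fill in $T$'' move rather than trying to deform $B$.

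For part~(b) your instinct is right --- the value must match the solid-torus TEE computation --- but the justification you flag as the ``main obstacle'' really is the whole content, and your proposed resolution (that the entropies ``only see a neighborhood of $B$'') is not correct as stated: $S_{ABC}$ depends on the global topology of $ABC$, not just a neighborhood of the seam. The paper handles this via sphere completion (Proposition~\ref{prop:3dTEE}): on $S^3$ one has $I(A:C|B)_{|\psi\rangle}=\Delta(B,C,D)_{|\psi\rangle}$ by purity, and the latter is exactly the solid-torus combination whose value $2\gamma=2\ln\calD$ was already established. So the matching is achieved by passing to the complement on the $3$-sphere, not by a local argument near $B$.
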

\begin{proof}
 Consider the $ABC$ partition of $D^3 \setminus T$ in Fig.~\ref{fig:ball-torus-1}(a). $I(A:C|B)_{\sigma} \le I(AT:C|B)_{\sigma}$ by SSA. Next, we show $I(AT:C|B)_{\sigma}=0$. We provide two independent ways to see it. Here is the first method. Suppose $I(AT:C|B)_{\sigma}>0$; we can take the marginals $\sigma_{TAB}$ and $\sigma_{BC}$ and merge them; the resulting state has $I(AT:C|B)=0$ and  it is an element of $\Sigma(D^3)$. However, this would contradict with the fact that $\Sigma(D^3)$ has a unique element. The second method is to consider Fig.~\ref{fig:ball-torus-2-cmi}:
 \begin{equation}
 	\begin{aligned}
 		I(AT:C|B)_{\sigma} &\le \Delta(B,AT,E)_{\sigma}\\
 		                   &=0.
 	\end{aligned}
 \end{equation}
 The first line follows from SSA, and the second line follows from  Proposition~\ref{lemma:ref-state-SSA-saturation}.
 \begin{figure}[h!]
 	\centering
 	\includegraphics[width=.25\textwidth]{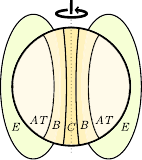}
 	\caption{With a region $E$.}
 	\label{fig:ball-torus-2-cmi}
 \end{figure}
	
	For the $ABC$ partition of $D^3 \setminus T$ in Fig.~\ref{fig:ball-torus-1}(b), $I(A:C|B)_{\sigma}$ is the topological entanglement entropy $2\gamma$; see Proposition~\ref{prop:3dTEE}, the fourth case. According to the analysis that is done around Fig.~\ref{fig:total-quantum-dimensions-agree}, we have $\gamma= \ln \calD$. Therefore, $I(A:C|B)_{\sigma}=2 \ln \calD$.
\end{proof}

\subsubsection{Embedding $\phi:\calC_{\rm{point}}\hookrightarrow\calC_{\rm{loop}}$}\label{sec:subsub-embeding-of-point}

We establish  a natural embedding of the set of point particles to the set of shrinkable loops.

\begin{figure}[!ht]
	\centering
	\includegraphics[width=.55\textwidth]{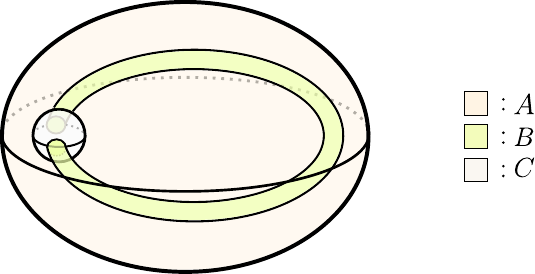}	 
	\caption{Identify a ball minus an unknot ($D^3\setminus T$) as a subsystem of a sphere shell. Here $ABC=D^3$ is a ball, $B$ and $C$ are two balls and $BC=T$ is an unknotted solid torus. $A=D^3\setminus T$ is a ball minus an unknot, and $AB$ is a sphere shell.}
	\label{fig:particle-embedding}
\end{figure}

Consider the arrangement of regions in Fig.~\ref{fig:particle-embedding}. Here $A$ is a ball minus an unknot and $AB$ is a sphere shell.
Because $AB$ is a sphere shell, the extreme points of $\Sigma(AB)$ are labeled by $a\in \calC_{\rm{point}}$. By a partial trace, we can reduce any extreme point  $\rho^a_{AB} \in \Sigma(AB) $ to $A$. According to Proposition~\ref{prop:extreme-point-restriction}, the result is an isolated extreme point, namely
\begin{equation}\label{eq:phi-a}
	\rho^a_{AB} \overset{\Tr_{B}}{\to} \rho^a_{A} \in \Sigma_{\phi(a)}^a(A),
\end{equation}
where $\phi(a)\in \calC_{\rm{loop}}$, defined by this process, satisfies $N^a_{\phi(a)}=1$.

\begin{Proposition}
	The map $\phi: \calC_{\rm{point}}\to \calC_{\rm{loop}}$ defined by Eq.~(\ref{eq:phi-a}) is injective. Furthermore, 
	\begin{equation}
	\label{eq:ball-minus-unknot-embedding}
		d_{\phi(a)} = d_a,\quad N_{\phi(a)}^b = \delta_{a,b}.
	\end{equation}
\end{Proposition}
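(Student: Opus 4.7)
The plan is to establish the single entropic identity $S(\rho^a_A) - S(\sigma_A) = 2\ln d_a$ via a merging argument, and then to combine it with the quantum-dimension formula for extreme points with multi-component boundary and the shrinking-rule consistency \eqref{eq:particles-from-loops} to pin down both the multiplicities and the injectivity of $\phi$.

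\textbf{Stage 1 (merging).} Because $B$ is a ball, $\Sigma(B)=\{\sigma_B\}$, so $\rho^a_B=\sigma_B$, and consequently $\rho^a_A$ and $\sigma_B$ have identical marginals on the annular overlap $A\cap B$ (both equal $\sigma_{A\cap B}$). Partition $A\cap B$ into two concentric annular collars to set up the merging-theorem data. The extreme-point factorization of $\rho^a_A$ and the axioms {\bf A0}, {\bf A1} for $\sigma$ supply the Markov-chain conditions of Theorem~\ref{thm:merging_info_convex_set}, so $\rho^a_A\bowtie\sigma_B$ exists in $\Sigma(AB)$. Since $AB$ is a sphere shell, hence sectorizable, the sector with outer-sphere label $a$ is a single state; the merged state inherits outer-sphere label $a$ from $\rho^a_A$ and must therefore coincide with $\rho^a_{AB}$. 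The same reasoning applied in the vacuum sector yields $\sigma_A\bowtie\sigma_B=\sigma_{AB}$.

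\textbf{Stage 2 (entropy identity and quantum dimension).} The merging entropy formula $S(\rho\bowtie\lambda)=S(\rho)+S(\lambda)-S(\rho|_{\rm overlap})$, applied to the pairs $(\rho^a_A,\sigma_B)$ and $(\sigma_A,\sigma_B)$, has the $S(\sigma_B)$ and $S(\sigma_{A\cap B})$ terms cancelling upon subtraction, giving
\begin{equation}
S(\rho^a_{AB}) - S(\sigma_{AB}) \;=\; S(\rho^a_A) - S(\sigma_A). \nonumber
\end{equation}
The left-hand side equals $2\ln d_a$ by Definition~\ref{def:quantum_dim_added} applied to the sectorizable region $AB$. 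For the right-hand side, Proposition~\ref{prop:extreme-point-restriction} guarantees that $\rho^a_A$ is an isolated extreme point of $\Sigma(A)$; since $\partial A$ has two connected components (sphere shell and torus shell), the product rule together with Eq.~\eqref{eq:d-for-extreme-point} yields $S(\rho^a_A)-S(\sigma_A)=\ln d_a+\ln d_{\phi(a)}$. Equating the two expressions forces $d_{\phi(a)}=d_a$.

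\textbf{Stage 3 (multiplicities and injectivity).} Proposition~\ref{prop:extreme-point-restriction} already gives $N^a_{\phi(a)}=1$. Inserting $l=\phi(a)$ into the shrinking-rule identity \eqref{eq:particles-from-loops} and using $d_{\phi(a)}=d_a$ gives
\begin{equation}
d_a \;=\; d_{\phi(a)} \;=\; \sum_{b\in\calC_{\rm point}} N^b_{\phi(a)}\, d_b \;\geq\; N^a_{\phi(a)}\, d_a \;=\; d_a. \nonumber
\end{equation}
Since each $d_b\geq 1$, saturation forces $N^b_{\phi(a)}=0$ for every $b\neq a$, which establishes \eqref{eq:ball-minus-unknot-embedding}. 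Injectivity of $\phi$ is then immediate: if $\phi(a)=\phi(a')$ then $N^{a'}_{\phi(a)}=N^{a'}_{\phi(a')}=1$, so $a'=a$.

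\textbf{Main obstacle.} The technical heart is Stage 1: verifying the Markov-chain conditions (2) and (3) of Theorem~\ref{thm:merging_info_convex_set} on the chosen nested collars of $A\cap B$. These follow from the axioms combined with strong subadditivity once the collars are taken thicker than the correlation length, but some care is required because the annular overlap meets the inner sphere boundary of $AB$. Once Stage 1 is granted, the remainder is a short manipulation of entropies together with the already-established shrinking rule.
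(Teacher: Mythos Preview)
Your proof is correct and follows essentially the same approach as the paper: both arguments establish $S(\rho^a_{AB})-S(\sigma_{AB})=S(\rho^a_A)-S(\sigma_A)$ by observing that the partial trace $AB\to A$ can be reversed by merging with $\sigma_B$ (a sector-independent quantum channel), then combine this with Eq.~\eqref{eq:d-for-extreme-point} and the shrinking rule \eqref{eq:particles-from-loops} exactly as you do. The paper is terser---it simply asserts the existence of the reversing channel and that entropy differences are preserved---while you spell out the merging-entropy bookkeeping explicitly, but the content is identical.
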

For this reason, we shall call $\phi$ an embedding, and denote it as $\phi: \calC_{\rm{point}} \hookrightarrow \calC_{\rm{loop}}$.
\begin{proof}
	For the partition in Fig.~\ref{fig:particle-embedding},
	we consider an extreme point $\rho^a_{AB}$. Not only can we take a partial trace to obtain a state in $\Sigma(A)$, we can also reverse the partial trace by a quantum channel (associated with an appropriate merging) that is independent of the sector $a\in \calC_{\rm{point}}$. Thus, the entropy difference is preserved: $\ln d_a + \ln d_{\phi(a)}= 2 \ln d_a$. Therefore, $d_{\phi(a)}=d_a$. Plugging this into Eq.~(\ref{eq:particles-from-loops}) we see that $d_a= \sum_b N_{\phi(a)}^b d_b$. Noticing that multiplicities and quantum dimensions are non-negative (and $N^a_{\phi(a)}=1$), we see that $N_{\phi(a)}^b =\delta_{a,b}$.
\end{proof}

\begin{remark}
	Does the same logic work when the union of the two balls $B$ and $C$ is a knot, instead of the unknot shown in Fig.~\ref{fig:particle-embedding}?  In that case, we can identify a map
	\begin{equation}
		\widetilde{\phi}_K: \calC_{\rm{point}}\rightarrow \calC_{\rm{Hopf}},
	\end{equation}
     which depends on the choice of knot $K$. Furthermore,
     \begin{equation}
     	d_{\widetilde{\phi}_K(a)}=d_a,\quad N_{\widetilde{\phi}_K(a)}^{a}(K)=1.
     	%,\quad N_{\widetilde{\phi}_K(a)}^b(K)=\delta_{a,b}.
     \end{equation}
     This observation is consistent with the data shown in Table~\ref{table:ball-trefoil}. 
     The analog of the second equation in \eqref{eq:ball-minus-unknot-embedding} would require the analog of 
     \eqref{eq:particles-from-loops} for knots, the naive version of which is not true.
     Nevertheless, we expect that $\widetilde{\phi}_K$ is an embedding for general knots $K$.
\end{remark}

\subsection{Torus knots: a constraint with quantum dimensions}
\label{subsec:universal-bound}

We shall consider a $(p,q)$ torus knot $K \subset S^3$. We derive a consistency relation that involves the knot multiplicities, the quantum dimensions, and the spiral maps. 
As a corollary, we derive a universal upper bound for the knot multiplicities for all torus knots in terms of the total quantum dimension.

\begin{Proposition}
	Let $K$ be a $(p,q)$ torus knot. Then
	\begin{equation}
	 \boxed{ \sum_{\zeta\in \calC_{\rm{Hopf}}} N_{\zeta}(K) d_{\zeta}
		=
		\sum_{\substack{\mu,\nu \in \calC_{\rm{flux}}:\\
					t_p(\mu)=t_q(\nu)
			}}
		\frac{d_{\mu}^2 d_{\nu}^2}{d^2_{t_p(\mu)}}.		}
	\end{equation}
\end{Proposition}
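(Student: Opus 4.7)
The plan is to express both sides of the equation as $\exp(S(\rho^\star_{\Omega_K}) - S(\sigma_{\Omega_K}))$, where $\rho^\star_{\Omega_K}$ is the maximum-entropy element of $\Sigma(\Omega_K)$. For the left-hand side, Theorem~\ref{thm:Hilbert} together with~\eqref{eq:d-for-extreme-point} provides, within each sector $\zeta \in \calC_{\rm{Hopf}}$, a maximum-entropy state whose excess entropy over $\sigma_{\Omega_K}$ is $\ln d_\zeta + \ln N_\zeta(K)$; since distinct sectors are mutually orthogonal, maximizing the convex combination over $\zeta$ gives
\[
S(\rho^\star_{\Omega_K}) - S(\sigma_{\Omega_K}) \;=\; \ln \sum_{\zeta} N_\zeta(K)\, d_\zeta,
\]
so the LHS equals $e^{S(\rho^\star_{\Omega_K}) - S(\sigma_{\Omega_K})}$. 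The rest of the argument will recompute this excess entropy by a merging construction.

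I would then use the Heegaard decomposition $S^3 = T_1 \cup_\Sigma T_2$, with the $(p,q)$ torus knot $K$ lying on $\Sigma$. Setting $\bar T_i \equiv T_i \setminus N(K)$, a standard topological argument shows that each $\bar T_i$ is homeomorphic to a solid torus with the same core as $T_i$, and that $\Omega_K = \bar T_1 \cup_{\tilde A} \bar T_2$, where $\tilde A$ is a thickening, inside $\Omega_K$, of the annulus $\Sigma \setminus N(K)$. The region $\tilde A$ is itself a solid torus whose core is a $(p,q)$-curve on $\Sigma$. By Theorem~\ref{thm:generalized_iso}, $\Sigma(\bar T_i) \cong \Sigma(T)$ is a simplex with extreme points labeled by fluxes $\mu_i \in \calC_{\rm{flux}}$.

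The key geometric observation is that, viewed inside $\bar T_i$, the core of $\tilde A$ wraps $p_i$ times longitudinally around the core of $\bar T_i$ (with $p_1 = p$, $p_2 = q$), because the meridional winding is contractible in a solid torus. By Corollary~\ref{coro:anypq}, the partial trace $\Tr_{\bar T_i \setminus \tilde A}$ therefore sends the extreme point $\rho^{\mu_i}_{\bar T_i}$ to $\rho^{t_{p_i}(\mu_i)}_{\tilde A}$, so the marginals of $\rho^{\mu_1}_{\bar T_1}$ and $\rho^{\mu_2}_{\bar T_2}$ on $\tilde A$ coincide precisely when $t_p(\mu_1) = t_q(\mu_2)$, in which case Lemma~\ref{lemma:merging_lemma} produces a unique merged state $\tau_{\mu_1,\mu_2} \in \Sigma(\Omega_K)$. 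Combining $S(\rho^\mu_T) - S(\sigma_T) = 2\ln d_\mu$ with the reference-state Markov identity $S(\sigma_{\bar T_1}) + S(\sigma_{\bar T_2}) - S(\sigma_{\tilde A}) = S(\sigma_{\Omega_K})$ (which follows from axioms {\bf A0}, {\bf A1} via a chain of balls bridging $\bar T_1 \setminus \tilde A$ and $\bar T_2 \setminus \tilde A$ through $\tilde A$) gives
\[
S(\tau_{\mu_1,\mu_2}) - S(\sigma_{\Omega_K}) \;=\; \ln \frac{d_{\mu_1}^2 d_{\mu_2}^2}{d_{t_p(\mu_1)}^2}.
\]

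Distinct pairs $(\mu_1,\mu_2)$ give mutually orthogonal $\tau_{\mu_1,\mu_2}$ (their marginals on $\bar T_i$ are orthogonal extreme points), so any convex combination $\sum p_{\mu_1,\mu_2} \tau_{\mu_1,\mu_2}$ has entropy $\sum p_{\mu_1,\mu_2} S(\tau_{\mu_1,\mu_2}) + H(\{p_{\mu_1,\mu_2}\})$, and maximizing over normalized $\{p_{\mu_1,\mu_2}\}$ reproduces the claimed right-hand side. The hard part will be to show that $\rho^\star_{\Omega_K}$ actually lies in the convex hull of $\{\tau_{\mu_1,\mu_2}\}$, which amounts to the quantum Markov condition $I(\bar T_1 \setminus \tilde A : \bar T_2 \setminus \tilde A \mid \tilde A)_{\rho^\star_{\Omega_K}} = 0$. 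Because the cut $\tilde A$ meets $\partial \Omega_K$ along an annulus, Lemma~\ref{lemma:max_entropy_Markov} and Theorem~\ref{thm:associativity} do not apply verbatim; the variational idea behind them, however, still applies here---if the Markov condition failed one could reduce $\rho^\star_{\Omega_K}$ to its marginals on $\bar T_1$ and $\bar T_2$ and remerge (via Theorem~\ref{thm:merging_info_convex_set} with a suitable partition of $\tilde A$) to produce a strictly higher-entropy state in $\Sigma(\Omega_K)$, contradicting maximality.
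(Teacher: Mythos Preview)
Your decomposition and overall strategy match the paper's proof closely: the paper partitions $\Omega_K = ABCD$ with $ABC$ and $BCD$ both solid tori (your $\bar T_1,\bar T_2$) overlapping in $BC$ (your $\tilde A$), computes $S(\rho^\star_{\Omega_K}) - S(\sigma_{\Omega_K})$ first via the Hilbert-space structure and then via merging marginals on the two solid tori, and identifies the compatibility condition on the overlap with $t_p(\mu)=t_q(\nu)$, just as you do.

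There is, however, a real gap in the step you treat as routine. You assert that the reference-state Markov identity $I(\bar T_1\setminus\tilde A:\bar T_2\setminus\tilde A\mid\tilde A)_\sigma=0$ ``follows from axioms {\bf A0}, {\bf A1} via a chain of balls bridging $\bar T_1\setminus\tilde A$ and $\bar T_2\setminus\tilde A$ through $\tilde A$.'' That argument works only when $p=1$ or $q=1$: if $p>1$, the core of $\tilde A$ winds $p$ times around $\bar T_1$, so $\bar T_1\setminus\tilde A$ is not simply a collar and there is no sequence of elementary ball-extensions taking $\tilde A$ to $\bar T_1$ (and symmetrically for $q$). This is precisely the content of Lemma~\ref{lemma:ref-state-SSA-saturation} versus Proposition~\ref{prop:the-p,q} in the paper: the chain-of-balls method handles only $(p,1)$ and $(1,q)$, and the general coprime $(p,q)$ case is established separately by an immersion argument (the Dirac belt trick, Fig.~\ref{fig:belt_trick}) that reduces $(p,q)$ to one of the easy cases. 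Without invoking Proposition~\ref{prop:the-p,q}, your computation of $S(\tau_{\mu_1,\mu_2}) - S(\sigma_{\Omega_K})$ is unjustified for the interesting knots (e.g.\ the trefoil $(2,3)$), so you should replace the ``chain of balls'' claim by a reference to that proposition.
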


\begin{figure}[h]
	\centering
	\begin{tikzpicture}
		\def\radius{2.5}
		\def\smallradius{.5}
		\def\thickness{.15}
		\def\torusradius{2.4}
		\def\radiusB{1.4}
		\def\radiusC{1.0}
		\def\radiusD{.6}
		\node[] (A) at (0,1.5+0.2) {\includegraphics[scale=1.2]{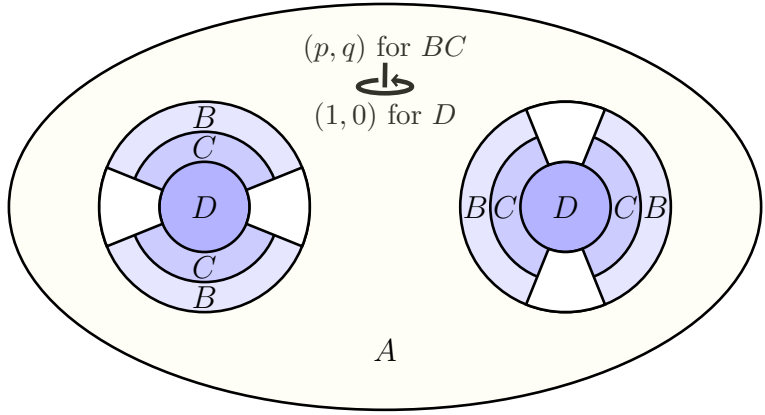}};
		\node[] (A) at (0,1.9+0.2) {\small{$(p,q)$ for $BC$}};
		\node[] (A) at (0,1.0+0.2) {\small{$(1,0)$ for $D$}};
		
		\filldraw[yellow!20!white, line width=1pt, draw=black, fill opacity=0.2, even odd rule] {(0:0) ellipse (5 and 2.7)};
		
		\node[] (A) at (0,-1.9) {$A$};
	
		\begin{scope}[shift={(\torusradius,0)},rotate=0]
			\filldraw[blue!10!white, line width = 1pt, draw=black] (0,0) circle (\radiusB);
			\filldraw[blue!20!white, line width = 1pt, draw=black] (0,0) circle (\radiusC);
			\filldraw[blue!30!white, line width = 1pt, draw=black] (0,0) circle (\radiusD);
			\filldraw[white, line width = 1pt, draw=black] (70-2:\radiusD)--(70-2:\radiusB) arc (70-2:110+2:\radiusB)--(110+2:\radiusD) arc(110+2:70-2:\radiusD);	
			\filldraw[white, line width = 1pt, draw=black] (250-2:\radiusD)--(250-2:\radiusB) arc (250-2:290+2:\radiusB)--(290+2:\radiusD) arc(290+2:250-2:\radiusD);		
			
			\node at (0,0) {$D$};
			\node at ($(\radiusD,0)!0.5!(\radiusC,0)$) {$C$}; % halfway between the indicated points
			\node at ($(-\radiusD,0)!0.5!(-\radiusC,0)$) {$C$}; % halfway between the indicated points
			\node at ($(\radiusC,0)!0.5!(\radiusB,0)$) {$B$};
			\node at ($(-\radiusC,0)!0.5!(-\radiusB,0)$) {$B$};
			%\node at ($(\radiusB,0)!0.5!(\radius,0)$) {$A$};
			%\node at (0,2*\smallradius) {$X$};
		\end{scope}
		\begin{scope}[shift={(-\torusradius,0)},rotate=90]
			\filldraw[blue!10!white, line width = 1pt, draw=black] (0,0) circle (\radiusB);
			\filldraw[blue!20!white, line width = 1pt, draw=black] (0,0) circle (\radiusC);
			\filldraw[blue!30!white, line width = 1pt, draw=black] (0,0) circle (\radiusD);
			\filldraw[white, line width = 1pt, draw=black] (70-2:\radiusD)--(70-2:\radiusB) arc (70-2:110+2:\radiusB)--(110+2:\radiusD) arc(110+2:70-2:\radiusD);	
			\filldraw[white, line width = 1pt, draw=black] (250-2:\radiusD)--(250-2:\radiusB) arc (250-2:290+2:\radiusB)--(290+2:\radiusD) arc(290+2:250-2:\radiusD);		
			
		\end{scope}

		\begin{scope}[shift={(-\torusradius,0)},rotate=0]
			
			\node at (0,0) {$D$};
			\node at ($(0,\radiusD)!0.5!(0,\radiusC)$) {$C$}; % halfway between the indicated points
			\node at ($(0,-\radiusD)!0.5!(0,-\radiusC)$) {$C$}; % halfway between the indicated points
			\node at ($(0,\radiusC)!0.5!(0,\radiusB)$) {$B$};
			\node at ($(0,-\radiusC)!0.5!(0,-\radiusB)$) {$B$};
			
		\end{scope}
		
	\end{tikzpicture}
	\caption{A decomposition of the knot complement of a $(p,q)$ torus knot. $S^3\setminus K=ABCD$.  $(p,q)=(2,3)$ is illustrated accurately. Note that $A$ is the solid torus such that the rotation axis is contained entirely in $A$ (although this is not obvious in a 2d Euclidean illustration).
	}
	\label{fig:knot-complement-for-pq}
\end{figure}
\begin{proof}
	
Consider the decomposition of the complement of a $(p,q)$ torus knot, $\Omega_K\equiv S^3\setminus K$ depicted in Fig.~\ref{fig:knot-complement-for-pq}. $\Omega_K=ABCD$, where $ABC$ and $BCD$ are both solid tori. The maximum entropy state $\rho^{\star}_{\Omega_K}$ of the information convex set $\Sigma(\Omega_K)$ and the reference state on $S^3$ ($\sigma$) both have $I(A:D|BC)=0$. To see that the former statement is true, we observe that if $\rho^{\star}_{ABCD}$ were not a quantum Markov state, we would be able to obtain a state with higher entropy by merging the marginals back. The latter case follows from Proposition~\ref{prop:the-p,q}.

Below we compute the entropy difference $S(\rho^{\star}_{\Omega_K})-S(\sigma_{\Omega_K})$ in two different ways.
\begin{enumerate}
	\item First, by the structure theorem of $\Sigma(\Omega_K)$:
	\begin{equation}
		S(\rho^{\star}_{\Omega_K})-S(\sigma_{\Omega_K})= \ln \left(\sum_{\zeta \in \calC_{\rm{Hopf}}} N_{\zeta}(K) d_{\zeta}\right).
	\end{equation}
    \item Second, we use the quantum Markov state condition of $\rho^{\star}$ and solve a maximization problem on the marginals $ABC$, $BC$ and $BCD$. The final answer is 
    \begin{equation}\label{eq:marginal-solve}
     S(\rho^{\star}_{\Omega_K})-S(\sigma_{\Omega_K})=\sum_{\substack{\mu,\nu \in \calC_{\rm{flux}}:\\
     		t_p(\mu)=t_q(\nu)
     }}
     \frac{d_{\mu}^2 d_{\nu}^2}{d^2_{t_p(\mu)}}.		
    \end{equation}
   By comparing these two equations, one derives the desired result. The rest of the proof is devoted to the proof of Eq.~(\ref{eq:marginal-solve}).
\end{enumerate}
To prove Eq.~(\ref{eq:marginal-solve}), we first consider what the spiral map does on probability distribution  $\{ p_{\mu}\}_{\mu\in \calC_{\rm{flux}}}$. 
Let us define $f_p(\{ p_{\mu}\}) \equiv \{ r_{\mu}\}_{\mu\in \calC_{\rm{flux}}}$, where 
\begin{equation}
	r_{\nu} = \sum_{\mu:\, t_p(\mu)=\nu} p_{\mu}.
\end{equation}
We further denote $F(\{ p_{\mu}\}) \equiv \sum_{\mu} p_{\mu} \ln \left(\frac{d_{\mu}^2}{p_{\mu}} \right)$.

The problem is to solve
\begin{equation}
	\max_{\substack{\{p_{\mu}\}, \{q_{\nu}\}:\\
			f_p(\{p_{\mu}\})=f_q(\{q_{\nu}\})
	}} \left(F(\{ p_{\mu} \})+F(\{ q_{\nu} \}) -F(f_p (\{p_{\mu}\}))\right).
\end{equation}
This is because this is the general expression for the entropy difference $S(\lambda_{\Omega_K})-S(\sigma_{\Omega_K})$ for any state $\lambda$ obtained by merging states in  $\Sigma(ABC)$ and $\Sigma(BCD)$. (Here, the two states are $\sum_{\mu} p_{\mu} \rho^{\mu}_{ABC}$ and $\sum_{\nu} q_{\nu} \rho^{\nu}_{BCD}$ respectively.) The condition $f_p(\{p_{\mu}\})=f_q(\{q_{\nu}\})$ is the necessary and sufficient condition for two states be merged. The solution leads to Eq.~(\ref{eq:marginal-solve}).	
\end{proof}

\begin{exmp} Here we check the equality for trefoil knot, $(p,q)=(2,3)$. 
	\begin{enumerate}
		\item For the 3d toric code model, 
		the prediction from 
		\eqref{eq:TC-trefoil-data} is $\sum_{\zeta}N_{\zeta}(K)= 2$. This is consistent with the action of the spiral map, since only $(\mu,\nu) = (C_1, C_1)$ and $(C_s, C_1)$ contribute to the sum.  
		
		\item For the 3d $S_3$ quantum double model  
		 $\sum_{\zeta}N_{\zeta}(K)d_{\zeta}= 12$ from the data shown in \eqref{eq:S3-trefoil-data}.
		  This is consistent with the value in terms of the spiral map
		 from Table~\ref{table:spiral-action-for-S3}.  
	\end{enumerate}
\end{exmp}

\begin{corollary}[Universal bound for torus knots] For any torus knot $K$:
	\begin{equation}\label{eq:universal-bound}
	\boxed{	\sum_{\zeta\in \calC_{\rm{Hopf}}} N_{\zeta}(K) d_{\zeta} \le \calD^4.}
	\end{equation}
\end{corollary}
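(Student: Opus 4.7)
My plan is to deduce the bound directly from the equality established in the preceding proposition, by grouping the double sum according to the common image $\lambda = t_p(\mu) = t_q(\nu)$ under the two spiral maps. Concretely, writing
\begin{equation}
A_\lambda \;\equiv\; \sum_{\mu \in \calC_{\rm{flux}}:\, t_p(\mu)=\lambda} d_\mu^2, \qquad
B_\lambda \;\equiv\; \sum_{\nu \in \calC_{\rm{flux}}:\, t_q(\nu)=\lambda} d_\nu^2,
\end{equation}
the preceding proposition reads
\begin{equation}
\sum_{\zeta\in \calC_{\rm{Hopf}}} N_{\zeta}(K)\, d_{\zeta} \;=\; \sum_{\lambda \in \calC_{\rm{flux}}} \frac{A_\lambda\, B_\lambda}{d_\lambda^{\,2}}.
\end{equation}
Here I am using that for each nonzero positive integer $n$ the spiral map $t_n$ is a well-defined function $\calC_{\rm{flux}}\to\calC_{\rm{flux}}$ (by Corollary~\ref{coro:T_n_extreme_spiral}), so the partition of the summation pairs by their common image $\lambda$ is unambiguous.

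Next I would observe two elementary facts. First, since the sets $\{\mu : t_p(\mu)=\lambda\}$ (resp.\ $\{\nu : t_q(\nu)=\lambda\}$) partition $\calC_{\rm{flux}}$ as $\lambda$ ranges over $\calC_{\rm{flux}}$, we have
\begin{equation}
\sum_{\lambda} A_\lambda \;=\; \sum_{\lambda} B_\lambda \;=\; \sum_{\mu \in \calC_{\rm{flux}}} d_\mu^{\,2} \;=\; \calD^2,
\end{equation}
where the final equality is the total-quantum-dimension matching in Proposition~\ref{prop:equality-of-total-dimension}. In particular $A_\lambda \leq \calD^2$ for every $\lambda$. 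Second, quantum dimensions of fluxes satisfy $d_\lambda \geq 1$, so $d_\lambda^{\,2}\geq 1$ in the denominator.

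Combining these, I would estimate
\begin{equation}
\sum_{\lambda} \frac{A_\lambda\, B_\lambda}{d_\lambda^{\,2}} \;\leq\; \calD^2 \sum_{\lambda} \frac{B_\lambda}{d_\lambda^{\,2}} \;\leq\; \calD^2 \sum_{\lambda} B_\lambda \;=\; \calD^2 \cdot \calD^2 \;=\; \calD^4,
\end{equation}
which is the desired inequality.

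The proof is essentially routine once the equality from the previous proposition is in hand, so there is no serious technical obstacle; the only points one must verify are (i) that the spiral map really is a single-valued function on flux labels so that the regrouping is legal, and (ii) that $d_\mu\geq 1$ for all fluxes, both of which are already established earlier in the paper. It is worth noting that one can also sanity-check the bound against the worked examples: for the $3$d toric code one has $\calD^4=4$ and the sum equals $2$, while for the $S_3$ quantum double one has $\calD^4=36$ and the sum equals $12$ — both consistent and both far from saturation, hinting that the bound \eqref{eq:universal-bound} is likely not tight and that a sharper inequality (perhaps involving a more refined use of monotonicity from Proposition~\ref{prop:monotonicity-of-the-spiral-map}) could plausibly be derived.
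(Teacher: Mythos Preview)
Your proof is correct and follows essentially the same approach as the paper's. The paper's one-line argument
\[
\sum_{\substack{\mu,\nu:\\ t_p(\mu)=t_q(\nu)}} \frac{d_\mu^2 d_\nu^2}{d_{t_p(\mu)}^2} \;\le\; \sum_{\mu,\nu} \frac{d_\mu^2 d_\nu^2}{d_{t_p(\mu)}^2} \;\le\; \sum_{\mu,\nu} d_\mu^2 d_\nu^2 \;=\; \calD^4
\]
is exactly your estimate unpacked: dropping the constraint in the first inequality is your bound $B_\lambda \le \calD^2$ (or symmetrically $A_\lambda \le \calD^2$), and the second inequality is your $d_\lambda^2 \ge 1$.
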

\begin{proof}
	\begin{equation}
		\sum_{\substack{\mu,\nu \in \calC_{\rm{flux}}:\\
				t_p(\mu)=t_q(\nu)
		}}
		\frac{d_{\mu}^2 d_{\nu}^2}{d^2_{t_p(\mu)}} \le \sum_{\substack{\mu,\nu \in \calC_{\rm{flux}}
		}}
		\frac{d_{\mu}^2 d_{\nu}^2}{d^2_{t_p(\mu)}}\le \sum_{\substack{\mu,\nu \in \calC_{\rm{flux}}
		}}
		 {d_{\mu}^2 d_{\nu}^2} = \calD^4.
	\end{equation}
\end{proof}

\begin{remark} The upper bound says that the knot multiplicity for any torus knot cannot be too large for a given value of $\calD$. (This is because $d_{\zeta}\ge 1$; see Appendix~\ref{appendix:Fusion-rules}.)
	We do not know if this universal bound can be improved further, i.e., replacing $\calD^4$ by $\calD^{\alpha}$ with $\alpha <4$. Because $\gamma=\ln \calD$ is the topological entanglement entropy of the 3d system, this bound can be thought of as bound in terms of the topological entanglement entropy~\cite{kitaev2006topological,levin2006detecting,Kim2013}. Furthermore, it is an interesting question if any hyperbolic knot or satellite knot may violate this bound.
\end{remark}

With a slight generalization of the method one can derive a similar constraint for the multiplicities $N_{\zeta}^{\eta}(p,q)$ (defined in Eq.~(\ref{eq:Npq-def})); see  \S\ref{eq:proof-long-consistency}. Therein, we shall present a proof using a slightly different method.

\subsection{A consistency relation for unknot minus a torus knot}\label{eq:proof-long-consistency}
\begin{Proposition} Let $K$ be a $(p,q)$ torus knot, then
	\begin{equation}\label{eq:pq-long-consistency-appendix}
		\boxed{	\sum_{\zeta \in \calC_{\rm{Hopf}}} N_{\zeta}^{\eta}(p,q) d_{\zeta} = \sum_{\substack{\mu \in \calC_{\rm{flux}}:\\
					t_p(\mu)=t_{(p,q)}(\eta)
			}}
			\frac{d_{\mu}^2 d_{\eta}}{d^2_{t_p(\mu)}}.
		}
	\end{equation}
\end{Proposition}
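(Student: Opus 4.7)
The plan is to adapt the proof of the earlier Proposition on torus knot complements in $S^3$ (the one leading to \eqref{eq:universal-bound}) to the present region $T\setminus K_{p,q}$, where the role of the axis-containing solid torus $A$ of Fig.~\ref{fig:knot-complement-for-pq} is now replaced by fixing a Hopf label $\eta$ on the outer torus-shell boundary of $T$. Concretely, I would partition $T\setminus K_{p,q}=BCD$ analogously to Fig.~\ref{fig:knot-complement-for-pq} (but without the region $A$), where $BC$ is the solid torus of revolution type $(p,q)$ adjacent to the removed knot, $CD$ is the solid torus of revolution type $(1,0)$ sitting around the core of $T$, and $B$ contains the outer $\eta$-labelled boundary $\partial T$.

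Next, I would consider the maximum-entropy state $\rho^{\star,\eta}\in\Sigma_\eta(T\setminus K_{p,q})$ and establish a quantum Markov chain condition across the internal partition, namely $I(B:D|C)_{\rho^{\star,\eta}}=0$, by the argument of Lemma~\ref{lemma:max_entropy_Markov}. I would also show that the reference state $\sigma$, restricted to $BCD$, satisfies the same Markov condition; this is the analogue of what was supplied by Proposition~\ref{prop:the-p,q} in the previous proof, and follows by the same $(p,q)$/$(q,p)$ duality on a sphere-completed ambient space.

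Then I would compute $S(\rho^{\star,\eta})-S(\sigma_{BCD})$ in two ways. The Hilbert space theorem, together with the identification of sectors on $\partial(T\setminus K_{p,q})$ by pairs $(\eta,\zeta)\in \calC_{\rm{Hopf}}\times \calC_{\rm{Hopf}}$, gives
\begin{equation*}
S(\rho^{\star,\eta})-S(\sigma_{BCD}) \;=\; \ln d_\eta+\ln\!\left(\sum_{\zeta\in\calC_{\rm{Hopf}}} N_\zeta^\eta(p,q)\, d_\zeta\right).
\end{equation*}
On the other hand, using the Markov condition and the simplex structure of the solid tori $BC$ and $CD$ (whose extreme points are labelled by fluxes), the entropy is a maximisation over probability distributions $\{p_\mu\}_{\mu\in\calC_{\rm{flux}}}$ on the core flux of $CD$ subject to a compatibility constraint on the gluing region $C$. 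The constraint reads $t_p(\mu)=t_{(p,q)}(\eta)$, where $t_p:\calC_{\rm{flux}}\to\calC_{\rm{flux}}$ is the spiral map of Definition~\ref{def:spiral_map} and $t_{(p,q)}:\calC_{\rm{Hopf}}\to\calC_{\rm{flux}}$ is the torus-shell spiral of \eqref{eq:spiral_shell_2}. Solving the maximisation reproduces the right-hand side of \eqref{eq:pq-long-consistency-appendix} multiplied by $d_\eta$, and comparing with the first expression yields the desired identity.

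The main obstacle I anticipate is a careful derivation of the matching condition $t_p(\mu)=t_{(p,q)}(\eta)$. One needs to verify that reducing a core-flux state $\rho^\mu_{CD}$ to $C$ and reducing a state with outer Hopf label $\eta$ on $B$ to $C$ land in the same extreme point of $\Sigma(C)$ (a thickened torus shell) if and only if this equation holds. This amounts to repeated use of Corollary~\ref{coro:anypq}, Proposition~\ref{prop:extreme-point-restriction} and the product rule for spiral maps, but the bookkeeping of the framings induced by the $(p,q)$ twisting of $K$ inside $T$ is delicate.
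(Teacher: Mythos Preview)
Your strategy---fix $\eta$ and run the max-entropy/Markov argument of the earlier $S^3\setminus K$ proposition directly on $T\setminus K_{p,q}$---looks natural, but the partition you describe cannot exist. In Fig.~\ref{fig:knot-complement-for-pq} the knot tube touches the interface between $A$ and $BCD$; the region $BCD$ there is literally a solid torus with a \emph{single} boundary component, not the two-boundary region $T\setminus K_{p,q}$ whose multiplicities are $N_\zeta^\eta(p,q)$. Once you insist that $K$ sit in the interior of $T$ (so that $\partial T$ and $\partial K$ are disjoint boundary tori), the claim ``$BC$ is a solid torus and $B\supset\partial T$'' becomes impossible: a solid torus has one boundary torus, so $\partial(BC)=\partial T$ would force $BC=T$, contradicting $BC\subset T\setminus K$. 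More generally, any decomposition of $T\setminus K$ into two overlapping solid tori must split the closed torus $\partial T$ between the pieces, and then neither marginal is determined by the Hopf label $\eta$ alone---so the maximisation-over-$\mu$ step has no clean formulation.

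The paper sidesteps this obstruction by passing to a \emph{three}-boundary region: it removes the core of $T$ to obtain a torus shell minus $K$, with boundaries labelled $(\eta,\zeta,\eta')$. The associativity theorem then gives $N_\zeta^\eta(p,q)=\sum_{\mu} N^{\eta}_{\zeta\,\varphi^{\vee}(\mu)}$, reducing the problem to computing $\sum_\zeta N^{\eta}_{\zeta\,\varphi^{\vee}(\mu)} d_\zeta$ for each fixed $\mu$. For that, one merges the specific extreme points $\rho^\eta_{ABC}$ and $\rho^{\varphi^{\vee}(\mu)}_{BCD}$ (now both sectorizable torus shells, each determined by a single Hopf label); the merge exists precisely when $t_p(\mu)=t_{(p,q)}(\eta)$, and comparing the entropy of the merged state to the reference gives Eq.~\eqref{eq:torus-trefoil-consistency-relation-v2}. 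Summing over $\mu$ yields \eqref{eq:pq-long-consistency-appendix}. The extra boundary is exactly what lets the $\eta$- and $\mu$-dependence live on separate sectorizable pieces; without it, your direct approach stalls at the geometry.
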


\begin{proof}
	Consider the region shown in Fig.~\ref{fig:T-K-long-consistency}. It has three boundaries, and the associated multiplicities are $\{ N_{\zeta \eta'}^{\eta}\}$. 
	Let $\{ N_{\zeta \varphi^{\vee}(\mu)}^{\eta}\}$ be a subset of these multiplicities. Here $\varphi^{\vee}:\calC_{\text{flux}}\hookrightarrow\calC_{\rm{Hopf}}$ is an embedding defined as $\varphi:\calC_{\text{flux}}\hookrightarrow\calC_{\rm{loop}}\subset \calC_{\rm{Hopf}}$ followed by the transformation $\eta\to \eta^{\vee}$.
	By the associativity theorem, 

	\begin{equation}\label{eq:ad-mu}
	 N_{\zeta}^{\eta}(p,q) = \sum_{\mu\in \calC_{\rm{flux}}} N_{\zeta \varphi^{\vee}(\mu)}^{\eta}. 
\end{equation}
Next, we show
	\begin{equation} \label{eq:torus-trefoil-consistency-relation-v2}
\sum_{\zeta \in \calC_{\rm{Hopf}} } N_{\zeta \varphi^{\vee}(\mu)}^\eta d_\zeta = 
			{ d_\mu^2 d_\eta \over d_{t_p(\mu)}^2 } \delta_{t_p(\mu), t_{(p,q)}(\eta)}.
	\end{equation}
Suppose $\eta$ and $\varphi^{\vee}(\mu)$ can label the two boundaries in Fig.~\ref{fig:T-K-long-consistency} for some state in $\Sigma(ABCD)$. Then we must have $t_p(\mu) = t_{(p,q)}(\eta)$. This is because, that is the condition for the marginal on $ABC$ (completed determined by $\eta$) and the marginal on $BCD$ (completely determined by $\varphi^{\vee}(\mu)$) to match on the overlapping region $BC$.

\begin{figure}[h]
	\centering
	\begin{tikzpicture}	
		\def\radius{1.75}
		\def\smallradius{.5}
		\def\thickness{.15}
		\def\torusradius{2.7}
		\def\radiusB{1.35}
		\def\radiusC{0.95}
		\def\radiusD{.55}
		\def\radiusF{.25}
		\node[] (A) at (0,1.5) {\includegraphics[scale=1.2]{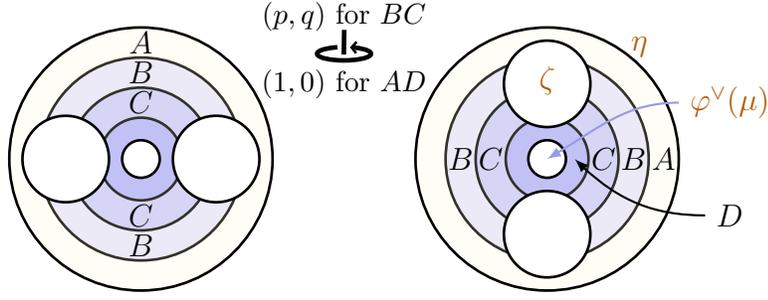}};
		\node[] (A) at (0,1.9) {\small{$(p,q)$ for $BC$}};
	    \node[] (A) at (0,1.0) {\small{$(1,0)$ for $AD$}};
		\begin{scope}[shift={(\torusradius,0)},rotate=0]
			\filldraw[blue!10!white, line width = 1pt, draw=black] (0,0) circle (\radiusB);
			\filldraw[blue!20!white, line width = 1pt, draw=black] (0,0) circle (\radiusC);
 			
			\filldraw[blue!30!white, line width = 1pt, draw=black] (0,0) circle (\radiusD);		
			\filldraw[yellow!20!white, line width=1pt, draw=black, fill opacity=0.2, even odd rule] (\radius,0) arc (0:360:\radius)
			(\smallradius,2*\smallradius) arc (0:360:\smallradius)
			(\smallradius,-2*\smallradius) arc (0:360:\smallradius);	
			\draw[name path = c1, line width=1pt, fill=white]	 (\smallradius+0.5*\thickness,2*\smallradius) arc (0:360:\smallradius+0.5*\thickness);
			\draw[name path = c2, line width=1pt, fill=white]	 (\smallradius+0.5*\thickness,-2*\smallradius) arc (0:360:\smallradius+0.5*\thickness);

			\fill[white] (\smallradius,-2*\smallradius) arc (0:360:\smallradius);
			\fill[white] (\smallradius,2*\smallradius) arc (0:360:\smallradius);
			
			\filldraw[white, line width=1pt, draw=black] (0,0) circle (\radiusF);

			\node at ($(\radiusD,0)!0.5!(\radiusC,0)$) {$C$}; % halfway between the indicated points
			\node at ($(-\radiusD,0)!0.5!(-\radiusC,0)$) {$C$}; % halfway between the indicated points
			\node at ($(\radiusC,0)!0.5!(\radiusB,0)$) {$B$};
			\node at ($(-\radiusC,0)!0.5!(-\radiusB,0)$) {$B$};
			\node at ($(\radiusB,0)!0.5!(\radius,0)$) {$A$};
			 
			\node at ($(0,0.6*\radius)$) {\color{orange!70!black}{$\zeta$}};
			\node at ($(0.7*\radius,0.85*\radius)$) {\color{orange!70!black}{$\eta$}};
			
			\draw[lightblue, -latex,thick](\radius, \radius-2*\smallradius)node[right]{\color{orange!70!black}{$\varphi^{\vee}(\mu)$}}
			to[out=180,in=30] (0, 0);
			
			\draw[ -latex,thick](1.2*\radius, -\radius+2*\smallradius)node[right]{$D$}
			to[out=180,in=-45] (2*\radiusD/3, 0);

		\end{scope}
		\begin{scope}[shift={(-\torusradius,0)},rotate=90]
			\filldraw[blue!10!white, line width = 1pt, draw=black] (0,0) circle (\radiusB);
			\filldraw[blue!20!white, line width = 1pt, draw=black] (0,0) circle (\radiusC);

			\filldraw[blue!30!white, line width = 1pt, draw=black] (0,0) circle (\radiusD);		
			\filldraw[yellow!20!white, line width=1pt, draw=black, fill opacity=0.2, even odd rule] (\radius,0) arc (0:360:\radius)
			(\smallradius,2*\smallradius) arc (0:360:\smallradius)
			(\smallradius,-2*\smallradius) arc (0:360:\smallradius);	
			 
			\draw[name path = c1, line width=1pt, fill=white]	 (\smallradius+0.5*\thickness,2*\smallradius) arc (0:360:\smallradius+0.5*\thickness);
			\draw[name path = c2, line width=1pt, fill=white]	 (\smallradius+0.5*\thickness,-2*\smallradius) arc (0:360:\smallradius+0.5*\thickness);
			\fill[white] (\smallradius,-2*\smallradius) arc (0:360:\smallradius);
			\fill[white] (\smallradius,2*\smallradius) arc (0:360:\smallradius);
			\filldraw[white, line width=1pt, draw=black] (0,0) circle (\radiusF);
			 
		\end{scope}
		\begin{scope}[shift={(-\torusradius,0)},rotate=0]
			\node at ($(0,\radiusD)!0.5!(0,\radiusC)$) {$C$}; % halfway between the indicated points
			\node at ($(0,-\radiusD)!0.5!(0,-\radiusC)$) {$C$}; % halfway between the indicated points
			\node at ($(0,\radiusC)!0.5!(0,\radiusB)$) {$B$};
			\node at ($(0,-\radiusC)!0.5!(0,-\radiusB)$) {$B$};
			\node at ($(0, \radiusB)!0.5!(0,\radius)$) {$A$};		 
		\end{scope}
		 \end{tikzpicture}
	\caption{A decomposition of a torus shell minus a $(p,q)$ torus knot. The figure is precise for $(p,q)=(2,3)$, i.e., when the torus knot is a trefoil.}
	\label{fig:T-K-long-consistency}
\end{figure}

If $\delta_{t_p(\mu) \ne t_{(p,q)}(\eta)}$ then  $ N_{\zeta \varphi^{\vee}(\mu)}^\eta=0$ for any $\zeta$. This is consistent with Eq.~(\ref{eq:torus-trefoil-consistency-relation-v2}).

If $\delta_{t_p(\mu) = t_{(p,q)}(\eta)}$ we are able to merge $\rho_{ABC}^{\eta}$ and $\rho_{BCD}^{\varphi^{\vee}(\mu)}$. The existence of a merged state, which we denote as $\tau_{ABCD}^{\eta \mu }$, implies that $\sum_{\zeta} N_{\zeta \varphi^{\vee}(\mu)}^\eta>0$. To derive an equality, we consider the entropy difference $S(\tau_{ABCD}^{\eta \mu })-S(\tau_{ABCD}^{11})$ and note that $\tau_{ABCD}^{11}= \sigma_{ABCD}$.

With the structure of $\Sigma(ABCD)$, we have
\begin{equation}
 S(\tau_{ABCD}^{\eta \mu })-S(\tau_{ABCD}^{11}) =	\ln d_{\eta} + \ln d_{\varphi^{\vee}(\mu)} + \ln (\sum_{\zeta} N_{\zeta \varphi^{\vee}(\mu)}^\eta d_{\zeta}).
\end{equation}
On the other hand, by looking at the marginals, we get
\begin{equation}
	S(\tau_{ABCD}^{\eta \mu })-S(\tau_{ABCD}^{11}) = 2 \ln d_{\eta} + 2 \ln d_{\varphi^{\vee}(\mu)} - 2 \ln d_{t_p(\mu)}.
\end{equation}
By comparing these two equations, we get:
\begin{equation}
	\sum_{\zeta} N_{\zeta \varphi^{\vee}(\mu)}^\eta d_{\zeta} = \frac{d_{\eta} d_{\mu}^2 }{ d_{t_p(\mu)}^2}.
\end{equation}
Here, we have plugged in $d_{\varphi^{\vee}(\mu)}=d_{\mu}^2$. This verifies Eq.~(\ref{eq:torus-trefoil-consistency-relation-v2}) and completes the proof. 
\end{proof}

\section{Discussion and open questions}

We have extended the entanglement bootstrap approach to 3d gapped phases. Starting with the two axioms on the entanglement entropy, we are able to make concrete statements of the general structures of the theory. The focus of this work is on a detailed analysis of the diverse excitation types and the fusion spaces and processes that appear in these systems. This also set up the foundation for future investigation of braiding statistics \cite{braiding-paper}. 

%Physical summary
There are many approaches to studying gapped quantum many-body systems in three spatial dimensions. Under some assumptions that rule out fracton phases, one expects the extreme low-energy physics to be governed by a suitable topological field theory \cite{2018PhRvX...8b1074L, Lan:2018bui, Johnson-Freyd:2020usu, Johnson-Freyd:2020twl}.
Given the exception represented by fracton phases, one may wonder precisely when such a framework applies. One of the goals of the entanglement bootstrap approach is to answer this question.

The logic we use is completely independent from these prior studies. Here is a physical summary of what we learned: 
\begin{enumerate}
	\item We are able to obtain the vacuum state on the 3-sphere from a vacuum state (i.e., the reference state) on a ball. This ``sphere completion" is the first step toward a bigger goal: constructing closed space manifolds (and possibly spacetime manifolds) of different topologies from ``local" knowledge of a ground state \cite{braiding-paper}.
	
	\item The set of point particles is nontrivial whenever the set of loop excitations is nontrivial. In other words, it is impossible to have a 3d topological order with only loop excitations, nor can there be a 3d topological order with only particle excitations. 
	In particular, the total quantum dimension computed from particles equals that computed from flux loops (Proposition~\ref{prop:equality-of-total-dimension}). 
	
	\item The simplest class of loop excitations we identify are the flux loops. Fluxes are both simple and nontrivial. For example, there is a way to fuse the flux loops spirally into a new flux loop. The spiral map provides an understanding of this process. Interestingly, this spiral map seem to escape all available dimensional reductions, e.g.~in \S\ref{subsec:3d-known} and in Appendix~\ref{appendix:Fusion-rules}. 
	
	\item Loop excitations can be linked and knotted. The excitations on Hopf links provide the most general superselection sectors on a closed loop.\footnote{Here, closed loops are embedded circles. On a graph, there can be other excitation types.} The set of loop excitations that can exist alone on a single knot $K\subset S^3$ must be a subset of Hopf sectors. 
	\item For a non-Abelian theory (where quantum dimensions differ from unity), the knot complement can serve as a quantum memory; this is in the same manner as, e.g., the complement of a few Fibonacci anyons on a sphere in 2d. One notable difference is that we need only one knot. The fusion space associated with the knot excitations are identified in \S\ref{subsec:fusion-from-knots}, and the coherence is verified by examples in \S\ref{subsec:explicit_QD_data}. 
	
	\item For torus knots, we are able to verify consistency relations for their knot multiplicities (\S\ref{sec:consistency-conditions}). In particular, there is a universal bound for their knot multiplicities, \eqref{eq:universal-bound}. It would be interesting to see if any hyperbolic knot or satellite knot may violate this bound.
	
	\item We assumed that the global Hilbert space is the tensor product of local Hilbert spaces. This assumption limits our analysis to systems made of bosons. However, the topological point particle excitations can be either emergent bosons or emergent fermions. It would be interesting to (1) come up with a way to tell which emergent particles are bosons and which emergent particles are fermions and (2) modify this assumption to allow the study of fermionic systems.
	
	\item We assumed that the reference state is defined on a ball, and the ball is ``smooth" in the sense that both axioms are satisfied on \emph{all} bounded-radius balls contained in that ball. On the other hand, some of the statements we prove require a weaker assumption (e.g., Proposition~\ref{prop:anypq} and \ref{prop:T_n_extreme}). Statements like these can be adapted to 3d systems with codimension-1, 2, and 3 defects.
	
	\item  In 2d, no known system with nonzero chiral central charge obeys the precise version of the axioms. One may wonder if we miss a certain exotic class of gapped phases of matter by assuming the exact axioms. We do not have a concrete no-go theorem. 
	In 2d, one hope is to develop a robust version of entanglement bootstrap to accommodate chiral phases. While a robust version is currently unavailable, a piece of supporting evidence is that related tools are shown to be useful in guessing a new formula for chiral central charge~\cite{2021arXiv211006932K}. In 3d, axiom {\bf A1} does exclude fractons, which violate this axiom with a linear term. It is interesting to ask if there are other interesting 3d gapped phases that escape the description (meaning that no representative wave function of the phase satisfies the two axioms precisely). 
\end{enumerate}

We conclude with some other outstanding open questions.

\begin{enumerate}
 	\item The isomorphism between embedded regions gives rise to a well-defined notion of particle types. For a chosen disk, a universal ``reference frame" for comparing anyon types exists in 2d (Lemma~4.3 of \cite{shi2020fusion}). The basic intuition here is that an embedded annulus cannot flip within a disk. The same observation generalizes to the sphere shell in 3d. A nontrivial automorphism does exist for loop excitations because a torus can flip within a ball. The sphere completion lemma provides extra flexibility: on a sphere, an annulus can be turned inside through a sequence of embedded intermediate configurations. This provides a nice way to think about antiparticles. 
	
	Immersion provides additional flexibility for the choice of regions as well as the ways regions can deform. 
	For example, the torus shell can be mapped to itself in many different ways through immersion. 
	One simple thing that we already learned from immersion is the product rule of the spiral maps. On the other hand, many open questions remain, suggesting that we may be able to learn more from immersion. Instead of providing a long list of such questions, we mention a simple one. In 3d, a sphere shell can be flipped inside out through immersion \cite{smale1959classification, outside-in, levy1995making}; this can be done within a ball rather than the $S^3$ obtained from sphere completion. This generates a permutation of the labels in $\calC_{\rm{point}}$. 
	Is it true that this permutation always maps the vacuum sector $1\in\calC_{\rm{point}}$ back to itself?
	
	It will be interesting to understand the homotopy properties of the space of (immersed) paths between regions, for example, in the definition of the spiral map.
	
	\item  We have studied two types of dimensional reduction in Appendix~\ref{subsec:diml-reduction}. The idea is to identify 2d regions related to the 3d region by a revolution. A couple of open questions remain. One question is if the dimensional reduction for any nontrivial flux is a 2d system that allows a gapped boundary. Another question is if inequivalent dimensional reduction arises by considering a revolution of type $(p,q)$. 
	
	\item Are sectorizable regions always of the form $\mathcal{M} \times \bbI$, i.e., a manifold (possibly with boundary) times an interval?

	\item The naive analog of the shrinking rule 
	\eqref{eq:particles-from-loops} for knots
	other than the unknot is not true. Perhaps the correct generalization can be found by combining the consistency conditions for ball minus torus and for torus minus knot. Such a generalization would help us understand the correct interpretation of fusion equations such as \eqref{eq:shrinking_rule_knots}.
	
	\item In the quantum double examples, we can see an intimate relationship between the information convex set of a knot complement and its fundamental group, the knot group, and more specifically with the Wirtinger presentation of the knot group. While ground states of the quantum double on the knot complement with a particular gapped boundary condition are in one-to-one correspondence with {\it representations} of the knot group in the gauge group $G$ (modulo conjugation) \cite{cmj-comment}, the elements of the information convex set are instead density matrices. This seems to be a new mathematical structure on which the knot group can act; it is interesting to ask about the ability of such actions to distinguish knots from each other.
	
	\item In this paper, we have focused on torus knots. It is an open question whether the structure of the information convex set shows some dramatic difference for satellite knots or hyperbolic knots.
	We conjecture that the bounds \eqref{eq:universal-bound} of \S\ref{subsec:universal-bound} can be violated for more general knots.
\end{enumerate}

%\vfill\eject
{\bf Acknowledgments.}
We are grateful to 
Meng Cheng, Tarun Grover, 
Jeongwan Haah, Chao-Ming Jian, Isaac Kim, 
Xiang Li, Shu-Heng Shao, 
 Xiao-Gang Wen and Xueda Wen  
for helpful discussions and comments.
This work was supported in part by
funds provided by the U.S. Department of Energy
(D.O.E.) under the cooperative research agreement 
DE-SC0009919, 
by the University of California Laboratory Fees Research Program, grant LFR-20-653926,
and by the Simons Collaboration on Ultra-Quantum Matter, which is a grant from the Simons Foundation (652264).

\renewcommand{\theequation}{\Alph{section}.\arabic{equation}}
\appendix

\section{Glossary of notation}

$$
\begin{array}{|c|c|c|} 
\hline
\text{notation} & \text{meaning} & \text{definition appears in} \\ 
\hline
\hline
\sigma & \text{the reference state} &\\
I(A:C) &  \text{mutual information: } S_{A} + S_{C} - S_{AC} & \\ 
I(A:C|B) &  \text{conditional mutual information: } S_{AB} + S_{BC} - S_B - S_{ABC} & \\ 
\Delta(B,C) & S_C+ S_{BC} - S_B & \eqref{eq:defs-of-deltas} \\
\Delta(B,C,D) & S_{BC}+ S_{CD} - S_B - S_D & \eqref{eq:defs-of-deltas} \\
\Gamma(\Omega) & \text{the set of bounded-radius balls on the region $\Omega$} & \S\ref{subsec:immersions} \vspace{-0.13 cm} \\   % shrink some space here.
  \, & \text{on which we impose the axioms} & \\
\widetilde \Gamma(\Omega_+) & \text{the set of balls arising as preimages in $\Omega_+$ under $\mathfrak{p}$ } & \S\ref{subsec:immersions}  \vspace{-0.13 cm}\\
  \, & \text{of balls in $\Gamma(\mathfrak{p}(\Omega_+))$} & \\
\mathbb{I} & \text{interval} & \S\ref{subsec:simplex} \\
 \partial \Omega & \text{the thickened boundary of region $\Omega$} & \S\ref{subsubsec:fusionspaces}\\ 
 \Sigma(\Omega) & \text{the information convex set of region $\Omega$} & \S\ref{sec:intro}, \text{Def.}~\ref{def:ICS}  \\
 \Sigma_I(\Omega) &\text{the subset of $\Sigma(\Omega)$ in sector $I$ on $\partial\Omega$} & \eqref{eq:ics-with-bcs} \\
 \mathbb{V}_I(\Omega) & \text{fusion space of $\Omega$ with $I$ boundary conditions} &  \text{Theorem}~\ref{thm:Hilbert} \\
  \mathcal{S}(\mathbb{V}) & \text{density matrices on $\mathbb{V}$} &  \text{Theorem}~\ref{thm:Hilbert} \\
  N_I(\Omega) & \text{dim}\mathbb{V}_I(\Omega) & \S\ref{subsec:associativity-theorem} \\ 
 d_I  & \text{quantum dimension of excitation type $I$} & \text{Def.}~\ref{def:quantum_dim_added} \\
 \rho \bowtie \sigma & \text{the result of merging the states $\rho$ and $\sigma$} & \text{Lemma }\ref{lemma:merging_lemma} \\
   \text{ext}(\Sigma(\Omega))  & \text{the set of extreme points of $\Sigma(\Omega)$} & \S\ref{subsection:sectors} \\
X & \text{sphere shell} &\S\ref{subsection:sectors} \\
T & \text{solid torus} & \S\ref{subsection:sectors}\\
\mathbb{T} & \text{torus shell} & \S\ref{subsection:sectors} \\ 
\CC_\text{point} & \text{the set of labels on $\text{ext}(\Sigma(X))$}    & \eqref{eq:cpoint}\\
\CC_\text{flux}& \text{the set of labels on $\text{ext}(\Sigma(T))$ }  & \eqref{eq:cflux}\\
\CC_\text{Hopf}& \text{the set of labels on $\text{ext}(\Sigma(\mathbb{T}))$}   & \eqref{eq:chopf}\\
\CC_\text{loop} & \text{the set of shrinkable loops} & \eqref{eq:shrinkable-loops} \\
\varphi & \text{embedding of $\CC_\text{flux}$ into $\CC_\text{loop}$} & \eqref{eq:shrinkable-loops}\\
\phi & \text{embedding of $\CC_\text{point}$ into $\CC_\text{loop}$} & \eqref{eq:shrinkable-loops-2}, \S\ref{sec:subsub-embeding-of-point}\\
\Omega_K & \text{complement of the knot $K$ in $S^3$} & \eqref{eq:knot-complement} \\ 
(G)_\text{ir} & \text{irreps of $G$} &\S\ref{sec:QD} \\
(G)_\text{cj} & \text{conjugacy classes of $G$} &\S\ref{sec:QD} \\
E_g & \text{the centralizer of $g \in G$} & \S\ref{sec:QD}\\
C_g & \text{the conjugacy class of $g \in G$} & \S\ref{sec:QD}\\
E_{(g,h)} & E_g \cap E_h &\S\ref{sec:QD} \\
C_{(g,h)} & \{ (tg t^{-1}, t h t^{-1}) | t \in G \} &\S\ref{sec:QD} \\
\CT_n & \text{$n$th spiral map on $\Sigma(T)$} & \eqref{def:spiral_map} \\
t_n & \text{$n$th spiral map on fluxes} & \eqref{eq:spiral_on_fluxes} \\
 \hline
\end{array}
$$

\section{Proof of associativity theorem}\label{appendix:associtivity}
In this appendix, we present the proof of the associativity theorem (Theorem~\ref{thm:associativity}).

\subsection{Proof of Lemma~\ref{lemma:lr}}
Below is the proof of Lemma~\ref{lemma:lr}.

\begin{proof}
	The first statement is simple to prove. The fact that $\rho_{\Omega_L}$ and $\lambda_{\Omega_R}$ can be merged follows from (1) $I(A_L B_L : C_R \vert C_L)_{\rho}=0$ and $I(C_L : B_R A_R \vert C_R)_{\lambda}=0$, and (2) $\rho_{\Omega_L}$ and $\rho_{\Omega_R}$ are consistent on $C$.

	To prove the second statement, it is enough to verify the extreme point criterion stated in Lemma~\ref{lemma:ext_criterion}, namely $(S_{\Omega} + S_{\Omega\setminus \partial\Omega} - S_{\partial\Omega})_{\tau}=0$, where $\tau$ is the merged state. This identity follows algebraically from the following entropy conditions on $\tau$:
	\begin{equation}\label{eq:algebra_v2}
		\begin{aligned}
			S_{\Omega_L} + S_{B_L} - S_{A_L} - S_{C}&=0, \\
			S_{\Omega_R} + S_{B_R} - S_{A_R} - S_{C}&=0, \\
			I(A_LB_L:A_RB_R \vert C) &=0,\\
			S_{BC} +S_{C}- S_B&=0,\\
			I(B_L:B_R)&=0,\\
			I(A_L:A_R)&=0.
		\end{aligned}
	\end{equation}
	The first and the second lines are the factorization property for $\rho_{\Omega_L}$ and $\lambda_{\Omega_R}$ respectively. The third line follows from the quantum Markov chain property of the merged state. 
	The fourth, fifth, and sixth lines are consequences of the factorization property of extreme points and SSA; in more details, the fourth line needs $\tau$ restricted to $C$ to be an extreme point whereas the fifth and sixth lines use the fact that $\tau$ restricted to $B_L$ and $A_L$ are extreme points respectively.

	To see explicitly the actual algebra that leads to the final answer, we rewrite each line of Eq.~(\ref{eq:algebra_v2}) and see, for the state $\tau$:
	\begin{equation}\label{eq:algebra_v3}
		\begin{aligned}
			0&=-S_{\Omega_L} -S_{B_L} + S_{A_L} + S_{C}, \\
			0&=-S_{\Omega_R} -S_{B_R} + S_{A_R} + S_{C}, \\
			S_{\Omega} &= S_{\Omega_L}+S_{\Omega_R}-S_C,\\
			S_{\Omega \setminus \partial \Omega} &= S_B-S_{C},\\
			0&=-S_{B}+S_{B_L} +S_{B_R},\\
			-S_{\partial\Omega}&=-S_{A_L} - S_{A_R}.
		\end{aligned}
	\end{equation}
	By adding each side of all the six lines, we arrive at $(S_{\Omega} + S_{\Omega\setminus \partial\Omega} - S_{\partial\Omega})_{\tau}=0$. Thus the merged state $\tau$ is an extreme point of $\Sigma(\Omega)$. This completes the proof.
\end{proof}

\subsection{Proof of Theorem~\ref{thm:associativity} (associativity)}
Below is the proof of the associativity theorem (Theorem~\ref{thm:associativity}).  
This theorem relates the dimensions of the fusion spaces of a region obtained by merging two subregions along a whole boundary component to those of the subregions being merged.
In the writing of the proof, we omit the subsystem labels, and denote $N_{a_L}^{a_R}(\Omega)$, $N_{a_L}^i(\Omega_L)$ and $N_{i}^{a_R}(\Omega_R)$ as $N_{a_L}^{a_R}$, $N_{a_L}^i$ and $N_{i}^{a_R}$ for simplicity.
	\begin{proof}
	If $N_{a_L}^{a_R}=0$, then we must have $\sum_{i\in \calC_C} N_{a_L}^i N_i^{a_R}=0$. If this were not the case, it would imply that $\Sigma_{a_L}^i(\Omega_L)$ and $\Sigma_{i}^{a_R}(\Omega_R)$ were both nonempty. Then it would be possible to take an element from each set and merge them; the end result would be an element in $\Sigma_{a_L}^{a_R}(\Omega)$. This would contradict the statement that $N_{a_L}^{a_R}=0$.
		
		If $N_{a_L}^{a_R}\ge 1$, then $\Sigma_{a_L}^{a_R}(\Omega)$ is nonempty. This implies that $\sum_{i\in \calC_C} N_{a_L}^i N_i^{a_R}\ge1$. 
		On general grounds, 
		\begin{equation}\label{eq:log-fusion-multiplicity1}
			\ln N_{a_L}^{a_R} = S(\rho^{(a_L,a_R)_\text{max}}_{\Omega}) - S(\rho^{(a_L,a_R)_\text{min}}_{\Omega})
		\end{equation}
		where $\rho^{(a_L,a_R)_\text{max}}_{\Omega}$ is the  maximum-entropy state of $\Sigma_{a_L}^{a_R}(\Omega)$ and $\rho^{(a_L,a_R)_\text{min}}_{\Omega}$ is an extreme point of  $\Sigma_{a_L}^{a_R}(\Omega)$.
		
		The maximum-entropy state $\rho^{(a_L,a_R)_\text{max}}_{\Omega}$ obeys a quantum Markov chain condition:
		\begin{equation}\label{eq:markov-for-associativity} 
			I(A_LB_L: A_RB_R|C)=0.
		\end{equation}
		This follows from Lemma~\ref{lemma:max_entropy_Markov}; if the maximum-entropy state were not Markov, we could merge the marginals to a state with larger entropy.
		In addition, while not every extreme point (i.e., minimum-entropy state) satisfies Eq.~(\ref{eq:markov-for-associativity}) in general, it is possible to choose an extreme point with this property. Consider a superselection sector $i_0\in \Sigma(C)$ such that $ N_{a_L}^{i_0} N_{i_0}^{a_R}\ge 1$. Choose an extreme point from $\Sigma_{a_L}^{i_0}(\Omega_L)$ and $\Sigma_{i_0}^{a_R}(\Omega_R)$ respectively and merge them; in this way, we obtain an extreme point of $\Sigma_{a_L}^{a_R}(\Omega)$ that does obey Eq.~(\ref{eq:markov-for-associativity}); see Lemma~\ref{lemma:lr}. In the following, $\rho^{(a_L,a_R)_\text{min}}_{\Omega}$ refers to the \emph{specific} extreme point associated with the choice $i_0$.
		
		Quantum Markov chains saturate SSA and therefore
		\eqref{eq:log-fusion-multiplicity1} becomes
		\begin{equation}\label{eq:log-fusion-multiplicity2} 
			\ln  N_{a_L}^{a_R}= 
			( S_{\Omega_L}+ S_{\Omega_R} - S_{C})_{\rho^{(a_L,a_R)_\text{max}}}
			- (  S_{\Omega_L}+S_{\Omega_R} - S_{C})_{\rho^{(a_L,a_R)_\text{min}}}. 
		\end{equation}
		
		Now consider three partial traces of $\rho^{(a_L,a_R)_\text{max}}_{\Omega}$:
		\begin{equation} 
			\begin{aligned}
				\Tr_{A_RB_R}\,\, \rho^{(a_L,a_R)_\text{max}}_{\Omega} &= \sum_i p_{i}^{(a_L,a_R)} \rho^{(a_L,i)_\text{max}}_{\Omega_L}, \\
				\Tr_{A_L B_L} \,\, \rho^{(a_L,a_R)_\text{max}}_{\Omega}& = \sum_i p_{i}^{(a_L,a_R)} \rho^{(i, a_R)_\text{max}}_{\Omega_R}, \\
				\Tr_{AB}\,\,  \rho^{(a_L,a_R)_\text{max}}_{\Omega} &= \sum_i p_{i}^{(a_L,a_R)} \rho^i_{C} .
			\end{aligned}
		\end{equation}
		Here the sum over $i$ runs over $\calC_C$, the set of superselection sectors for sectorizable region $C$.
		$\{p_{i}^{(a_L,a_R)}\}$ is a probability distribution, and it is the same in each of these expressions because the marginals must agree. We set $p_i^{(a_L,a_R)}=0$ when $N_{a_L}^i N_{i}^{a_R}=0$ for the same reason. 

		Next, we determine $\{ p_i^{(a_L,a_R)}\}$ by maximizing the entropy difference. The consistency relations associated with the optimal choice reveals the associtivity.
		Using our knowledge of the structure of the information convex sets of each of these regions, we can evaluate each of the differences on the RHS 
		of \eqref{eq:log-fusion-multiplicity2}.  
		\begin{equation} 
			\begin{aligned}
				\delta S_{\Omega_L} &= \sum_i p_i^{(a_L,a_R)} \left( \ln {d_i \over d_{i_0}}
				+ \ln N_{a_L}^{i} - \ln p_i^{(a_L,a_R)} \right),\\
				\delta S_{\Omega_R}   &= \sum_i p_i^{(a_L,a_R)} \left( \ln {d_i \over d_{i_0}}
				+ \ln N^{a_R}_i - \ln p_i^{(a_L,a_R)} \right) ,\\
				\delta S_{C}    &= \sum_i p_i^{(a_L,a_R)} \left(2  \ln {d_i \over d_{i_0}}
				- \ln p_i^{(a_L,a_R)} \right).
			\end{aligned}
		\end{equation}
		
		Therefore
		\begin{equation}\label{eq:log-fusion-multiplicity3} 
			\begin{aligned}
				\ln N_{a_L}^{a_R} &= 
				\max_{ \{ p_i^{(a_L,a_R)} \} } \left[\sum_i  p_i^{(a_L,a_R)} \left( \ln (N_{a_L}^i N_i^{a_R}) - \ln  p_i^{(a_L,a_R)}  \right)\right] \\
				&= \ln \left(\sum_i N_{a_L}^i N_i^{a_R}\right)
			\end{aligned}
		\end{equation}
		where the maximum is achieved by 
		$ p_i^{(a_L,a_R)}  = { N_{a_L}^i N_i^{a_R} \over \sum_{i'} N_{a_L}^{i'} N_{i'}^{a_R}}.$
		Therefore
		$$  N_{a_L}^{a_R} = \sum_i N_{a_L}^i N_i^{a_R} . $$
		This completes the proof.
	\end{proof}

\section{Calculation details for 3d quantum double}
\label{appendix:more-QD-examples}
In this appendix, we provide some details of the calculation of 3d quantum double models as well as additional examples of regions. The essential tools to understand the calculations are finite groups and their representations; Appendix~\ref{appendix:sub_group}. 
The remaining parts are examples of regions to illustrate the fusion spaces and consistency relations.

\begin{table}[h!]
\begin{center}
\begin{tabular}{| c | c | c |   }
 \hline
  Region & Multiplicities & Subsection \\
 \hline
  Ball minus unknot & $\{N_{l}^a\}$ & Appendix \ref{subsection:ball-torus}  \\
 \hline
 Ball minus trefoil & $\{N_{\zeta}^a(\textrm{trefoil})\}$ & Appendix \ref{subsection:ball-trefoil} \\
 \hline
 Solid torus minus trefoil & $\{ N_{\zeta}^{\eta}(2,3) \}$ & Appendix \ref{subsection:torus-trefoil} \\
 \hline
 Borromean ring complement & $\{ N_{\eta_1\eta_2\eta_3} \}$ & Appendix \ref{subsection:ball-borromean}  \\
 \hline
\end{tabular}
\caption{The regions and the multiplicities we consider and where to find them.}
\label{table:regions-subsections}
\end{center}
\end{table}

\subsection{Group theory notation and useful properties}\label{appendix:sub_group}
We start by reviewing some basic notation of finite groups and representations. Also reviewed are a few properties that will be handy in doing the calculation.

\subsubsection{Notation and facts}
% finite group
Let $g\in G$ be an element of finite group $G$. We denote its inverse as ${g}^{-1}$. $1$ is the identity of the group. $\vert G\vert$ is the number of group elements in $G$.  $G\times H$ is the tensor product of finite groups $G$ and $H$, namely $G\times H=\{ (g, h) \vert g\in G, h\in H \}$. The group multiplication is: $(g_1, h_1)(g_2,h_2)= (g_1g_2, h_1 h_2)$.

% cj and centralizer
We shall denote the set of conjugacy classes of $G$ as $(G)_{\textrm{cj}}$. $C_g$ is the conjugacy class of $G$ that contains $g$. $C_{(g,h)} \equiv \{(tg{t}^{-1}, th{t}^{-1})|t \in G\}$, where $g, h\in G$. $E_g= \{ t\in G \,\vert\, tgt^{-1}=g \}$ is the centralizer of $g$. $E_{(g,h)}\equiv\{ t\in G \vert (g, h) = (tgt^{-1}, th t^{-1}) \}=E_g \cap E_h$.

% Rep
We shall only consider unitary representations. $(G)_{\textrm{ir}}$ is the set of irreducible representations of $G$. $\dim R$ is the dimension of representation $R$. 

%matrix form?
$\Gamma_R(g)$ is the unitary matrix of representation $R$ for the group element $g\in G$. For each representation $R$, there is a dual representation $\bar{R}$ defined such that $\Gamma_{\bar{R}}(g)= \bar{\Gamma}_{{R}}(g)$, where $\bar{\Gamma}_{{R}}(g)$ is the complex conjugation of  $\Gamma_R(g)$.
The character of representation $R$ is $\chi_R(g)=\Tr\, \Gamma_R(g) = \sum_{i=1}^{\dim R} \Gamma_R^{ii}(g)$.

Here is a list of facts:
\begin{enumerate}
	\item Orthogonality:
	\begin{equation}
		\frac{1}{\vert G\vert } \sum_{g\in G} \Gamma_R^{ab}(g) {\Gamma}_{R'}^{a'b'\ast}(g) = \frac{1}{ \dim R} \delta_{R,R'} \delta_{a,a'} \delta_{b,b'}\quad \textrm{for}\,\,\,\, R, R'\in (G)_{\rm{ir}}.
	\end{equation}
	\item Character $\chi_R(g)$ is a function of conjugacy class. Furthermore, $\chi_R(1)=\dim R$, $\chi_R({g}^{-1}) = {\chi}_R^{\ast}(g) = \chi_{\bar{R}}(g)$. Denote $\langle \chi_R \vert \chi_{R'}\rangle \equiv \frac{1}{\vert G\vert} \sum_{g\in G} \chi_R(g) {\chi}_{R'}^{\ast}(g)$, then 
	\begin{equation}
		\langle \chi_{R} \vert \chi_{R'}\rangle =\delta_{i,j}\quad \textrm{for}\,\,\,\, R, R' \in (G)_{\rm{ir}}.
	\end{equation}
	Therefore, for any representation $\calH$ of $G$, and $R\in (G)_{\rm{ir}}$:
	\begin{equation}
	    \langle \chi_R | \chi_{\calH}\rangle = \textrm{the number of $R$ contained in $\calH$.}
	\end{equation}
   \item The group $G\times H$ has $(G\times H)_{\textrm{ir}}= \{ R\times \pi \vert R\in (G)_{\textrm{ir}}, \pi\in (H)_{ \textrm{ir} } \}$, where
   \begin{equation}
   	\dim (R\times \pi) =\dim R\cdot \dim \pi \quad \textrm{and}\quad \chi_{R\times \pi} =\chi_R \cdot \chi_{\pi}.
   \end{equation}
  \item Consider the Hilbert space $\calH_G\equiv \textrm{span} \{ \vert g\rangle \vert g\in G\}$, with $\langle g\vert k\rangle =\delta_{g,k}$. Under the unitary operator of $G$: $\Gamma(g)\vert h\rangle =\vert gh\rangle$, for $g,h\in G$, $\calH_G$ decomposes into irreducible representations of $G$ as:
  	\begin{equation}
  	\calH_G = \bigoplus_{R\in (G)_{\textrm{ir}}} \dim R\cdot R. %,\quad R\in (G)_{\textrm{ir}}.
  \end{equation} 
Under the unitary operation of $G\times G$: $\Gamma(m,n)\vert h\rangle =\vert mh n^{-1}\rangle $,  for $(m,n)\in G\times G$ and $h\in G$,  $\calH_G$ decomposes into irreducible representations of $G\times G$ as:
\begin{equation}\label{eq:useful-1}
	\calH_G = \bigoplus_{R\in (G)_{\textrm{ir}}} {R} \otimes \bar{R}. %,\quad R_i\in (G)_{\textrm{ir}}.
\end{equation} 
\end{enumerate}

\subsubsection{Conjugation-invariant density matrices}
We prove a general statement about conjugation-invariant density matrices in a group theoretical context. This statement will be useful in solving the conjugation constraints of minimal diagrams.

\begin{Proposition}[Conjugation-invariant density matrices]\label{Thm_main}
	Let $\mathcal{H}=\bigoplus_{R\in (G)_{\rm{ir}}} n_R R$, with $n_R\in Z_{\ge 0}$, be a representation of $G$. We denote the unitary group operation on  $\cal{H}$ as $\Gamma_{\calH}(g)$, $g\in G$. Then, the set of density matrices $\rho$ on $\calH$ which satisfy
	\begin{equation}\label{eq:conj-invariant-rho}
		\Gamma_{\calH}(g) \rho \,\Gamma_{\calH}^{\dagger}(g)=\rho, \quad \forall g\in G,
	\end{equation}
   forms a convex set $\Sigma(\calH)$, which is the convex hull of orthogonal subsets $\Sigma_R(\calH)$ of the form
	\begin{equation}
		\Sigma_R(\calH) = \left\{   \rho^R \,\left\vert\, \rho^R\in \mathcal{S}(\mathbb{V}_R)\otimes \frac{{\rm{Id}}_{\dim R}}{\dim R} \right.\right\}, \quad R\in (G)_{\rm{ir}},
	\end{equation}
	where  $\mathcal{S}(\mathbb{V}_R)$ is the state space of $n_R$ dimensional Hilbert space $\mathbb{V}_R$, and ${\rm{Id}}_{\dim R}$ is the identity operator on a $\dim R$ dimensional Hilbert space. 
\end{Proposition}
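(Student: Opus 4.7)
The plan is to use a Schur's-lemma style argument after choosing an isotypic decomposition of $\calH$. Concretely, fix an isomorphism
\[
\calH \;\cong\; \bigoplus_{R\in (G)_{\rm{ir}}} \mathbb{V}_R \otimes W_R,
\]
where $W_R$ is the carrier of the irrep $R$ (with $\dim W_R = \dim R$) and $\mathbb{V}_R$ is the multiplicity space with $\dim \mathbb{V}_R = n_R$. Under this isomorphism the group acts only on the second tensor factor: $\Gamma_{\calH}(g) = \bigoplus_R \mathrm{Id}_{\mathbb{V}_R}\otimes \Gamma_R(g)$. First I would translate the invariance condition \eqref{eq:conj-invariant-rho} into the commutation relation $[\rho,\Gamma_\calH(g)]=0$ for all $g\in G$.

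Second, I would identify the commutant of the image of $G$. By Schur's lemma, any operator commuting with all $\mathrm{Id}_{\mathbb{V}_R}\otimes \Gamma_R(g)$ is of the form $A_R \otimes \mathrm{Id}_{W_R}$ for some $A_R \in \mathrm{End}(\mathbb{V}_R)$; in addition, Schur orthogonality across distinct irreps forces the absence of off-diagonal blocks between different isotypic components. Thus every invariant operator can be written uniquely as
\[
\rho \;=\; \bigoplus_{R\in (G)_{\rm{ir}}} A_R \otimes \mathrm{Id}_{W_R},
\qquad A_R \in \mathrm{End}(\mathbb{V}_R).
\]

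Third, I would impose the density-matrix conditions. Positivity of $\rho$ is equivalent to $A_R\succeq 0$ for each $R$, and $\mathrm{Tr}(\rho)=1$ becomes $\sum_R (\dim R)\,\mathrm{Tr}(A_R)=1$. Setting $p_R := (\dim R)\,\mathrm{Tr}(A_R)$ and, whenever $p_R>0$, $\rho^R_{\mathbb{V}} := A_R/\mathrm{Tr}(A_R) \in \mathcal{S}(\mathbb{V}_R)$, one gets the claimed decomposition
\[
\rho \;=\; \sum_{R} p_R\; \rho^R_{\mathbb{V}}\otimes \frac{\mathrm{Id}_{W_R}}{\dim R},
\qquad \{p_R\}\text{ a probability distribution}.
\]
Conversely, any such $\rho$ manifestly satisfies \eqref{eq:conj-invariant-rho}, so this exhausts $\Sigma(\calH)$ and is linear in the convex parameters $\{p_R\}$, exhibiting $\Sigma(\calH)$ as the convex hull of the subsets $\Sigma_R(\calH)$. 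Orthogonality of the $\Sigma_R(\calH)$ follows because their elements are supported on the mutually orthogonal isotypic subspaces $\mathbb{V}_R\otimes W_R\subset \calH$.

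The argument has no genuine obstacle; it is essentially a careful application of Schur's lemma. The only point requiring some care is the uniqueness of the isotypic decomposition (choosing a basis that intertwines the $G$-action with the standard matrices $\Gamma_R(g)$) so that ``commutes with $\Gamma_\calH(g)$ for all $g$'' translates cleanly into the block form $A_R\otimes \mathrm{Id}_{W_R}$. Once this bookkeeping is fixed, positivity/trace normalization gives the stated parametrization immediately.
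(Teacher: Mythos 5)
Your proof is correct, but it takes a genuinely different route from the paper's. You characterize the \emph{entire} commutant of the group action at once: after fixing the isotypic decomposition $\calH\cong\bigoplus_R \mathbb{V}_R\otimes W_R$, Schur's lemma gives that every invariant operator is $\bigoplus_R A_R\otimes\mathrm{Id}_{W_R}$, and positivity plus normalization then hand you the convex structure immediately. The paper instead works from the twirling channel: it observes that $\Sigma(\calH)$ is the image of the map $\lambda\mapsto\frac{1}{|G|}\sum_g\Gamma_\calH(g)\lambda\Gamma_\calH^\dagger(g)$, notes that extreme points must arise from pure inputs $\ket{\alpha}\bra{\alpha}$, and then explicitly computes the twirl of a pure state using the matrix-element orthogonality relation $\frac{1}{|G|}\sum_g\Gamma_R^{ab}(g)\Gamma_{R'}^{a'b'\ast}(g)=\frac{1}{\dim R}\delta_{R,R'}\delta_{a,a'}\delta_{b,b'}$ to see that each such state lands in some $\mathcal{S}(\mathbb{V}_R)\otimes\frac{\mathrm{Id}_{\dim R}}{\dim R}$. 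Your commutant argument is cleaner and more structural---it describes all invariant operators in one step, so the convex-hull statement and the orthogonality of the $\Sigma_R(\calH)$ fall out without having to isolate extreme points first---while the paper's computation is more elementary in that it only uses the orthogonality relations already collected in the same appendix and avoids invoking the double-commutant picture. The one bookkeeping point you flag (choosing bases so that the action is block-diagonal with standard matrices $\Gamma_R(g)$) is indeed the only care needed, and you handle it correctly.
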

We shall say a density $\rho$ on $\calH$ is conjugation invariant if it satisfies Eq.~(\ref{eq:conj-invariant-rho}).

\begin{proof}
	It is easy to verify that if two density matrices $\rho_1, \rho_2$ are conjugation invariant,  their convex combination $p_1 \rho_1 + (1-p) \rho_2, p \in [0,1]$ is also conjugation invariant. Therefore, $\Sigma(\calH)$ is a convex set.
	
	Next we need to show its extreme points are elements of $\mathcal{S}(\mathbb{V}_R)\otimes \frac{{\rm{Id}}_{\dim R}}{\dim R}$ for $R \in (G)_{\text{ir}}$. We prove this by analyzing the possible forms of the extreme points.
	
	Consider the density matrices in the form $ \rho=\frac{1}{|G|} \sum_{g\in G} \Gamma_\calH(g) \lambda \Gamma_\calH^\dagger (g)$, where $\lambda$ is an arbitrary density matrix. We see $\rho$ is conjugation invariant. Furthermore, the set of all such $\rho$ is precisely $\Sigma(\calH)$.  
	
	Since $\lambda$ can be written as $\lambda = \sum_\alpha p_\alpha \ket{\alpha}\bra{\alpha}$, and $\frac{1}{|G|} \sum_{g\in G} \Gamma_\calH(g) \ket{\alpha}\bra{\alpha} \Gamma_\calH^\dagger (g)$ for each $\alpha$ is conjugation invariant, all the extreme points are elements of the set
	\begin{equation}
	    S \equiv \left\{ \frac{1}{|G|} \sum_{g\in G} \Gamma_\calH(g) \ket{\alpha}\bra{\alpha} \Gamma_\calH^\dagger (g),  \ket{\alpha} \in \calH  \right\}
	\end{equation} 
	This is because extreme points cannot be expressed as convex combination of two other points with positive coefficients.
	
	We consider following orthonormal basis of $\calH$:
    \begin{equation}
        \calH = \text{span}\{ \ket{R, \mu, a} | R \in (G)_{\text{ir}}, \mu = 1, \cdots, n_R, a = 1, \cdots, \text{dim} R \}
    \end{equation}
    (We omit $R$ for which $n_R=0$.) Moreover, we shall write $|R,\mu, a\rangle $ as $|R,\mu\rangle \otimes |a\rangle$ when needed.
    Let $\ket{\alpha} = \sum_{R,\mu,a} C_{R \mu a} \ket{R, \mu, a}$, then
    \begin{align*}
        & \frac{1}{|G|} \sum_{g\in G} \Gamma_\calH(g) \ket{\alpha}\bra{\alpha} \Gamma_\calH^\dagger (g) \\
        = & \sum_{R \mu a} \sum_{R', \mu', a'} C_{R\mu a} C^*_{R'\mu' a'} \left(\frac{1}{|G|} \sum_{g\in G} \Gamma_\calH  (g) \ket{R,\mu,a}\bra{R',\mu',a'}\Gamma_\calH  (g)^\dagger \right) \\
        = & \sum_{R,\mu,a, b} \sum_{R', \mu', a', b'} \left(\frac{1}{|G|}\sum_{g\in G}\Gamma_R^{ab}(g) \Gamma_{R'}^{a' b'\ast} (g) \right) C_{R\mu a} C^*_{R'\mu' a'} \ket{R,\mu,b}\bra{R',\mu',b'} \\
        = & \sum_{R,\mu,a, b} \sum_{R', \mu', a', b'} \left(\frac{1}{\text{dim} R} \delta_{R, R'} \delta_{a,a'}\delta_{b,b'}\right) C_{R\mu a} C^*_{R'\mu' a'} \ket{R,\mu,b}\bra{R',\mu',b'} \\ 
        = & \sum_{R\in (G)_{\text{ir}}} \left[ \sum_{\mu=  1}^{n_R} \sum_{\mu' = 1}^{n_R} \left(\sum_a  C_{R\mu a} C^*_{R\mu' a}\right) \ket{R, \mu} \bra{R,\mu'} \otimes \left(\frac{1}{\text{dim} R} \sum_{b=1}^{\text{dim} R} \ket{b}\bra{b}\right) \right] 
    \end{align*}
    The summands in the last expression are orthogonal for different $R \in (G)_{\text{ir}}$. Therefore the extreme points are elements of $\mathcal{S}(\mathbb{V}_R)\otimes \frac{{\rm{Id}}_{\dim{R}}}{\dim{R}}$ for $R \in (G)_{\text{ir}}$.
\end{proof}

\subsection{Ball minus unknot}\label{subsection:ball-torus}
Here we calculate the multiplicities for the shrinking rule: $\{N_l^a\}$ defined in \eqref{eq:shrinking_rule}. Here $a\in \calC_{\rm{point}}$ and $l\in \calC_{\rm{loop}}$.
As a reminder, the relevant rules for minimal diagram are reviewed in \S\ref{subsec:minimal_diagram}. The content of the next few sections are essentially a continuation of \S\ref{subsec:explicit_QD_data}. 

The region that characterizes the multiplicities $\{N_l^a\}$ is a ball minus an unknot ($D^3\setminus T$). We consider the minimal diagram shown in Fig.~\ref{fig:ball-torus}. This minimal diagram contains a boundary link (labeled by $b$), a bulk link (labeled by $k$), and two vertices $v_1$ and $v_2$. There is no face in this diagram.

\begin{figure}[ht!]
	\centering 
	\includegraphics[width=0.25\textwidth]{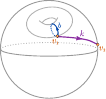}
	\caption{The minimal diagram for a ball minus an unknot.}
	\label{fig:ball-torus}
\end{figure}

By solving the three types of constraints described in \S\ref{subsubsection:mini_rules}, we arrive at a closed-form formula for any finite group $G$. 
\begin{equation}
	\boxed{	N_{l}^a= \textrm{ the number of $R_l\times \bar{R}_a \in(E_g\times G)_{\rm{ir}}$ contained in $\calH_b$.}}
\end{equation}
Here, $l=(C_{b},R_l)$, $R_l\in (E_b)_{\rm{ir}}$ and $a=R_a\in (G)_{\rm{ir}}$; $\calH_{b}$ is a representation of $E_{b}\times G$ defined according to the following steps:
\begin{enumerate}
	\item Choose a $ C_b$ that we want to study.
	\item $\calH_{b}$, as a Hilbert space, is defined as $\textrm{span} \{ \vert k \rangle |  k \in G\}$.
	\item $\calH_{b}$ is a representation of $E_b\times G$ by the group action:
	\begin{equation}
		\Gamma(t) |k \rangle = |t_1 k t_2^{-1}\rangle,\quad \forall t=(t_1,t_2) \in E_b\times G.
	\end{equation} 
\end{enumerate}
The final result has a simple closed-form in terms of characters:
\begin{equation}
    N_l^a = \frac{1}{|E_b|} \sum_{g\in E_b} \chi_{R_l}(g) \chi^{\ast}_{R_a} (g).
    \label{eq:ball-torus-charater-relation}
\end{equation}
Here we used the fact that $E_b$ is a subgroup of $G$. In words, $N_l^a$ equals to the number of $R_l$ contained in $R_a$, treating $R_a$ as a representation of the subgroup $E_b \subset G$.

\begin{table}[h!]
	\begin{center}
		\begin{tabular}{ | c | c | c | c |  }
			\hline
			\diagbox[width=9em]{$l$}{$N_l^{a}$}{$a$} & $\text{Id}_{S_3}$ & $\text{Sign}_{S_3}$ & $\Pi_{S_3}$  \\ 
			\hline
			$(C_1, \text{Id}_{S_3})$ & 1 & 0 & 0  \\
			\hline
			$(C_1, \text{Sign}_{S_3})$ & 0 & 1 & 0 \\
			\hline
			$(C_1, \Pi_{S_3})$ & 0 & 0 & 1  \\
			\hline
			$(C_r, \text{Id}_{\mathbb{Z}_3})$ & 1 & 1 & 0  \\
			\hline
			$(C_r, \omega_{\mathbb{Z}_3})$ & 0 & 0 & 1 \\
			\hline
			$(C_r, (\omega^2)_{\mathbb{Z}_3})$ & 0 & 0 & 1  \\
			\hline
			$(C_s, \text{Id}_{\mathbb{Z}_2})$ & 1 & 0 & 1  \\
			\hline
			$(C_s, \text{Sign}_{\mathbb{Z}_2})$ & 0 & 1 & 1 \\
			\hline
		\end{tabular}
		\caption{For 3d $S_3$ quantum double. Multiplicities $\{N_l^a\}$ associated with a ball minus an unknot. The columns are point particle types ($a\in \calC_{\rm{point}}$), and the rows are shrinkable loop types ($l \in \calC_{\rm{loop}}$).  $\omega \equiv e^{ 2\pi \ii /3}$ and $\omega^2$ label the nontrivial irreps of $\IZ_3$. }
		\label{table:ball-torus}
	\end{center}
\end{table}

For the group $G=S_3$, an explicit calculation leads to Table~\ref{table:ball-torus}.
By recalling the quantum dimensions  $d_a = \dim R_a$ and $d_l=|C_b|\cdot \dim R_l$ and checking each column of table~\ref{table:ball-torus}, we can verify consistency relations:
	\begin{equation}
		\begin{aligned}
			d_a & = \frac{1}{\mathcal{D}^2} \sum_{l} N_{l}^a d_l,\\
			d_l &= \sum_a N_l^a d_a,
		\end{aligned}
	\end{equation}
	where $\mathcal{D} = \sqrt{\sum_a d_a^2}=\sqrt{6}$.  This is Proposition~\ref{eq:ball-torus-consistency-with-d}.

\subsection{Ball minus a trefoil}\label{subsection:ball-trefoil}
We now compute the multiplicities $\{N^a_\zeta(K)\}$ defined in \S\ref{subsec:fusion-from-knots}, for $K$ a trefoil knot. This multiplicity is associated with region $D^3\setminus K$, i.e., a ball minus a trefoil.

The minimal diagram is shown in Fig.~\ref{fig:ball-trefoil}. This minimal diagram has three bulk links (labeled by $a$, $b$ and $k$), two boundary links (labeled by $g$ and $h$), and two vertices $v_1$ and $v_2$. Faces are not shown. Note that, this diagram share similarity with the one for the knot complement (Fig.~\ref{fig:trefoil-complement}). In particular, we take advantage of the structure of the knot group $\pi_1(D^3\setminus K)=\pi_1(S^3\setminus K)$. For a trefoil, the knot group is $\pi_1(D^3\setminus \textrm{trefoil})=\langle \mathfrak{a},\mathfrak{b}|\mathfrak{a}^2 =\mathfrak{b}^3\rangle$.

	 \begin{figure}[ht!]
	\centering 	
	\includegraphics[width=0.3\textwidth]{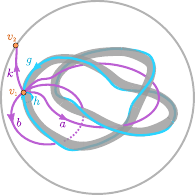} \caption{The minimal diagram for a ball minus a trefoil knot. }
	\label{fig:ball-trefoil}
\end{figure}

By solving the minimal diagram for this problem, we find the general solution:
\begin{equation}
	\boxed{	N_{\zeta}^a({\rm trefoil})= \textrm{ the number of $R_{\zeta}\times \bar{R}_a\in(E_{(g,h)}\times G)_{\rm{ir}}$ contained in $\calH_{(g,h)}$.}}
\end{equation}
Here, $\zeta=(C_{(g,h)},R_{\zeta})$, $R_{\zeta}\in (E_{(g,h)})_{\rm ir}$ $a=R_a\in (G)_{\rm ir}$ and the $\calH_{(g,h)}$ is a representation of $E_{(g,h)}\times G$ defined according to the following steps:
\begin{enumerate}
	\item Choose a $E_{(g,h)}$ that we want to study. Here $g, h\in G$ and $gh=hg$.
	\item Find the set of ordered pairs $\{a, b\}$, $a,b\in G$, such that Eq.~(\ref{eq:constraint-gh-from-ab}) holds.
	\item $\calH_{(g,h)}$, as a Hilbert space, is defined as $\textrm{span} \{ \vert a, b, k \rangle | \{a,b\} \textrm{ from step 2, and } k\in G \}$.
	\item $\calH_{(g,h)}$ is a representation of $E_{(g,h)}\times G$ by the group action:
	\begin{equation}
		\Gamma(t) |a,b, k\rangle = |t_1at_1^{-1}, t_1bt_1^{-1}, t_1 k t_2^{-1}\rangle,\quad \forall t=(t_1,t_2) \in E_{(g,h)}\times G.
	\end{equation} 
\end{enumerate}
The solution to this problem for finite group $S_3$ is summarized in Table~\ref{table:ball-trefoil}.
A formula involving characters analogous to 
\eqref{eq:character-formula-for-knot-multiplicity} can summarize these rules.
\begin{table}[h!]
	\begin{center}
		\begin{tabular}{ | c | c | c | c |  }
			\hline
			\diagbox[width=9em]{$\zeta$}{$N_{\zeta}^{ a}(\textrm{trefoil})$}{$a$} & $\text{Id}_{S_3}$ & $\text{Sign}_{S_3}$ & $\Pi_{S_3}$  \\ 
			\hline
			$(C_{(1,1)}, \text{Id}_{S_3})$ & 1 & 0 & 0  \\
			\hline
			$(C_{(1,1)}, \text{Sign}_{S_3})$ & 0 & 1 & 0 \\
			\hline
			$(C_{(1,1)}, \Pi_{S_3})$ & 0 & 0 & 1  \\
			\hline
			$(C_{(1,r)}, \text{Id}_{\mathbb{Z}_3})$ & 1 & 1 & 0  \\
			\hline
			$(C_{(1,r)}, \omega_{\mathbb{Z}_3})$ & 0 & 0 & 1 \\
			\hline
			$(C_{(1,r)}, (\omega^2)_{\mathbb{Z}_3})$ & 0 & 0 & 1  \\
			\hline
			$(C_{(1,s)}, \text{Id}_{\mathbb{Z}_2})$ & 2 & 1 & 3  \\
			\hline
			$(C_{(1,s)}, \text{Sign}_{\mathbb{Z}_2})$ & 1 & 2 & 3 \\
			\hline
		\end{tabular}
		\caption{For the 3d $S_3$ quantum double model. Multiplicities associated with a ball minus a trefoil. The columns are point particle types ($a\in \calC_{\rm{point}}$), and the rows are trefoil excitation types ($\zeta \in \calC_{\rm{Hopf}}$). Only nonzero multiplicities are shown.
			Note that the label of $\zeta$ can depend on the framing of the $g$.}
		\label{table:ball-trefoil}
	\end{center}
\end{table}

\subsection{Solid torus minus trefoil}\label{subsection:torus-trefoil}

Below, we calculate a subset of $\{N_{\zeta}^{\eta}(2,3)\}$ (defined in \S\ref{subsubsec:torus-knots-in-general}) by setting $\eta \in \calC_{\rm{loop}}$. We shall denote this subset as $\{N_{\zeta}^{l}(2,3)\}$. We restrict to this set for two reasons. First, these are the multiplicities we need, in order to verify the consistency relation $N_{\zeta}^a(\textrm{trefoil})= \sum_l N_{\zeta}^l(2,3) N_l^a$, i.e., a special case of Eq.~(\ref{eq:pq_multiplicity-2}). Second, the minimal diagram for calculating $\{N_{\zeta}^{l}(2,3)\}$ is simpler than the minimal diagram that can handle the general calculation of $\{N_{\zeta}^{\eta}(2,3)\}$.

The region associated with this calculation is an (unknotted) solid torus minus a trefoil $T\setminus K$, below $K$ is a trefoil. The minimal diagram is shown in Fig.~\ref{fig:torus-trefoil}.
It contains 4 boundaries links (labeled by $g, h, g', h'$), and 2 bulk links (labeled by $\{a, k\}$). It has two vertices $v_1$ and $v_2$ (and faces which are not shown). We emphasis that this minimal diagram is designed to handle the calculation for $g'=1$, which corresponds to restricting $\eta$ to shrinkable loops. Because $g'=1$, the knot group  $\pi_1(S^3\setminus K)=\langle \mathfrak{a},\mathfrak{b}|\mathfrak{a}^2 =\mathfrak{b}^3 \rangle$ is still useful in this problem. (Note that $\pi_1(T\setminus K)$ is different.) Here, as the label suggests,  $a$ corresponds to the knot group generator $\mathfrak{a}$. $b \equiv k {h'}^{-1} k^{-1}$ play the role of the other generator $\mathfrak{b}$.

\begin{figure}[ht!]
    \centering
    \includegraphics[width=0.4\textwidth]{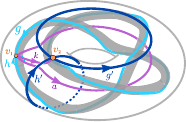}
    \caption{The minimal diagram for a solid torus minus a trefoil. We use it to calculate the multiplicities $\{N_{\zeta}^l(2,3) \}$.}
    \label{fig:torus-trefoil}
\end{figure}

The multiplicities $\{N_{\zeta}^l(2,3)\}$ are specified by two labels of the superselection sectors. $\zeta = (C_{(g, h)}, R_\zeta)$ for the excitation on the trefoil and $l = \{C_{h'}, R_l\}$ for the excitation on the unknotted torus.

The knot group constraints give
\begin{equation}\label{eq:constraint-ghh'-from-ab}
	  g=a^2=b^3,\quad h=a^{-1} b,\quad \textrm{where} \quad b\equiv k {h'}^{-1} k^{-1}.
\end{equation}

\begin{table}[h!]
	\begin{center}
		\begin{tabular}{ | c | c | c | c |}
			\hline
			\diagbox[width=12em]{$\zeta$}{$N_{\zeta}^{l}(2,3)$}{$l$} & $(C_{1},\text{Id}_{S_3})$ & $(C_{1},\text{Sign}_{S_3})$ & $(C_{1},\Pi_{S_3})$  \\ 
			\hline
			$(C_{(1,1)}, \text{Id}_{S_3})$ & 1 & 0 & 0  \\
			\hline
			$(C_{(1,1)}, \text{Sign}_{S_3})$ & 0 & 1 & 0 \\
			\hline
			$(C_{(1,1)}, \Pi_{S_3})$ & 0 & 0 & 1  \\
			\hline
			$(C_{(1,s)}, \text{Id}_{\mathbb{Z}_2})$ & 1 & 0 & 1  \\
			\hline
			$(C_{(1,s)}, \text{Sign}_{\mathbb{Z}_2})$ & 0 & 1 & 1 \\
			\hline
			\hline
			\diagbox[width=12em]{$\zeta$}{$N_{\zeta}^{l}(2,3)$}{$l$} & $(C_{r},\text{Id}_{\mathbb{Z}_3})$ & $(C_{r},\omega_{\mathbb{Z}_3})$ & $(C_{r}, (\omega^2)_{\mathbb{Z}_3})$  \\ 
			\hline
			$(C_{(1,r)}, \text{Id}_{\mathbb{Z}_3})$ & 1 & 0 & 0  \\
			\hline
			$(C_{(1,r)}, \omega_{\mathbb{Z}_3})$    & 0 & 0 & 1 \\
			\hline
			$(C_{(1,r)}, (\omega^2)_{\mathbb{Z}_3})$ & 0 & 1 & 0  \\
			\hline
			$(C_{(1,s)}, \text{Id}_{\mathbb{Z}_2})$ & 1 & 1 & 1  \\
			\hline
			$(C_{(1,s)}, \text{Sign}_{\mathbb{Z}_2})$ & 1 & 1 & 1 \\
			\hline
		\end{tabular}
		\caption{Multiplicities $N_{\zeta}^l(2,3)$ associated with a solid torus minus a trefoil. The columns are shrinkable loop types ($l \in \calC_{\rm{loop}}$), and the rows are trefoil excitation types ($\zeta \in \calC_{\rm{Hopf}}$). Only the nonzero multiplicities are shown.}
		\label{table:torus-trefoil}
	\end{center}
\end{table}

The solution to the other two constraints leads to the final answer:
\begin{equation}
\boxed{	N_{\zeta}^{l}(2,3) = \textrm{the number of $R_\zeta\times \bar{R}_l\in (E_{(g,h)}\times E_{h'})_{\textrm{ir}}$ contained in $\calH_{(g,h,h')}$.}}
\end{equation}
Here, $\zeta=(C_{(g,h)},R_\zeta)$ and  $l=(C_{h'},R_l)$. $\calH_{(g,h,h')}$ is a representation of $E_{(g,h)}\times E_{h'}$ defined according to the following steps:
\begin{enumerate}
	\item Choose the $E_{(g,h)}$ and $E_{h'}$ that we want to study.  $g, h, h'\in G$ are the representatives specified by the lower indices, $gh=hg$.
	\item Find the set of ordered pair $\{a, k\}$, $a,k\in G$ such that Eq.~(\ref{eq:constraint-ghh'-from-ab}) holds.
	\item $\calH_{(g,h,h')}$, as a Hilbert space, is defined as $\textrm{span} \{ \vert a, k \rangle | \{a,k\} \textrm{ from step 2}\}$.
	\item $\calH_{(g,h,h')}$ is a representation of $E_{(g,h)}\times E_{h'}$ by the group action:
	\begin{equation}
		\Gamma(t) |a,k \rangle = |t_1at_1^{-1}, t_1kt_2^{-1}\rangle,\quad \forall t=(t_1,t_2) \in E_{(g,h)}\times E_{h'}.
	\end{equation} 
\end{enumerate}
The solution to this problem for the specific finite group $S_3$ is summarized in Table~\ref{table:torus-trefoil}.

From Tables \ref{table:ball-trefoil},\ref{table:ball-torus},\ref{table:torus-trefoil}, it is not difficult to check the consistency relations between different multiplicities:
\begin{equation}
	N_{\zeta}^{a}(\textrm{trefoil}) = \sum_{l\in \calC_{\rm{loop}}} N_{\zeta}^{l}(2,3) N_{l}^{ a}
\end{equation}
This is Eq.~(\ref{eq:pq_multiplicity-2}) of the main text.

\subsection{Complement of Borromean Rings in $S^3$}
\label{subsection:ball-borromean}

\begin{figure}[ht!]
	\centering
     \includegraphics[width=0.4\textwidth]{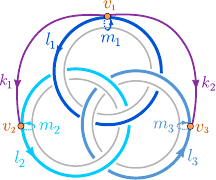}
	\caption{Minimal diagram for Borromean rings complement.}
	\label{fig:mini-borro}
\end{figure}

Here, we study the information convex set of the complement of the Borromean rings in $S^3$. We shall denote this region as $\Omega_{\rm Bor}$.  Unlike previous studies of a related topic~\cite{Chan2017}, our physical context is to have three loop excitations on the three Borromean rings. We highlight a feature of non-Abelian models, i.e., nontrivial fusion multiplicities associated with this arrangement of three loops. Below we denote this set of multiplicities as $\{N_{\eta_1,\eta_2,\eta_3} \equiv \dim \mathbb{V}_{\eta_1,\eta_2,\eta_3}\left( \Omega_{\rm Bor}\right)\}$.

As with our study of knot complements, we find that the fundamental group of the Borromean rings complement \cite{sparaco2017character}  
is useful:\footnote{Comparing to Ref.~\cite{sparaco2017character}, Eq.~(\ref{eq:pi-1-Bor}) has a minor difference due to the convention of ordering that we adopt.}
\begin{equation}\label{eq:pi-1-Bor}
	\pi_1(\Omega_{\rm Bor})= \{ \mathfrak{m}_1, \mathfrak{m}_2,\mathfrak{m}_3 \vert [\mathfrak{m}_1,[\mathfrak{m}_2^{-1}, \mathfrak{m}_3]]=1, [\mathfrak{m}_2,[\mathfrak{m}_3^{-1}, \mathfrak{m}_1]]=1,[\mathfrak{m}_3,[\mathfrak{m}_1^{-1}, \mathfrak{m}_2]]=1\}.
\end{equation}
Here $[\alpha, \beta]\equiv \alpha \beta \alpha^{-1} \beta^{-1}$, which is a particular useful abbreviation to use in this section.  
 
%Describe the minimal diagram.
The minimal diagram shown in Fig.~\ref{fig:mini-borro} contains 6 boundary links (labeled by $m_1,l_1,m_2,l_2,m_3,l_3$), 2 bulk links (labeled by $k_1$, $k_2$) and 3 vertices $v_1$, $v_2$ and $v_3$. Faces are not shown.

The zero-flux constraints from the faces in $\Omega_{\rm Bor}$ then imply:
\begin{eqnarray}
	[{m}_1,[\widetilde{m}_2^{-1}, \widetilde{m}_3]]&=&1,\quad [\widetilde{m}_2,[\widetilde{m}_3^{-1},  {m}_1]]=1,\quad [\widetilde{m}_3,[m_1^{-1}, \widetilde{m}_2]]=1,
	\label{eq:borromean-relations-for-m} \\
	l_1&=& [\widetilde{m}_2^{-1}, \widetilde{m}_3], \quad \widetilde{l}_2= [\widetilde{m}_3^{-1},m_1], \quad \widetilde{l}_3= [m_1^{-1}, \widetilde{m}_2],
	\label{eq:borromean-relations-for-l}\\
	\textrm{where}\quad \widetilde{m}_2 &\equiv& k_1 m_2 k_1^{-1},\quad \widetilde{m}_3 \equiv k_2 m_3 k_2^{-1},\quad \widetilde{l}_2 \equiv k_1 l_2 k_1^{-1},\quad \widetilde{l}_3  \equiv k_2 l_3 k_2^{-1}.\qquad \label{eq:tilde-def}
\end{eqnarray}
The purpose of introducing $\widetilde{\,\,\,\,}$ is to bring the variables to the same base point. It is easy to see, $\{m_1, \widetilde{m}_2, \widetilde{m}_3\}$ plays the role of the generators $\{\mathfrak{m}_1,\mathfrak{m}_2,\mathfrak{m}_3\}$, whereas the former set are group elements in $G$.

% Solve the minimal diagram
The solution to this problem is
\begin{equation}\label{eq:Bor-solution}
	\boxed{	
		\begin{aligned}
			N_{\eta_1\eta_2\eta_3} =& \textrm{ the number of $R_{\eta_1}\times R_{\eta_2}\times R_{\eta_3} \in (E_{(l_1,{m}^{-1}_1)}\times E_{(l_2,{m}^{-1}_2)}\times E_{(l_3,{m}^{-1}_3)})_{\textrm{ir}}$} \\
			&	\textrm{ contained in $\calH_{(l_1,m_1,l_2,m_2,l_3,m_3)}$.}
		\end{aligned}
	}
\end{equation} 
Here, $\eta_i=(C_{(l_i,m^{-1}_i)},R_{\eta_i})$ and  $R_{\eta_i}\in (E_{(l_i, m^{-1}_i)})_{\rm ir}$. $\calH_{(l_1,m_1,l_2,m_2,l_3,m_3)}$ is a representation of $ E_{(l_1,{m}^{-1}_1)}\times E_{(l_2,{m}^{-1}_2)}\times E_{(l_3,{m}^{-1}_3)}$ defined according to the following steps:
\begin{enumerate}
	\item Choose the $\{ E_{(l_i,{m}^{-1}_i)} \}_{i=1}^3$ that we want to study.  Here we require $[l_i, m_i]=1$. % are the representatives specified by the lower indices, $gh=hg$.
	\item Find the set of ordered pair $\{k_1, k_2\}$, $k_1,k_2\in G$ such that Eqs.~(\ref{eq:borromean-relations-for-m}), (\ref{eq:borromean-relations-for-l}) and (\ref{eq:tilde-def}) hold.
	\item $\calH_{(l_1,m_1,l_2,m_2,l_3,m_3)}$, as a Hilbert space, is defined as $\textrm{span} \{ \vert k_1, k_2 \rangle | \{k_1,k_2\} \textrm{ from step 2}\}$.
	\item $\calH_{(l_1,m_1,l_2,m_2,l_3,m_3)}$ is a representation of $ E_{(l_1,{m}^{-1}_1)}\times E_{(l_2,{m}^{-1}_2)}\times E_{(l_3,{m}^{-1}_3)}$ by the group action:
	\begin{equation}
		\Gamma(t) |k_1, k_2 \rangle = |t_1 k_1 t_2^{-1}, t_1 k_2 t_3^{-1} \rangle,\quad \forall t=(t_1,t_2, t_3) \in E_{(l_1,{m}^{-1}_1)}\times E_{(l_2,{m}^{-1}_2)}\times E_{(l_3,{m}^{-1}_3)}.\,\,\,\,
	\end{equation} 
\end{enumerate}
One can write down a more explicit form of Eq.~\eqref{eq:Bor-solution} using characters, as 
\begin{equation}\label{eq:Bor-helpful_character}
	N_{\eta_1, \eta_2, \eta_3} = \left(\prod_{i=1}^3 \frac{1}{|E_{(l_i, m_i^{-1})}|} \right) \sum_{\substack{t_i \in E_{(l_i, m_i^{-1})} \\ i=1,2,3}}  \chi_{R_{\eta_1}}^* (t_1) \chi_{R_{\eta_2}}^* (t_2) \chi_{R_{\eta_3}}^* (t_3) \chi_{{\calH}} (t_1, t_2, t_3),
\end{equation}
where
\begin{equation}
	\chi_{{\calH}}(t_1, t_2, t_3) = \sum_{\text{allowed } (k_1, k_2)} \delta_{k_1, t_1 k_1 t_2^{-1}} \delta_{k_2, t_1 k_2 t_3^{-1}}
\end{equation}
is the character for $\calH_{(l_1,m_1,l_2,m_2,l_3,m_3)}$ as a representation of $E_{(l_1,{m}^{-1}_1)}\times E_{(l_2,{m}^{-1}_2)}\times E_{(l_3,{m}^{-1}_3)}$.

{\bf Abelian case:} If the group $G$ is Abelian, the set of constraints are translated into
\begin{equation}
		l_1=l_2=l_3=1.
\end{equation}
Meanwhile, $m_1,m_2,m_3, k_1,k_2\in G$ and they can be chosen independently. The problem can be solved easily, and the intuition is simple. 
For Abelian models, fluxes can be put on the three rings independently, as if they are ``transparent". We can put three pure fluxes ($\mu_1,\mu_2,\mu_3$) on the three loops, respectively. We can also create three particles $a_1, a_2, a_3$ such that $N_{a_1,a_2,a_3}=1$. We fuse the point particles onto each flux. So each loop is associated with a pair $(\mu_i, a_i)$; this describes a shrinkable loop sector. (Note that this is only true for Abelian models.)  
In summary, for Abelian models:
\begin{itemize}
	\item $N_{\eta_1\eta_2\eta_3}=1$ when $\eta_1,\eta_2,\eta_3\in \calC_{\rm{loop}}$ and $R_{\eta_1}\times R_{\eta_2}\times R_{\eta_3}=\textrm{Id}_G$.
	\item $N_{\eta_1\eta_2\eta_3}=0$, otherwise. 
\end{itemize}

{\bf non-Abelian case:} The multiplicities are more interesting for non-Abelian models. As usual, we take $S_3$ quantum double as an illustration.

\begin{figure}[ht!]
    \centering 
   \includegraphics[width=0.54\textwidth]{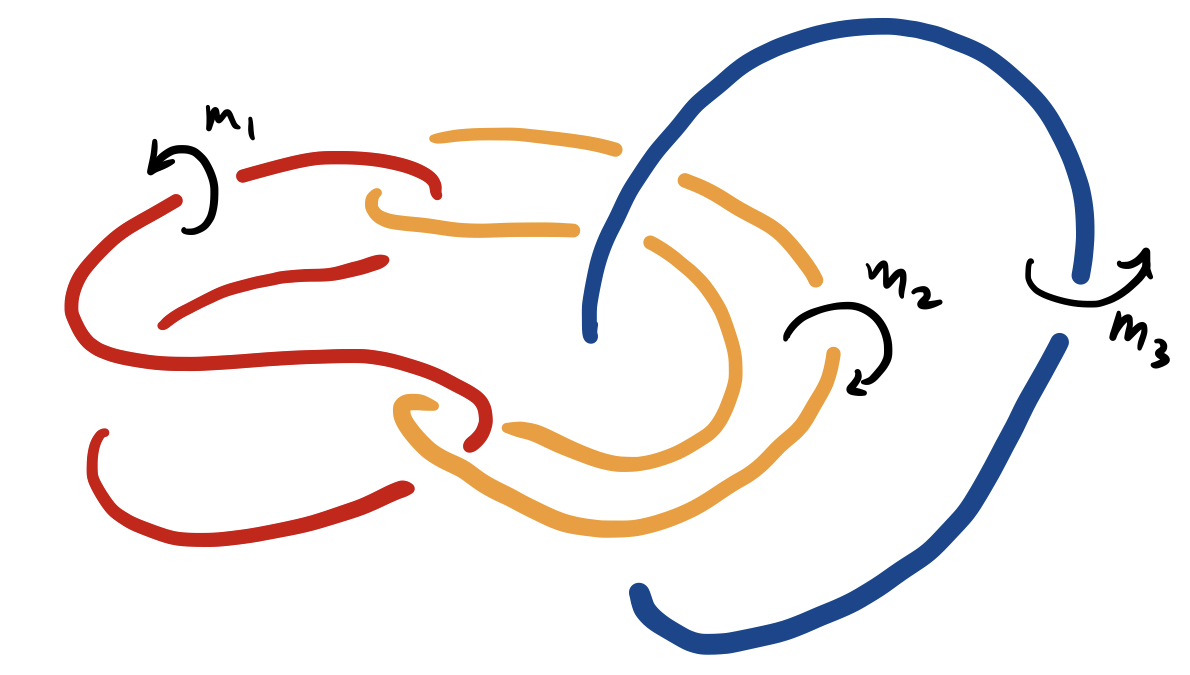}
    \caption{A demonstration that $m_3$ is trivial in the arrangement of loops relevant to the Borromean fusion of pure fluxes of \eqref{eq:borromean-fusion}.}
    \label{fig:borromean-fusion-multiplicity}
\end{figure}
First, let us study the case relevant to the Borromean fusion process of flux loops 
introduced in \S\ref{subsec:exotic-flux-fusion}. In that case the generator $m_3$ is trivial (see Fig.~\ref{fig:borromean-fusion-multiplicity}).
The relations \eqref{eq:borromean-relations-for-m} are then automatically satisfied.
The condition relating the inputs of the fusion and its output is 
the third relation in 
\eqref{eq:borromean-relations-for-l}:
$ \widetilde{l}_3 = [ m_1^{-1}, \widetilde{m}_2]$.  
This is the relation claimed in our discussion in Example~\ref{exmp:borromean-fusion-for-quantum-double}.  
(Since the inputs are pure fluxes, $l_1$ and $l_2$ are trivial as well, consistent with 
the first two equations of 
\eqref{eq:borromean-relations-for-l}).
The multiplicity can be solved by the general procedure described above.  In particular, \eqref{eq:Bor-solution} can be evaluated by calculating the characters \eqref{eq:Bor-helpful_character}. 

	 \begin{table}[!h]
	\centering
	\begin{tabular}{|c|c|c|c|}
		\hline
		\diagbox[width=7em]{$\nu$}{$\sum_{\lambda}n_{\mu\nu}^{\lambda}\lambda$}{$\mu$} & $C_1$ & $C_r$ & $C_s$  \\
		\hline
		$C_1$ & $C_1$ & $ C_1$ & $ C_1$ \\
		\hline
		$C_r$ & $C_1$ & $ 2 C_1$ & $C_r$ \\
		\hline
		$C_s$ &  $C_1$ &  $C_r$ & $ C_1 + C_r$ \\
		\hline
	\end{tabular}
	\caption{Multiplicities relevant to the Borromean fusion process. The data is for the 3d $S_3$ quantum double model. $n_{\mu\nu}^{\lambda}\equiv N_{\varphi(\mu)\varphi(\nu)\varphi^{\vee}(\lambda)}$.}
	\label{tab:S3-borromean-fusion2}
\end{table}

In Table~\ref{tab:S3-borromean-fusion2}, we are interested in a subset of $\{N_{\eta_1,\eta_2,\eta_3}\}$ which are relevant to the Borromean fusion process of flux loops, introduced in Eq.~(\ref{eq:borromean-fusion}). For this purpose, we set $\{\eta_1,\eta_2,\eta_3\}= \{ \varphi(\mu), \varphi(\nu), \varphi^{\vee}(\lambda) \}$, where $\mu,\nu,\lambda \in \calC_{\rm{flux}}$. The map $\varphi^{\vee}: \calC_{\rm{flux}} \hookrightarrow \calC_{\rm{Hopf}}$ is defined to be $\varphi$ followed by the operation $\eta\to \eta^{\vee}$. In the table, we choose a short-hand notation $n_{\mu\nu}^{\lambda}\equiv N_{\varphi(\mu)\varphi(\nu)\varphi^{\vee}(\lambda)}$.
Although the relation between these multiplicities to the Borromean fusion is indirect, one thing can be said concretely. If  $n_{\mu\nu}^{\lambda}=0$ then $\lambda$ cannot be the fusion outcome of $\mu$ and $\nu$ in (\ref{eq:borromean-fusion}).

We note that the entries in table~\ref{tab:S3-borromean-fusion2} pass a consistency check:
\begin{equation}
	\sum_{\lambda\in \calC_{\rm{flux}}} n_{\mu\nu}^{\lambda}= \sum_{a\in\calC_{\rm{point}}} N_{\varphi(\mu)}^a N^{\bar{a}}_{\varphi(\nu)}.
\end{equation}
Here $\{N_l^a\}$ are the multiplicities for the shrinking rule.
We omit the derivation since it is a simple variation of Proposition.~\ref{prop:6.1}.

Finally, let us list the multiplicities when none of $m_1, m_2, m_3$ is the  identity group element, for the case of $G=S_3$. For these cases, each of the three rings is occupied by a certain genuine loop excitation.

\begin{enumerate}
    \item When $\{(l_1, m_1), (l_2, m_2), (l_3, m_3)\}=\{ (1,r), (1,r), (1,r)\}$: 
        \\ 
    \[ N_{\eta_1, \eta_2, \eta_3}=
        \begin{cases} 
          4 & R_1=R_2=R_3=\text{Id}_{\mathbb{Z}_3} \\
          2 & R_i = \text{Id}_{\mathbb{Z}_3}, R_{i+1}, R_{i+2} \in \{\omega_{\mathbb{Z}_3}, \omega^2_{\mathbb{Z}_3}\} \text{ for } i= 1,2,3\\
          0 & \text{otherwise.}
       \end{cases}
    \]
    Here we set $R_4=R_1, R_5=R_2$.  
 
     \item When $\{(l_1, m_1), (l_2, m_2), (l_3, m_3)\}=\{ (1,s), (r,r), (r,r)\}$, for all choices of $R_1$, $R_2$ and $R_3$, we have $ N_{\eta_1, \eta_2, \eta_3}=1$. 
    \item When $\{(l_1, m_1), (l_2, m_2), (l_3, m_3)\}=\{ (1,s), (r,r^2), (r,r^2)\}$, for all choices of $R_1$, $R_2$ and $R_3$, we have $ N_{\eta_1, \eta_2, \eta_3}=1$.
    
    \item When  $\{(l_1, m_1), (l_2, m_2), (l_3, m_3)\}=\{ (1,s), (1,s), (1,s)\}$ 
    \[ N_{\eta_1, \eta_2, \eta_3}=
        \begin{cases} 
          1 & R_1=R_2=R_3=\text{Id}_{\mathbb{Z}_2} \\
          1 & R_i = \text{Id}_{\mathbb{Z}_2}, R_{i+1}= R_{i+2} = \text{Sign}_{\mathbb{Z}_2} \text{ for } i= 1,2,3\\
          0 & \text{otherwise}
       \end{cases}
    \] 
\end{enumerate}
These exhaust all nontrivial cases for $G=S_3$ when each of the three loops is occupied by a genuine loop excitation.

\section{Useful entropy combinations}
\label{appendix:TEE-exercises}

The main goal of this appendix is a coherent summary of useful entropy conditions on the reference states. 
We introduce the following short-hand notations:
\begin{eqnarray}
		\Delta(B,C)_{\rho} &\equiv & (S_{BC}+ S_{C}- S_B)_{\rho}\\
	\Delta(B,C,D)_{\rho} &\equiv& (S_{BC}+ S_{CD}- S_B - S_D)_{\rho}.\label{eq:def_delta}
\end{eqnarray}
In Sec.~\ref{sec:TEE0}, we identify a set of conditions of the reference state $\sigma$ that will be useful to prove things; as corollaries, the value for other extreme points can be inferred. In Sec.~\ref{sec:TEE1}, we discuss a few equivalent definitions of topological entanglement entropy coming from various partitions.

We recall the following useful conditions, which follow from SSA:
\begin{equation}\label{eq:useful_ssa}
	\begin{aligned}
		\Delta(B,C)&\ge0, \\
		\Delta(B,C)&\ge \Delta(BB',C), \\
		\Delta(B,CE)&\ge \Delta(BE,C), \\
		\Delta(B,C,D)&\ge 0,\\
		\Delta(B,C,D)&\ge \Delta(BB',C,DD'),\\
		\Delta(B,CEF,D)&\ge \Delta(BE,C,DF).
	\end{aligned}
\end{equation}

\subsection{Useful entropy combinations}\label{sec:TEE0}
The main purpose of this section is to derive and summarize a set of conditions satisfied by the reference state $\sigma$, of the form $\Delta(B,C,D)_{\sigma}=0$ and $\Delta(B,C)_{\sigma}=0$. 
Since the combinations of entropies are of similar form to the axioms {\bf A0} and {\bf A1}, this can be regarded as analogs of the axioms on more diverse topologies. While the conditions studied here are less fundamental than the axioms in that they are derived properties, they are handy in proofs.

 Finally, we derive the entropy combination for all extreme points as a corollary. A certain condition of this form can serve as an alternative definition of quantum dimension, which does not require the existence of a vacuum sector.

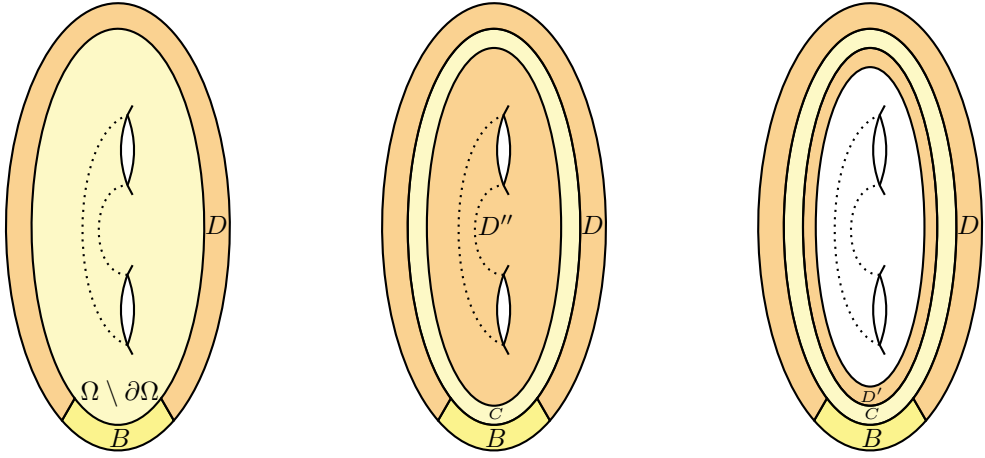
\begin{figure}[h]
	\centering
	\begin{tikzpicture}
%%%% LEFT FIGURE
		\begin{scope}[xshift=-5cm, scale=.85,rotate=-90]
			%%% genus two handlebody
			%%From inside to outside	
			\fill[fill=\ccolor, opacity=.25,even odd rule]
			{(-1.3,-0.33) ellipse (3.1 and 1.35)};  %% the outer boundary
			%%% right hole (draw and fill white)
			\fill[white](-0.55,-0.18) .. controls +(-20:0.4) and +(200:0.4) ..(0.55,-0.18).. controls +(160:0.4) and +(20:0.4) ..(-0.55,-0.18);
			\draw[black, thick] (-0.7,-0.1)--(-0.55,-0.18) .. controls +(-20:0.4) and +(200:0.4) ..(0.55,-0.18)-- (0.7,-0.1);
			\draw[black, thick] (-0.55,-0.18) .. controls +(20:0.4) and +(160:0.4) .. (0.55,-0.18);
			
			%%% left hole  (draw and fill white)
			\fill[white]([xshift=-2.5cm]-0.55,-0.18) .. controls +(-20:0.4) and +(200:0.4) ..([xshift=-2.5cm]0.55,-0.18).. controls +(160:0.4) and +(20:0.4) ..([xshift=-2.5cm]-0.55,-0.18);
			\draw[black, thick] ([xshift=-2.5cm]-0.7,-0.1)--([xshift=-2.5cm]-0.55,-0.18) .. controls +(-20:0.4) and +(200:0.4) ..([xshift=-2.5cm]0.55,-0.18)-- ([xshift=-2.5cm]0.7,-0.1);
			\draw[black, thick] ([xshift=-2.5cm]-0.55,-0.18) .. controls +(20:0.4) and +(160:0.4) .. ([xshift=-2.5cm]0.55,-0.18);
			%%% region B 
			\draw[black, thick, fill=\bcolor, fill opacity=.5] ([xshift=-1.3cm, yshift=-.33cm]-30:3.5 cm and 1.75cm) arc (-30:30:3.5cm and 1.75cm) -- 
			([xshift=-1.3cm, yshift=-.33cm]30:3.1 cm and 1.35cm) arc (30:-30:3.1cm and 1.35cm) -- cycle;
			%%% region D 
			\draw[black, thick, fill=\dcolor, fill opacity=.5] ([xshift=-1.3cm, yshift=-.33cm]-30:3.5 cm and 1.75cm) arc (-30:-330:3.5cm and 1.75cm) -- 
			([xshift=-1.3cm, yshift=-.33cm]-330:3.1 cm and 1.35cm) arc (-330:-30:3.1cm and 1.35cm) -- cycle;
			%% Add a bridge, so the topology becomes torus minus a disk
			\draw[black, thick, dotted] (-0.55,-0.18) .. controls +(-90:0.6) and +(-90:0.6) ..([xshift=-2.5cm]0.55,-0.18);
			\draw[black, thick, dotted] (0.55,-0.18) .. controls +(-110:1) and +(-70:1) ..([xshift=-2.5cm]-0.55,-0.18);
			
			\node[] (A) at (2, -.3) {\footnotesize{$B$}};		\node[] (A) at (1.3, -.3) {\footnotesize{$\Omega \setminus \partial\Omega$}};								\node[] (A) at (-1.33, 1.21) {\footnotesize{$D$}};	
		\end{scope}
%%%% MIDDLE FIGURE	
	\begin{scope}[yshift=0 cm, scale=.85,rotate=-90]
		%%% draw filling (inner)		
		\fill[fill=\dcolor, opacity=.5,even odd rule]
		{(-1.3,-0.33) ellipse (2.8 and 1.05)};  
		%% the outer boundary
		%%% right hole (draw and fill white)
		\fill[white]{(-0.55,-0.18) .. controls +(-20:0.4) and +(200:0.4) ..(0.55,-0.18).. controls +(160:0.4) and +(20:0.4) ..(-0.55,-0.18)};
		\draw[black, thick] (-0.7,-0.1)--(-0.55,-0.18) .. controls +(-20:0.4) and +(200:0.4) ..(0.55,-0.18)-- (0.7,-0.1);
		\draw[black, thick] (-0.55,-0.18) .. controls +(20:0.4) and +(160:0.4) .. (0.55,-0.18);
		
		%%% left hole (draw and fill white)
		\fill[white]{([xshift=-2.5cm]-0.55,-0.18) .. controls +(-20:0.4) and +(200:0.4) ..([xshift=-2.5cm]0.55,-0.18).. controls +(160:0.4) and +(20:0.4) ..([xshift=-2.5cm]-0.55,-0.18)};
		\draw[black, thick] ([xshift=-2.5cm]-0.7,-0.1)--([xshift=-2.5cm]-0.55,-0.18) .. controls +(-20:0.4) and +(200:0.4) ..([xshift=-2.5cm]0.55,-0.18)-- ([xshift=-2.5cm]0.7,-0.1);
		\draw[black, thick] ([xshift=-2.5cm]-0.55,-0.18) .. controls +(20:0.4) and +(160:0.4) .. ([xshift=-2.5cm]0.55,-0.18);
		
		%%% 2nd layer
		\draw[black, thick, fill=\ccolor, fill opacity=.25,even odd rule]{(-1.3,-0.33) ellipse (3.1 and 1.35)}{(-1.3,-0.33) ellipse (2.8 and 1.05)};

		%%% region B 
		\draw[black, thick, fill=\bcolor, fill opacity=.5] ([xshift=-1.3cm, yshift=-.33cm]-30:3.5 cm and 1.75cm) arc (-30:30:3.5cm and 1.75cm) -- 
		([xshift=-1.3cm, yshift=-.33cm]30:3.1 cm and 1.35cm) arc (30:-30:3.1cm and 1.35cm) -- cycle;
		
		%%% region D_1 
		\draw[black, thick, fill=\dcolor, fill opacity=.5] ([xshift=-1.3cm, yshift=-.33cm]-30:3.5 cm and 1.75cm) arc (-30:-330:3.5cm and 1.75cm) -- 
		([xshift=-1.3cm, yshift=-.33cm]-330:3.1 cm and 1.35cm) arc (-330:-30:3.1cm and 1.35cm) -- cycle;
		
		%% Add a bridge, so the topology becomes torus minus a disk
		\draw[black, thick, dotted] (-0.55,-0.18) .. controls +(-90:0.6) and +(-90:0.6) ..([xshift=-2.5cm]0.55,-0.18);
		\draw[black, thick, dotted] (0.55,-0.18) .. controls +(-110:1) and +(-70:1) ..([xshift=-2.5cm]-0.55,-0.18);

		\node[] (A) at (2, -.3) {\footnotesize{$B$}};															
		\node[] (A) at (1.63, -.3) {\tiny{$C$}};															
		\node[] (A) at (-1.33, 1.21) {\footnotesize{$D$}};																						\node[] (A) at (-1.33, -.3) {\footnotesize{$D''$}};															
	\end{scope}
		
%%%% RIGHT FIGURE
	\begin{scope}[xshift=5 cm, scale=.85,rotate=-90]
		%%% genus two handlebody	
		%% start from the interior
		%%% right hole
		\draw[black, thick] (-0.7,-0.1)--(-0.55,-0.18) .. controls +(-20:0.4) and +(200:0.4) ..(0.55,-0.18)-- (0.7,-0.1);
		\draw[black, thick] (-0.55,-0.18) .. controls +(20:0.4) and +(160:0.4) .. (0.55,-0.18);
		
		%%% left hole 
		\draw[black, thick] ([xshift=-2.5cm]-0.7,-0.1)--([xshift=-2.5cm]-0.55,-0.18) .. controls +(-20:0.4) and +(200:0.4) ..([xshift=-2.5cm]0.55,-0.18)-- ([xshift=-2.5cm]0.7,-0.1);
		\draw[black, thick] ([xshift=-2.5cm]-0.55,-0.18) .. controls +(20:0.4) and +(160:0.4) .. ([xshift=-2.5cm]0.55,-0.18);
		
		%%% region D2
		\draw[black, thick, fill=\dcolor, fill opacity=0.5, even odd rule ] {(-1.3,-0.33) ellipse (2.5 and 0.85)}{(-1.3,-0.33) ellipse (2.8 and 1.05)};
		%%% region C
		\draw[black, thick, fill=\ccolor, fill opacity=0.25, even odd rule ] {(-1.3,-0.33) ellipse (2.8 and 1.05)}{(-1.3,-0.33) ellipse (3.1 and 1.35)};
		
		%%% region B 
		\draw[black, thick, fill=\bcolor, fill opacity=.5] ([xshift=-1.3cm, yshift=-.33cm]-30:3.5 cm and 1.75cm) arc (-30:30:3.5cm and 1.75cm) -- 
		([xshift=-1.3cm, yshift=-.33cm]30:3.1 cm and 1.35cm) arc (30:-30:3.1cm and 1.35cm) -- cycle;
		
		%%% region D1 
		\draw[black, thick, fill=\dcolor, fill opacity=.5] ([xshift=-1.3cm, yshift=-.33cm]-30:3.5 cm and 1.75cm) arc (-30:-330:3.5cm and 1.75cm) -- 
		([xshift=-1.3cm, yshift=-.33cm]-330:3.1 cm and 1.35cm) arc (-330:-30:3.1cm and 1.35cm) -- cycle;
		
		%% Add a bridge, so the topology becomes torus minus a disk
		\draw[black, thick, dotted] (-0.55,-0.18) .. controls +(-90:0.6) and +(-90:0.6) ..([xshift=-2.5cm]0.55,-0.18);
		\draw[black, thick, dotted] (0.55,-0.18) .. controls +(-110:1) and +(-70:1) ..([xshift=-2.5cm]-0.55,-0.18);
		
		\node[] (A) at (2, -.3) {\footnotesize{$B$}};	
		\node[] (A) at (1.63, -.3) {\tiny{$C$}};	
		\node[] (A) at (1.33, -.3) {\tiny{$D'$}};														
		\node[] (A) at (-1.33, 1.21) {\footnotesize{$D$}};
	\end{scope}									
\end{tikzpicture}
 	\caption{Regions showing the idea of proving Eq.~(\ref{eq:decoupling}) of the decoupling lemma. $\partial\Omega=BD$, $\partial(\Omega\setminus (BD))=C$, $D''=\Omega \setminus (BCD)$ and $D'=\partial D''$.
 		The partition $\partial\Omega$ into $BD$ is arbitrary. In general, $\partial\Omega$ is allowed to have multiple connected components.  }
	\label{fig:decoupling_lemma}
\end{figure}

We start with the \emph{decoupling lemma} for entropy conditions, which uncovers some connections between these conditions on different configurations.  

\begin{lemma}[Decoupling lemma]\label{lemma:decoupling}
	Let $\Omega$ be an immersed region.
	Let $\rho^{\langle e\rangle}_{\Omega}$ be an extreme point of $\Sigma(\Omega)$. Let $\partial\Omega=BD$, $\partial(\Omega\setminus (BD))=C$ and $D'=\partial(\Omega \setminus (BCD))$. (A possible choice of regions is illustrated in Fig.~\ref{fig:decoupling_lemma}.) 
	Then  
	\begin{eqnarray}
		\Delta(B,\Omega\setminus \partial\Omega, D)_{\rho^{\langle e\rangle}}&=& \Delta(B,C,DD')_{\rho^{\langle e\rangle}}. \label{eq:decoupling} \\
		\Delta(B,\Omega\setminus \partial\Omega, D)_{\rho^{\langle e\rangle}}&=& \Delta(BD',C,D)_{\rho^{\langle e\rangle}}. \label{eq:decoupling-2}
	\end{eqnarray}
\end{lemma}

The remarkable fact is that on the right-hand side of Eq.(\ref{eq:decoupling}), all regions are near the boundary of region $\Omega$.
The result is independent of the detailed partition of $\partial\Omega$ into $BD$, and it is flexible enough to cover the cases where $\Omega$ has multiple boundary components and various higher dimensional settings.

\begin{proof}
	The idea of the proof of Eq.~(\ref{eq:decoupling}) is illustrated by the color setting of Fig.~\ref{fig:decoupling_lemma}.
	First, we show
	\begin{equation}
		\Delta(B,\Omega\setminus \partial \Omega, D)_{\rho^{\langle e\rangle}}=\Delta(B,C,DD'')_{\rho^{\langle e\rangle}}.
	\end{equation}
This is true because for the extreme point $\rho^{\langle e\rangle}_{\Omega}$, 
\begin{equation}
	\begin{aligned}
		S_{DD''}&= S_{D} +S_{D''},\\
		S_{BC}-S_{D''}&=S_{B(\Omega \setminus \partial\Omega)}.
	\end{aligned}
\end{equation}
Both conditions follow from the factorization property and SSA.
%, for any extreme point.
Second, we find that
\begin{equation}
\Delta(B,C,DD'')_{\rho^{\langle e\rangle}}= \Delta(B,C, DD')_{\rho^{\langle e\rangle}}.
\end{equation}	
This is because for the extreme point $\rho^{\langle e\rangle}_{\Omega}$, 
\begin{equation}
	\begin{aligned}
		S_{DD'} - S_{DD''} &= S_{D'' \setminus D'}\\
		S_{CDD'} - S_{CDD''} &= S_{D'' \setminus D'}.
	\end{aligned}
\end{equation}
Both conditions follow from the factorization property and SSA. This completes the proof of Eq.~(\ref{eq:decoupling}).

The proof of Eq.~(\ref{eq:decoupling-2}) is similar since $\Delta(B,C,D)$ is invariant under the exchange of $B$ and $D$.  
\end{proof}

\subsubsection{Two-dimensional regions}

\begin{Proposition}
	Let $\sigma$ be a 2d reference state on a disk (which satisfies axioms {\bf A0} and {\bf A1}). The entropy combinations
	\begin{enumerate}
		\item $\Delta(B,C)_{\sigma}=0$ for the partition in Fig.~\ref{fig:2d_basic_partitions}(a).
		\item $\Delta(B,C,D)){\sigma}=0$ for the partition in Fig.~\ref{fig:2d_basic_partitions}(b) and (c).
	\end{enumerate}
\end{Proposition}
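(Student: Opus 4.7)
The plan is to derive all three identities by the standard entanglement-bootstrap maneuver of propagating the axioms \textbf{A0} and \textbf{A1} from small balls (where they hold by hypothesis) to the target partitions, using only strong subadditivity. The key structural observation is that by \eqref{eq:useful_ssa} we already have the one-sided bounds $\Delta(B,C)_{\sigma}\ge 0$ and $\Delta(B,C,D)_{\sigma}\ge 0$; moreover the same inequalities are monotone under enlargement of $B$ or $D$ (and under shrinking of $C$ in a controlled way). Thus the whole task reduces to producing a matching upper bound of $0$, which we obtain by writing the desired $\Delta$ as the terminal value of a chain in which each step either (i) coincides with a configuration in which the axiom applies directly, or (ii) is related to the previous step by an SSA-monotonicity inequality.

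For part 1, consider the annular partition $(B,C)$ of Fig.~\ref{fig:2d_basic_partitions}(a). I would build the disk $BC$ out of small balls by a sequence of elementary extensions starting from a single ball of the form appearing in \textbf{A0}. At each extension one adds a small ball $e$ glued to the current region along a disk; applying \textbf{A0} on a ball containing $e$ and the attaching strip gives $\Delta(B,Ce)_{\sigma} = \Delta(B,C)_{\sigma}$, so $\Delta(B,C)_{\sigma}$ is preserved throughout the chain and equals its starting value $0$. For part 2, the same strategy applies to the configurations of Fig.~\ref{fig:2d_basic_partitions}(b,c): reduce each partition to a small-ball three-piece partition in which \textbf{A1} gives $\Delta = 0$ directly, and verify that every elementary extension preserves $\Delta(B,C,D)_{\sigma}$ by combining the reference-state identities on the newly added ball with the monotonicity relations from \eqref{eq:useful_ssa}. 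The upper bound $\Delta \le 0$, together with the SSA lower bound $\Delta \ge 0$, then forces equality.

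An alternative, often cleaner, route uses the Decoupling Lemma~\ref{lemma:decoupling}, which I would apply after noting that the reference state is the unique element of $\Sigma(\text{disk})$ and is therefore an extreme point. The lemma lets one replace the bulk combination $\Delta(B,\Omega\setminus\partial\Omega,D)$ by a purely boundary combination $\Delta(B,C,\widetilde D)$ supported in a thin neighborhood of $\partial\Omega$; in each of the three partitions this rewriting reduces the problem to a configuration contained in a ball, where \textbf{A0} or \textbf{A1} applies. The main obstacle I anticipate is bookkeeping in the extension sequence for partition (c), whose topology (likely involving a piece $C$ that wraps around another region) means that the naive one-ball-at-a-time extension must be organized so that every intermediate configuration stays within the regime where \textbf{A0}/\textbf{A1} apply verbatim; here using the Decoupling Lemma to collapse the problem to a boundary strip is the simplest way to avoid case analysis.
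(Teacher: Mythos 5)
For parts 1 and 2 restricted to Fig.~\ref{fig:2d_basic_partitions}(a) and (b), your proposal is essentially the paper's argument: the one-sided bound $\Delta\ge 0$ from \eqref{eq:useful_ssa}, plus an upper bound obtained by applying the enlarged axioms to a disk and then using the Decoupling Lemma~\ref{lemma:decoupling} (with $\sigma_\Omega$ an extreme point by Lemma~\ref{lemma:vacuum}, or equivalently by the uniqueness of $\Sigma(\text{disk})$) to push everything to a boundary strip. One caution about your ``pure telescoping'' route: the step that passes from a configuration in which the hole of the annulus is still present (as part of $B$ or of $C$) to the configuration in which it has been traced out goes the \emph{wrong} way under SSA monotonicity --- removing a piece of $B$ that does not touch $C$ can only increase $\Delta$. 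That step is exactly what the factorization property of extreme points (equivalently the Decoupling Lemma or Lemma~\ref{lemma:ext_criterion}) supplies, so your second route is the one that actually closes the argument for (a) and (b).

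For Fig.~\ref{fig:2d_basic_partitions}(c) there is a genuine gap. The distinguishing feature of (c) is that the small region $D$ sits on the \emph{inner} boundary component of the annulus, and neither of your tools can reach that configuration. The Decoupling Lemma only equates a bulk combination $\Delta(B,\Omega\setminus\partial\Omega,D)$ with a boundary combination $\Delta(B,C,\widetilde D)$ for a \emph{fixed} region and a \emph{fixed} partition $\partial\Omega=BD$; it does not tell you either quantity is zero. The only region for which you have an independent vanishing input is a disk (via enlarged {\bf A1}), and for a disk the thickened boundary is a single annulus, so any choice of $D\subset\partial\Omega$ decouples to configuration (b), never to (c). Taking $\Omega$ to be the annulus itself just restates what you are trying to prove. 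Nor can you isotope the annulus inside a disk so as to exchange its two boundary components --- this is impossible for embedded annuli in a disk (the same fact that gives a well-defined reference frame for anyon labels). The paper's proof supplies the missing ingredient: complete the reference state to the sphere via Lemma~\ref{lemma:sphere_completion}; on $S^2$ the complement of the annulus consists of two disks, and the annulus can be smoothly deformed through embedded annuli so that its inner and outer boundary components are exchanged. This deformation preserves the entropy combination and carries configuration (c) to configuration (b), reducing (c) to the case already proved. Without the sphere completion step, your proof of (c) does not go through.
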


\begin{proof}
	To prove the first statement, $\Delta(B,C)_{\sigma}=0$ for Fig.~\ref{fig:2d_basic_partitions}(a), we consider a ball for which enlarged {\bf A0} holds. We then use the decoupling lemma to go the sphere shell.  Now the left-hand side of Eq.~(\ref{eq:decoupling-2}) becomes zero (due to enlarged {\bf A0}, and we have set $D=\emptyset$); the right-hand side becomes the desired answer. 	
	An alternative method of proving the first statement is to use the vacuum lemma (Lemma~\ref{lemma:vacuum}) and the extreme point criterion (Lemma~\ref{lemma:ext_criterion}).
	
	$\Delta(B,C,D)_{\sigma}=0$ for Fig.~\ref{fig:2d_basic_partitions}(b) follows from enlarged {\bf A1} and the decoupling lemma. The logic is analogous to the first method described in the previous paragraph.
	
	$\Delta(B,C,D)_{\sigma}=0$ for Fig.~\ref{fig:2d_basic_partitions}(c) follows from that for Fig.~\ref{fig:2d_basic_partitions}(b) and the sphere completion lemma. The useful observation is that one can flip the inside and the outside of an annulus by smoothly deforming it on a sphere.
\end{proof}

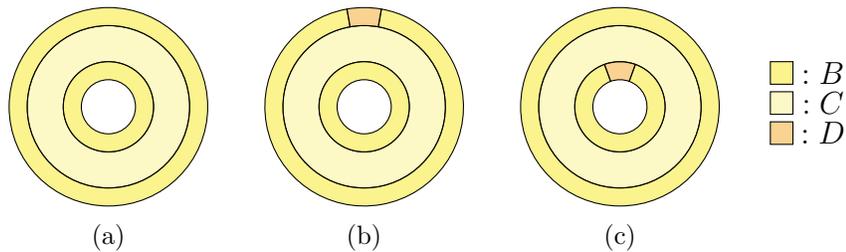
\begin{figure}[h!]
	\centering
	\begin{tikzpicture}
	\begin{scope}[xshift=-3.4 cm,scale=1.2]
	\draw[fill=\ccolor, fill opacity=0.25,even odd rule]{(0,0) circle (0.5)}{(0,0) circle (0.9)};
	\draw[fill=\bcolor, fill opacity=0.5, even odd rule]{(0,0) circle (0.3)}{(0,0) circle (0.5)}{(0,0) circle (0.9)}{(0,0) circle (1.1)};
	
	\node [] (A) at (0,-1.45)  {\footnotesize{(a)}};
\end{scope}

\begin{scope}[ xshift=0 cm,scale=1.2]
	\draw[fill=\ccolor, fill opacity=0.25,even odd rule]{(0,0) circle (0.5)}{(0,0) circle (0.9)};
	\draw[fill=\bcolor, fill opacity=0.5, even odd rule]{(0,0) circle (0.3)}{(0,0) circle (0.5)};
	
	\draw[fill=\bcolor, fill opacity=0.5, even odd rule]{(0:0)+(100:0.9)--+(100:1.1)arc(100:440:1.1)--(440:0.9)arc(440:100:0.9)};
	\draw[fill=\dcolor, fill opacity=0.5, even odd rule]{(0:0)+(80:0.9)--+(80:1.1)arc(80:100:1.1)--(100:0.9)arc(100:80:0.9)};
	
	\node [] (A) at (0,-1.45)  {\footnotesize{(b)}};;
\end{scope}

\begin{scope}[ xshift=3.4 cm, scale=1.2]
	\draw[fill=\ccolor, fill opacity=0.25,even odd rule]{(0,0) circle (0.5)}{(0,0) circle (0.9)};
	% inner
	\draw[fill=\bcolor, fill opacity=0.5, even odd rule]{(0,0) circle (0.9)}{(0,0) circle (1.1)};
	% outer
	\draw[fill=\bcolor, fill opacity=0.5, even odd rule]{(0:0)+(110:0.3)--+(110:0.5)arc(110:430:0.5)--(430:0.3)arc(430:110:0.3)};
	\draw[fill=\dcolor, fill opacity=0.5, even odd rule]{(0:0)+(70:0.3)--+(70:0.5)arc(70:110:0.5)--(110:0.3)arc(110:70:0.3)};
	
	\node [] (A) at (0,-1.45) {\footnotesize{(c)}};
\end{scope}

	\begin{scope}[xshift=1.9cm, yshift=-1 cm]
		\draw[fill=\bcolor, fill opacity=0.5] (3.5,1.3) rectangle (3.8, 1.6);
		\draw[fill=\ccolor,fill opacity=0.25] (3.5,0.9) rectangle (3.8, 1.2);
		\draw[fill=\dcolor, fill opacity=0.5 ] (3.5,0.5) rectangle (3.8, 0.8);
		\node [] (A) at (4.2,1.45) {$:B$};
		\node [] (B) at (4.2,1.05) {$:C$};
		\node [] (C) at (4.2,0.65) {$:D$};
	\end{scope}			
\end{tikzpicture}
	\caption{An annulus embedded in a disk and its three basic partitions. For all three partitions, $BD$ ($D=\emptyset$ for (a)) is the thickened boundary of the annulus, whereas $C$ is the interior.}
	\label{fig:2d_basic_partitions}
\end{figure}

With these relatively simple conditions, one can derive more conditions by applying the decoupling lemma. Below are some examples.

\begin{figure}[h]
	\centering
	\begin{tikzpicture}
		\begin{scope}[ xshift=-2cm, scale=1]
			\draw[fill=\ccolor, fill opacity=0.25,even odd rule]{(0,0) circle (0.5)}{(0,0) circle (0.9)};
			% inner
			\draw[fill=\bcolor, fill opacity=0.5, even odd rule]{(0:0)+(110:0.3)--+(110:0.5)arc(110:430:0.5)--(430:0.3)arc(430:110:0.3)};
			\draw[fill=\dcolor, fill opacity=0.5, even odd rule]{(0:0)+(70:0.3)--+(70:0.5)arc(70:110:0.5)--(110:0.3)arc(110:70:0.3)};
			% outer
			\draw[fill=\bcolor, fill opacity=0.5, even odd rule]{(0:0)+(100:0.9)--+(100:1.1)arc(100:440:1.1)--(440:0.9)arc(440:100:0.9)};
			\draw[fill=\dcolor, fill opacity=0.5, even odd rule]{(0:0)+(80:0.9)--+(80:1.1)arc(80:100:1.1)--(100:0.9)arc(100:80:0.9)};
			\node [] (A) at (-0.9,1.1) {\footnotesize{(a)}};
		\end{scope}
		% b0 b1 b2 b3
		\begin{scope}[xshift= 1.0 cm, yshift=0 cm, scale=0.9]
			\begin{scope}[xshift=0 cm]
				\draw[fill=\ccolor, fill opacity=0.25, even odd rule] {(0,0) circle (1.3 and 0.9)}{(-0.6,0) circle (0.35)}{(0.6,0) circle (0.35)};
				\draw[fill=\bcolor, fill opacity=0.5, even odd rule] {(0.6,0) circle (0.35)}{(0.6,0) circle (0.2)};
				\draw[fill=\bcolor, fill opacity=0.5, even odd rule] {(0,0) circle (1.3 and 0.9)}{(0,0) circle (1.45 and 1.05)};
				% the left circle
				\draw[fill=\bcolor, fill opacity=0.5, even odd rule]{(0:-0.6)+(110:0.2)--+(110:0.35)arc(110:430:0.35)--+(430:-0.15)arc(430:110:0.2)};
				\draw[fill=\dcolor, fill opacity=0.5, even odd rule]{(0:-0.6)+(70:0.2)--+(70:0.35)arc(70:110:0.35)--+(110:-0.15)arc(110:70:0.2)};
				\node [] (A) at (-1.2,1.1) {\footnotesize{(b1)}};
			\end{scope}
			\begin{scope}[xshift=3.4 cm]
				\draw[fill=\ccolor, fill opacity=0.25, even odd rule] {(0,0) circle (1.3 and 0.9)}{(-0.6,0) circle (0.35)}{(0.6,0) circle (0.35)};
				\draw[fill=\bcolor, fill opacity=0.5, even odd rule] {(0,0) circle (1.3 and 0.9)}{(0,0) circle (1.45 and 1.05)};
				% the left circle
				\draw[fill=\bcolor, fill opacity=0.5, even odd rule]{(0:-0.6)+(110:0.2)--+(110:0.35)arc(110:430:0.35)--+(430:-0.15)arc(430:110:0.2)};
				\draw[fill=\dcolor, fill opacity=0.5, even odd rule]{(0:-0.6)+(70:0.2)--+(70:0.35)arc(70:110:0.35)--+(110:-0.15)arc(110:70:0.2)};
				% the right circle
				\draw[fill=\bcolor, fill opacity=0.5, even odd rule]{(0:0.6)+(110:0.2)--+(110:0.35)arc(110:430:0.35)--+(430:-0.15)arc(430:110:0.2)};
				\draw[fill=\dcolor, fill opacity=0.5, even odd rule]{(0:0.6)+(70:0.2)--+(70:0.35)arc(70:110:0.35)--+(110:-0.15)arc(110:70:0.2)};
				\node [] (A) at (-1.2,1.1) {\footnotesize{(b2)}};
			\end{scope}
			\begin{scope}[xshift=6.8 cm]
				\draw[fill=\ccolor, fill opacity=0.25, even odd rule] {(0,0) circle (1.3 and 0.9)}{(-0.6,0) circle (0.35)}{(0.6,0) circle (0.35)};
				
				% the left circle
				\draw[fill=\bcolor, fill opacity=0.5, even odd rule]{(0:-0.6)+(110:0.2)--+(110:0.35)arc(110:430:0.35)--+(430:-0.15)arc(430:110:0.2)};
				\draw[fill=\dcolor, fill opacity=0.5, even odd rule]{(0:-0.6)+(70:0.2)--+(70:0.35)arc(70:110:0.35)--+(110:-0.15)arc(110:70:0.2)};
				% the right circle
				\draw[fill=\bcolor, fill opacity=0.5, even odd rule]{(0:0.6)+(110:0.2)--+(110:0.35)arc(110:430:0.35)--+(430:-0.15)arc(430:110:0.2)};
				\draw[fill=\dcolor, fill opacity=0.5, even odd rule]{(0:0.6)+(70:0.2)--+(70:0.35)arc(70:110:0.35)--+(110:-0.15)arc(110:70:0.2)};
				
				\draw[fill=\bcolor, fill opacity=0.5, even odd rule]{(0:0)+(100:1.3 and 0.9)--+(100:1.45 and 1.05)arc(100:440:1.45 and 1.05)--+(440:-0.15)arc(440:100:1.3 and 0.9)};
				\draw[fill=\dcolor, fill opacity=0.5, even odd rule]{(0:0)+(80:1.3 and 0.9)--+(80:1.45 and 1.05)arc(80:100:1.45 and 1.05)--+(100:-0.15)arc(100:80:1.3 and 0.9)};
				\node [] (A) at (-1.2,1.1) {\footnotesize{(b3)}};
			\end{scope}
		\end{scope}
		% BCD color blocks
		\begin{scope}[xshift=-5.0 cm, yshift=-1 cm]
			\draw[fill=\bcolor, fill opacity=0.5] (3.5-3,1.3) rectangle (0.8, 1.6);
			\draw[fill=\ccolor,fill opacity=0.25] (3.5-3,0.9) rectangle (0.8, 1.2);
			\draw[fill=\dcolor, fill opacity=0.5 ] (3.5-3,0.5) rectangle (0.8, 0.8);
			\node [] (A) at (4.2-3,1.45) {$:B$};
			\node [] (B) at (4.2-3,1.05) {$:C$};
			\node [] (C) at (4.2-3,0.65) {$:D$};
		\end{scope}			
	\end{tikzpicture}
	\caption{All regions are contained in a disk. For all these partitions, $BD$ is the thickened boundary and $C$ is the interior.}
	\label{fig:2d_additional_partitions}
\end{figure}
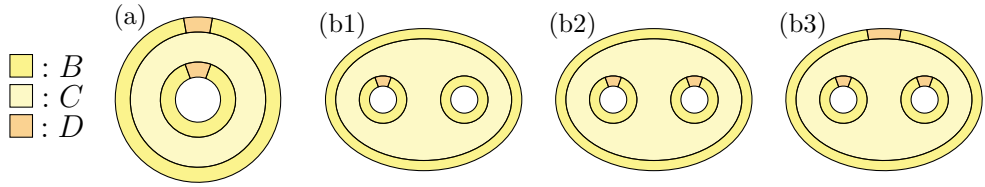

\begin{exmp}
$\Delta(B,C,D)_{\sigma}=0$ for any $BCD$ configuration in Fig.~\ref{fig:2d_additional_partitions}.

To illustrate the idea, here are the steps that justifies case (a) of Fig.~\ref{fig:2d_additional_partitions}:
\begin{equation}
\begin{tikzpicture}
\def\router{3}
\def\rinner{1}
\def\rdouter{.9*\router}
\def\rdinner{1.2*\rinner}
\def\dangleL{20}
\def\dangleR{-20}
\def\cosdangle{0.939693}
\def\sindangle{0.34202}

\def\rdecoupleinner{\router/2}
\def\rdecoupleouter{\router*3/4}
\def\rddecoupleouter{\rdecoupleouter*1.1};
\def\rddecoupleinner{\rdecoupleinner*1.1};

\begin{scope}[scale=0.8]
%%% left picture
\begin{scope}[scale=.7]
	\draw[fill = \bcolor,fill opacity=0.5] (0,0) circle (\router);
    \draw[fill = white] (0,0) circle (\rdouter); 
	
	\draw[fill = \ccolor,fill opacity=0.25] (0,0) circle (\rdouter); 
    \draw[fill = white] (0,0) circle (\rdinner); 

	\draw[fill = \bcolor,fill opacity=0.5] (0,0) circle (\rdinner);
	
%%% D 
	\draw[fill = \dcolor,fill opacity=0.5] (\rdinner*\cosdangle, \rdinner*\sindangle) --  (\rinner*\cosdangle, \rinner*\sindangle) 
	 arc (\dangleL: \dangleR: \rinner) -- (\rdinner*\cosdangle, - \rdinner*\sindangle)
	 arc (\dangleR : \dangleL : \rdinner);
	 
	\draw[fill = \dcolor] (\rdouter*\cosdangle, \rdouter*\sindangle) --  (\router*\cosdangle, \router*\sindangle) 
	 arc (\dangleL: \dangleR: \router) -- (\rdouter*\cosdangle, - \rdouter*\sindangle)
	 arc (\dangleR : \dangleL : \rdouter);
	 	 
	\draw[fill = white] (0,0) circle (\rinner) ; 	
	\node at (3.9, 0){$ \longrightarrow $};
	\end{scope}

	%%% MIDDLE PICTURE
	\begin{scope}[xshift=5.5cm, scale=.7]

	\draw[fill = \bcolor,fill opacity=0.5] (0,0) circle (\router);
    \draw[fill = white] (0,0) circle (\rdouter); 
	
	\draw[fill = \ccolor,fill opacity=0.25] (0,0) circle (\rdouter); 
    \draw[fill = white] (0,0) circle (\rdinner); 
	
	\draw[fill = \bcolor,fill opacity=0.5] (0,0) circle (\rdinner);

%%%outer part of D
	\draw[fill = \dcolor,fill opacity=0.5] (\rdouter*\cosdangle, \rdouter*\sindangle) --  (\router*\cosdangle, \router*\sindangle) 
	 arc (\dangleL: \dangleR: \router) -- (\rdouter*\cosdangle, - \rdouter*\sindangle)
	 arc (\dangleR : \dangleL : \rdouter);

	\draw[fill = \bcolor,fill opacity=0.5] (0,0) circle (\rddecoupleouter);
	\draw[fill = white] (0,0) circle (\rddecoupleinner); 
	
	\draw[fill = \ccolor,fill opacity=0.25] (0,0) circle (\rddecoupleinner); 

    \draw[fill = white] (0,0) circle (\rdinner); 
	
	\draw[fill = \bcolor,fill opacity=0.5] (0,0) circle (\rdinner);

%% INNER PART OF D
	\draw[fill = \dcolor,fill opacity=0.5] (\rdinner*\cosdangle, \rdinner*\sindangle) --  (\rinner*\cosdangle, \rinner*\sindangle) 
	 arc (\dangleL: \dangleR: \rinner) -- (\rdinner*\cosdangle, - \rdinner*\sindangle)
	 arc (\dangleR : \dangleL : \rdinner);

	\draw[fill = white] (0,0) circle (\rinner) ; 	
	\end{scope}
	\node at (8.2, 0){$ \longrightarrow $};
	
	%%% RIGHT PICTURE
	\begin{scope}[xshift=11cm, scale=.7]

    \draw[fill = white] (0,0) circle (\router); 
	\draw[fill = \bcolor,fill opacity=0.5] (0,0) circle (\router);
    \draw[fill = white] (0,0) circle (\rdouter); 
	\draw[fill = \ccolor,fill opacity=0.25] (0,0) circle (\rdouter); 
    \draw[fill = white] (0,0) circle (\rdinner); 
	
	\draw[fill = \bcolor,fill opacity=0.5] (0,0) circle (\rdinner);

%% OUTER PART OF D
	\draw[fill = \dcolor,fill opacity=0.5] (\rdouter*\cosdangle, \rdouter*\sindangle) --  (\router*\cosdangle, \router*\sindangle) 
	 arc (\dangleL: \dangleR: \router) -- (\rdouter*\cosdangle, - \rdouter*\sindangle)
	 arc (\dangleR : \dangleL : \rdouter);

    \draw[fill = white] (0,0) circle (\rddecoupleouter); 
	\draw[fill = \bcolor,fill opacity=0.5] (0,0) circle (\rddecoupleouter);
	
	\draw[fill = white] (0,0) circle (\rdecoupleouter); 
	
	\draw[fill = \bcolor,fill opacity=0.5] (0,0) circle (\rddecoupleinner*1.1);
    \draw[fill = white] (0,0) circle (\rddecoupleinner); 
	\draw[fill = \ccolor,fill opacity=0.25] (0,0) circle (\rddecoupleinner); 

	\draw[fill = \bcolor,fill opacity=0.5] (0,0) circle (\rdinner);

%% INNER PART OF D
	\draw[fill = \dcolor,fill opacity=0.5] (\rdinner*\cosdangle, \rdinner*\sindangle) --  (\rinner*\cosdangle, \rinner*\sindangle) 
	 arc (\dangleL: \dangleR: \rinner) -- (\rdinner*\cosdangle, - \rdinner*\sindangle)
	 arc (\dangleR : \dangleL : \rdinner);

	\draw[fill = white] (0,0) circle (\rinner) ;

	\end{scope}
	
	% BCD color blocks
	\begin{scope}[xshift=-5.0 cm, yshift=-1 cm,,scale=1.2]
		\draw[fill=\bcolor, fill opacity=0.5] (3.5-3,1.3) rectangle (0.8, 1.6);
		\draw[fill=\ccolor,fill opacity=0.25] (3.5-3,0.9) rectangle (0.8, 1.2);
		\draw[fill=\dcolor, fill opacity=0.5 ] (3.5-3,0.5) rectangle (0.8, 0.8);
		\node [] (B) at (4.2-3,1.45) {$:B$};
		\node [] (C) at (4.2-3,1.05) {$:C$};
		\node [] (D) at (4.2-3,0.65) {$:D$};
	\end{scope}			
		\end{scope}
	
\end{tikzpicture}
\end{equation}
The three figures here correspond to the three in Fig.~\ref{fig:decoupling_lemma} translated to this particular case. 
\end{exmp}

{\bf Quantum dimension in 2d.}
Below, we obtain an expression for the quantum dimension by replacing the reference state with an appropriate extreme point.
\begin{Proposition}\label{prop:quantum_dim_2d}
	For the extreme points with labels specified in the table:
\end{Proposition}
$$
\begin{array}{|c|c|c|}
\hline
\Delta(B,C,D) & \text{when} & \text{2d regions} \\
\hline
2 \ln d_a & \forall a \in {\cal C}  & % 37 b and c 
\parbox{.32\textwidth}{ 
	\vspace{0.15cm}
	\begin{tikzpicture}
	\draw[fill=\ccolor, fill opacity=0.25,even odd rule]{(0,0) circle (0.5)}{(0,0) circle (0.9)};
	\draw[fill=\bcolor, fill opacity=0.5, even odd rule]{(0,0) circle (0.3)}{(0,0) circle (0.5)};
	
	\draw[fill=\bcolor, fill opacity=0.5, even odd rule]{(0:0)+(100:0.9)--+(100:1.1)arc(100:440:1.1)--(440:0.9)arc(440:100:0.9)};
	\draw[fill=\dcolor, fill opacity=0.5, even odd rule]{(0:0)+(80:0.9)--+(80:1.1)arc(80:100:1.1)--(100:0.9)arc(100:80:0.9)};
	\node [] (A) at (0,0)  {\footnotesize{$a$}};
	\begin{scope}[ xshift=2.5 cm]
		\draw[fill=\ccolor, fill opacity=0.25,even odd rule]{(0,0) circle (0.5)}{(0,0) circle (0.9)};
	% inner
		\draw[fill=\bcolor, fill opacity=0.5, even odd rule]{(0,0) circle (0.9)}{(0,0) circle (1.1)};
	% outer
		\draw[fill=\bcolor, fill opacity=0.5, even odd rule]{(0:0)+(110:0.3)--+(110:0.5)arc(110:430:0.5)--(430:0.3)arc(430:110:0.3)};
		\draw[fill=\dcolor, fill opacity=0.5, even odd rule]{(0:0)+(70:0.3)--+(70:0.5)arc(70:110:0.5)--(110:0.3)arc(110:70:0.3)};
	
		\node [] (A) at (0,0) {\footnotesize{$a$}};
	\end{scope}
	\begin{scope}[xshift=5.0 cm, yshift=-1 cm]
		\draw[fill=\bcolor, fill opacity=0.5] (3.5-3,1.3) rectangle (0.8, 1.6);
		\draw[fill=\ccolor,fill opacity=0.25] (3.5-3,0.9) rectangle (0.8, 1.2);
		\draw[fill=\dcolor, fill opacity=0.5 ] (3.5-3,0.5) rectangle (0.8, 0.8);
		\node [] (A) at (4.2-3,1.45) {$:B$};
		\node [] (B) at (4.2-3,1.05) {$:C$};
		\node [] (C) at (4.2-3,0.65) {$:D$};
	\end{scope}			
	\end{tikzpicture}	
\vspace{-0.45cm}	
}
\\
\hline
4 \ln d_a & \forall a \in {\cal C}  & % 38 a
\parbox{.17\textwidth}{ 
	\vspace{0.15cm}
	\begin{tikzpicture}
	\begin{scope}[ xshift=1.25 cm]
		\draw[fill=\ccolor, fill opacity=0.25,even odd rule]{(0,0) circle (0.5)}{(0,0) circle (0.9)};
		% inner
		\draw[fill=\bcolor, fill opacity=0.5, even odd rule]{(0:0)+(110:0.3)--+(110:0.5)arc(110:430:0.5)--(430:0.3)arc(430:110:0.3)};
		\draw[fill=\dcolor, fill opacity=0.5, even odd rule]{(0:0)+(70:0.3)--+(70:0.5)arc(70:110:0.5)--(110:0.3)arc(110:70:0.3)};
		% outer
		\draw[fill=\bcolor, fill opacity=0.5, even odd rule]{(0:0)+(100:0.9)--+(100:1.1)arc(100:440:1.1)--(440:0.9)arc(440:100:0.9)};
		\draw[fill=\dcolor, fill opacity=0.5, even odd rule]{(0:0)+(80:0.9)--+(80:1.1)arc(80:100:1.1)--(100:0.9)arc(100:80:0.9)};
		\node [] (A) at (0,0) {\footnotesize{$a$}};
	\end{scope}
	\end{tikzpicture}		
\vspace{0.10cm}	
}
\\
\hline
2 \ln d_a & \forall a,b,c \in {\cal C} 
&
\parbox{.2\textwidth}{ 
	\vspace{0.15cm}
	\begin{tikzpicture}
		\begin{scope}[xshift=0 cm]
			\draw[fill=\ccolor, fill opacity=0.25, even odd rule] {(0,0) circle (1.3 and 0.9)}{(-0.6,0) circle (0.35)}{(0.6,0) circle (0.35)};
			\draw[fill=\bcolor, fill opacity=0.5, even odd rule] {(0.6,0) circle (0.35)}{(0.6,0) circle (0.2)};
			\draw[fill=\bcolor, fill opacity=0.5, even odd rule] {(0,0) circle (1.3 and 0.9)}{(0,0) circle (1.45 and 1.05)};
			% the left circle
			\draw[fill=\bcolor, fill opacity=0.5, even odd rule]{(0:-0.6)+(110:0.2)--+(110:0.35)arc(110:430:0.35)--+(430:-0.15)arc(430:110:0.2)};
			\draw[fill=\dcolor, fill opacity=0.5, even odd rule]{(0:-0.6)+(70:0.2)--+(70:0.35)arc(70:110:0.35)--+(110:-0.15)arc(110:70:0.2)};
			\node [] (A) at (1.65,0) {\footnotesize{$c$}};
			\node [] (A) at (-.6,0) {\footnotesize{$a$}};			
			\node [] (A) at (.6,0) {\footnotesize{$b$}};			
		\end{scope}
\end{tikzpicture}		
\vspace{-0.45cm}	
}
\\
\hline
2 ( \ln d_a + \ln d_b ) & \forall a,b,c \in {\cal C} & 
\parbox{.2\textwidth}{
	\vspace{0.15cm}
	\begin{tikzpicture}
	\draw[fill=\ccolor, fill opacity=0.25, even odd rule] {(0,0) circle (1.3 and 0.9)}{(-0.6,0) circle (0.35)}{(0.6,0) circle (0.35)};
			\draw[fill=\bcolor, fill opacity=0.5, even odd rule] {(0,0) circle (1.3 and 0.9)}{(0,0) circle (1.45 and 1.05)};
			% the left circle
			\draw[fill=\bcolor, fill opacity=0.5, even odd rule]{(0:-0.6)+(110:0.2)--+(110:0.35)arc(110:430:0.35)--+(430:-0.15)arc(430:110:0.2)};
			\draw[fill=\dcolor, fill opacity=0.5, even odd rule]{(0:-0.6)+(70:0.2)--+(70:0.35)arc(70:110:0.35)--+(110:-0.15)arc(110:70:0.2)};
			% the right circle
			\draw[fill=\bcolor, fill opacity=0.5, even odd rule]{(0:0.6)+(110:0.2)--+(110:0.35)arc(110:430:0.35)--+(430:-0.15)arc(430:110:0.2)};
			\draw[fill=\dcolor, fill opacity=0.5, even odd rule]{(0:0.6)+(70:0.2)--+(70:0.35)arc(70:110:0.35)--+(110:-0.15)arc(110:70:0.2)};
			\node [] (A) at (1.65,0) {\footnotesize{$c$}};
			\node [] (A) at (-.6,0) {\footnotesize{$a$}};			
			\node [] (A) at (.6,0) {\footnotesize{$b$}};			
			\end{tikzpicture}
\vspace{-0.45cm}
}
\\
\hline
2 ( \ln d_a + \ln d_b + \ln d_c) & \forall a,b,c \in {\cal C} & 
% 38 b3
\parbox{.2\textwidth}{
	\vspace{0.15cm}
	\begin{tikzpicture}
		\begin{scope}[xshift=6.8 cm]
			\draw[fill=\ccolor, fill opacity=0.25, even odd rule] {(0,0) circle (1.3 and 0.9)}{(-0.6,0) circle (0.35)}{(0.6,0) circle (0.35)};
			
			% the left circle
			\draw[fill=\bcolor, fill opacity=0.5, even odd rule]{(0:-0.6)+(110:0.2)--+(110:0.35)arc(110:430:0.35)--+(430:-0.15)arc(430:110:0.2)};
			\draw[fill=\dcolor, fill opacity=0.5, even odd rule]{(0:-0.6)+(70:0.2)--+(70:0.35)arc(70:110:0.35)--+(110:-0.15)arc(110:70:0.2)};
			% the right circle
			\draw[fill=\bcolor, fill opacity=0.5, even odd rule]{(0:0.6)+(110:0.2)--+(110:0.35)arc(110:430:0.35)--+(430:-0.15)arc(430:110:0.2)};
			\draw[fill=\dcolor, fill opacity=0.5, even odd rule]{(0:0.6)+(70:0.2)--+(70:0.35)arc(70:110:0.35)--+(110:-0.15)arc(110:70:0.2)};
			
			\draw[fill=\bcolor, fill opacity=0.5, even odd rule]{(0:0)+(100:1.3 and 0.9)--+(100:1.45 and 1.05)arc(100:440:1.45 and 1.05)--+(440:-0.15)arc(440:100:1.3 and 0.9)};
			\draw[fill=\dcolor, fill opacity=0.5, even odd rule]{(0:0)+(80:1.3 and 0.9)--+(80:1.45 and 1.05)arc(80:100:1.45 and 1.05)--+(100:-0.15)arc(100:80:1.3 and 0.9)};
			\node [] (A) at (1.65,0) {\footnotesize{$c$}};
			\node [] (A) at (-.6,0) {\footnotesize{$a$}};			
			\node [] (A) at (.6,0) {\footnotesize{$b$}};			
		\end{scope}
	\end{tikzpicture}
\vspace{-0.45cm}		
}	
\\
\hline
\end{array}
$$

\begin{proof}
	This is proved by three steps: (1) Use the decoupling lemma to reduce the problem to the thickened boundary of the region in question, (2) take entropy differences between the extreme point in question and the reference state, and (3) use the definition of the quantum dimension (\ref{def:quantum_dim}).
\end{proof}
\begin{remark}
	One remarkable thing lies behind  Proposition~\ref{prop:quantum_dim_2d}. By calculating the entropy of a single wave function, one can identify the quantum dimension of an individual superselection sector. This proposition is a useful alternative definition of the quantum dimension compared to Eq.~(\ref{def:quantum_dim}).
	This definition can work in contexts not covered by Eq.~(\ref{def:quantum_dim}), e.g., for immersed regions and in the presence of defects, where the existence of a vacuum sector is not guaranteed.
\end{remark}

Furthermore, as a corollary, one can easily infer that the quantum dimension 
\begin{equation}
	d_a\ge 1, \quad \forall a\in \calC,
\end{equation}
which is a consequence of SSA, namely $\Delta(B,C,D)\ge 0$. While this may also be derived from the highly constraining fusion rules \cite{shi2020fusion}, this new logic is succinct and elementary. We shall see the usefulness of this logic in 3d later.

\subsubsection{Three-dimensional regions}

\begin{Proposition}\label{prop:3d_Delta_basic}
	Let $\sigma$ be a 3d entanglement bootstrap reference state (which satisfies the 3d version of axioms {\bf A0} and {\bf A1}). The entropy combinations 
	\begin{enumerate}
		\item $\Delta(B,C)_{\sigma}=0$ for Fig.~\ref{fig:3d_basic_partitions}(a) and (c).
		\item $\Delta(B,C,D)_{\sigma}=0$ for Fig.~\ref{fig:3d_basic_partitions}(b) and (d).
	\end{enumerate}
\end{Proposition}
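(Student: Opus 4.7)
The plan is to reduce each of the four cases to the axioms \textbf{A0} and \textbf{A1} by combining three earlier results: the vacuum lemma (Lemma \ref{lemma:vacuum}), which guarantees that $\sigma_\Omega$ is an extreme point of $\Sigma(\Omega)$ whenever $\Omega$ is embedded in a ball (or $3$-sphere); the decoupling lemma (Lemma \ref{lemma:decoupling}), which shows that $\Delta(B, \Omega\setminus\partial\Omega, D)$ for an extreme point depends only on a thin neighborhood of $\partial\Omega$; and the sphere completion lemma (Lemma \ref{lemma:sphere_completion}), which lets us pass between ball and $S^3$ geometries. The proposition should follow the same logical skeleton as its 2d counterpart, with the revolution adding one extra dimension but not changing the strategy.

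First I would handle case (a): a 3d sphere shell $BC$ in a ball with $B$ the thickened boundary and $C$ the interior shell. One strategy is to apply axiom \textbf{A0} on a strictly larger ball, then use the fact that $(S_{ABC}+S_{C}-S_{AB})_\sigma = 0$ reduces to $(S_{BC}+S_C-S_B)_\sigma=0$ via strong subadditivity in exactly the same ``enlarged {\bf A0}'' argument used in two dimensions. Alternatively, apply the vacuum lemma to get $\sigma_{BC}\in\mathrm{ext}\,\Sigma(BC)$, then invoke the decoupling lemma (with $D=\emptyset$ and $\Omega=BC$ a sphere shell inside a ball) to transfer the desired entropy identity into a purely boundary statement that is equivalent to enlarged \textbf{A0} on a chain of balls covering $B$.

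For case (b), expected to be a sphere shell with $\partial\Omega=BD$ split into two caps (or an analogous configuration), the same strategy works with \textbf{A1} in place of \textbf{A0}: realize $BD$ as the thickened boundary of a sphere shell in a ball, use the vacuum lemma and the decoupling lemma to reduce $\Delta(B,C,D)_\sigma$ to a quantity supported in a neighborhood of $BD$, and then recognize the resulting expression as enlarged \textbf{A1} on a chain of balls.

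Cases (c) and (d) are the genuine novelty, since they presumably involve a region with torus-shell topology (for which there is no direct axiom). The plan is to first embed the torus shell in a ball, apply the vacuum lemma so that $\sigma$ on the torus shell is an extreme point of its information convex set, then use the decoupling lemma to rewrite $\Delta(B,C)_\sigma$ or $\Delta(B,C,D)_\sigma$ in terms of entropy combinations supported in a neighborhood of the thickened torus boundary. These neighborhood combinations can in turn be verified by choosing local ball-shaped subregions that cover $\partial\Omega$, on each of which \textbf{A0} or \textbf{A1} applies; summing the local identities using SSA (as in the derivation of enlarged axioms on balls) gives the global one. Where the partition into $BD$ has two disconnected pieces lying on opposite sides of the torus shell, the sphere completion lemma can be invoked to turn the shell inside out on $S^3$, which exchanges the two sides and lets us use a configuration where \textbf{A1} applies directly.

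The main obstacle will be the torus shell cases. Even though the decoupling lemma promises that $\Delta$ is a boundary property, one still has to verify by an explicit SSA chain that the enlarged \textbf{A0}/\textbf{A1} extend to covers of the $T^2 \times I$ boundary rather than just $S^2$-shaped boundaries; here the nontrivial fundamental group of the torus means the covering balls overlap in a topologically more intricate way. Once that covering argument is in place, all four cases fall out uniformly from the three lemmas above, mirroring the 2d proof and confirming that these identities are truly properties of the thickened boundary, independent of the bulk topology of $\Omega$.
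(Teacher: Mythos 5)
Your treatment of cases (a) and (b) matches the paper: enlarged \textbf{A0} (resp.\ \textbf{A1}) on a ball containing the sphere shell, combined with the decoupling lemma to strip away the interior, is exactly the argument given there (the paper also notes the alternative of the vacuum lemma plus the extreme point criterion). The problems are in your handling of (c) and (d), which concern a \emph{solid torus} (with $B$ its thickened torus-shell boundary and $C$ its interior), not a torus shell. For (c) the paper's proof is essentially one line: by the vacuum lemma $\sigma_{BC}$ is an extreme point of $\Sigma(BC)$, and the extreme point criterion (Lemma~\ref{lemma:ext_criterion}, i.e.\ the factorization property $S_{\Omega}+S_{\Omega\setminus\partial\Omega}-S_{\partial\Omega}=0$ for extreme points) is literally the identity $\Delta(B,C)_\sigma=0$. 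You never invoke this criterion; instead, after the decoupling lemma you propose to verify the residual boundary combination by covering the torus-shell boundary with balls and ``summing the local identities using SSA.'' That step is the genuine gap: SSA chains of local \textbf{A0}/\textbf{A1} only produce one-sided bounds, and the enlarged-axiom extension works by growing a region through \emph{topologically trivial} ball attachments. Closing the loop of the torus is precisely where a topological contribution can survive --- compare Lemma~\ref{lemma:simple-TEE}, where the analogous combination for a different $ABC$ topology equals $2\ln\mathcal{D}$, and Proposition~\ref{prop:3d-entropy-combinations}, where $\Delta(B,C,D)=2\ln d_\mu$ for a flux extreme point. So ``summing local identities'' cannot by itself establish the exact vanishing; you flag this obstacle yourself but do not resolve it, whereas the extreme point criterion resolves it wholesale.

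For (d) the paper does not use \textbf{A1} ``directly'' after any inside-out move; it cites Lemma~\ref{lemma:ref-state-SSA-saturation}, whose proof is the substantive content you are missing. That argument runs: sphere-complete to $S^3$, use purity of $\ket{\psi_{S^3}}$ to convert $\Delta(B,C,D)_\sigma=0$ into $I(A:C|B)=0$ where $A$ is the complementary solid torus, and then deform $B$ to $AB$ (or to $BC$) through a sequence of small ball attachments, each step preserving $S_{CBA_i}-S_{BA_i}$ by the local axioms, so that the conditional mutual information telescopes to zero. The existence of such a topology-preserving growing path is exactly what makes the $(1,q)$ and $(p,1)$ partitions (of which Fig.~\ref{fig:3d_basic_partitions}(d) is an instance) tractable, and for general coprime $(p,q)$ one even needs the belt-trick immersion argument of Proposition~\ref{prop:the-p,q}. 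Your sentence about turning the shell inside out so that ``\textbf{A1} applies directly'' skips all of this; \textbf{A1} never applies directly to a partition whose pieces have torus topology, so without the growing argument case (d) remains unproved.
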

\begin{figure}[h]
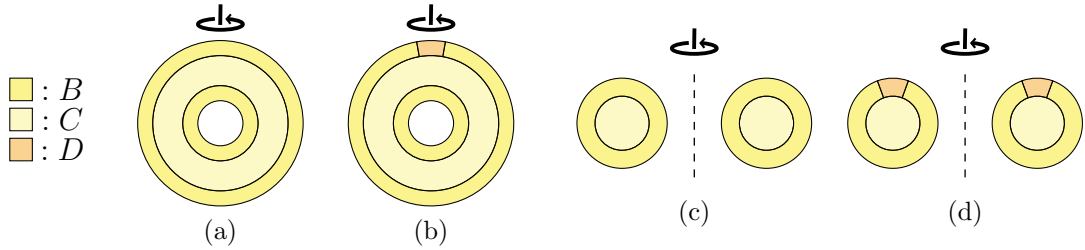

	\centering
	\begin{tikzpicture}
		\begin{scope}[xshift=-2.5 cm,scale=1]
			\node[] (A) at (0.00,1.4) {\includegraphics[scale=1.0]{rotation}};
			\draw[fill=\ccolor, fill opacity=0.25,even odd rule]{(0,0) circle (0.5)}{(0,0) circle (0.9)};
			\draw[fill=\bcolor, fill opacity=0.5, even odd rule]{(0,0) circle (0.3)}{(0,0) circle (0.5)}{(0,0) circle (0.9)}{(0,0) circle (1.1)};
			
			\node [] (A) at (0,-1.45)  {\footnotesize{(a)}};
		\end{scope}
		
		\begin{scope}[ xshift=0.3 cm,scale=1]
			\node[] (A) at (0.00,1.4) {\includegraphics[scale=1.0]{rotation}};
			\draw[fill=\ccolor, fill opacity=0.25,even odd rule]{(0,0) circle (0.5)}{(0,0) circle (0.9)};
			\draw[fill=\bcolor, fill opacity=0.5, even odd rule]{(0,0) circle (0.3)}{(0,0) circle (0.5)};
			
			\draw[fill=\bcolor, fill opacity=0.5, even odd rule]{(0:0)+(100:0.9)--+(100:1.1)arc(100:440:1.1)--(440:0.9)arc(440:100:0.9)};
			\draw[fill=\dcolor, fill opacity=0.5, even odd rule]{(0:0)+(80:0.9)--+(80:1.1)arc(80:100:1.1)--(100:0.9)arc(100:80:0.9)};
			
			\node [] (A) at (0,-1.45)  {\footnotesize{(b)}};;
		\end{scope}
		
		\begin{scope}[xshift=3.8 cm,scale=1.2]
			\node[] (A) at (0.00,0.9) {\includegraphics[scale=1.0]{rotation}};
			\draw[dashed, line width=0.5 pt] (0,-0.6) -- (0,0.6);
			\draw[fill=\ccolor, fill opacity=0.25,even odd rule]{(-0.8,0) circle (0.3)}{(0.8,0) circle (0.3)};
			\draw[fill=\bcolor, fill opacity=0.5, even odd rule]{(-0.8,0) circle (0.3)}{(-0.8,0) circle (0.5)}{(0.8,0) circle (0.3)}{(0.8,0) circle (0.5)};
			
			\node [] (A) at (0,-1.0)  {\footnotesize{(c)}};
		\end{scope}
		
		\begin{scope}[xshift=7.4 cm,scale=1.2]
			\node[] (A) at (0.00,0.9) {\includegraphics[scale=1.0]{rotation}};
			\draw[dashed, line width=0.5 pt] (0,-0.6) -- (0,0.6);
			\draw[fill=\ccolor, fill opacity=0.25,even odd rule]{(-0.8,0) circle (0.3)}{(0.8,0) circle (0.3)};

			\draw[fill=\bcolor, fill opacity=0.5, even odd rule]{(0:-0.8)+(110:0.3)--+(110:0.5)arc(110:430:0.5)--+(430:-0.2)arc(430:110:0.3)}
			{(0:0.8)+(110:0.3)--+(110:0.5)arc(110:430:0.5)--+(430:-0.2)arc(430:110:0.3)};
			\draw[fill=\dcolor, fill opacity=0.5, even odd rule]{(0:-0.8)+(70:0.3)--+(70:0.5)arc(70:110:0.5)--+(110:-0.2)arc(110:70:0.3)}
			{(0:0.8)+(70:0.3)--+(70:0.5)arc(70:110:0.5)--+(110:-0.2)arc(110:70:0.3)};
			
			\node [] (A) at (0,-1.0)  {\footnotesize{(d)}};
		\end{scope}
	
	\begin{scope}[xshift=-5.8cm, yshift=-1 cm]
		\draw[fill=\bcolor, fill opacity=0.5] (3.5-3,1.3) rectangle (0.8, 1.6);
		\draw[fill=\ccolor,fill opacity=0.25] (3.5-3,0.9) rectangle (0.8, 1.2);
		\draw[fill=\dcolor, fill opacity=0.5 ] (3.5-3,0.5) rectangle (0.8, 0.8);
		\node [] (A) at (4.2-3,1.45) {$:B$};
		\node [] (B) at (4.2-3,1.05) {$:C$};
		\node [] (C) at (4.2-3,0.65) {$:D$};
	\end{scope}	
	\end{tikzpicture}
	\caption{A few basic partitions for 3d. The subsystems are partitions of either a sphere shell or a solid torus. These regions are embedded in a  ball.}
	\label{fig:3d_basic_partitions}
\end{figure}

\begin{proof}
	
	The  statement $\Delta(B,C)_{\sigma}=0$ for sphere shell Fig.~\ref{fig:3d_basic_partitions}(a) can follow from enlarged {\bf A0} and the decoupling lemma. Here, {\bf A0} is on a ball that contains the sphere shell, and the decoupling lemma allows us to remove the interior of the ball, getting the condition we look for.
	By an alternative method, the vacuum lemma (Lemma~\ref{lemma:vacuum}) and the extreme point criterion (Lemma~\ref{lemma:ext_criterion}), one can see $\Delta(B,C)_{\sigma}=0$ is true for both Fig.~\ref{fig:3d_basic_partitions}(a) and (c).

	The statement that $\Delta(B,C,D)_{\sigma}=0$ for Fig.~\ref{fig:3d_basic_partitions}(b) follows from enlarged {\bf A1} and the decoupling lemma. Again, the decoupling lemma allows us to remove the interior of the ball.  $\Delta(B,C,D)_{\sigma}=0$ for Fig.~\ref{fig:3d_basic_partitions}(d)
	is a special case of Lemma~\ref{lemma:ref-state-SSA-saturation}, which is proved in the main text.
\end{proof}

\begin{Proposition}\label{prop:solid_torus_partition_for_dmu}
	Let $\sigma$ be a  reference state on a ball. The entropy combination 
	$\Delta(B,C,D)_{\sigma}=0$ for the partition of solid torus $T=BCD$ shown in Fig.~\ref{fig:TEE-lemma_appendix}(a). 
\end{Proposition}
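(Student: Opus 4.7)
The plan is to reduce the statement to a conditional mutual information condition on a pure state on $S^3$, and then establish that condition by an incremental extension argument of the kind used in the proof of Lemma~\ref{lemma:ref-state-SSA-saturation}.

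First, I would apply the sphere completion lemma (Lemma~\ref{lemma:sphere_completion}) to replace the reference state $\sigma$ on the ball by a pure reference state $|\psi\rangle$ on $S^3$ whose marginals on the relevant subsystems agree with $\sigma$. Write $S^3 = ABCD$ where $BCD = T$ is the solid torus of Fig.~\ref{fig:TEE-lemma_appendix}(a) and $A = S^3 \setminus T$ is the complementary solid torus. Since $|\psi\rangle$ is pure, $S_{BC} = S_{AD}$ and $S_{CD} = S_{AB}$. A short rewriting then gives
\begin{equation}
\Delta(B,C,D)_{\sigma} \;=\; S_{AB}+S_{BC}-S_B-S_{ABC} \;=\; I(A:C|B)_{|\psi\rangle},
\end{equation}
where we used $S_{ABC}=S_D$. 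So the proposition is equivalent to $I(A:C|B)_{|\psi\rangle}=0$.

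Next I would establish this quantum Markov chain condition by growing $B$ into $A$ one small ball at a time. The geometric input is that $B$ is a ball that sits inside the thickened boundary $BD$ of the solid torus with one face adjacent to $C$ (the core of $T$) and the opposite face adjacent to $A$ (the complementary solid torus in $S^3$). Therefore there is a finite sequence of regions $B=BA_0\subset BA_1\subset\cdots\subset BA_M = BA$ where, at each step, $\delta A_{i+1}=A_{i+1}\setminus A_i$ is a small ball in $\Gamma(M)$ attached to $BA_i$ in a topologically trivial way and disjoint from $C$. Since $A$ is a connected subset of $S^3$ adjacent to $B$, such a covering always exists. At each elementary step the extended axiom on balls gives
\begin{equation}
(S_{CBA_{i+1}}-S_{BA_{i+1}})_{|\psi\rangle} \;=\; (S_{CBA_i}-S_{BA_i})_{|\psi\rangle},
\end{equation}
exactly as in the proof of Lemma~\ref{lemma:ref-state-SSA-saturation}. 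Telescoping from $i=0$ to $i=M$ yields $S_{ABC}-S_{AB}=S_{BC}-S_B$, i.e. $I(A:C|B)_{|\psi\rangle}=0$, as required.

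The main obstacle is verifying the geometric claim that such a sequence of extensions actually exists for the specific partition of Fig.~\ref{fig:TEE-lemma_appendix}(a). One must check that each added ball $\delta A_{i+1}$ is a bounded-radius ball on which the axioms apply, that it attaches to $BA_i$ without changing the topology relative to $C$, and that the entire sequence can be realized while keeping $C$ outside; this is where the specific ``ball connecting interior to exterior'' shape of $B$ is essential, since it is precisely this feature that allows the growth into $A$ to begin and to cover $A$ completely. Once this combinatorial verification is in hand, the remaining steps are purely algebraic.
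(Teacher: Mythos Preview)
Your reduction to $I(A:C|B)_{|\psi\rangle}=0$ via sphere completion is fine, but the incremental-growth step does not go through for this particular geometry. In Fig.~\ref{fig:TEE-lemma_appendix}(a), $B$ is a \emph{ball} and $A=S^3\setminus T$ is a solid torus, so $AB$ has nontrivial $\pi_1$. Gluing a small ball along a single disk preserves homotopy type, hence any chain of ``topologically trivial'' extensions starting from the ball $B$ produces only contractible regions and can never reach $AB$. At the step where the loop closes, $\delta A_{i+1}$ meets $BA_i$ along two disjoint disks; the identity you need there is $I(C:\delta A_{i+1}\,|\,BA_i)_{|\psi\rangle}=0$, and this is \emph{not} supplied by axiom \textbf{A1} on a single bounded-radius ball (which handles only the single-disk attachment). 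Establishing that vanishing conditional mutual information at the closing step is essentially as hard as the proposition itself. This is also why the analogy with Lemma~\ref{lemma:ref-state-SSA-saturation} breaks down: in the $(1,q)$ and $(p,1)$ cases there, $B$ is arranged so that the growth to $BC$ (respectively $AB$) really is a sequence of single-disk attachments; here $B$ is a ball by design and no such sequence exists.

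The paper's proof avoids this obstruction by a different route. It fills the hole of $T$ with a ball $D'$ touching only $D$, so that $TD'=BCDD'$ becomes a ball; since $\sigma$ is the reference state and $D'$ is separated from both $B$ and $C$, one has $\Delta(B,C,D)_\sigma=\Delta(B,C,DD')_\sigma$. Then inside the ball $TD'$ one repartitions as $BC'D''$ with $C'\supset C$ a ball and $BD''$ a sphere shell, and SSA together with the enlarged \textbf{A1} gives $\Delta(B,C,DD')_\sigma\le\Delta(B,C',D'')_\sigma=0$.
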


\begin{figure}[h]
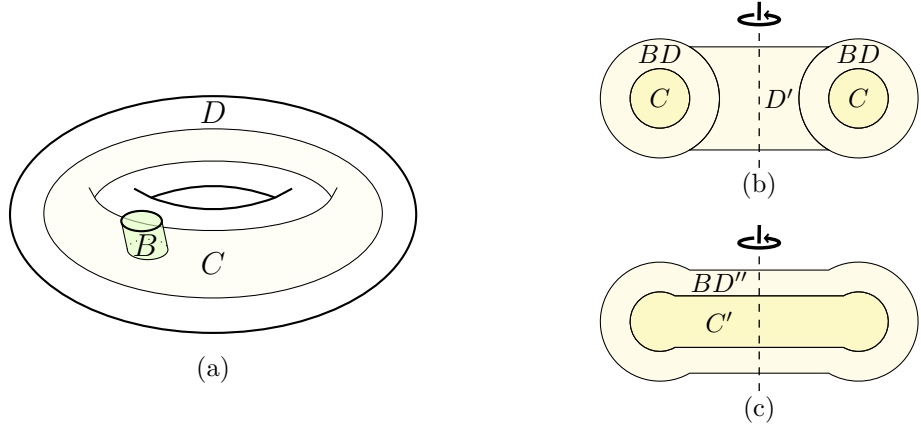

	\centering
	\begin{tikzpicture}
		\begin{scope}[yshift=0.4 cm, scale=1.5]
			\draw[black, thick] (-0.7,-0.1)--(-0.55,-0.18) .. controls +(-20:0.4) and +(200:0.4) ..(0.55,-0.18)-- (0.7,-0.1);
			\draw[black, thick] (-0.55,-0.18) .. controls +(20:0.4) and +(160:0.4) .. (0.55,-0.18);
			
			\draw[black, thick ] (0,-0.33) ellipse (1.8 and 1.05);
			% inside
			\fill[yellow!20!white, opacity=0.2, even odd rule]
			{(-0.55-0.5,-0.18) .. controls +(-60:0.45) and +(240:0.45) ..(0.55+0.5,-0.18) .. controls +(120:0.5) and +(60:0.5) ..(-0.55-0.5,-0.18) -- cycle}
			{(0,-0.32) ellipse (1.5 and 0.75)};
			
			\draw[black] (-0.55-0.5-0.05,-0.18+0.09)--(-0.55-0.5,-0.18) .. controls +(-60:0.45) and +(240:0.45) ..(0.55+0.5,-0.18)-- (0.55+0.5+0.05,-0.18+0.09);
			\draw[black] (-0.55-0.5,-0.18) .. controls +(60:0.5) and +(120:0.5) .. (0.55+0.5,-0.18);
			
			\draw[black] (0,-0.32) ellipse (1.5 and 0.75);
			
			% window
			\begin{scope}[xshift=1.4 cm, yshift=-0.10 cm, scale=1.8]
				\fill[green!50!yellow!25!white, opacity=0.6, even odd rule] (-1.2,-0.3) -- (-1.2-0.03, -0.3+0.14) arc(180:0: 0.1 and 0.05) --(-1.0,-0.3) arc (0:-180: 0.1 and 0.05);
				\draw[black] (-1.2,-0.3) -- (-1.2-0.03, -0.3+0.14) arc(180:0: 0.1 and 0.05) --(-1.0,-0.3) arc (0:-180: 0.1 and 0.05);
				\draw[black, dotted] (-1.2,-0.3) arc(180:0: 0.1 and 0.05);
				\draw[black, thick] (-1.2-0.03, -0.3+0.14) arc(180:-180: 0.1 and 0.05);
			\end{scope}
			% add letters
			\node[] (P) at (-0.6,-0.6)  {$B$};
			\node[] (P) at (0,-0.76)  {$C$};
			\node[] (P) at (0,0.58)  {$D$};
			\node[] (P) at (0,-1.7)  {\footnotesize{(a)}};
		\end{scope}
	%(b) and (c) below
	\begin{scope}[yshift=-0.2 cm, scale=1.1]
		% (b)
	\begin{scope}[xshift=5.4 cm,yshift=1.5 cm, scale=1.2]
		\draw[fill=\ccolor, fill opacity=0.25,even odd rule]{(0,0) circle (0.3)}{(2,0) circle (0.3)};
		\draw[fill=\bcolor, fill opacity=0.1, even odd rule]{(0,0) circle (0.3)}{(0,0) circle (0.6)}{(2,0) circle (0.3)}{(2,0) circle (0.6)};
		\draw[fill=\bcolor, fill opacity=0.1](-60:0.6)arc(-60:60:0.6)--+(1.4,0)arc(120:240:0.6)--+(-1.4,0)--cycle;	
		\node [] (A) at (1,-0.88)  {\footnotesize{(b)}};
		\node[] (A) at (1,0.85) {\includegraphics[scale=0.9]{rotation}};
		\draw[dashed, line width=0.5 pt] (1,-0.7) -- (1,0.7);
		\node [] (A) at (0,0.45)  {\footnotesize{$BD$}};
		\node [] (A) at (2,0.45)  {\footnotesize{$BD$}};
		\node [] (A) at (0,0)  {\footnotesize{$C$}};
		\node [] (A) at (2,0)  {\footnotesize{$C$}};
		\node [] (A) at (1.2,0.0)  {\footnotesize{$D'$}};
	\end{scope}
	% (c) below
	\begin{scope}[xshift=5.4 cm,yshift=-1.2 cm, scale=1.2]
		\draw[fill=\bcolor, fill opacity=0.1,even odd rule]{(-60:0.6)arc(-60:-300:0.6)--+(1.4,0)arc(120:-120:0.6)--+(-1.4,0)--cycle}{(-60:0.3)arc(-60:-300:0.3)--+(1.7,0)arc(120:-120:0.3)--+(-1.7,0)--cycle};
		\draw[fill=\ccolor, fill opacity=0.25,even odd rule]{(-60:0.3)arc(-60:-300:0.3)--+(1.7,0)arc(120:-120:0.3)--+(-1.7,0)--cycle};	
		\node [] (A) at (1,-0.88)  {\footnotesize{(c)}};
		\node[] (A) at (1,0.85) {\includegraphics[scale=0.9]{rotation}};
		\draw[dashed, line width=0.5 pt] (1,-0.7) -- (1,0.7);
		\node [] (A) at (0.6,0)  {\footnotesize{$C'$}};
		\node [] (A) at (0.6,0.4)  {\footnotesize{$BD''$}};
	\end{scope}
	\end{scope}
	\end{tikzpicture}
	\caption{
		(a) A solid torus $T$ embedded in a ball and its partition $T=BCD$. (This is a reprint of Fig.~\ref{fig:TEE-lemma}.) (b) Add disk $D'$. (c) An illustration of $C'\supset C$ and $D''\subset DD'$, where $BC'D''$ is a partition of a ball for which an enlarged {\bf A1} ($\Delta(B,C',D'')_{\sigma}=0$) holds. Note that $C' \cap B=\emptyset$.
	}\label{fig:TEE-lemma_appendix}
\end{figure}

\begin{proof}
	Since $\sigma$ is the reference state, we can reversibly fill in the ``hole" of the solid torus $T$
	with a region $D'$ (which does not touch $B$); see Fig.~\ref{fig:TEE-lemma_appendix}(b).  This means
	\begin{equation}\label{eq:addD'toD}
		\Delta(B,C,D)_{\sigma}= \Delta(B,C,DD')_{\sigma}.
	\end{equation}
	Note that $TD'$ is a ball.  
	Now we extend $C$ into $C'\supset C$ ($C' \cap B=\emptyset$) shown in Fig.~\ref{fig:TEE-lemma_appendix}(c) and let $D'' \equiv CDD'\setminus C'$.  
	Now $BD''$ is a sphere shell surrounding $C'$.
	The right-hand side of Eq.~\eqref{eq:addD'toD} is then constrained to vanish:
	\begin{equation}\label{eq:extended-C-A1}
		\begin{aligned}
			\Delta(B,C,DD')_{\sigma}& \le \Delta(B,C',D'')_{\sigma}\\
			&=0.
		\end{aligned}		
	\end{equation}
The first line is a consequence of SSA (the last line of Eq.~\eqref{eq:useful_ssa}). The second line follows from enlarged {\bf A1}. 
This completes the proof.	
\end{proof}

Below is a statement for a generic coprime $(p,q)$ partition of a solid torus, which generalizes Lemma~\ref{lemma:ref-state-SSA-saturation} of the main text.
\begin{Proposition}\label{prop:the-p,q}
	Let $\sigma$ be a reference state on a ball.   The entropy combination 
	$\Delta(B,C,D)_{\sigma}=0$ for the partition of solid torus $T=BCD$ shown in Fig.~\ref{fig:3d_basic_harder_partitions}, for any pair of relatively-prime integers $(p,q)$.
\end{Proposition}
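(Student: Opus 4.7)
The plan is to reduce the general coprime $(p,q)$ case to the base cases $(1,q)$ and $(p,1)$ already established in Lemma~\ref{lemma:ref-state-SSA-saturation}. First I would apply Lemma~\ref{lemma:sphere_completion} to complete the ball to an $S^3$; purity of the completed reference state $\ket{\psi_{S^3}}$ converts the claim $\Delta(B,C,D)_\sigma = 0$ into the quantum Markov chain condition $I(A:C|B)_{\ket{\psi_{S^3}}} = 0$, where $A$ is the complementary solid torus. Since the information convex set of each of $B$, $D$, $BC$, $CD$ is, by the Generalized Isomorphism Theorem~\ref{thm:generalized_iso}, invariant under regular homotopy (and the reference state sits in the vacuum sector throughout, by Lemma~\ref{lemma:vacuum}), it would suffice to produce a regular homotopy of the whole partition from the $(p,q)$ configuration to the $(1,0)$ configuration through (possibly immersed) intermediate partitions, and then invoke the $(1,0)$ base case.

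The required deformation would be built iteratively via the Euclidean algorithm on $(p,q)$. A meridional Dehn twist of $\partial T$ extends to a genuine self-diffeomorphism of $T$ and realizes the step $(p,q) \mapsto (p \pm q, q)$ as an isotopy of embedded regions. A longitudinal Dehn twist of $\partial T$ does not extend to $T$, but it can be realized as an ambient regular homotopy of $B$ and $D$ inside $T$ in which the annulus is pushed through itself once, implementing $(p,q) \mapsto (p, q \pm p)$. Both moves are permissible as intermediate steps for Theorem~\ref{thm:generalized_iso}, which allows immersed intermediate regions. Iterating them brings any coprime $(p,q)$ to $(1,0)$ in finitely many steps, and at each step the four marginal entropies $S_B, S_D, S_{BC}, S_{CD}$ are preserved because the reference state continues to realize the vacuum sector of each (now possibly immersed) region. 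Therefore $\Delta(B,C,D)_\sigma$ in the $(p,q)$ case equals its value in the $(1,0)$ case, which vanishes by Lemma~\ref{lemma:ref-state-SSA-saturation}.

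The principal obstacle lies in the longitudinal twist step, where $B$ must be regularly homotoped through itself. One has to verify that (i) the four marginals $B, D, BC, CD$ admit compatible, simultaneous immersed deformations throughout the homotopy, (ii) the reduced density matrix of the reference state continues to realize the vacuum sector of each immersed region after the move, rather than some other extreme point, and (iii) entropy-difference invariance from the Generalized Isomorphism Theorem translates into literal equality of $S_B, S_D, S_{BC}, S_{CD}$ before and after each elementary move. An alternative route, possibly cleaner, would be to imitate the proof of Lemma~\ref{lemma:ref-state-SSA-saturation} directly by constructing a sequence of immersed enlargements taking $B$ to $BA$ (or to $BC$) through small-ball additions that do not change topology, using immersion to route around the $(p,q)$-twisting of $B$; this would give the Markov chain condition $I(A:C|B)_{\ket{\psi_{S^3}}} = 0$ in one shot without explicit reference to the Euclidean algorithm.
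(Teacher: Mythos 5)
Your overall strategy --- sphere completion, then a Euclidean-type reduction of $(p,q)$ to the base cases of Lemma~\ref{lemma:ref-state-SSA-saturation} via twisting moves on the partition --- is the same as the paper's, but both of your elementary moves fail, and for the same underlying reason. First, the twist that extends to a self-diffeomorphism of $T$ is not thereby realized ``as an isotopy of embedded regions'': a self-diffeomorphism of $T$ does not preserve $\sigma_T$, and what is actually needed is a path (an ambient isotopy, or a regular homotopy together with the state-tracking of Proposition~\ref{prop:anypq}) connecting the old partition to the new one. A single Dehn twist changes the isotopy class of the stripe regions $B$ and $D$ as knots in the ball (it would carry a thickened $(2,3)$ torus knot to a $(2,5)$ or $(2,1)$ one), so no ambient isotopy can realize it. Second, and more fundamentally, your longitudinal move --- ``pushing the annulus through itself once'' to implement a single $2\pi$ twist --- is obstructed even as a regular homotopy: the regular homotopy class of an immersed ribbon in $3$-space carries a $\mathbb{Z}_2$ framing invariant (this is $\pi_1(SO(3))=\mathbb{Z}_2$, i.e.\ the statement that the coil has order two in the spherical braid group), so a $2\pi$ twist cannot be undone by passing the region through itself, whereas a $4\pi$ twist can. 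This is exactly the point the paper flags: $4\pi$ is possible while $2\pi$ is not.

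The paper's proof repairs this by using the Dirac belt trick, which realizes a $\pm 4\pi$ twist --- i.e.\ a \emph{double} Dehn twist --- as a regular homotopy of $T$ back to itself, carrying $\sigma_T$ along by Proposition~\ref{prop:anypq}. The available moves are therefore $(p,q)\to(p,q\pm 2p)$ and, applying the same trick to the dual solid torus $\widetilde{T}=BAD$ obtained from sphere completion, $(p,q)\to(p\pm 2q,q)$. With steps of size $2p$ and $2q$ one can no longer always reach $(1,0)$, but for coprime $(p,q)$ one can always strictly decrease the larger of $|p|,|q|$ until reaching one of $(1,0)$, $(0,1)$, $(1,1)$, $(1,-1)$, each of which is of the form $(1,q)$ or $(p,1)$ and hence covered by Lemma~\ref{lemma:ref-state-SSA-saturation}. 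Your proposed fallback --- imitating the enlargement argument of that lemma by growing $B$ into $BC$ or $BA$ through immersed ball additions --- does not evade the problem either: for $|p|,|q|\ge 2$ neither enlargement can proceed through topology-preserving elementary steps, which is precisely why the belt trick is needed.
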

\begin{figure}[h]
	\centering
	\begin{tikzpicture}
		\begin{scope}[scale=0.9]
			\node[] (A) at (0.00,1.25) {\includegraphics[scale=1.2]{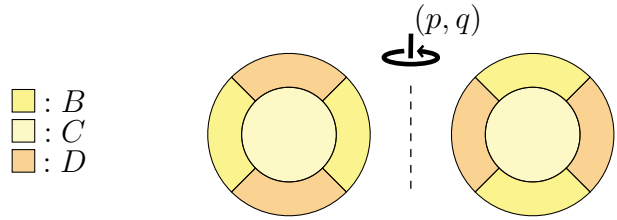}};
			\draw[dashed, line width=0.5 pt] (0,-0.8) -- (0,0.8);
			\draw (0.55,1.7) node []   {$(p,q)$};
			%left
			\draw[fill=\ccolor, fill opacity=0.25]{(-1.8,0) circle (0.7)};
			\draw[fill=\bcolor, fill opacity=0.5]{(-1.8,0)+(45:0.7)--+(45:1.2)arc(45:-45:1.2)--+(-45:-0.5)arc(-45:45:0.7)};
			\draw[fill=\bcolor, fill opacity=0.5]{(-1.8,0)+(135:0.7)--+(135:1.2)arc(135:225:1.2)--+(45:0.5)arc(225:135:0.7)};
			\draw[fill=\dcolor, fill opacity=0.5]{(-1.8,0)+(135:0.7)--+(135:1.2)arc(135:45:1.2)--+(45:-0.5)arc(45:135:0.7)};
			\draw[fill=\dcolor, fill opacity=0.5]{(-1.8,0)+(-135:0.7)--+(-135:1.2)arc(-135:-45:1.2)--+(-45:-0.5)arc(-45:-135:0.7)};
			%right	
			\draw[fill=\ccolor, fill opacity=0.25]{(1.8,0) circle (0.7)};
			\draw[fill=\dcolor, fill opacity=0.5]{(1.8,0)+(45:0.7)--+(45:1.2)arc(45:-45:1.2)--+(-45:-0.5)arc(-45:45:0.7)};
			\draw[fill=\dcolor, fill opacity=0.5]{(1.8,0)+(135:0.7)--+(135:1.2)arc(135:225:1.2)--+(45:0.5)arc(225:135:0.7)};
			\draw[fill=\bcolor, fill opacity=0.5]{(1.8,0)+(135:0.7)--+(135:1.2)arc(135:45:1.2)--+(45:-0.5)arc(45:135:0.7)};
			\draw[fill=\bcolor, fill opacity=0.5]{(1.8,0)+(-135:0.7)--+(-135:1.2)arc(-135:-45:1.2)--+(-45:-0.5)arc(-45:-135:0.7)};
		\end{scope}
		
		\begin{scope}[xshift=-5.8cm, yshift=-1 cm]
			\draw[fill=\bcolor, fill opacity=0.5] (3.5-3,1.3) rectangle (0.8, 1.6);
			\draw[fill=\ccolor,fill opacity=0.25] (3.5-3,0.9) rectangle (0.8, 1.2);
			\draw[fill=\dcolor, fill opacity=0.5 ] (3.5-3,0.5) rectangle (0.8, 0.8);
			\node [] (A) at (4.2-3,1.45) {$:B$};
			\node [] (B) at (4.2-3,1.05) {$:C$};
			\node [] (C) at (4.2-3,0.65) {$:D$};
		\end{scope}	
	\end{tikzpicture}
	\caption{The torus $T=BCD$ is embedded in a ball. The partition involves a $(p,q)$ rotation, where $p$ and $q$ are coprime integers.}
	\label{fig:3d_basic_harder_partitions}
\end{figure}

\begin{proof}
	The special case of $p=1$ (or $q=1$) is proved in Lemma~\ref{lemma:ref-state-SSA-saturation}. The idea is to use the sphere completion lemma, and consider the decomposition of the $3$-sphere as $S^3=T \cup \widetilde{T}$, where $\partial T=\partial \widetilde{T}=BD$.
	To solve the generic coprime $(p,q)$, we shall convert the problem to the simpler cases  ($(1,m)$ or $(n,1)$) using deformations of solid tori through a sequence of nontrivial immersion.

	\begin{figure}[h]
		\centering
		\includegraphics[width=.8\textwidth]{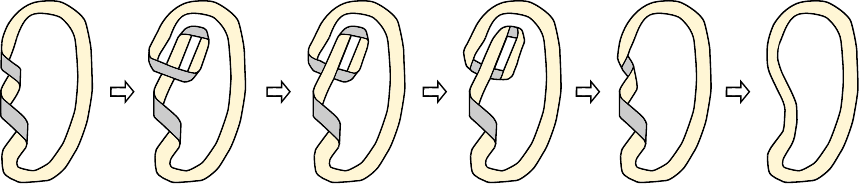}
		\caption{An illustration of the belt trick, which can add $\pm 4\pi$ twist to a closed ribbon. Note that the second step needs immersion.}
		\label{fig:belt_trick}
	\end{figure}
	
	This key technique is a well-known (and popular) way to show \emph{``in the spherical braid group the coil has order two"} (see figure~4 on Page 133 of Ref.~\cite{francis1987topological}). This trick is also known as ``the Dirac belt trick"~\cite{Staley2010}. In our setup, this trick can be phrased as: it is possible to add a $4\pi$ twist to a closed ribbon through immersion in 3d space\footnote{One important point is that $4\pi$ is possible while $2\pi$ is not.}; see Fig.~\ref{fig:belt_trick} for an illustration. 
	
	If we think of the thickening of the closed ribbon in Fig.~\ref{fig:belt_trick} as the solid torus $T=BCD$ that we are interested in, then $T$ can map back to $T$. Furthermore, with the technique in Fig.~\ref{fig:untie} we see that the reference state $\sigma_T$ is mapped back to itself. Nonetheless, the process induces a change on the topological class of the partition $BCD$, and  $(p,q)$ changes:
		\begin{equation}
		(p,q)  \to (p,q\pm 2 p).	\label{eq:changepq_1}	
	\end{equation}
The reason is that there are $p$ braids in the spiral region $D$ and each is twisted by $4\pi$. 
Furthermore, by the sphere completion lemma, there is a ``dual" solid torus $\widetilde{T}=BAD$, where $A=S^3\setminus BCD$. Applying the same trick to $\widetilde{T}$, we have
	\begin{equation}
			(p,q)  \to (p\pm 2 q,q). \label{eq:changepq_2}		
	\end{equation}
Now it is clear that
if we start from an arbitrary coprime $(p,q)$, we can always end up with $(1,0)$, $(0,1)$, $(1,1)$ or $(1,-1)$. This is because  one of the operations in Eq.~(\ref{eq:changepq_1}) and (\ref{eq:changepq_2}) can decrease the absolute value of the entry that has a larger absolute value, and that $(p,q)$ is equivalent to $(-p,-q)$. This completes the proof.
\end{proof}

Here are a few more examples. They follow directly from the decoupling lemma and the basic configurations discussed above. The sphere completion lemma is needed in some cases.
\begin{exmp} The following statements hold:
	\begin{enumerate}
		\item $\Delta(B,C)_{\sigma}=0$ for Fig.~\ref{fig:3d_additional_partitions}(a).
		\item $\Delta(B,C,D)_{\sigma}=0$ for Fig.~\ref{fig:3d_additional_partitions}(b) and (c).
		\item $\Delta(B,C,D)_{\sigma}=0$ for Fig.~\ref{fig:3d_additional_partitions}(d), for any coprime $(p,q)$.
	\end{enumerate}
\end{exmp}

\begin{figure}[h]
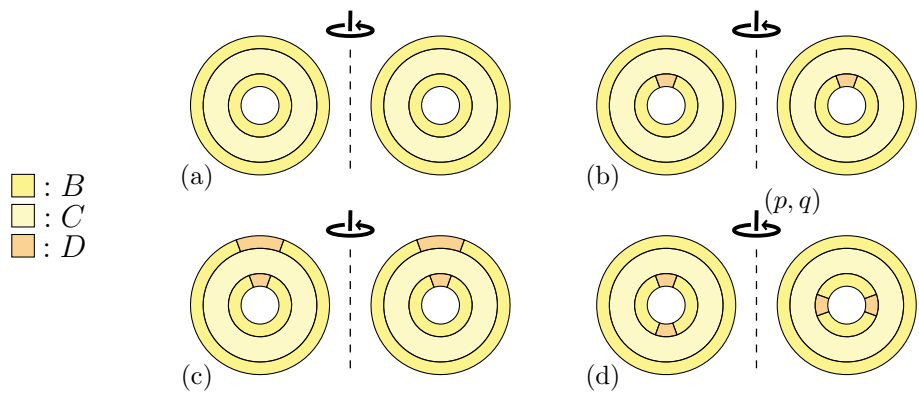

	\centering
		\begin{tikzpicture}
		\begin{scope}[scale=1.2]
			\begin{scope}[yshift=0.4]
			\node[] (A) at (0.00,0.9) {\includegraphics[scale=1]{rotation}};
			\draw[dashed, line width=0.5 pt] (0,-0.7) -- (0,0.7);
			\begin{scope}[xshift=1 cm,scale=0.7]
				\draw[fill=\ccolor, fill opacity=0.25,even odd rule]{(0,0) circle (0.5)}{(0,0) circle (0.9)};
				\draw[fill=\bcolor, fill opacity=0.5, even odd rule]{(0,0) circle (0.3)}{(0,0) circle (0.5)}{(0,0) circle (0.9)}{(0,0) circle (1.1)};
			\end{scope}
			\begin{scope}[xshift=-1 cm,scale=0.7]
				\draw[fill=\ccolor, fill opacity=0.25,even odd rule]{(0,0) circle (0.5)}{(0,0) circle (0.9)};
				\draw[fill=\bcolor, fill opacity=0.5, even odd rule]{(0,0) circle (0.3)}{(0,0) circle (0.5)}{(0,0) circle (0.9)}{(0,0) circle (1.1)};
			\end{scope}
		\node[] (A) at (-1.7,-0.8) {\footnotesize{(a)}};
		\end{scope}
		%2nd
		\begin{scope}[xshift=4.5 cm,yshift=0.4]
			\node[] (A) at (0.00,0.9) {\includegraphics[scale=1]{rotation}};
			\draw[dashed, line width=0.5 pt] (0,-0.7) -- (0,0.7);
			\begin{scope}[xshift=1 cm,scale=0.7]
				\draw[fill=\ccolor, fill opacity=0.25,even odd rule]{(0,0) circle (0.5)}{(0,0) circle (0.9)};
				\draw[fill=\bcolor, fill opacity=0.5, even odd rule]{(0,0) circle (0.9)}{(0,0) circle (1.1)};
				
				\draw[fill=\bcolor, fill opacity=0.5, even odd rule]{(0:0)+(110:0.3)--+(110:0.5)arc(110:430:0.5)--(430:0.3)arc(430:110:0.3)};
				\draw[fill=\dcolor, fill opacity=0.5, even odd rule]{(0:0)+(70:0.3)--+(70:0.5)arc(70:110:0.5)--(110:0.3)arc(110:70:0.3)};
			\end{scope}
			\begin{scope}[xshift=-1 cm,scale=0.7]
				\draw[fill=\ccolor, fill opacity=0.25,even odd rule]{(0,0) circle (0.5)}{(0,0) circle (0.9)};
				\draw[fill=\bcolor, fill opacity=0.5, even odd rule]{(0,0) circle (0.9)}{(0,0) circle (1.1)};
				
				\draw[fill=\bcolor, fill opacity=0.5, even odd rule]{(0:0)+(110:0.3)--+(110:0.5)arc(110:430:0.5)--(430:0.3)arc(430:110:0.3)};
				\draw[fill=\dcolor, fill opacity=0.5, even odd rule]{(0:0)+(70:0.3)--+(70:0.5)arc(70:110:0.5)--(110:0.3)arc(110:70:0.3)};
			\end{scope}
		\node[] (A) at (-1.7,-0.8) {\footnotesize{(b)}};
		\end{scope}	
		% 3rd 
		\begin{scope}[yshift=-2.2 cm]
			\node[] (A) at (0.00,0.9) {\includegraphics[scale=1]{rotation}};
			\draw[dashed, line width=0.5 pt] (0,-0.7) -- (0,0.7);
			\begin{scope}[xshift=1 cm,scale=0.7]
				\draw[fill=\ccolor, fill opacity=0.25,even odd rule]{(0,0) circle (0.5)}{(0,0) circle (0.9)};
				\draw[fill=\bcolor, fill opacity=0.5, even odd rule]{(0:0)+(110:0.9)--+(110:1.1)arc(110:430:1.1)--(430:0.9)arc(430:110:0.9)};
				\draw[fill=\dcolor, fill opacity=0.5, even odd rule]{(0:0)+(70:0.9)--+(70:1.1)arc(70:110:1.1)--(110:0.9)arc(110:70:0.9)};
				
				\draw[fill=\bcolor, fill opacity=0.5, even odd rule]{(0:0)+(110:0.3)--+(110:0.5)arc(110:430:0.5)--(430:0.3)arc(430:110:0.3)};
				\draw[fill=\dcolor, fill opacity=0.5, even odd rule]{(0:0)+(70:0.3)--+(70:0.5)arc(70:110:0.5)--(110:0.3)arc(110:70:0.3)};
			\end{scope}
			\begin{scope}[xshift=-1 cm,scale=0.7]
				\draw[fill=\ccolor, fill opacity=0.25,even odd rule]{(0,0) circle (0.5)}{(0,0) circle (0.9)};
				\draw[fill=\bcolor, fill opacity=0.5, even odd rule]{(0:0)+(110:0.9)--+(110:1.1)arc(110:430:1.1)--(430:0.9)arc(430:110:0.9)};
				\draw[fill=\dcolor, fill opacity=0.5, even odd rule]{(0:0)+(70:0.9)--+(70:1.1)arc(70:110:1.1)--(110:0.9)arc(110:70:0.9)};
				
				\draw[fill=\bcolor, fill opacity=0.5, even odd rule]{(0:0)+(110:0.3)--+(110:0.5)arc(110:430:0.5)--(430:0.3)arc(430:110:0.3)};
				\draw[fill=\dcolor, fill opacity=0.5, even odd rule]{(0:0)+(70:0.3)--+(70:0.5)arc(70:110:0.5)--(110:0.3)arc(110:70:0.3)};
			\end{scope}
		\node[] (A) at (-1.7,-0.8) {\footnotesize{(c)}};
		\end{scope}	
		% last 
		\begin{scope}[yshift=-2.2 cm, xshift=4.5 cm]
			\node[] (A) at (0.00,0.9) {\includegraphics[scale=1]{rotation}};
			\node[] (A) at (0.4,1.15) {\footnotesize{$(p,q)$}};
			\draw[dashed, line width=0.5 pt] (0,-0.7) -- (0,0.7);
			\begin{scope}[xshift=1 cm,scale=0.7]
				\draw[fill=\ccolor, fill opacity=0.25,even odd rule]{(0,0) circle (0.5)}{(0,0) circle (0.9)};
				\draw[fill=\bcolor, fill opacity=0.5, even odd rule]{(0,0) circle (0.9)}{(0,0) circle (1.1)};
				
				\draw[fill=\bcolor, fill opacity=0.5, even odd rule]{(0:0)+(110-90:0.3)--+(110-90:0.5)arc(110-90:250-90:0.5)--(250-90:0.3)arc(250-90:110-90:0.3)}{(0:0)+(110+90:0.3)--+(110+90:0.5)arc(110+90:250+90:0.5)--(250+90:0.3)arc(250+90:110+90:0.3)};
				\draw[fill=\dcolor, fill opacity=0.5, even odd rule]{(0:0)+(70-90:0.3)--+(70-90:0.5)arc(70-90:110-90:0.5)--(110-90:0.3)arc(110-90:70-90:0.3)}{(0:0)+(70+90:0.3)--+(70+90:0.5)arc(70+90:110+90:0.5)--(110+90:0.3)arc(110+90:70+90:0.3)};
			\end{scope}
			\begin{scope}[xshift=-1 cm,scale=0.7]
				\draw[fill=\ccolor, fill opacity=0.25,even odd rule]{(0,0) circle (0.5)}{(0,0) circle (0.9)};
				\draw[fill=\bcolor, fill opacity=0.5, even odd rule]{(0,0) circle (0.9)}{(0,0) circle (1.1)};
				
				\draw[fill=\bcolor, fill opacity=0.5, even odd rule]{(0:0)+(110:0.3)--+(110:0.5)arc(110:250:0.5)--(250:0.3)arc(250:110:0.3)}{(0:0)+(110+180:0.3)--+(110+180:0.5)arc(110+180:250+180:0.5)--(250+180:0.3)arc(250+180:110+180:0.3)};
				\draw[fill=\dcolor, fill opacity=0.5, even odd rule]{(0:0)+(70:0.3)--+(70:0.5)arc(70:110:0.5)--(110:0.3)arc(110:70:0.3)}{(0:0)+(70+180:0.3)--+(70+180:0.5)arc(70+180:110+180:0.5)--(110+180:0.3)arc(110+180:70+180:0.3)};
			\end{scope}	
		\node[] (A) at (-1.7,-0.8) {\footnotesize{(d)}};
		\end{scope}	
	\end{scope}
	\begin{scope}[xshift=-5cm, yshift=-2.5 cm]
		\draw[fill=\bcolor, fill opacity=0.5] (3.5-3,1.3) rectangle (0.8, 1.6);
		\draw[fill=\ccolor,fill opacity=0.25] (3.5-3,0.9) rectangle (0.8, 1.2);
		\draw[fill=\dcolor, fill opacity=0.5 ] (3.5-3,0.5) rectangle (0.8, 0.8);
		\node [] (A) at (4.2-3,1.45) {$:B$};
		\node [] (B) at (4.2-3,1.05) {$:C$};
		\node [] (C) at (4.2-3,0.65) {$:D$};
	\end{scope}		
	\end{tikzpicture}
	\caption{All configurations are contained in a  ball for which the reference state is defined. For (d), any coprime pair $(p,q)$ is allowed while the depiction is accurate for $p=2$.}
	\label{fig:3d_additional_partitions}
\end{figure}

{\bf Quantum dimension in 3d:}
Similar to 2d cases, the quantum dimensions of various superselection sectors show up when considering an extreme point.

\begin{Proposition}\label{prop:3d-entropy-combinations}
	For the extreme points with labels specified in the table:
$$
\begin{array}{|c|c|c|}
\hline
\Delta(B,C,D) & {\rm when} & {\rm 3d~regions} \\
\hline
2 \ln d_a & \forall a \in {\cal C}_{\rm{point}}  & % 39 b 
\parbox{.16\textwidth}{\begin{tikzpicture}
		\begin{scope}[ xshift=0.3 cm,scale=1]
			\node[] (A) at (0.00,1.4) {\includegraphics[scale=1.0]{rotation}};
			\draw[fill=\ccolor, fill opacity=0.25,even odd rule]{(0,0) circle (0.5)}{(0,0) circle (0.9)};
			\draw[fill=\bcolor, fill opacity=0.5, even odd rule]{(0,0) circle (0.3)}{(0,0) circle (0.5)};
			
			\draw[fill=\bcolor, fill opacity=0.5, even odd rule]{(0:0)+(100:0.9)--+(100:1.1)arc(100:440:1.1)--(440:0.9)arc(440:100:0.9)};
			\draw[fill=\dcolor, fill opacity=0.5, even odd rule]{(0:0)+(80:0.9)--+(80:1.1)arc(80:100:1.1)--(100:0.9)arc(100:80:0.9)};
			
			\node [] (A) at (0,0)  {\footnotesize{a}};;
		\end{scope}
	\begin{scope}[xshift=3cm, yshift=-1cm]
		\draw[fill=\bcolor, fill opacity=0.5] (3.5-3,1.3) rectangle (0.8, 1.6);
		\draw[fill=\ccolor,fill opacity=0.25] (3.5-3,0.9) rectangle (0.8, 1.2);
		\draw[fill=\dcolor, fill opacity=0.5 ] (3.5-3,0.5) rectangle (0.8, 0.8);
		\node [] (A) at (4.2-3,1.45) {$:B$};
		\node [] (B) at (4.2-3,1.05) {$:C$};
		\node [] (C) at (4.2-3,0.65) {$:D$};
	\end{scope}			
	\end{tikzpicture}		
}	
\\
\hline
0 & \forall \mu \in {\cal C}_{\rm{flux}}  & % 39 d 
\parbox{.21\textwidth}{\begin{tikzpicture}
		\begin{scope}[scale=1.2]
			\node[] (A) at (0.00,0.9) {\includegraphics[scale=1.0]{rotation}};
			\draw[dashed, line width=0.5 pt] (0,-0.6) -- (0,0.6);
			\draw[fill=\ccolor, fill opacity=0.25,even odd rule]{(-0.8,0) circle (0.3)}{(0.8,0) circle (0.3)};
			
			\draw[fill=\bcolor, fill opacity=0.5, even odd rule]{(0:-0.8)+(110:0.3)--+(110:0.5)arc(110:430:0.5)--+(430:-0.2)arc(430:110:0.3)}
			{(0:0.8)+(110:0.3)--+(110:0.5)arc(110:430:0.5)--+(430:-0.2)arc(430:110:0.3)};
			\draw[fill=\dcolor, fill opacity=0.5, even odd rule]{(0:-0.8)+(70:0.3)--+(70:0.5)arc(70:110:0.5)--+(110:-0.2)arc(110:70:0.3)}
			{(0:0.8)+(70:0.3)--+(70:0.5)arc(70:110:0.5)--+(110:-0.2)arc(110:70:0.3)};
			
			\node [] (A) at (.8,0)  {\footnotesize{$\mu$}};
		\end{scope}
	\end{tikzpicture}		
}	
\\
\hline
2 \ln d_\mu & \forall \mu \in {\cal C}_{\rm{flux}}  & %40 a
\parbox{.29\textwidth}{\begin{tikzpicture}
\begin{scope}[yshift=0.4 cm, scale=1.20]
			\draw[black, thick] (-0.7,-0.1)--(-0.55,-0.18) .. controls +(-20:0.4) and +(200:0.4) ..(0.55,-0.18)-- (0.7,-0.1);
			\draw[black, thick] (-0.55,-0.18) .. controls +(20:0.4) and +(160:0.4) .. (0.55,-0.18);
			
			\draw[black, thick ] (0,-0.33) ellipse (1.8 and 1.05);
			% inside
			\fill[fill=\ccolor,
			%yellow!20!white, 
			opacity=0.2, even odd rule]
			{(-0.55-0.5,-0.18) .. controls +(-60:0.45) and +(240:0.45) ..(0.55+0.5,-0.18) .. controls +(120:0.5) and +(60:0.5) ..(-0.55-0.5,-0.18) -- cycle}
			{(0,-0.32) ellipse (1.5 and 0.75)};
			
			\draw[black] (-0.55-0.5-0.05,-0.18+0.09)--(-0.55-0.5,-0.18) .. controls +(-60:0.45) and +(240:0.45) ..(0.55+0.5,-0.18)-- (0.55+0.5+0.05,-0.18+0.09);
			\draw[black] (-0.55-0.5,-0.18) .. controls +(60:0.5) and +(120:0.5) .. (0.55+0.5,-0.18);
			
			\draw[black] (0,-0.32) ellipse (1.5 and 0.75);
			
			% window
			\begin{scope}[xshift=1.4 cm, yshift=-0.10 cm, scale=1.8]
				\fill[fill=\dcolor,
				%green!50!yellow!25!white, 
				opacity=0.6, even odd rule] (-1.2,-0.3) -- (-1.2-0.03, -0.3+0.14) arc(180:0: 0.1 and 0.05) --(-1.0,-0.3) arc (0:-180: 0.1 and 0.05);
				\draw[black] (-1.2,-0.3) -- (-1.2-0.03, -0.3+0.14) arc(180:0: 0.1 and 0.05) --(-1.0,-0.3) arc (0:-180: 0.1 and 0.05);
				\draw[black, dotted] (-1.2,-0.3) arc(180:0: 0.1 and 0.05);
				\draw[black, thick] (-1.2-0.03, -0.3+0.14) arc(180:-180: 0.1 and 0.05);
			\end{scope}
			% add letters
			\node[] (P) at (-0.6,-0.6)  {\footnotesize{$D$}};
			\node[] (P) at (0,-0.76)  {\footnotesize{$C$}};
			\node[] (P) at (0,0.58)  {\footnotesize{$B$}};
			\node[] (P) at (1.25,-.3)  {\footnotesize{$\mu$}};
		\end{scope}
	\end{tikzpicture}
\vspace{0.06cm}		
}	
\\ 
\hline
4 \left( \ln d_\mu - \ln d_{t_p(\mu)}\right)  & \forall \mu \in {\cal C}_{\rm{flux}}, \forall (p,q) {\rm{~coprime}}
%\parbox{.3\textwidth}{$\forall \mu \in {\cal C}_\text{flux}$ \\ $\text{and any coprime }(p,q)$} 
& % 41
%  Here $t_p:  \calC_{\text{flux}}\to \calC_{\text{flux}}$ is the spiral map on fluxes, as is defined in Eq.~(\ref{eq:spiral_on_fluxes}).
\parbox{.24\textwidth}{\begin{tikzpicture}
		\begin{scope}[scale=0.6]
			\node[] (A) at (0.0,1.25) {\includegraphics[scale=1]{rotation}};
			\draw[dashed, line width=0.5 pt] (0,-0.8) -- (0,0.8);
			\draw (0.75,1.7) node []   {\footnotesize{$(p,q)$}};
			%left
			\draw[fill=\ccolor, fill opacity=0.25]{(-1.8,0) circle (0.7)};
			\draw[fill=\bcolor, fill opacity=0.5]{(-1.8,0)+(45:0.7)--+(45:1.2)arc(45:-45:1.2)--+(-45:-0.5)arc(-45:45:0.7)};
			\draw[fill=\bcolor, fill opacity=0.5]{(-1.8,0)+(135:0.7)--+(135:1.2)arc(135:225:1.2)--+(45:0.5)arc(225:135:0.7)};
			\draw[fill=\dcolor, fill opacity=0.5]{(-1.8,0)+(135:0.7)--+(135:1.2)arc(135:45:1.2)--+(45:-0.5)arc(45:135:0.7)};
			\draw[fill=\dcolor, fill opacity=0.5]{(-1.8,0)+(-135:0.7)--+(-135:1.2)arc(-135:-45:1.2)--+(-45:-0.5)arc(-45:-135:0.7)};
			%right	
			\draw[fill=\ccolor, fill opacity=0.25]{(1.8,0) circle (0.7)};
			\draw[fill=\dcolor, fill opacity=0.5]{(1.8,0)+(45:0.7)--+(45:1.2)arc(45:-45:1.2)--+(-45:-0.5)arc(-45:45:0.7)};
			\draw[fill=\dcolor, fill opacity=0.5]{(1.8,0)+(135:0.7)--+(135:1.2)arc(135:225:1.2)--+(45:0.5)arc(225:135:0.7)};
			\draw[fill=\bcolor, fill opacity=0.5]{(1.8,0)+(135:0.7)--+(135:1.2)arc(135:45:1.2)--+(45:-0.5)arc(45:135:0.7)};
			\draw[fill=\bcolor, fill opacity=0.5]{(1.8,0)+(-135:0.7)--+(-135:1.2)arc(-135:-45:1.2)--+(-45:-0.5)arc(-45:-135:0.7)};
			\node[] (P) at (1.8,0)  {\footnotesize{$\mu$}};
		\end{scope}
	\end{tikzpicture}		
}	
\\
\hline
2 \left( \ln d_{\eta} - \ln d_\mu^2 \right) & \forall \eta \in {\cal C}_{\rm{Hopf}}, \mu = t_{(1,0)}(\eta)& % 43b 
\parbox{.23\textwidth}{\begin{tikzpicture}
		\begin{scope}[xshift=4.5 cm,yshift=0.4]
			\node[] (A) at (0.00,0.9) {\includegraphics[scale=1]{rotation}};
			\draw[dashed, line width=0.5 pt] (0,-0.7) -- (0,0.7);
			\begin{scope}[xshift=1 cm,scale=0.7]
				\draw[fill=\ccolor, fill opacity=0.25,even odd rule]{(0,0) circle (0.5)}{(0,0) circle (0.9)};
				\draw[fill=\bcolor, fill opacity=0.5, even odd rule]{(0,0) circle (0.9)}{(0,0) circle (1.1)};
				
				\draw[fill=\bcolor, fill opacity=0.5, even odd rule]{(0:0)+(110:0.3)--+(110:0.5)arc(110:430:0.5)--(430:0.3)arc(430:110:0.3)};
				\draw[fill=\dcolor, fill opacity=0.5, even odd rule]{(0:0)+(70:0.3)--+(70:0.5)arc(70:110:0.5)--(110:0.3)arc(110:70:0.3)};
			\end{scope}
			\begin{scope}[xshift=-1 cm,scale=0.7]
				\draw[fill=\ccolor, fill opacity=0.25,even odd rule]{(0,0) circle (0.5)}{(0,0) circle (0.9)};
				\draw[fill=\bcolor, fill opacity=0.5, even odd rule]{(0,0) circle (0.9)}{(0,0) circle (1.1)};
				
				\draw[fill=\bcolor, fill opacity=0.5, even odd rule]{(0:0)+(110:0.3)--+(110:0.5)arc(110:430:0.5)--(430:0.3)arc(430:110:0.3)};
				\draw[fill=\dcolor, fill opacity=0.5, even odd rule]{(0:0)+(70:0.3)--+(70:0.5)arc(70:110:0.5)--(110:0.3)arc(110:70:0.3)};
			\end{scope}
		\node[] (A) at (1.5,0) {\footnotesize{$\eta$}};
		\node[] (A) at (-1,0.46) {\footnotesize{$\mu$}};
		\end{scope}	
	\end{tikzpicture}
\vspace{-0.25cm}				
}	
\\
\hline
4 \left( \ln d_{\eta} - \ln d_\mu^2 \right) &\forall \eta \in {\cal C}_{\rm{Hopf}}, \mu = t_{(1,0)}(\eta)  & % 43c
\parbox{.23\textwidth}{\begin{tikzpicture}
		\begin{scope}[yshift=-2.2 cm]
			\node[] (A) at (0.00,0.9) {\includegraphics[scale=1]{rotation}};
			\draw[dashed, line width=0.5 pt] (0,-0.7) -- (0,0.7);
			\begin{scope}[xshift=1 cm,scale=0.7]
				\draw[fill=\ccolor, fill opacity=0.25,even odd rule]{(0,0) circle (0.5)}{(0,0) circle (0.9)};
				\draw[fill=\bcolor, fill opacity=0.5, even odd rule]{(0:0)+(110:0.9)--+(110:1.1)arc(110:430:1.1)--(430:0.9)arc(430:110:0.9)};
				\draw[fill=\dcolor, fill opacity=0.5, even odd rule]{(0:0)+(70:0.9)--+(70:1.1)arc(70:110:1.1)--(110:0.9)arc(110:70:0.9)};
				
				\draw[fill=\bcolor, fill opacity=0.5, even odd rule]{(0:0)+(110:0.3)--+(110:0.5)arc(110:430:0.5)--(430:0.3)arc(430:110:0.3)};
				\draw[fill=\dcolor, fill opacity=0.5, even odd rule]{(0:0)+(70:0.3)--+(70:0.5)arc(70:110:0.5)--(110:0.3)arc(110:70:0.3)};
			\end{scope}
			\begin{scope}[xshift=-1 cm,scale=0.7]
				\draw[fill=\ccolor, fill opacity=0.25,even odd rule]{(0,0) circle (0.5)}{(0,0) circle (0.9)};
				\draw[fill=\bcolor, fill opacity=0.5, even odd rule]{(0:0)+(110:0.9)--+(110:1.1)arc(110:430:1.1)--(430:0.9)arc(430:110:0.9)};
				\draw[fill=\dcolor, fill opacity=0.5, even odd rule]{(0:0)+(70:0.9)--+(70:1.1)arc(70:110:1.1)--(110:0.9)arc(110:70:0.9)};
				
				\draw[fill=\bcolor, fill opacity=0.5, even odd rule]{(0:0)+(110:0.3)--+(110:0.5)arc(110:430:0.5)--(430:0.3)arc(430:110:0.3)};
				\draw[fill=\dcolor, fill opacity=0.5, even odd rule]{(0:0)+(70:0.3)--+(70:0.5)arc(70:110:0.5)--(110:0.3)arc(110:70:0.3)};
			\end{scope}
		\node[] (A) at (1.5,0) {\footnotesize{$\eta$}};
		\node[] (A) at (-1,0.47) {\footnotesize{$\mu$}};
		\end{scope}	
	\end{tikzpicture}		
}	
\\
\hline
2 \left( \ln d_{\eta} - \ln d_{t_{(p,q)}(\eta)}^2 \right)& \forall \eta \in {\cal C}_{\rm{Hopf}}  & % 43c
\parbox{.22\textwidth}{\begin{tikzpicture}
		\begin{scope}[yshift=-2.2 cm, xshift=4.5 cm]
			\node[] (A) at (0.00,0.9) {\includegraphics[scale=1]{rotation}};
			\node[] (A) at (0.45,1.15) {\footnotesize{$(p,q)$}};
			\draw[dashed, line width=0.5 pt] (0,-0.7) -- (0,0.7);
			\begin{scope}[xshift=1 cm,scale=0.7]
				\draw[fill=\ccolor, fill opacity=0.25,even odd rule]{(0,0) circle (0.5)}{(0,0) circle (0.9)};
				\draw[fill=\bcolor, fill opacity=0.5, even odd rule]{(0,0) circle (0.9)}{(0,0) circle (1.1)};
				
				\draw[fill=\bcolor, fill opacity=0.5, even odd rule]{(0:0)+(110-90:0.3)--+(110-90:0.5)arc(110-90:250-90:0.5)--(250-90:0.3)arc(250-90:110-90:0.3)}{(0:0)+(110+90:0.3)--+(110+90:0.5)arc(110+90:250+90:0.5)--(250+90:0.3)arc(250+90:110+90:0.3)};
				\draw[fill=\dcolor, fill opacity=0.5, even odd rule]{(0:0)+(70-90:0.3)--+(70-90:0.5)arc(70-90:110-90:0.5)--(110-90:0.3)arc(110-90:70-90:0.3)}{(0:0)+(70+90:0.3)--+(70+90:0.5)arc(70+90:110+90:0.5)--(110+90:0.3)arc(110+90:70+90:0.3)};
			\end{scope}
			\begin{scope}[xshift=-1 cm,scale=0.7]
				\draw[fill=\ccolor, fill opacity=0.25,even odd rule]{(0,0) circle (0.5)}{(0,0) circle (0.9)};
				\draw[fill=\bcolor, fill opacity=0.5, even odd rule]{(0,0) circle (0.9)}{(0,0) circle (1.1)};
				
				\draw[fill=\bcolor, fill opacity=0.5, even odd rule]{(0:0)+(110:0.3)--+(110:0.5)arc(110:250:0.5)--(250:0.3)arc(250:110:0.3)}{(0:0)+(110+180:0.3)--+(110+180:0.5)arc(110+180:250+180:0.5)--(250+180:0.3)arc(250+180:110+180:0.3)};
				\draw[fill=\dcolor, fill opacity=0.5, even odd rule]{(0:0)+(70:0.3)--+(70:0.5)arc(70:110:0.5)--(110:0.3)arc(110:70:0.3)}{(0:0)+(70+180:0.3)--+(70+180:0.5)arc(70+180:110+180:0.5)--(110+180:0.3)arc(110+180:70+180:0.3)};
			\end{scope}	
		\node[] (A) at (1.5,0) {\footnotesize{$\eta$}};
		\end{scope}	
	\end{tikzpicture}
\vspace{-0.25cm}		
}	
\\
\hline
\end{array}
$$
Here $t_p:  {\cal C}_{\rm{flux}}\to {\cal C}_{\rm{flux}}$ is the spiral map on fluxes, as is defined in Eq.~(\ref{eq:spiral_on_fluxes}); $t_{(p,q)}: \calC_{\rm{Hopf}}\to \calC_{\rm{flux}}$ is the map defined in Eq.~(\ref{eq:spiral_shell_2}).
\end{Proposition}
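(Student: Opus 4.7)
The seven rows of the table each assert the value of $\Delta(B,C,D)$ on a specific extreme point supported on a specific partition of a ball-embedded sphere shell $X$, solid torus $T$, or torus shell $\mathbb{T}$. The plan is to prove all seven by a single three-step recipe that refines the method used in Proposition~\ref{prop:quantum_dim_2d}. First, I would argue that $\Delta(B,C,D)_{\sigma}=0$ for each of the seven configurations: for rows 1--2 this is Proposition~\ref{prop:3d_Delta_basic}; for row 3, Proposition~\ref{prop:solid_torus_partition_for_dmu}; for row 4, Proposition~\ref{prop:the-p,q}; and rows 5--7 follow by combining the decoupling lemma (Lemma~\ref{lemma:decoupling}) with the sphere completion lemma and the earlier reference-state results. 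With $\Delta(B,C,D)_{\sigma}=0$, the target quantity telescopes,
\[
\Delta(B,C,D)_{\rho^I} \;=\; \bigl[S_{BC}(\rho^I) - S_{BC}(\sigma)\bigr] + \bigl[S_{CD}(\rho^I)-S_{CD}(\sigma)\bigr] - \bigl[S_B(\rho^I)-S_B(\sigma)\bigr] - \bigl[S_D(\rho^I)-S_D(\sigma)\bigr].
\]

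Second, for each of the five pieces $P \in \{B,C,D,BC,CD\}$, I would observe that $P$ embedded in the sector-carrying region $\Omega$ is sectorizable, so by Proposition~\ref{prop:extreme-point-restriction} the partial trace $\mathrm{Tr}_{\Omega\setminus P}\,\rho^{I}_{\Omega}$ is an isolated extreme point of $\Sigma(P)$ with some induced label $I_P \in \mathcal{C}_{\partial P}$. The marginal entropy shift is then read off from Eq.~\eqref{eq:d-for-extreme-point}, and the multiplicity factor (factor of $2$ vs.\ $4$) is fixed by Proposition~\ref{prop:d-partial-sectorizable}, which says that a single-boundary-component sectorizable region satisfies $d_{h(I_P)} = d_{I_P}^{2}$ while a two-boundary-component one gives $d_{h_j(I_P)} = d_{I_P}$. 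Third, I would identify $I_P$ case by case. For rows 1--3 the induced labels are directly $a$, $\mu$, or the vacuum, and the tabulated values follow from pure accounting. For rows 4 and 7, where the partition involves $(p,q)$-winding, the pieces $B$ and $D$ are solid subtori that wind in $T$ (resp.~in $\mathbb{T}$) as $(p,q)$ torus tubes; by Corollary~\ref{coro:anypq} and Definition~\ref{def:spiral_map} (resp.\ Eq.~\eqref{eq:spiral_shell_2}), their induced flux labels are $t_p(\mu)$ (resp.\ $t_{(p,q)}(\eta)$), producing the $\ln d_{t_p(\mu)}$ (resp.\ $\ln d_{t_{(p,q)}(\eta)}^{2}$) contributions. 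For rows 5 and 6, the partial trace to the solid torus inside the torus shell yields the flux label $t_{(1,0)}(\eta)=\mu$ by Proposition~\ref{prop:mu_decomposition}.

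The main obstacle I foresee is the bookkeeping of signs and multiplicity factors in the third step. The difference between rows 5 and 6 (which differ only by a factor of $2$), and between rows 3 and 4 at $p=q=1$, comes entirely from whether a given thickened-boundary component of $\Omega$ is traversed once or twice by the cross-section defining the partition, and thus whether Proposition~\ref{prop:d-partial-sectorizable} contributes a factor $d_{I_P}^{2}$ or $d_{I_P}$ on each piece. I expect this combinatorics to be routine but error-prone; a uniform treatment can be obtained by first using the decoupling lemma to slide every contributing piece onto the thickened boundary of $\Omega$, and then counting the number of connected thickened-boundary components of $\Omega$ each piece inherits. Once that count is in hand, the tabulated expressions emerge directly from Eq.~\eqref{eq:d-for-extreme-point} together with the spiral-map identifications above.
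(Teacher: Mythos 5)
Your proposal is correct and follows essentially the same route as the paper: establish $\Delta(B,C,D)_{\sigma}=0$ for each configuration from the reference-state results (Propositions~\ref{prop:3d_Delta_basic}, \ref{prop:solid_torus_partition_for_dmu}, \ref{prop:the-p,q} and the decoupling/sphere-completion lemmas), then evaluate the entropy shift of each marginal between $\rho^{I}$ and $\sigma$ using the definition of quantum dimension together with the identification of the induced sector on each piece (restriction of extreme points of sectorizable regions, and the spiral maps $t_p$, $t_{(p,q)}$ for the winding sub-tori). The paper only writes out row~3 explicitly and declares the rest to follow from the definition of $d_I$; your version merely makes the same bookkeeping systematic, with one small wording slip --- Proposition~\ref{prop:extreme-point-restriction} requires the \emph{ambient} region $\Omega$ (sphere shell, solid torus, torus shell) to be sectorizable, not the piece $P$ itself.
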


Among these statements, the one for Fig.~\ref{fig:TEE-lemma_appendix}(a), (i.e. condition 3) requires some explanation. Below is the proof of it. (We omit the proof of other conditions because they follow directly from   Eq.~(\ref{def:quantum_dim}), the definition of the quantum dimension.)
\begin{proof}
	Because $\Delta(B,C,D)_\sigma=0$ for the partition in Fig.~\ref{fig:TEE-lemma_appendix}(a) (Proposition~\ref{prop:solid_torus_partition_for_dmu}), all we need is to calculate the entropy differences between the extreme point labeled by $\mu$ and the vacuum $1$. Below we denote the reduced density matrix of extreme point $\rho^{\mu}_{T}$ on its subsystem $X$ as $\rho^{\mu}_X$.
	Since $ B$ is a ball, 
	\begin{equation} \Delta S_B \equiv S(\rho^\mu_{B}) - S(\sigma_B) = 0 .
	\end{equation}
	Both $BC$ and $CD$ are solid tori, so
	\begin{equation} \Delta S_{BC} \equiv S(\rho^\mu_{BC}) - S(\sigma_{BC}) = 2 \ln d_\mu .
	\end{equation}
	\begin{equation} \Delta S_{CD} \equiv S(\rho^\mu_{CD}) - S(\sigma_{CD}) = 2 \ln d_\mu .
	\end{equation}
	Finally, $D$ is a genus-two handlebody.  
	But because there are no excitations in $C$, there is a reversible quantum channel 
	taking the state of $D$ to a solid torus in the sector $\mu$.
	This channel preserves entropy differences, so 
\begin{equation} \Delta S_{D} \equiv S(\rho^\mu_D) - S(\sigma_D) = 2 \ln d_\mu .
\end{equation}
	Putting these together, we have
	$\Delta(B,C,D)_{\rho^{\mu}}=\Delta S_{BC} + \Delta S_{CD} - \Delta S_B - \Delta S_D = 2 \ln d_\mu$,
	which completes the proof of condition 3.
\end{proof}

\begin{corollary}[Bounds for quantum dimensions] The following statements hold:
	
	\begin{enumerate}
	\item $d_a \ge 1$, for any $a\in \calC_{\rm{point}}$.
	\item $d_{\mu} \ge 1$ for any $\mu \in \calC_{\rm{flux}}$.
	\item $d_{\mu}\ge d_{t_p(\mu)}$, for any $\mu \in \calC_{\rm{flux}}$ and any integer $p$.
	\item $d_{\eta} \ge d^2_{t_{(p,q)}}(\eta)$, for any $\eta \in \calC_{\rm{Hopf}}$.
	\end{enumerate}
\end{corollary}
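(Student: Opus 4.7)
The plan is to derive all four bounds uniformly as immediate consequences of strong subadditivity applied to the entropy combinations tabulated in Proposition~\ref{prop:3d-entropy-combinations}. The general pattern is the following: for each bound we identify a partition $B,C,D$ and an extreme point $\rho^{\langle e\rangle}$ of an appropriate information convex set such that $\Delta(B,C,D)_{\rho^{\langle e\rangle}}$ evaluates (by the table) to a constant multiple of the log ratio of quantum dimensions we wish to compare. Since $\Delta(B,C,D)\ge 0$ follows from SSA (see \eqref{eq:useful_ssa}), each of the four inequalities drops out with no further work.

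Concretely, I would proceed as follows. For item 1, I take the sphere-shell partition in the first row of the table of Proposition~\ref{prop:3d-entropy-combinations}: there $\Delta(B,C,D)_{\rho^a}=2\ln d_a$ for any $a\in \calC_{\rm point}$, so SSA gives $d_a\ge 1$. For item 2, I use the solid-torus partition of Fig.~\ref{fig:TEE-lemma_appendix}(a) (third row of the table), where $\Delta(B,C,D)_{\rho^\mu}=2\ln d_\mu$ for any $\mu\in\calC_{\rm flux}$, and SSA gives $d_\mu\ge 1$. For item 3, I invoke the $(p,q)$-coprime partition of the solid torus (fourth row), where $\Delta(B,C,D)_{\rho^\mu}=4(\ln d_\mu-\ln d_{t_p(\mu)})$; SSA yields $d_\mu\ge d_{t_p(\mu)}$. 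For item 4, I apply the torus-shell $(p,q)$ partition (last row), where $\Delta(B,C,D)_{\rho^\eta}=2(\ln d_\eta-\ln d^2_{t_{(p,q)}(\eta)})$, and SSA gives $d_\eta\ge d^2_{t_{(p,q)}(\eta)}$.

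There is essentially no obstacle, because all the heavy lifting has already been done: Proposition~\ref{prop:3d_Delta_basic} and Proposition~\ref{prop:the-p,q} establish $\Delta(B,C,D)_\sigma=0$ for each of the relevant partitions, and Proposition~\ref{prop:3d-entropy-combinations} then evaluates $\Delta(B,C,D)_{\rho^{\langle e\rangle}}$ on the corresponding extreme point. The only mild subtlety is that in items 3 and 4 the relevant entropy combination is computed on an extreme point living in a solid torus (respectively torus shell) embedded in a ball, so one must be sure that $\Delta(B,C,D)_\sigma$ actually vanishes for the given $(p,q)$ partition; this is exactly the content of Proposition~\ref{prop:the-p,q}, so the subtraction $\Delta_{\rho^{\langle e\rangle}}-\Delta_{\sigma}$ is legitimate and the SSA bound transfers directly to the quantum-dimension inequality. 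Thus the entire corollary reduces to reading off four lines of the table and invoking SSA four times.
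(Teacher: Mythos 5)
Your proposal is correct and is exactly the argument the paper intends: the corollary is stated as a direct consequence of Proposition~\ref{prop:3d-entropy-combinations}, with each inequality obtained by applying SSA ($\Delta(B,C,D)\ge 0$) to the corresponding row of the table (the same logic is spelled out in the main text for item 3 in Proposition~\ref{prop:monotonicity-of-the-spiral-map}). Nothing further is needed.
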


\subsection{Equivalent definitions of topological entanglement entropy}\label{sec:TEE1}
In this appendix, we discuss a few equivalent definitions of the topological entanglement entropy (TEE) in 2d and 3d.
Below $\sigma$ is a reference state on a large enough disk (ball). Our statement can be inferred from previous literature \cite{kitaev2006topological,levin2006detecting,Grover2011} if the familiar form of strict area law $S(A) = \alpha \ell - \gamma$ is assumed. The purpose of this appendix is to derive the statements merely from  axioms {\bf A0} and {\bf A1} on bounded radius disks (balls).

\subsubsection{2d topological entanglement entropy}

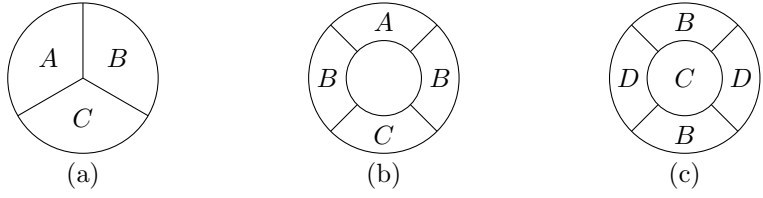
\begin{figure}[h!]
    \centering 
    \begin{tikzpicture}
    	\begin{scope}[xshift=-4 cm,scale=1]
    		\draw[](0,0) circle (1);
    		%\draw[](0,0) circle (0.5);
    		\draw[](90:0) --(90:1);
    		\draw[](-30:0) --(-30:1);
    		\draw[](90+120:0) --(90+120:1);	
    		\node [] (A) at (0,-1.3)  {\footnotesize{(a)}};
    		\node [] (A) at (30:0.53)  {\footnotesize{$B$}};
    		\node [] (A) at (150:0.53)  {\footnotesize{$A$}};
    		\node [] (A) at (-90:0.53)  {\footnotesize{$C$}};
    	\end{scope}
    	\begin{scope}[scale=1]
    		\draw[](0,0) circle (1);
    		\draw[](0,0) circle (0.5);
    		\draw[](45:0.5) --(45:1);
    		\draw[](-45:0.5) --(-45:1);
    		\draw[](135:0.5) --(135:1);
    		\draw[](-135:0.5) --(-135:1);	
    		\node [] (A) at (0,-1.3)  {\footnotesize{(b)}};
    		\node [] (A) at (90:0.75)  {\footnotesize{$A$}};
    		\node [] (A) at (0:0.75)  {\footnotesize{$B$}};
    		\node [] (A) at (180:0.75)  {\footnotesize{$B$}};
    		\node [] (A) at (-90:0.75)  {\footnotesize{$C$}};
    	\end{scope}
    	
    	\begin{scope}[xshift=4 cm,scale=1]
    		\draw[](0,0) circle (1);
    		\draw[](0,0) circle (0.5);
    		\draw[](45:0.5) --(45:1);
    		\draw[](-45:0.5) --(-45:1);
    		\draw[](135:0.5) --(135:1);
    		\draw[](-135:0.5) --(-135:1);	
    		\node [] (A) at (0,-1.3)  {\footnotesize{(c)}};
    		
    		\node [] (A) at (90:0.75)  {\footnotesize{$B$}};
    		\node [] (A) at (-90:0.75)  {\footnotesize{$B$}};
    		\node [] (A) at (0:0.75)  {\footnotesize{$D$}};
    		\node [] (A) at (180:0.75)  {\footnotesize{$D$}};
    		\node [] (A) at (0:0)  {\footnotesize{$C$}};
    	\end{scope}
    \end{tikzpicture}
    \caption{Partitions  for 2d TEE. All regions are subsystems of a disk.}
    \label{fig:2d-tee}
\end{figure}

\begin{Proposition}
	The following definitions of 2d TEE are equivalent:
	\begin{enumerate}
		\item In terms of Fig.~\ref{fig:2d-tee}(a),
		$
			\gamma =(S_{AB}+S_{BC}+S_{CA}-S_A-S_B-S_C-S_{ABC})_\sigma.
		$
		\item In terms of the subsystems in Fig.~\ref{fig:2d-tee}(b),
	$
		\gamma = \frac{1}{2}I(A:C|B)_\sigma.
	$
	\item In terms of the subsystems in Fig.~\ref{fig:2d-tee}(c),
$
 \gamma	= \frac{1}{2}\Delta(B,C,D)_\sigma.
$
	\end{enumerate}
\end{Proposition}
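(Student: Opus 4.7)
The plan is to prove $(1) = (3)$ and $(2) = (3)$, treating expression $(3)$ as the working definition of $\gamma$ and converting the other two to it.

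First I would show $(2) = (3)$. The region for $(2)$ is an annulus $A\cup B\cup C$ (Fig.~\ref{fig:2d-tee}(b)), while the region for $(3)$ is the disk obtained by plugging the hole with the central disk labeled $C$ in Fig.~\ref{fig:2d-tee}(c). I would start from expression $(3)$ and apply the decoupling lemma (Lemma~\ref{lemma:decoupling}, applied to $\sigma$ viewed as an extreme point of the ball's trivial information convex set) to rewrite $\Delta(B,C,D)_\sigma$ as an entropy combination supported on a thickened annular neighborhood of the outer boundary. After matching the relabeling between Figs.~(b) and (c) (the annulus labels $B, D$ in (c) become $A, C$ in (b), and the inner disk $C$ of (c) gets traced out), this reduces to $\tfrac12 I(A:C|B)_\sigma$.

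Second, I would show $(1) = (3)$. Start with the Kitaev--Preskill combination on the tripartite disk of Fig.~(a). Thicken each of $A$, $B$, $C$ into a contractible bulk part, a thin annular sliver along the outer boundary arcs, and a small ball around the central triple point. Axiom \textbf{A0} applied to the central ball (and its partial partitions) produces relations of the form $\Delta(B,C)_\sigma = 0$ on that ball, and axiom \textbf{A1} applied to annular neighborhoods of the three internal arcs produces relations of the form $\Delta(B,C,D)_\sigma = 0$. The seven-term Kitaev--Preskill combination should decompose as a sum of such axiomatic zero-combinations plus the single nonzero annular combination $\tfrac12\Delta(B,C,D)_\sigma$ around $\partial(ABC)$, which up to relabeling matches $(3)$. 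An alternative implementation uses the sphere completion lemma (Lemma~\ref{lemma:sphere_completion}) to first pass to a pure state on $S^2$; the pure-state identities $S_X = S_{X^c}$ and the fact that axiom-saturating combinations behave symmetrically under complementation then collapse both Kitaev--Preskill and Levin--Wen combinations to the same annular invariant of the partition boundary.

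The main obstacle is the geometric bookkeeping in the second step: one must precisely match how a thickened tripartite disk decomposes into axiom-saturating pieces with no leftover bulk contributions, so that exactly $\tfrac12\Delta(B,C,D)_\sigma$ remains on the outer annulus. In particular, the axioms are given for specific topologies (ball and annulus), and the various intermediate partitions needed (e.g.\ near the triple point, and near the three internal arcs) must be chosen so that each is of an admissible type for applying \textbf{A0}, \textbf{A1}, or their corollaries from Appendix~\ref{sec:TEE0}. The sphere completion route likely offers the cleanest path, since the symmetry of the round $S^2$ partition makes it immediate that both combinations are equivalent ways to probe the same topological invariant of $\sigma$.
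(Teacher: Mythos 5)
Your proposal assembles the right ingredients, but the primary route you give for $(2)=(3)$ does not go through. The decoupling lemma only relocates the interior piece $C$ of Fig.~\ref{fig:2d-tee}(c) to a thin ring near the boundary while \emph{preserving} the form $\Delta(B,\cdot,\widetilde D)$; it never introduces the complementary region, so it cannot convert $\Delta(B,C,D)=S_{BC}+S_{CD}-S_B-S_D$ (with $C$ a disk) into $I(A:C|B)=S_{AB}+S_{BC}-S_B-S_{ABC}$ on the four-sector annulus of Fig.~\ref{fig:2d-tee}(b). Those two combinations involve genuinely different marginals, and ``relabeling'' (or tracing out the inner disk, which would destroy the terms $S_{BC}$ and $S_{CD}$ altogether) cannot bridge them. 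The conversion requires exactly the purity identities $S_{CD}=S_{AB}$ and $S_D=S_{ABC}$, which is why the paper's proof of $2\Leftrightarrow 3$ is: complete $\sigma$ to a pure state on $S^2$ (Lemma~\ref{lemma:sphere_completion}), set $A=S^2\setminus BCD$, use purity to get $I(A:C|B)_{\ket{\psi_{S^2}}}=\Delta(B,C,D)_\sigma$, and then observe that the resulting $ABC$ partition is topologically the one in Fig.~\ref{fig:2d-tee}(b). What you list as an ``alternative implementation'' is in fact the essential step; promote it to the main argument for $(2)=(3)$.

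For $(1)=(2)$ (equivalently your $(1)=(3)$), your sketch of decomposing the Kitaev--Preskill combination into axiom-saturating pieces plus a single annular remainder is the content of the result the paper simply cites (Proposition~5.2 of \cite{shi2020fusion}); as written, ``should decompose'' is the entire difficulty, and you correctly identify it as the main obstacle without resolving it. Either carry out that bookkeeping explicitly or cite the prior result as the paper does.
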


\begin{proof}
 First of all, each linear combination is a topological invariant. This is because axiom {\bf A1} implies that smoothly deforming the boundary of the regions preserves the entropy combination. This argument can be found in Fig.~4 and 5 of \cite{shi2020fusion}; the proof of $1\Leftrightarrow 2$ is done in Proposition~5.2 of the same reference. Below we show $2 \Leftrightarrow 3$. 
  
  Because of the sphere completion lemma, one can do the analysis on a sphere. Let the complement of $BCD$ of Fig.~\ref{fig:2d-tee}(c) to be $A=S^2\setminus (BCD)$. Let the reference state on the sphere be $\vert \psi_{S^2}\rangle$. We immediately see 
  \begin{equation}
  	I(A:C |B)_{\vert \psi_{S^2} \rangle } =\Delta(B,C,D)_{\sigma}.
  \end{equation} 
Now the $ABC$ partition is the same as that in Fig.~\ref{fig:2d-tee}(b). Thus, $2 \Leftrightarrow 3$. This completes the proof.
\end{proof}

\subsubsection{3d topological entanglement entropy} \label{app:3dTEE}

\begin{Proposition}\label{prop:3dTEE}
	The following definitions of 3d TEE  are equivalent:
%	\begin{enumerate}[label=\alph*]
%	\begin{enumerate}[(a)]
	\begin{enumerate}
		\item \label{item1} In terms of Fig.~\ref{fig:3d-tee}(1)
		$
			\gamma =\frac{1}{2}I(A:C|B)_\sigma.
		$ Here $ABC$ is a solid torus.
		\item \label{item2}  In terms of Fig.~\ref{fig:3d-tee}(2)
			$
		\gamma =\frac{1}{2}I(A:C|D)_\sigma.
		$ Here $ADC$ is a sphere shell.
		\item In terms of Fig.~\ref{fig:3d-tee}(3)
		$
			\gamma	=\frac{1}{2} \Delta(B,C,D)_\sigma.
		$
	\item In terms of Fig.~\ref{fig:3d-tee}(4)
		$
	\gamma =\frac{1}{2}I(A':C|B)_\sigma.
	$
\item In terms of Fig.~\ref{fig:3d-tee}(5)
	$
\gamma =\frac{1}{2}I(A':C|D)_\sigma.
$
	\end{enumerate}
\end{Proposition}

\begin{figure}[h]
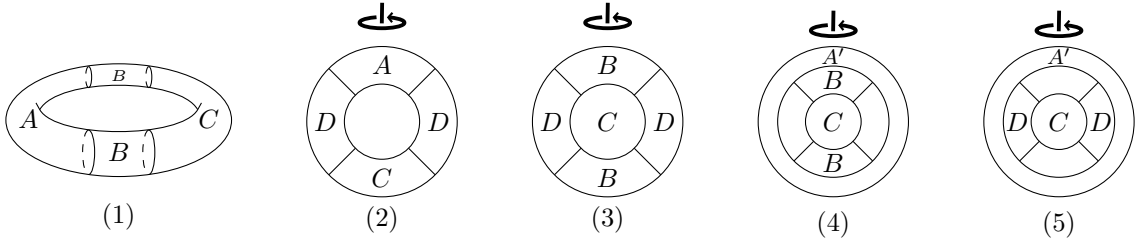

	\centering
	\begin{tikzpicture}	
		\begin{scope}[xshift=-6.5 cm,yshift=0.34 cm]
			\fill[white,  even odd rule]
			{(-0.55-0.5,-0.18) .. controls +(-60:0.45) and +(240:0.45) ..(0.55+0.5,-0.18) .. controls +(120:0.5) and +(60:0.5) ..(-0.55-0.5,-0.18) -- cycle}
			{(0,-0.32) ellipse (1.5 and 0.75)};
			
			\draw[black] (-0.55-0.5-0.05,-0.18+0.09)--(-0.55-0.5,-0.18) .. controls +(-60:0.45) and +(240:0.45) ..(0.55+0.5,-0.18)-- (0.55+0.5+0.05,-0.18+0.09);
			\draw[black] (-0.55-0.5,-0.18) .. controls +(60:0.5) and +(120:0.5) .. (0.55+0.5,-0.18);
			
			\draw[black] (0,-0.32) ellipse (1.5 and 0.75);
			\draw[black,dashed] ([xshift=0.4 cm, yshift=-0.748 cm]90:0.295) arc (90:270:0.08 and 0.295);
			\draw[black] ([xshift=0.4 cm, yshift=-0.748 cm]90:0.295) arc (90:-90:0.08 and 0.295);				
			\draw[black,dashed] ([xshift=-0.4 cm, yshift=-0.748 cm]90:0.295) arc (90:270:0.08 and 0.295);
			\draw[black] ([xshift=-0.4 cm, yshift=-0.748 cm]90:0.295) arc (90:-90:0.08 and 0.295);
			\draw[black,dashed] ([xshift=0.4 cm, yshift=0.253 cm]90: 0.14) arc (90:270:0.04 and  0.14);
			\draw[black] ([xshift=0.4 cm, yshift=0.253 cm]90: 0.14) arc (90:-90:0.04 and  0.14);
			\draw[black,dashed] ([xshift=-0.4 cm, yshift=0.253 cm]90: 0.14) arc (90:270:0.04 and  0.14);
			\draw[black] ([xshift=-0.4 cm, yshift=0.253 cm]90: 0.14) arc (90:-90:0.04 and  0.14);
			
			\node[] (A) at (-1.2,-0.3) {\footnotesize{$A$}};	
			\node[] (A) at (1.2,-0.3) {\footnotesize{$C$}};											
			\node[] (A) at (0,0.27) {\tiny{$B$}};
			\node[] (A) at (0,-0.748) {\footnotesize{$B$}};	
			\node[] (A) at (0,-1.6) {\footnotesize{$(1)$}};	
		\end{scope}	
		\begin{scope}[xshift=-3 cm,scale=1]
			\node[] (A) at (0.00,1.4) {\includegraphics[scale=1.0]{rotation}};
			\draw[](0,0) circle (1);
			\draw[](0,0) circle (0.5);
			\draw[](45:0.5) --(45:1);
			\draw[](-45:0.5) --(-45:1);
			\draw[](135:0.5) --(135:1);
			\draw[](-135:0.5) --(-135:1);	
			\node [] (A) at (0,-1.3)  {\footnotesize{$(2)$}};
			\node [] (A) at (90:0.75)  {\footnotesize{$A$}};
			\node [] (A) at (0:0.75)  {\footnotesize{$D$}};
			\node [] (A) at (180:0.75)  {\footnotesize{$D$}};
			\node [] (A) at (-90:0.75)  {\footnotesize{$C$}};
		\end{scope}
		
		\begin{scope}[xshift=0 cm,scale=1]
			\node[] (A) at (0.00,1.4) {\includegraphics[scale=1.0]{rotation}};
			\draw[](0,0) circle (1);
			\draw[](0,0) circle (0.5);
			\draw[](45:0.5) --(45:1);
			\draw[](-45:0.5) --(-45:1);
			\draw[](135:0.5) --(135:1);
			\draw[](-135:0.5) --(-135:1);	
			\node [] (A) at (0,-1.3)  {\footnotesize{$(3)$}};
			\node [] (A) at (90:0.75)  {\footnotesize{$B$}};
			\node [] (A) at (-90:0.75)  {\footnotesize{$B$}};
			\node [] (A) at (0:0.75)  {\footnotesize{$D$}};
			\node [] (A) at (180:0.75)  {\footnotesize{$D$}};
			\node [] (A) at (0:0)  {\footnotesize{$C$}};
		\end{scope}
		
		\begin{scope}[xshift=3 cm,scale=0.74]
			\node[] (A) at (0.00,1.75) {\includegraphics[scale=1.0]{rotation}};
			\draw[](0,0) circle (1.35);
			\draw[](0,0) circle (1);
			\draw[](0,0) circle (0.5);
			\draw[](45:0.5) --(45:1);
			\draw[](-45:0.5) --(-45:1);
			\draw[](135:0.5) --(135:1);
			\draw[](-135:0.5) --(-135:1);	
			\node [] (A) at (0,-1.6*1.15)  {\footnotesize{$(4)$}};
			
			\node [] (A) at (90:0.75)  {\footnotesize{$B$}};
			\node [] (A) at (-90:0.75)  {\footnotesize{$B$}};
			\node [] (A) at (0:0)  {\footnotesize{$C$}};
			\node [] (A) at (90:1.175)  {\scriptsize{$A'$}};
		\end{scope}
		
		\begin{scope}[xshift=6 cm,scale=0.74]
			\node[] (A) at (0.00,1.75) {\includegraphics[scale=1.0]{rotation}};
			\draw[](0,0) circle (1.35);
			\draw[](0,0) circle (1);
			\draw[](0,0) circle (0.5);
			\draw[](45:0.5) --(45:1);
			\draw[](-45:0.5) --(-45:1);
			\draw[](135:0.5) --(135:1);
			\draw[](-135:0.5) --(-135:1);	
			\node [] (A) at (0,-1.6*1.15)  {\footnotesize{$(5)$}};
			
			\node [] (A) at (0:0.75)  {\footnotesize{$D$}};
			\node [] (A) at (180:0.75)  {\footnotesize{$D$}};
			\node [] (A) at (0:0)  {\footnotesize{$C$}};
			\node [] (A) at (90:1.175)  {\scriptsize{$A'$}};
		\end{scope}
	\end{tikzpicture}
	\caption{Partitions for 3d TEE. All regions are subsystems of a ball.}
	\label{fig:3d-tee}
\end{figure}

\begin{proof}
	Same as the 2d cases, each linear combination is invariant under smooth deformation of the subsystems. We now apply the sphere completion lemma to obtain a pure reference state $\vert \psi_{S^3}\rangle$ on a $3$-sphere, $S^3=ABCD$. Here $\vert \psi_{S^3}\rangle$ is the completion of reference state $\sigma$ to $S^3$. It follows that,
	\begin{equation}
		I(A:C\vert B)_{\vert \psi_{S^3}\rangle}= I(A:C\vert D)_{\vert \psi_{S^3}\rangle}=\Delta(B,C,D)_{\vert \psi_{S^3}\rangle}.
	\end{equation}
	The geometry of $ABC$, $ADC$ and $BCD$ are the same as that in item $1$, $2$ and $3$, respectively. This proves that $1\Leftrightarrow 2\Leftrightarrow 3$.
	
	Next, we prove $1\Leftrightarrow 4$ and $2 \Leftrightarrow 5$. Let us do the analysis on the $3$-sphere $S^3=ABCD$. Let $A=A'A''$, where $A''$ is the ball that is surrounded by sphere shell $A'$. The strong subadditivity implies that 
	\begin{equation}
		I(A:C \vert B)_{\vert \psi_{S^3}\rangle} \ge I (A':C \vert B)_{\vert \psi_{S^3}\rangle} \label{eq:3dTEE_45}
	\end{equation}
 This ``$\ge$" can actually be replaced by ``$=$". This is because the enlarged {\bf A0} implies $\Delta(A',A'')_{\vert \psi_{S^3}\rangle}=0$, which bounds the difference between the two sides of (\ref{eq:3dTEE_45}) to zero. This proves that $1\Leftrightarrow 4$. The proof for $2 \Leftrightarrow 5$ is analogous (done by replacing $B$ with $D$). This completes the proof.
\end{proof}

\section{Dimensional reduction and beyond}
\label{appendix:Fusion-rules}

In this appendix, we develop a dimensional reduction point of view, which provides additional insight into some of the basic fusion data identified in \S \ref{subsection:sectors} and \S\ref{subsec:3d-known}.
This dimensional reduction picture (Appendix~\ref{subsec:diml-reduction}) is rigorous in the sense that we are able to obtain a 2d entanglement bootstrap reference state, and from there, we can ``bootstrap" the related 2d superselection sectors and fusion rules. We then map these objects back to 3d. 
We summarize the facts about antiparticles, quantum dimensions, and the fusion rules associated with these basic data (Appendix~\ref{subappendix:Fusion_rules}).

On the other hand, there are constraints on 3d fusion data that are beyond the description of the dimensional reduction picture that we develop. One important class of such constraints comes from topologically nontrivial paths of deforming a region back to itself. We discuss some of these constraints in Appendix~\ref{appendix:automorphisms}.

\subsection{Dimensional reduction}
\label{subsec:diml-reduction}

We initiate a study of dimensional reduction in the entanglement bootstrap program.
We shall describe two rigorous dimensional reduction procedures. In \S\ref{subsubsection:vacuum_reduction} we describe a dimensional reduction of the vacuum sector (vacuum reduction); the end result is a 2d system with a gapped boundary. In \S\ref{subsubsection:flux_reduction} we describe the dimensional reduction for an arbitrary flux sector (flux reduction). Open questions about yet-to-be-understood (possibly inequivalent) dimensional reduction procedures are discussed.

\subsubsection{Dimensional reduction of the vacuum}\label{subsubsection:vacuum_reduction}

Below, we show that the reference state on a 3d ball can be viewed as a reference state on a 2d disk adjacent to a gapped boundary. This is done by making connections to the 2d region that undergoes the revolution; see Fig.~\ref{fig:disk_reduction} for an illustration. We shall see, on this 2d reference state, both the bulk axioms and the boundary axioms are satisfied.\footnote{A gapped boundary is a gapped domain wall \cite{Kitaev2012,Shi:2020rne} separating the vacuum and a 2d topological order.} The 3d versions of {\bf A0} and {\bf A1} are enough for the dimensional reduction to work. No rotational symmetry is assumed.

\begin{figure}[h]
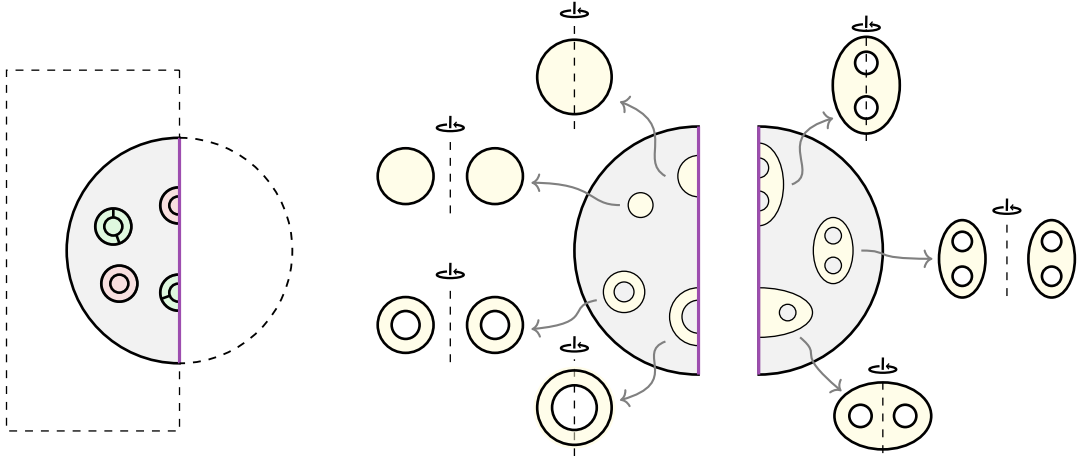

	\centering
	\begin{tikzpicture}
		\begin{scope}[xshift=-1.5 cm, rotate=180]
			
			\fill[gray!10!white](0,0)--(-90:1.5) arc (-90:90:1.5) -- (0,0);
			\begin{scope}[scale=0.8]
				\axiomone{1}{0.55}
				\axiomtwoangled{1.1}{-0.4}{-90}{110}
				\axiomone{0.02}{-0.75}
				\axiomtwoangled{0.02}{0.7}{20}{160}
			\end{scope}
			\fill[white] (0,2.4) rectangle (-0.5,-2.4);
			\draw[dashed, line width=0.5 pt] (0,-2.4) rectangle (2.3,2.4);
			\draw[line width=1pt] (0,0)--(-90:1.5) arc (-90:90:1.5)--(0,0);
			\draw[line width=1.2pt,color=purple!60!blue!70!white] (0,-1.5)--(0,1.5);
			\draw[line width=0.7pt,dashed] (-90:1.5) arc (-90:-270:1.5);
		\end{scope}

		\begin{scope}[xshift=5.4 cm, scale=1.1]
			\begin{scope}[rotate=180]
				\fill[gray!10!white](0,0)--(-90:1.5) arc (-90:90:1.5) -- (0,0);				
			\end{scope}
			
			\begin{scope}% insert small things
				\draw[fill=yellow!10!white, line width=0.5pt] (0,1.15) arc (90:270:0.25);
				\draw[fill=yellow!10!white, line width=0.5pt] (0,-0.8)--+(0,0.2) arc (90:270:0.2) --++(0,-0.15) arc (270:90:0.35);
				\draw[fill=yellow!10!white, line width=0.5pt] (-0.7,0.55) circle (0.15);
				\draw[fill=yellow!10!white, line width=0.5pt, even odd rule] {(-0.9,-0.5) circle (0.12)}{(-0.9,-0.5) circle (0.25)};
			\end{scope}
			\begin{scope}[rotate=180] 
				\draw[line width=1pt] (0,0)--(-90:1.5) arc (-90:90:1.5) -- (0,0);
				\draw[line width=1.2pt,color=purple!60!blue!70!white] (0,-1.5)--(0,1.5);
			\end{scope}
			
			% ball
			\begin{scope}[xshift=-1.5 cm, yshift=2.1 cm]
				\begin{scope}[scale=0.45]
					\node[] (A) at (0.00,1.8) {\includegraphics[scale=0.6]{rotation}};
					\draw[fill=yellow!10!white, line width=1pt] (0,0) circle (1);
					\draw[dashed, line width=0.5 pt]  (0,-1.4) -- (0,1.4);
				\end{scope}
				\begin{scope}[scale=1]
					\draw[->, color=gray, line width=0.8pt] (-0.4+1.5, 0.9-2.1) .. controls +(150:0.4) and + (-20:0.8) .. (0.55,-0.3);
				\end{scope}
			\end{scope}
			% sold torus
			\begin{scope}[xshift=-3 cm, yshift=0.9 cm]
				\begin{scope}[scale=0.45]
					\node[] (A) at (0.00,1.4) {\includegraphics[scale=0.6]{rotation}};
					\draw[dashed, line width=0.5 pt] (0,-1) -- (0,1);
					\draw[fill=yellow!10!white, line width=1pt] (-1.2,0) circle (0.75);
					\draw[fill=yellow!10!white, line width=1pt] (1.2,0) circle (0.75);
				\end{scope}
				\begin{scope}[scale=1]
					\draw[->, color=gray, line width=0.8pt] (-0.85+3-0.1, 0.55-0.9) .. controls +(180:0.4) and + (0:0.8) .. (0.9+0.08,-0.2+0.12);
				\end{scope}
			\end{scope}
			% torus shell
			\begin{scope}[xshift=-3 cm, yshift=-0.9 cm]
				\begin{scope}[scale=0.45]
					\node[] (A) at (0.00,1.45) {\includegraphics[scale=0.6]{rotation}};
					\draw[dashed, line width=0.5 pt] (0,-1) -- (0,1);
					\draw[fill=yellow!10!white, even odd rule, line width=1pt] {(-1.2,0) circle (0.375)} {(-1.2,0) circle (0.75)};
					\draw[fill=yellow!10!white, even odd rule, line width=1pt] {(1.2,0) circle (0.375)} {(1.2,0) circle (0.75)};
				\end{scope}
				\begin{scope}[scale=1]
					\draw[->, color=gray, line width=0.8pt] (-1.23+3, -0.5+0.9-0.1) .. controls +(190:0.4) and + (5:0.8) .. (0.9+0.08,-0.2+0.15);
				\end{scope}
			\end{scope}
			% sphere shell
			\begin{scope}[xshift=-1.5 cm,yshift=-1.9 cm]
				\begin{scope}[scale=0.45]
					\node[] (A) at (0.00,1.7) {\includegraphics[scale=0.6]{rotation}};
					\draw[dashed, line width=0.5 pt] (0,-1.3) -- (0,1.3);
					\fill[yellow!20!white, opacity=0.5, even odd rule] {(0,0) circle (0.6)}{(0,0) circle (1.1)};
					\draw[ line width=1pt] (0,0) circle (0.6);
					\draw[ line width=1pt] (0,0) circle (1.0);
				\end{scope}
				\begin{scope}[scale=1]
					\draw[->, color=gray, line width=0.8pt] (-0.4+1.5, -1.1+1.9) .. controls +(200:0.4) and + (15:0.8) .. (0.4+0.15,-0.1+0.19);
				\end{scope}
			\end{scope}
		\end{scope}
		\begin{scope}[xshift=6.2 cm,scale=1.1]
			\begin{scope}
				\fill[gray!10!white](0,0)--(-90:1.5) arc (-90:90:1.5) -- (0,0);
			\end{scope}
			
			\begin{scope}% insert small things
				\draw[fill=yellow!10!white, line width=0.5pt, even odd rule] {(0, 1.12) arc (90:-90:0.12)}{(0,0.72) arc (90:-90:0.12)}{(0,1.3) arc (90:-90: 0.3 and 0.5)};	
				\draw[fill=yellow!10!white, line width=0.5pt, even odd rule] {(0,-0.45) arc (90:-90: 0.65 and 0.3)}{(0.35,-0.75) circle (0.1)};
				\begin{scope}[xshift=0.3 cm]
					\draw[fill=yellow!10!white, line width=0.5pt, even odd rule] {(0.6,-0.18) circle (0.1)}{(0.6,0.18) circle (0.1)}{(0.6,0) ellipse (0.24 and 0.4)};
				\end{scope}
				
			\end{scope}
			\begin{scope}
				\draw[line width=1pt] (0,0)--(-90:1.5) arc (-90:90:1.5) -- (0,0);
				\draw[line width=1.2pt,color=purple!60!blue!70!white] (0,-1.5)--(0,1.5);
			\end{scope}
			
			% fusion abc
			\begin{scope}[xshift=1.3 cm, yshift=2 cm]
				\begin{scope}[scale=0.45]
					\node[] (A) at (0.00,1.65) {\includegraphics[scale=0.6]{rotation}};	
					\draw[fill=yellow!10!white, even odd rule, line width=1pt] {(0,0) circle (0.9 and 1.3)};
					\draw[fill=yellow!0!white, even odd rule, line width=1pt] {(0, 0.6) circle (0.3)};
					\draw[fill=yellow!0!white, even odd rule, line width=1pt] {(0, -0.6) circle (0.3)};
					\draw[dashed, line width=0.5 pt] (0,-1.5) -- (0,1.5);
				\end{scope}	
				\begin{scope}[scale=1]
					\draw[->, color=gray, line width=0.8pt] (0.4-1.3, 0.8-2) .. controls +(20:0.4) and + (195:0.8) .. (1-1.4, 1.6-2);
				\end{scope}
			\end{scope}
			% solid torus with two solid tori removed
			\begin{scope}[xshift=3 cm, yshift=-0.1 cm]
				\begin{scope}[scale=0.45]
					\node[] (A) at (0.00,1.4) {\includegraphics[scale=0.6]{rotation}};
					\draw[dashed, line width=0.5 pt] (0,-1) -- (0,1);
					\begin{scope}[xshift=1.2 cm,scale=2.6]
						\draw[fill=yellow!10!white, line width=1 pt, even odd rule] {(0,-0.18) circle (0.1)}{(0,0.18) circle (0.1)}{(0,0) ellipse (0.24 and 0.4)};
					\end{scope}
					\begin{scope}[xshift=-1.2 cm, scale=2.6]
						\draw[fill=yellow!10!white, line width=1pt, even odd rule] {(0,-0.18) circle (0.1)}{(0,0.18) circle (0.1)}{(0,0) ellipse (0.24 and 0.4)};
					\end{scope}
				\end{scope}
				\begin{scope}[scale=1]
					\draw[->, color=gray, line width=0.8pt] (1.2-3+0.04, 0+0.1) .. controls +(0:0.4) and + (180:0.8) .. (2-3+0.1, 0);
				\end{scope}
			\end{scope}
			
			% shrinking rule
			\begin{scope}[xshift=1.5 cm,yshift=-2.0 cm]
				\begin{scope}[scale=0.45]
					
					\node[] (A) at (0.00,1.35) {\includegraphics[scale=0.6]{rotation}};	
					\draw[fill=yellow!10!white, even odd rule, line width=1pt] {(0,0) circle (1.3 and 0.9)};
					\draw[fill=yellow!0!white, even odd rule, line width=1pt] {(-0.6,0) circle (0.3)};
					\draw[fill=yellow!0!white, even odd rule, line width=1pt] {(0.6,0) circle (0.3)};
					\draw[dashed, line width=0.5 pt] (0,-1) -- (0,1);
				\end{scope}
				\begin{scope}[scale=1]
					\draw[->, color=gray, line width=0.8pt] (0.5-1.5, -0.75-0.3+2) .. controls +(-40:0.4) and + (140:0.8) .. (1-1.5, 0.3);
				\end{scope}
			\end{scope}
		\end{scope}
	\end{tikzpicture}
	\caption{(Vacuum reduction) The reference state on a ball can be viewed as a 2d entanglement bootstrap reference state on a disk adjacent to a gapped boundary. The gapped boundary is shown in purple. Various superselection sectors and fusion multiplicities can be understood from this dimensional reduction picture.} 
	\label{fig:disk_reduction}
\end{figure}

This procedure associates
a region in 2d with its revolution in 3d. The Hilbert space and the quantum state of the 3d region are carried over to the associated 2d region. The dimensional reduction, thus, provides a quantum state in 2d. In fact, this quantum state is a valid 2d entanglement bootstrap reference state. (For this case, both the bulk axioms and the boundary axioms are satisfied.) To see this, we observe that:
\begin{itemize}
	\item A disk in the bulk of the 2d system corresponds to a solid torus in 3d. The 2d bulk version of {\bf A0} and  {\bf A1} follow from Proposition~\ref{prop:3d_Delta_basic}. 
	\item A disk adjacent to the boundary of the 2d system corresponds to a ball in 3d. The 2d boundary version of {\bf A0} ({\bf A1}) follows directly from the 3d version of {\bf A0} ({\bf A1}) for its preimage. 
\end{itemize}

\begin{remark}
	Suppose $\Omega_{\downarrow}$ is a 2d region, and $\Omega$ is its revolution. Then the information convex set $\Sigma(\Omega_{\downarrow})$, defined for the 2d reference state, is, in general, not isomorphic to the information convex set $\Sigma(\Omega)$  of the 3d region defined for the 3d reference state. In general, $\Sigma(\Omega_{\downarrow})$ is a smaller set. This is because a state in $\Sigma(\Omega_{\downarrow})$ is indistinguishable from the 2d reference state on a set of small balls, and when viewed in 3d, this state is indistinguishable from the 3d reference state on a set of thin solid tori.
\end{remark}

With this knowledge, we can apply 2d entanglement bootstrap~\cite{shi2020fusion,Shi:2020rne} to understand various superselection sectors in 3d. 
Superselection sectors and fusion processes understood with this picture are: 
\begin{enumerate}
	\item $\calC_{\rm{point}}$ corresponds to the boundary excitations.
	\item $\calC_{\rm{loop}}$ corresponds to the anyons. This is because an extreme point in the information convex set of a 2d annulus corresponds to a state in the information convex set of a 3d torus shell (the revolution of the 2d annulus), labeled by an element in $\calC_{\text{Hopf}}^{[1]}$. 
	\item The shrinking rule Eq.~(\ref{eq:shrinking_rule}) corresponds to fusing an anyon onto the boundary.
	\item The $\calC_{\rm{flux}}$ corresponds to the subset of anyons that can condense onto the boundary. (The quantum dimensions of the fluxes, however, have an extra square root, by our convention.) This is the dimensional reduction view of the embedding $\calC_{\rm{flux}} \overset{\varphi}{\hookrightarrow} \calC_{\rm{loop}}$, considered in Eq.~(\ref{eq:shrinkable-loops}).
\end{enumerate}
The existence of anti-sectors and consistency relations of the fusion rules can be seen with this line of reasoning. See Appendix~\ref{subappendix:Fusion_rules} for the details.
\begin{remark} A few remarks are in order:
	\begin{enumerate}
		\item This vacuum reduction can be implemented on $S^3$ as well. The result is a pure reference state on a disk $D^2$. It has an entire boundary. The bulk/boundary versions of the axioms are satisfied for the entire disk.
		\item Because the 2d reference state obtained in this way allows a gapped boundary, the system cannot be chiral. From the recent perspective~\cite{2021arXiv211010400K}, this manifests in the fact that the modular commutator is zero.
		\item The existence of a gapped boundary of the dimensional reduction does not imply the existence of a gapped boundary of the 3d system. It remains an open question whether the following conjecture in Ref.~\cite{Lan:2018bui}, "\emph{all 3+1D bosonic topological orders have gappable boundary.}" can be verified in the framework of entanglement bootstrap.
	\end{enumerate}
\end{remark}

\subsubsection{Dimensional reduction of flux sectors}\label{subsubsection:flux_reduction}

Below, we introduce a more flexible dimensional reduction. It generates a reference state $\sigma^{[\mu]}$ on a 2d disk (within the bulk), for each flux $\mu\in \calC_{\rm{flux}}$.
The idea is to take an extreme point of a solid torus and make use of the revolution; see Fig.~\ref{fig:flux_reduction} for an illustration. The 2d bulk version of axioms {\bf A0} and {\bf A1} are checked explicitly. This is because the revolution of a disk in the 2d bulk is a solid torus, and these conditions are verified by Lemma~\ref{lemma:ext_criterion} and Proposition~\ref{prop:3d-entropy-combinations}.

\begin{figure}[h]
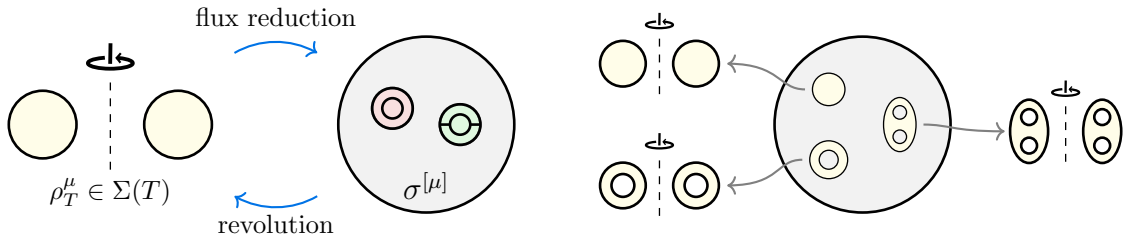

	\centering
	\begin{tikzpicture}
		\begin{scope}[xshift=-6.0 cm]			
			\node[] (A) at (0.00,0.9) {\includegraphics[scale=1]{rotation}};
			\draw[dashed, line width=0.5 pt] (0,-0.6) -- (0,0.6);
			\draw[fill=yellow!10!white, line width=1pt] (-0.9,0) circle (0.45);
			\draw[fill=yellow!10!white, line width=1pt] (0.9,0) circle (0.45);
			\node[] (A) at (0,-0.9) {\footnotesize{$\rho^{\mu}_T\in \Sigma(T)$}};
		\end{scope}
	    \begin{scope}[xshift=-3.8 cm, scale=1.1]
	    	\draw[->, color=blue!50!cyan!90!black, line width=0.8pt] (120:1) arc (120:60:1);
	    	\draw[->, color=blue!50!cyan!90!black, line width=0.8pt] (-60:1) arc (-60:-120:1);
	    	\node[] (A) at (0,1.30) {\footnotesize{flux reduction}};
	    	\node[] (A) at (0,-1.2) {\footnotesize{revolution}};
	    \end{scope}
		\begin{scope}[scale=0.9]
			\draw[fill=gray!10!white, line width=1pt] (-2,0) circle (1.3);
			\axiomone{-2.5}{0.24}
			\axiomtwoangled{-1.5}{0.0}{0}{180}
			
			\node[] (A) at (-2,-0.9) {$\sigma^{[\mu]}$};
		\end{scope}
		
		\begin{scope}[xshift=4 cm, scale=0.9]
			\draw[fill=gray!10!white, line width=1pt] (0,0) circle (1.3);
			\begin{scope}% insert small things
				\draw[fill=yellow!10!white, line width=0.5pt] (-0.5,0.52) circle (0.24);
				\draw[fill=yellow!10!white, line width=0.5pt, even odd rule] {(-0.5,-0.52) circle (0.14)}{(-0.5,-0.52) circle (0.28)};
				\begin{scope}[xshift=-0.05 cm]
					\draw[fill=yellow!10!white, line width=0.5pt, even odd rule] {(0.6,-0.18) circle (0.1)}{(0.6,0.18) circle (0.1)}{(0.6,0) ellipse (0.24 and 0.4)};
				\end{scope}
			\end{scope}
			% solid torus with two solid tori removed
			\begin{scope}[xshift=3 cm, yshift=-0.1 cm]
				\begin{scope}[scale=0.45]
					\node[] (A) at (0.00,1.4) {\includegraphics[scale=0.6]{rotation}};
					\draw[dashed, line width=0.5 pt] (0,-1) -- (0,1);
					\begin{scope}[xshift=1.2 cm,scale=2.6]
						\draw[fill=yellow!10!white, line width=1 pt, even odd rule] {(0,-0.18) circle (0.1)}{(0,0.18) circle (0.1)}{(0,0) ellipse (0.24 and 0.4)};
					\end{scope}
					\begin{scope}[xshift=-1.2 cm, scale=2.6]
						\draw[fill=yellow!10!white, line width=1pt, even odd rule] {(0,-0.18) circle (0.1)}{(0,0.18) circle (0.1)}{(0,0) ellipse (0.24 and 0.4)};
					\end{scope}
				\end{scope}
				\begin{scope}[scale=1]
					\draw[->, color=gray, line width=0.8pt] (0.9-3, 0+0.1) .. controls +(0:0.4) and + (180:0.8) .. (2.1-3, 0);
				\end{scope}
			\end{scope}
			% sold torus
			\begin{scope}[xshift=-3 cm, yshift=0.9 cm]
				\begin{scope}[scale=0.45]
					\node[] (A) at (0.00,1.4) {\includegraphics[scale=0.6]{rotation}};
					\draw[dashed, line width=0.5 pt] (0,-1) -- (0,1);
					\draw[fill=yellow!10!white, line width=1pt] (-1.2,0) circle (0.75);
					\draw[fill=yellow!10!white, line width=1pt] (1.2,0) circle (0.75);
				\end{scope}
				\begin{scope}[scale=1]
					\draw[->, color=gray, line width=0.8pt] (-0.85+3, 0.52-0.9) .. controls +(180:0.4) and + (0:0.8) .. (-2+3, 0) ;
				\end{scope}
			\end{scope}
			
			% torus shell
			\begin{scope}[xshift=-3 cm, yshift=-0.9 cm]
				\begin{scope}[scale=0.45]
					\node[] (A) at (0.00,1.45) {\includegraphics[scale=0.6]{rotation}};
					\draw[dashed, line width=0.5 pt] (0,-1) -- (0,1);
					\draw[fill=yellow!10!white, even odd rule, line width=1pt] {(-1.2,0) circle (0.375)} {(-1.2,0) circle (0.75)};
					\draw[fill=yellow!10!white, even odd rule, line width=1pt] {(1.2,0) circle (0.375)} {(1.2,0) circle (0.75)};
				\end{scope}
				\begin{scope}[scale=1]
					\draw[->, color=gray, line width=0.8pt] (-0.89+3, -0.52+0.9) .. controls +(180:0.4) and + (0:0.8) .. (-2+3, 0);
				\end{scope}
			\end{scope}	
		\end{scope}
	\end{tikzpicture}
	\caption{(Flux reduction) An extreme point of the solid torus ($\rho^{\mu}_T\in \Sigma(T)$) can be viewed as a valid entanglement bootstrap reference state in 2d ($\sigma^{[\mu]}$), defined on a disk; $\sigma^{[\mu]}$ satisfies axioms {\bf A0} and {\bf A1} in the 2d bulk. Various superselection sectors and fusion spaces of 3d can be understood by this picture.}
	\label{fig:flux_reduction}
\end{figure}

 The set $\calC_{\text{Hopf}}^{[\mu]}$ is mapped to the anyons in the 2d entanglement bootstrap problem defined by the reference state $\sigma^{[\mu]}$. Thus, it makes sense to talk about the fusion and braiding of these loop excitations from this dimensional reduction viewpoint. 
 For $\mu=1$, this corresponds to the fusion and braiding of shrinkable loops. This has been covered in the vacuum reduction above.
	When $\mu\ne 1$, this corresponds to the fusion and braiding of two loops that are linked with a third loop. The configuration of the loop excitations is the same as that in the 3-loop braiding statistics \cite{2014PhRvL.113h0403W,2015PhRvB..91p5119W}.

\begin{remark} Here are a few open problems: \vspace{-4mm}
	\begin{enumerate}
		\setlength\itemsep{-0.2em}
		\item It remains an open problem if the 2d reference state $\sigma^{[\mu]}$ allows a gapped boundary for $\mu\ne 1$.
		\item Another set of dimensional reduction, for each flux, can be obtained by taking the rotation in Fig.~\ref{fig:flux_reduction} be the $(1,1)$ rotation, instead of the ordinary $(1,0)$ rotation.\footnote{We consider $(1,1)$ instead of the more general $(1,p)$ because, these can be reduced to either $(1,1)$ or $(1,0)$ by the belt trick (Fig.~\ref{fig:belt_trick}).} We do not know if this set of dimensional reductions is different or not.
		\item The dimensional reduction with $(1,1)$ is already interesting for the vacuum sector. It provides another vacuum reduction. The $S^3$ has a global Hopf fibration, and therefore this dimensional reduction maps a reference state on $S^3$ to a reference state on $S^2$. This dimensional reduction does not generate a boundary explicitly.
	    It is an open question (1) if the 2d reference state so obtained allows a gapped boundary, and (2) if it is within the same phase as the vacuum reduction shown in Fig.~\ref{fig:disk_reduction}.
		 
		\item Even more generally, a solid torus can be obtained by a revolution labeled by a pair $(p,q)$, where $p$ and $q$ are coprime integers. Every such revolution provides a quantum state on a 2d disk. This is more general since it covers new cases, e.g., $(2,1)$ and $(2,3)$. For the most general choice of flux, however, {\bf A1} can break for a special disk in 2d.\footnote{This is closed related to ``standard fibered torus", a notion used in the study of Seifert fiber space.In this language, {\bf A1} may be violated at a disk, which becomes a solid torus containing an exceptional fiber under the $(p,q)$ revolution.} These dimensional reductions might provide references with a non-Abelian anyon, or possibly a topological defect~\cite{Bombin2010,Brown2013}, on the disk. We leave this for future investigation.
		\item Another open question is whether it is possible to come up with a dimensional reduction such that the 2d system is on a non-orientable surface (possibly with boundaries).
	\end{enumerate}
\end{remark}

\subsection{Fusion rules, quantum dimensions and consistency relations}\label{subappendix:Fusion_rules}

We summarize facts about anti-sectors and the constraints of the fusion rules. These facts can be derived straightforwardly by applying a standard method of entanglement bootstrap (see, for instance, Section 4.3 of Ref.~\cite{shi2020fusion}). We comment on which of the statements can be understood by the dimensional reduction methods developed in Appendix~\ref{subsec:diml-reduction}.

{\bf Point particles:} The fusion multiplicities for point particles $\{N_{ab}^c\}$ for $a,b,c \in \calC_{\rm{point}}$ satisfy:
\begin{enumerate}
	\item Associativity
	\begin{equation}
		\sum_{i\in \calC_{\rm{point}}} N_{a i}^{d} N_{b c}^i = \sum_{j\in \calC_{\rm{point}}} N_{a b}^j N_{jc}^{d}.
	\end{equation} 
    \item Conditions related to the vacuum and the existence of antiparticles: 
\begin{equation}
\begin{aligned}
	N_{1a}^{c}  &=\delta_{a,c}\\
	N_{a b}^1  &= \delta_{b,\bar{a}}\\
	N_{a b}^{c} &=N_{\bar{b}\bar{a}}^{\bar{c}}.	
\end{aligned}
\end{equation}
\item The antiparticle of $a\in \calC_{\text{point}}$, denoted as $\bar{a}\in\calC_{\text{point}}$, satisfies $\bar{1}=1$ and $\bar{\bar{a}}=a$.
\item The set of quantum dimensions $\{d_a \}_{a \in \calC_{\rm{point}}}$, defined according to Eq.~(\ref{def:quantum_dim}), is the unique positive solution of
\begin{equation}
	d_a d_b = \sum_c N_{ab}^c d_c.
\end{equation}
This further implies, $d_1=1$ and $d_{\bar{a}}= d_{a}\ge 1$.

\item Symmetry under the exchange of the two lower labels:
\begin{equation}\label{eq:ab_to_ba}
	N_{ab}^c = N_{ba}^c.
\end{equation}
\end{enumerate}
\begin{remark}
	All the statements in the list above, except Eq.~(\ref{eq:ab_to_ba}), can be understood from the dimensional reduction of the vacuum. The symmetry $N_{ab}^c = N_{ba}^c$ is, however, not true for a generic 2d gapped boundary, where $a,b,c$ are the boundary excitations. This implies that the gapped boundary obtained by the dimensional reduction of a 3d theory is special.
\end{remark}

{\bf Shrinkable loops and fluxes:} 
By the dimensional reduction of the vacuum, the set of shrinkable loops in $\calC_{\rm{loop}}$ corresponds to the set of anyons of a 2d entanglement bootstrap problem. More precisely, we have a map, 
\begin{equation}
	R: \calC_{\rm{loop}}\to \calC_{\rm{anyon}}^{[1]}.
\end{equation}
Here, $\calC_{\rm{anyon}}^{[1]}$ represents the set of anyons obtained by the dimensional reduction of the vacuum. This map preserves the multiplicities and the quantum dimensions: 
\begin{equation}
	N_{l_1 l_2}^{l_3} = N_{R(l_1) R(l_2)}^{R(l_3)},   \quad  d_l = d_{R(l)},\quad \forall l_1, l_2, l_3, l \in \calC_{\rm{loop}}. 
\end{equation}
We define the anti-sector for shrinkable loops, such that $R(\bar{l})$ is the anti-anyon of ${R(l)}$.
An interesting consequence is:
\begin{Proposition} \label{prop:2d_consistency_1}
	In the 3d theory, we have% Let , then:
\begin{eqnarray}
	N_{\bar{l}}^{\bar{a}} &=& N_{l}^a,\\
	d_l &=& \sum_{a\in \calC_{\rm{point}}} N_l^a d_a,\quad l \in \calC_{\rm{loop}}, \label{eq:loop-point}\\
 	d_a &=& \frac{1}{\calD^2} \sum_{l\in \calC_{\rm{loop}}} N_l^a d_l,\quad a \in \calC_{\rm{point}}, \label{eq:point-loop}\\
\calD^2	&=& \sqrt{\sum_{l \in \calC_{\rm{loop}}} d_l^2} .
\end{eqnarray}
Here, $\calD$ is the total quantum dimension of the 3d system, defined in Eq.~(\ref{eq:total_D_def}).
\end{Proposition}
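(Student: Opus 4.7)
The plan is to deduce the four identities by applying the vacuum reduction of \S\ref{subsubsection:vacuum_reduction}, which converts 3d entanglement-bootstrap data into 2d entanglement-bootstrap data with a gapped boundary. Concretely, the 3d reference state on a ball descends to a 2d reference state $\sigma^{[1]}$ on a disk adjacent to a gapped boundary, whose bulk anyon sectors are in bijection with $\calC_{\rm{loop}}$ via the map $R$, whose gapped-boundary excitation sectors are in bijection with $\calC_{\rm{point}}$, and whose bulk-to-boundary fusion multiplicities coincide with $N_l^a$. The latter identification comes from the fact that the 3d region ``ball minus an unknotted solid torus'' is the revolution of the 2d region ``boundary-adjacent disk with a bulk disk removed''; by the Hilbert space theorem, the information convex set of either region is parametrized by exactly these multiplicities, and the matching quantum dimensions are forced because entropy differences between extreme points are preserved under the revolution identification.

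Given this dictionary, the first identity $N_{\bar{l}}^{\bar{a}} = N_l^a$ reduces to the antiparticle symmetry of bulk-to-boundary multiplicities in the 2d reduction, which in turn follows from completing the 2d disk to a sphere (via a 2d sphere-completion analog of Lemma~\ref{lemma:sphere_completion}) and identifying the information convex set of $\Omega_\downarrow$ with that of its complement, exchanging sectors with their duals. The two quantum-dimension consistency relations \eqref{eq:loop-point} and \eqref{eq:point-loop} coincide with those already established in \S\ref{eq:ball-torus-consistency-with-d}; alternatively, they follow from entropy-difference computations on the 2d analog of the merging process of Fig.~\ref{fig:ball-torus-1}, where the $1/\calD^2$ prefactor in \eqref{eq:point-loop} emerges from $I(A:C|B)_{\sigma^{[1]}} = 2\gamma_{2d} = \ln \calD_{2d}^2$ across a bulk strip, once $\calD_{2d} = \calD^2$ is established as below.

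The fourth identity is a comparison of total quantum dimensions across the reduction. Since the 2d anyons are in bijection with $\calC_{\rm{loop}}$ with preserved quantum dimensions, the 2d total quantum dimension is $\calD_{2d} = \sqrt{\sum_l d_l^2}$, and the claim reduces to $\calD_{2d} = \calD^2$. One route is to invoke the standard 2d entanglement-bootstrap relation $\sum_a d_a^2 = \calD_{2d}$ for gapped boundaries, analogous to results in \cite{Shi:2020rne}, combined with $\sum_a d_a^2 = \calD^2$ from Proposition~\ref{prop:equality-of-total-dimension}. An alternative, more self-contained route is a direct 3d merging argument in the spirit of Fig.~\ref{fig:total-quantum-dimensions-agree}, applied to a 3d region whose revolution realizes the 2d boundary-adjacent annulus: computing the topological entanglement entropy of this region in two ways (via the structure of the 3d information convex set, and via 2d merging of extreme sectors) directly gives $\calD_{2d} = \calD^2$.

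The main obstacle is making precise the claim that the dimensional reduction is a faithful translation of fusion data, not merely a generator of a legitimate 2d reference state. Pullback along revolution embeds $\Sigma_{2d}(\Omega_\downarrow)$ into $\Sigma(\Omega)$, where $\Omega$ is the revolution of $\Omega_\downarrow$, but one must also verify that extreme points on the two sides correspond and that the corresponding multiplicities match. This requires invoking the extreme-point criterion (Lemma~\ref{lemma:ext_criterion}) together with the structure theorem, and then checking that the sector labels on each thickened boundary component of $\Omega_\downarrow$ agree with the sector labels on the corresponding thickened boundary component of $\Omega$: a sphere-shell boundary corresponds to a boundary-adjacent annulus labelled by $\calC_{\rm{point}}$, a torus-shell boundary to a bulk annulus labelled by $\calC_{\rm{loop}}$, and so on. Once this dictionary is pinned down, the four identities follow quickly from their 2d counterparts.
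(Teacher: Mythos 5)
Your proposal is correct and follows essentially the same route as the paper: apply the vacuum dimensional reduction of Appendix~D.1.1 to translate each identity into a known 2d gapped-boundary consistency relation (anyons $\leftrightarrow \calC_{\rm loop}$, boundary excitations $\leftrightarrow \calC_{\rm point}$, bulk-to-boundary multiplicities $\leftrightarrow N_l^a$), then invoke the established 2d entanglement-bootstrap results together with Proposition~\ref{prop:equality-of-total-dimension} for the total-quantum-dimension identity. Your additional care about verifying that the reduction faithfully matches extreme points and multiplicities is a reasonable elaboration of what the paper leaves implicit, but it does not change the argument.
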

(Note that we gave 3d proofs of \eqref{eq:loop-point}
and \eqref{eq:point-loop}
in
\S\ref{eq:ball-torus-consistency-with-d}.)
\begin{proof}
	First, we apply the dimensional reduction of the vacuum to convert the problem to that of a 2d system with a gapped boundary. After that, we see that each statement is converted to a consistency relation for anyons and boundary excitations. We use the known 2d results for gapped boundaries, previously derived in the framework of entanglement bootstrap; see Ref.~\cite{Shi:2020rne} and also Section VI E of \cite{Shi:2018bfb}.
\end{proof}
The subset $\{N_l^1\}\subset\{N_l^a\}$ is the set of condensation multiplicities. For a generic gapped boundary, some of these numbers can be greater than 1. However, because the boundary obtained by the dimensional reduction is special, we have:
\begin{equation}
	N^1_{l} = \sum_{\mu \in \calC_{\rm{flux}}} \delta_{l, \varphi(\mu)}.
\end{equation}
Here the map $\varphi: \calC_{\rm{flux}} \to \calC_{\rm{loop}}$ is that defined in Eq.~(\ref{eq:shrinkable-loops}). We see that $N^1_{l}$ must be either 0 or 1, because $\varphi$ is an embedding.
From the dimensional reduction point of view, $\bar{\mu}$ is the flux such that  $\varphi(\bar{\mu})$ is the anti-sector of $\varphi({\mu})$.

\begin{remark}
	In fact, the number of point particles is always equal to be number of fluxes: $\vert \calC_{\text{point}}\vert = \vert \calC_{\text{flux}} \vert$. This result has been conjectured in~\cite{Wen2016}. We shall provide an entanglement bootstrap derivation of this formula in \cite{braiding-paper}.
\end{remark}

{\bf  The set $\calC_{\rm{Hopf}}^{[\mu]}$:}
Properties of $\calC_{\rm{Hopf}}^{[\mu]}$ can be studied by the flux reduction. We denote the map to the 2d anyon theory as:
\begin{equation}
	R: \calC_{\rm{Hopf}}^{[\mu]}\to \calC_{\rm{anyon}}^{[\mu]}.
\end{equation}
Here, $\calC_{\rm{anyon}}^{[\mu]}$ represents the set of anyons obtained by the dimensional reduction of flux $\mu$. This map preserves the multiplicities:
\begin{equation}
	N_{\eta_1\eta_2}^{\eta_3} = N_{R(\eta_1) R(\eta_2)}^{R(\eta_3)}, \quad \forall \eta_1,\eta_2,\eta_3 \in \calC_{\rm{Hopf}}^{[\mu]}.
\end{equation}
The quantum dimensions are related by
\begin{equation}\label{eq:reduction_view_of_d_eta}
	d_{\eta} = d_{\mu}^2 \,d_{R(\eta)},\quad \forall \eta \in \calC_{\rm{Hopf}}^{[\mu]}.
\end{equation}
\begin{Proposition}\label{prop:2d_consistency_2}
	Let $\calD_{\rm{anyon}}^{[\mu]}$ be the total quantum dimension of the anyons in $\calC_{\rm{anyon}}^{[\mu]}$. $\calD$ is the total quantum dimension of the 3d system. Then
	\begin{equation}
		\calD_{\rm{anyon}}^{[\mu]} = \frac{\calD^2}{d_{\mu}^2},\quad \forall \mu \in \calC_{\rm{flux}}.
	\end{equation} 
    Furthermore, 
    \begin{equation}\label{eq:D_[mu]}
    	\sqrt{\sum_{\eta \in \calC_{\rm{Hopf}}^{[\mu]}} d_\eta^2}=\calD^2, \quad \forall \mu \in \calC_{\rm{flux}}.
    \end{equation}
\end{Proposition}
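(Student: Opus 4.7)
The two equations in the proposition are equivalent: since $d_\eta = d_\mu^2\, d_{R(\eta)}$ for $\eta \in \calC_{\rm{Hopf}}^{[\mu]}$ by Eq.~(\ref{eq:reduction_view_of_d_eta}), and $(\calD_{\rm{anyon}}^{[\mu]})^2 = \sum_{\eta} d_{R(\eta)}^2$ from the definition of the total quantum dimension in the flux-reduced 2d anyon theory, we have
\[
\sum_{\eta \in \calC_{\rm{Hopf}}^{[\mu]}} d_\eta^2 \;=\; d_\mu^4\,(\calD_{\rm{anyon}}^{[\mu]})^2,
\]
so $\sqrt{\sum_\eta d_\eta^2} = \calD^2$ is equivalent to $\calD_{\rm{anyon}}^{[\mu]} = \calD^2/d_\mu^2$. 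The plan is to prove the latter.

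My approach is to extract $\ln \calD_{\rm{anyon}}^{[\mu]}$ from a 2d topological-entanglement-entropy computation for the flux-reduced reference state $\sigma^{[\mu]}$ and to compare it against the 3d TEE of $\sigma$. By \S\ref{subsubsection:flux_reduction}, $\sigma^{[\mu]}$ is a valid 2d entanglement-bootstrap reference state satisfying {\bf A0} and {\bf A1} in the 2d bulk, so the 2d analog of Proposition~\ref{prop:equality-of-total-dimension} (established in \cite{shi2020fusion}) gives
\[
\ln \calD_{\rm{anyon}}^{[\mu]} \;=\; \tfrac{1}{2}\,\Delta(B,C,D)_{\sigma^{[\mu]}}
\]
for the Kitaev-Preskill bulk annulus partition of Fig.~\ref{fig:2d-tee}(c). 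Because the flux reduction is a revolution that identifies 2d entropies with 3d entropies of the revolved regions evaluated on $\rho^\mu_T$, the right-hand side equals $\tfrac{1}{2}\Delta(\widetilde B,\widetilde C,\widetilde D)_{\rho^\mu_T}$, where $\widetilde B\widetilde C\widetilde D$ is the 3d torus shell inside $T$ obtained by revolving $BCD$. For this same torus-shell partition, Proposition~\ref{prop:3dTEE}(3) gives $\tfrac{1}{2}\Delta(\widetilde B,\widetilde C,\widetilde D)_\sigma = \ln\calD$.

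It therefore remains to establish the entropy-difference identity
\[
\tfrac{1}{2}\Delta(\widetilde B,\widetilde C,\widetilde D)_{\rho^\mu_T} \;-\; \tfrac{1}{2}\Delta(\widetilde B,\widetilde C,\widetilde D)_{\sigma} \;=\; \ln\calD \,-\, 2\ln d_\mu,
\]
which combined with the two TEE identities above will yield $\ln \calD_{\rm{anyon}}^{[\mu]} = \ln(\calD^2/d_\mu^2)$. Because $\rho^\mu_T$ agrees with $\sigma$ on every ball inside $T$, each term $S_X(\rho^\mu_T) - S_X(\sigma)$ vanishes when $X$ is ball-shaped and otherwise equals a multiple of $\ln d_\mu$ controlled by the topology of $X$ inside $T$, which is read off from Proposition~\ref{prop:3d-entropy-combinations} and the isomorphism theorem.

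The hard part will be executing this case-by-case topological bookkeeping: identifying each of $\widetilde B,\widetilde C,\widetilde D,\widetilde{BC},\widetilde{CD}$ as a solid torus, torus shell, or disjoint union thereof, fitting each into the extreme-point entropy table of Proposition~\ref{prop:3d-entropy-combinations}, and verifying that the signed sum collapses to $2\ln\calD - 4\ln d_\mu$ as required. A useful consistency check on the signs is the $\mu = 1$ case, where the asserted identity reduces to $\sum_{l\in\calC_{\rm{loop}}} d_l^2 = \calD^4$; this follows in a few lines from Eq.~(\ref{eq:loop-point}) and Eq.~(\ref{eq:point-loop}) and fixes the sign conventions in the bookkeeping.
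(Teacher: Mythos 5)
Your overall strategy---identify $\ln \calD_{\rm{anyon}}^{[\mu]}$ with a 2d TEE of $\sigma^{[\mu]}$, translate that into a 3d entropy combination on $\rho^\mu_T$, and track the $\mu$-dependence through entropy differences of extreme points---is sound and is essentially the route the paper takes. But there is a concrete error in your baseline that your own plan cannot repair. Proposition~\ref{prop:3dTEE}(3) does \emph{not} apply to the partition you need. In Fig.~\ref{fig:3d-tee}(c) the revolution axis passes \emph{through} the regions (that is why in Fig.~\ref{fig:3d-tee}(b) the annulus $ADC$ becomes a sphere shell), so there $\widetilde C$ is a ball, $\widetilde B$ is two balls and $\widetilde D$ is an equatorial solid torus. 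The partition relevant to the flux reduction is the revolution of a bulk 2d disk about an axis \emph{disjoint} from it, so that $\widetilde B\widetilde C\widetilde D$ is a torus shell and every piece is a solid torus; for that partition the reference-state value is $\tfrac12\Delta(\widetilde B,\widetilde C,\widetilde D)_\sigma = \ln\calD_{\rm{anyon}}^{[1]} = 2\ln\calD$, not $\ln\calD$. This factor of two is exactly the content of the $\mu=1$ case, and it cannot be read off from Proposition~\ref{prop:3dTEE}; it has to be supplied independently, e.g.\ via Proposition~\ref{prop:2d_consistency_1}, whose relations \eqref{eq:loop-point}--\eqref{eq:point-loop} give $\sum_l d_l^2=\calD^4$ and hence $\calD_{\rm{anyon}}^{[1]}=\calD^2$.

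Because of this, the ``entropy-difference identity'' you set out to prove, $\tfrac12\Delta_{\rho^\mu_T}-\tfrac12\Delta_{\sigma}=\ln\calD-2\ln d_\mu$, is false and in fact unprovable by your bookkeeping: as you yourself note, every term $S_X(\rho^\mu_T)-S_X(\sigma)$ is an integer multiple of $\ln d_\mu$ (by Proposition~\ref{prop:3d-entropy-combinations} and the isomorphism theorem), so the signed sum is forced to be $-4\ln d_\mu$ for the full $\Delta$ and can never contain a $\ln\calD$. Your proposed $\mu=1$ sanity check, had you carried it out on the intermediate identity rather than on the final statement, would have exposed this: at $\mu=1$ the left-hand side is $0$ while your right-hand side is $\ln\calD$. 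The fix is to run the argument the other way around: first establish $\calD_{\rm{anyon}}^{[1]}=\calD^2$ from Proposition~\ref{prop:2d_consistency_1} (your closing computation from \eqref{eq:loop-point} and \eqref{eq:point-loop} is exactly this), then use the $-2\ln d_\mu$ entropy-difference bookkeeping between the $\mu$ and $1$ reductions to conclude $\ln\calD_{\rm{anyon}}^{[\mu]}=\ln\calD_{\rm{anyon}}^{[1]}-2\ln d_\mu$, and finally convert to \eqref{eq:D_[mu]} via Eq.~\eqref{eq:reduction_view_of_d_eta}. That is the paper's proof; with the baseline corrected, your argument collapses onto it.
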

\begin{proof}
	The $\mu=1$ case follows from Proposition~\ref{prop:2d_consistency_1}. The total quantum dimension $\calD_{\rm{anyon}}^{[1]}=\calD_{\rm{anyon}}^{[\mu]} \cdot d_{\mu}^2$, and this follows from the calculation of the difference of the topological entanglement entropies for the dimensional reduction of $\mu$ and $1$. Finally, Eq.~(\ref{eq:D_[mu]}) is derived by applying Eq.~(\ref{eq:reduction_view_of_d_eta}).
\end{proof}

\subsection{Constraints from topologically-nontrivial paths}
\label{appendix:automorphisms}

In the previous section, we already observed a few facts that are not obvious from dimensional reduction but are, nonetheless, simple to see from the 3d point of view. The most obvious one is $N_{ab}^c=N_{ba}^c$ for point particles. This is a consequence of the (generalized) isomorphism theorem. The crucial observation is the existence of a nontrivial path that permutes the two holes of a ball minus two balls. In the 3d view, deformations not obvious in 2d can be seen.

Below, we discuss several constraints that arise from similar considerations. The goal is to illustrate the general idea. Some are already mentioned in previous sections or references.

First, let us consider paths formed by regions embedded in a ball.
\vspace{-4mm}
\begin{itemize}
	\setlength\itemsep{-0.2em}
	\item The absence of topologically nontrivial path is of importance as well. In 2d, it is impossible to flip an annulus inside out by deforming it on a large disk. This gives rise to a well-defined ``reference frame" to compare anyon types (Lemma~4.3 of Ref.~\cite{shi2020fusion}). 
	\item In 3d, the same observation generalizes: a sphere shell cannot be flipped inside out by deformations within a ball. Here we require that the sphere shell to remain embedded.
	\item In 3d, it is easy to flip a solid torus. This path is topologically nontrivial and it gives rise to the map $\mu\to \bar{\mu}$; we have illustrated this map in Fig.~\ref{fig:anti_flux_definition}(b). 
	\item A similar flip on the torus shell $\mathbb{T}$ generalizes a relabeling of $\calC_{\text{Hopf}}$. Let us denote this map as $f: \calC_{\text{Hopf}}\to \calC_{\text{Hopf}}$. Then it is obvious that if $\eta\in \calC_{\rm{Hopf}}$ then $f(\eta)\in \calC_{\rm{Hopf}}^{[\bar{\mu}]}$. Therefore, $f$ maps $\calC_{\rm{loop}}$ to itself. As a consequence
	\begin{equation}
		N^a_{f(l)}=N^a_l,\quad f\circ \phi(a) =\phi(a),\quad f\circ\varphi(\mu)= \varphi(\bar{\mu}).
	\end{equation}
    The first two conditions can be seen by examining the deformation of a ball minus an unknot. The third condition is implied by the discussion above.
\end{itemize}

Secondly, let us consider paths formed by regions embedded in a sphere. (Spheres are available by the sphere completion lemma~\ref{lemma:sphere_completion}.)
 \vspace{-4mm}
\begin{itemize}
	\setlength\itemsep{-0.2em}
	\item In 2d, on a sphere, it is possible to flip the annulus inside out. This nontrivial path induces an automorphism of the information convex set of the annulus. This maps $a\to \bar{a}$, and it provides an intuitive understanding of antiparticles.
	\item In 3d, on a 3-sphere, it is possible to flip a sphere shell inside out. Similarly, this generates an automorphism of the information convex set of the sphere shell, such that particles are mapped to antiparticles: $a\to \bar{a}$, for $a\in \calC_{\rm{point}}$.
	\item The complement of torus shell $\mathbb{T}$ on 3-sphere is the Hopf link. A path which permutes the pair of linked loops generates the map $\eta \to \eta^{\vee}$ (\S\ref{subsection:sectors}). 
	\item Knot complement $S^3\setminus K$ can be deformed back to itself by nontrivial paths as well. (This applies to trefoil knot, for example.)  It remains to be seen what can be learned from this.
\end{itemize}

Thirdly, we discuss the extra freedom of deformation provided by paths with immersed regions.
 \vspace{-4mm}
	\begin{itemize}
		\setlength\itemsep{-0.2em}
		\item  Paths with immersed regions can turn a sphere shell inside out; this is essentially sphere eversion~\cite{smale1959classification, outside-in, levy1995making}. We conjecture that all sphere eversion processes generate the same permutation of labels in $\calC_{\rm{point}}$, namely $a\to \bar{a}$. (Note that this process does not have any 2d analog. An annulus cannot be turned inside out with a sequence of immersed regions as intermediate steps within a large disk because the winding number of the boundary is an invariant.)
		
		\item  Paths with immersed regions can turn a torus shell inside out as well. In fact, there are many inequivalent ways to do so; see \cite{torus-outside-in,sequin2011tori}. 
		These maps should provide inequivalent ways of mapping $\calC_{\rm{Hopf}}$ to itself. It remains to be seen what can be learned from these maps.
	\end{itemize}
Finally, let us remark that these do not exhaust all the techniques to generate constraints on the fusion data beyond dimensional reduction. Other known examples include restricting a sectorizable region to a subsystem (Proposition~\ref{prop:extreme-point-restriction}), and the spiral maps (\S\ref{sec:spiral-map}), which combine both partial trace and homotopy through immersions.

\phantomsection\addcontentsline{toc}{section}{References}
\bibliographystyle{ucsd}
\bibliography{Bibliography} 

\end{document}